\listfiles
\documentclass[%
 reprint,%
 amssymb, amsmath,%
 aip,cha,%
]{revtex4-1}

\usepackage{bm}%
\usepackage[colorlinks=true,linkcolor=blue]{hyperref}%
\usepackage{amsmath,amssymb,amsthm}
\usepackage{amscd}
\usepackage[usenames,dvipsnames]{color}
\usepackage{graphicx}
\usepackage[all]{xy}

\usepackage[all]{xy}
\usepackage{graphicx}
\usepackage{soul}

\newcommand{\nablaF}{\nabla F}

\newcommand{\telque}{{\,;\,}}

\newcommand{\frakF}{\mathsf{F}}

\newcommand{\bbR}{\mathbb{R}}
\newcommand{\bbS}{\mathbb{S}}

\newcommand{\bbK}{\mathbb{K}}
\newcommand{\bbZ}{\mathbb{Z}}
\newcommand{\bbC}{\mathbb{C}}

\newcommand{\bbN}{\mathbb{N}}

\newcommand{\WF}{\mathrm{WF}}

\newcommand{\dotT}{\dot{T}}

\newcommand{\supp}{{\mathrm{supp}\,}}

\newcommand{\barF}{\overline{F}}

\newcommand{\calA}{{\mathcal A}}

\newcommand{\calL}{{\mathcal L}}

\newcommand{\Acal}{{\mathcal A}}

\newcommand{\calD}{{\mathcal D}}
\newcommand{\calE}{{\mathcal E}}

\newcommand{\calI}{{\mathcal I}}

\newcommand{\Ocal}{{\cal O}}

\newcommand{\bcam}{}
\newcommand{\ecam}{}

\newcommand{\C}{\mathbb{C}}

\newtheorem{prop}{Proposition}[section]
\newtheorem{thm}[prop]{Theorem}
\newtheorem{cor}[prop]{Corollary}

\newtheorem{lem}[prop]{Lemma}
\newtheorem{dfn}[prop]{Definition}
\expandafter\ifx\csname package@font\endcsname\relax\else
 \expandafter\expandafter
 \expandafter\usepackage
 \expandafter\expandafter
 \expandafter{\csname package@font\endcsname}%
\fi
\hyphenation{title}

\begin{document}

\title{Properties of field functionals
and characterization of local functionals}

\author{Christian Brouder}%
\email{christian.brouder@upmc.fr}
\affiliation{Sorbonne Universit\'es, UPMC Univ. Paris 06, UMR CNRS 7590,
Institut de Min\'eralogie, de Physique des Mat\'eriaux et de Cosmochimie,
Mus\'eum National d'Histoire Naturelle, IRD UMR 206,
4 place Jussieu, F-75005 Paris, France.}

\author{Nguyen Viet Dang}%
\affiliation{Institut Camille Jordan (UMR CNRS 5208)
  Universit\'e Claude Bernard Lyon 1,
  B\^at. Braconnier, 43 bd du 11 Novembre 1918,
  69622 Villeurbanne Cedex, France.}

\author{Camille Laurent-Gengoux}
\affiliation{Institut Elie Cartan de Lorraine (IECL)
  Universit\'e de Lorraine, B\^at. A, Ile du Saulcy 
  F-57045 Metz cedex 1,  France.}

\author{Kasia Rejzner}
\affiliation{Department of Mathematics,
  University of York, Heslington, York YO10~5DD, United Kingdom.}

\date{\today}%
\revised{\today}%

\begin{abstract}
Functionals (i.e. functions of functions) are widely used
in quantum field theory and solid-state physics.
In this paper, functionals are given
a rigorous mathematical framework
and their main properties
are described.
The choice of the proper space of test functions (smooth functions)
and of the relevant concept of differential
(Bastiani differential) are discussed.

The relation between the multiple derivatives
of a functional and the corresponding distributions
is described in detail.
It is proved that, in a neighborhood of every
test function, the support of a smooth functional
is uniformly compactly supported and the order of the corresponding
distribution is uniformly bounded.
Relying on a recent work by Yoann Dabrowski,
several spaces of functionals are furnished with
a complete and nuclear topology.
In view of physical applications, it is shown that most
formal manipulations can be given a rigorous meaning.

A new concept of local functionals is proposed and
two characterizations of them are given: the first one uses
the additivity (or Hammerstein) property, the
second one is a variant of Peetre's theorem.
Finally, the first step of a cohomological approach
to quantum field theory is carried out by proving a
global Poincar\'e lemma and defining multi-vector fields
and graded functionals within our framework.
\end{abstract}

\maketitle

\tableofcontents

\section{Motivation}
Functionals (i.e. functions of functions) 
are mathematical objects successfully applied
in many areas of physics.
Since Schwinger's ground-breaking
papers~\cite{Schwinger,SchwingerPNAS2},
Green functions of quantum field theory are obtained 
as functional derivatives of the generating 
functional $Z(j)$ with respect to the functions
$j$ (external sources).
In solid-state and molecular physics, the exchange and correlation
potential of density functional theory is computed from 
the functional derivative of the total energy
$E(\rho)$ with respect to the electron density 
$\rho$~\cite{Kohn-65,Engel-Dreizler}.
In perturbative algebraic quantum field theory (pAQFT), the
observables are functionals $F(\varphi)$ of the
classical field $\varphi$~\cite{Rejzner-book}. 
This formulation was possible due to the crucial result \cite{Dutsch06} 
that allowed to realize abstract quantum fields as concrete functionals
on the space of classical configurations. This viewpoint is not only
simplifying computations, but also allows to construct new perturbative
and exact models of QFT's~\cite{Brunetti-13-QG,Bahns-17}. 
It is, therefore, crucial to understand functional analytic properties of classical functionals to be able to use these in quantization and obtain even more models. The importance of this endeavour is justified by the fact that presently we do not know any exact interacting QFT models in 4 spacetime dimensions.

Functionals are also used in pure mathematics,
for example loop space cohomology~\cite{Lempert-04}
and infinite dimensional integrable systems:
the hierarchy of commuting Hamiltonians for the Korteweg de Vries
equation
is for instance all made of functionals \cite{dickey2003soliton}.

In all these fields, the concept of locality is crucial:
the Lagrangian of quantum field theory is local and
the counterterms of the renormalization process have
to be local, the approximations of $E(\rho)$ used
in practice are local and it is an open question
whether the true density functional $E(\rho)$ is
local or not.
Therefore, it is crucial to determine precisely
what is meant by a \emph{local functional}.
According to the standard 
definition~\cite{Lee-90,Wald-GR,Stasheff-98,Barnich-00},
if $\varphi$ is a classical field
(i.e. a smooth section of a vector bundle over $M$
and we momentarily consider $M=\bbR^d$ for
notational convenience),
then a functional $F(\varphi)$
is local if it is of the form
\begin{eqnarray}
F(\varphi) &=& \int_{\bbR^d} dx f\big(x,\varphi(x),\partial_\mu \varphi(x),
  \dots,\partial_{\mu_1\dots\mu_k} \varphi(x)\big).
\label{deflocal}
\end{eqnarray}
where $f$
is a smooth compactly supported function with a finite number of
arguments.

However, this definition of local functionals is not very handy
in practice because it is global and sometimes too restrictive.
For example, general relativity has no local gauge-invariant observables 
in the sense of Eq.~(\ref{deflocal}), whereas it has local gauge-invariant
observables when the concept of locality is 
slightly generalized, as discussed 
in~\cite{Brunetti-13-QG} (see also the parallel work \cite{Khavkine-15}).
Note that the concept of locality presented in 
the present paper gives a proper
topological framework for local functionals as understood 
by~\cite{Fredenhagen-11,Brunetti-09,Brunetti-09-Bar}. 

The present paper \bcam puts forth \ecam the following formulation of the  concept of locality:
\begin{dfn}
\label{defourlocal}
\bcam Let $M$ be a manifold\footnote{In this article, all manifolds shall be assumed to be paracompact.}. \ecam
Let $U$ be an open subset of $C^\infty(M)$.
A smooth functional $F:U\to \bbK$ is said to be
\emph{local}
if, for every $\varphi\in U$, there is a neighborhood 
$V$ of $\varphi$, an integer $k$,
an open subset $\mathcal{V}\subset J^kM$ and a smooth
function $f\in C^\infty(\mathcal{V})$ such that
$x\in M\mapsto f(j^k_x\psi)$ is supported in a compact subset $K\subset M$ and
\begin{eqnarray*}
F(\varphi+\psi) &=& F(\varphi)+\int_M f(j^k_x\psi) dx,
\end{eqnarray*}
whenever $\varphi+\psi\in V$
and where $j^k_x\psi$ denotes the $k$-jet of $\psi$ at $x$.
\end{dfn}
In other words, we require $F$ to be local in the
sense of Eq.~(\ref{deflocal}), but only around each
$\varphi\in U$ because the 
integer $k$ and the function $f$ can depend
on the neighborhood $V$. In short, our local functionals 
are local in the ``traditional sense'', but only locally in the configuration space
(i.e. in a neighborhood of each $\varphi$).
We do not need global locality to apply variational
methods and derive Euler-Lagrange equations.
We will show by
exhibiting an example that this concept
of locality is strictly more general than the traditional one.
Our first main result is a simple characterization
of local functionals 
in the sense of Def.~1.1:
\begin{thm}
Let $U$ be an open subset of $C^\infty(M)$. A
smooth functional $F:U\to \bbK$
(where $\bbK=\bbR$ or $\bbC$)
is local if and only if
\begin{enumerate}
\item $F$ is additive  
(i.e. it satisfies
   $F(\varphi_1+\varphi_2+\varphi_3) = F(\varphi_1+\varphi_2) 
  + F(\varphi_2+\varphi_3)
  - F(\varphi_2)$ whenever $\supp\,\varphi_1\cap \supp\,\varphi_3=\emptyset$)
\item For every $\varphi\in U$, the differential
   $DF_\varphi$ of $F$ at $\varphi$ is a distribution
   with empty wave front set. Thus, it can be represented
   by a function $\nablaF_\varphi\in \calD(M)$ (with $\calD(M)$
   the space of compactly supported smooth functions on $M$, 
   i.e. ``test functions").
\item The map $U\to \calD(M)$  defined by
  $\varphi\mapsto \nablaF_\varphi$ is smooth
(in the sense of Bastiani).
\end{enumerate}
\end{thm}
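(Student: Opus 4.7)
Assume $F$ is local in the sense of Definition~\ref{defourlocal}, so that near every $\varphi\in U$ one has $F(\varphi+\psi)=F(\varphi)+\int_M f(j^k_x\psi)\,dx$ for small $\psi$. Additivity is a direct computation: given $\varphi_1,\varphi_3$ with disjoint supports, I would choose disjoint open neighborhoods $A\supset\supp\varphi_1$, $B\supset\supp\varphi_3$ and split $M=A\sqcup B\sqcup C$. On $A$ the jet of $\varphi_3$ is zero, on $B$ the jet of $\varphi_1$ is zero, on $C$ both are zero, so the three integrals decompose accordingly. Using the trivial identity $\int_M f(j^k_x 0)\,dx=0$ obtained by setting $\psi=0$ in the local formula, the claim $F(\varphi_1+\varphi_2+\varphi_3)+F(\varphi_2)=F(\varphi_1+\varphi_2)+F(\varphi_2+\varphi_3)$ falls out. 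Computing $DF_\varphi(\psi)=\frac{d}{dt}\big|_{t=0}\int f(j^k_x(t\psi))\,dx=\int\sum_\alpha (\partial_\alpha f)(j^k_x 0)\,\partial^\alpha\psi(x)\,dx$ and integrating by parts represents the differential by a compactly supported smooth function, proving (2). Smoothness of $\varphi\mapsto\nabla F_\varphi$ as a map into $\calD(M)$ follows from Bastiani smoothness of $F$ combined with the uniform bound on the order and on the support stated in the introduction.

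\textbf{Reverse direction, locality of the gradient.} Assume (1)--(3). The crucial step is to deduce from additivity that $\nabla F_\varphi(x)$ depends only on the germ of $\varphi$ at $x$. Suppose $\varphi$ and $\varphi'$ agree on a neighborhood $W$ of $x$; writing $\varphi'=\varphi+\eta$ with $\eta|_W=0$, I would apply additivity to $\varphi_1=s\psi$, $\varphi_2=\varphi$, $\varphi_3=\eta$ where $\psi$ is a test function with $\supp\psi\subset W$, so that $\supp\varphi_1$ and $\supp\varphi_3$ are disjoint. Differentiating in $s$ at $s=0$ yields $DF_{\varphi+\eta}(\psi)=DF_\varphi(\psi)$, i.e.\ $\int(\nabla F_{\varphi'}-\nabla F_\varphi)\psi\,dx=0$ for every $\psi$ supported in $W$. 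Hence $\nabla F_{\varphi'}\equiv\nabla F_\varphi$ on $W$, which is precisely the local dependence claimed.

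\textbf{Reverse direction, Peetre reduction and recovery of $F$.} The map $(x,\varphi)\mapsto \nabla F_\varphi(x)$ is now a smooth local map from $U\subset C^\infty(M)$ to $C^\infty(M)$, with the uniform compact-support control quoted above. I would invoke the nonlinear (Peetre--Slov\'ak) variant of Peetre's theorem — announced by the authors as their second characterization — to produce, for each $\varphi_0\in U$, an integer $k$, an open $\mathcal V\subset J^kM$ containing the image of the $k$-jet prolongation of a neighborhood of $\varphi_0$, and $\tilde f\in C^\infty(\mathcal V)$ such that $\nabla F_\varphi(x)=\tilde f(j^k_x\varphi)$ for $\varphi$ near $\varphi_0$. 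Then I would reconstruct $F$ by integrating the differential along the straight line from $\varphi_0$ to $\varphi_0+\psi$:
\begin{equation*}
F(\varphi_0+\psi)-F(\varphi_0)=\int_0^1\!\!\int_M \tilde f\bigl(j^k_x\varphi_0+t\,j^k_x\psi\bigr)\,\psi(x)\,dx\,dt=\int_M f\bigl(x,j^k_x\psi\bigr)\,dx,
\end{equation*}
where $f(x,\xi):=\xi_{(0)}\int_0^1 \tilde f\bigl(j^k_x\varphi_0+t\xi\bigr)\,dt$ and $\xi_{(0)}$ denotes the value component of the jet. This $f$ is smooth on a jet-bundle neighborhood, and the uniform compact-support property of $\nabla F$ on a neighborhood of $\varphi_0$ guarantees that $x\mapsto f(x,j^k_x\psi)$ is supported in a fixed compact set $K$, giving exactly Definition~\ref{defourlocal}.

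\textbf{Main obstacle.} The routine parts are additivity, the Fubini/line-integral recovery step, and the compact-support bookkeeping, all of which rely only on classical Bastiani calculus and on the uniform results quoted from the introduction. The serious work is concentrated in the Peetre-type reduction: one must verify precisely that the local and smooth map $\varphi\mapsto\nabla F_\varphi$ factors through a finite-order jet, with uniform order in a neighborhood of each $\varphi_0$, and that the resulting $\tilde f$ is smooth jointly in all its arguments on an open subset of $J^kM$. This is the step where the topological assumptions on $C^\infty(M)$ and $\calD(M)$ (nuclearity, completeness) and the Bastiani smoothness of $\nabla F$ are indispensable, and where the paper's variant of Peetre's theorem must be invoked in full strength.
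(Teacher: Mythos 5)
Your strategy is genuinely different from the paper's, and parts of it are sound: the forward direction is essentially the paper's ``converse sense'' computation (Euler--Lagrange plus jet-prolongation/chain-rule arguments), and your observation that additivity alone forces germ-locality of the gradient --- taking $\varphi_1=s\psi$ with $\supp\psi\subset W$, $\varphi_3=\eta$ with $\eta|_W=0$ and differentiating at $s=0$ to get $DF_{\varphi+\eta}(\psi)=DF_\varphi(\psi)$ --- is correct and is \emph{not} the path the paper takes. The paper instead works with the second derivative: additivity gives $\supp F^{(2)}_\varphi\subset D_2$ (Proposition~\ref{localsecondderivativediagonal}), Lemma~\ref{hardLemma} (Schwartz's structure theorem plus the locally bounded order of Proposition~\ref{bounded-order}) yields $F^{(2)}_\varphi=\sum f_\alpha(\varphi)\partial^\alpha\delta$ with $f_\alpha$ Bastiani smooth into $\calD(K)$, Proposition~\ref{finiteJetdep} (a delta-sequence argument against that representation) shows that $c_\psi(x)=\psi(x)\int_0^1\nabla F_{\varphi+t\psi}(x)\,dt$ depends only on $j^k_x\psi$, and Proposition~\ref{prop:NeDependQueDesJets} provides the smooth factorization through $J^kM$. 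Your FTC/Fubini reconstruction coincides with Lemma~\ref{lem:Fubini} and the paper's $c_\psi$; and citing Section~7 is not circular, since the Peetre theorems there do not rely on Theorem~\ref{TheoPrincipal}.

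The genuine gap sits exactly where you locate the ``serious work.'' The Peetre theorems of Section~7 are stated and proved for maps defined on \emph{all} of $C^\infty(\Omega)$: their proofs evaluate the functional at the cutoff fields $\varphi\chi_\lambda$ and, for smoothness, at the compactly supported polynomial representatives $\varphi+\Phi(x;p)$, and none of these modified fields need lie in the open set $U$ on which $\nabla F$ is defined, nor in the neighborhood $V$ on which your order/support data are controlled. If you adapt by cutting off the increment, $\psi_2+(\psi_1-\psi_2)\chi_\lambda$, you do stay in $V$ and Peetre locality gives equality of the values at $x$ for every $\lambda>0$; but by Lemma~\ref{Malgrangestyletechnicallemma} the limit $\lambda\to 0$ holds only in a \emph{fixed finite-order} seminorm $\pi_{m,K'}$, so Bastiani continuity of $\nabla F$ in the $C^\infty$ topology does not let you pass to the limit. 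What is needed is a locally uniform estimate $|\nabla F_{\varphi_1}(x)-\nabla F_{\varphi_2}(x)|\leqslant C\,\pi_{m,K'}(\varphi_1-\varphi_2)$ with one $m$ valid on a whole neighborhood of $\varphi_0$ --- precisely the content of Lemma~\ref{equicontlem} and Propositions~\ref{localcompactsupport} and~\ref{bounded-order} --- and once you import that, you are re-deriving the quantitative core of Lemma~\ref{hardLemma} and Proposition~\ref{finiteJetdep} rather than merely quoting a black box. So your proof becomes complete only after you either (a) extend the Section~7 Peetre theorems to Bastiani smooth maps defined on open subsets of $C^\infty(M)$, supplying the fixed-seminorm estimate as above, or (b) fall back on the paper's route through $F^{(2)}$. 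A smaller point: in the forward direction, smoothness of $\varphi\mapsto\nabla F_\varphi$ does not follow from ``uniform bounds stated in the introduction''; it follows from the representation $\nabla F_{\varphi+\psi}=EL(f)(j^{2k}\psi)$ together with the smoothness of the jet-prolongation and superposition maps established in Section~\ref{sec:proofPrincipal}.
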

Our characterization of locality is inspired by
the microlocal functionals proposed by Brunetti, 
Fredenhagen and Ribeiro~\cite{Brunetti-12}. 
However, the proof of their Proposition~2.3.12
is not complete because the
application of the Fubini theorem and the second
use of the fundamental theorem of calculus are 
not justified. Our condition 3 solves that problem.
On the other hand, we do not need their assumption
that $F$ is compactly supported.

Let us stress that the notion of locality is quite subtle and depends strongly 
on the functional analytic setting. A functional characterization of a notion 
of local functionals on \textbf{measurable functions} might not be valid 
anymore when applied to \textbf{smooth functions} as is shown by the simple 
counterexample of Section~\ref{nonlocalpartiallyadditive}.
We also make a conjecture as to how to generalize our main result to multi-vector fields and 
graded functionals, which is crucial for 
a rigorous version of the Batalin-Vilkovisky
approach to gauge field theory and 
quantum gravity.
The second main result is a proof of the 
global Poincar\'e lemma (in
our context), which is crucial to set up the
BRST and variational complexes.
The last one is another characterization of
local (and multilocal) functionals in the
form of a Peetre's theorem.

Along the way to these results, we prove interesting properties of general functionals that we briefly  describe now.
In section~2, we explain why we choose test functions that are 
only smooth 
instead of smooth and compactly supported,
we describe the topology of the space of test functions and we present the concept of Bastiani differentiability
and its main properties. In section~3, we show that a smooth functional
is locally compactly supported (i.e. in a neighborhood
of every test function), we prove that the 
$k$th derivative of a functional defines a 
continuous family of distributions whose order is
locally bounded.
Section~4, which relies heavily on Dabrowski's 
work~\cite{Dabrowski-14,Dabrowski-14-2},
describes in detail a nuclear and complete
topology on several spaces of functionals used in
quantum field theory.
Section~5 discusses the concept of
additivity which characterizes local functionals.
Sections~6 and 7 prove the main results discussed above.
Note that the present paper has a somewhat foundational character,
in as much as the choice of test-functions, additivity property
and differential
are carefully justified from the physical and mathematical 
points of view. It contributes to the formulation 
of a mathematically rigorous basis on which the quantum field theory of
gauge fields and gravitation can be built.

Note also that this paper aims at both 
functional analysts and theoretical physicists.
Because of this dual readership, the proofs are
often more detailed than what would be required
for experts in functional analysis.

\section{Functionals and their derivatives}
To set up a mathematical definition of functionals,
we need to determine precisely which space of test functions
(i.e. classical fields and sources) we consider
and what we mean by a functional derivative.

\subsection{The space of classical fields}
Propagators and
Green functions of quantum fields in flat spacetimes
are tempered distributions~\cite{Epstein,Wightman}
and the corresponding test functions are rapidly
decreasing. 
Tempered distributions are computationaly convenient
because they have Fourier transforms.
However, tempered distributions cannot be canonically
extended to curved spacetimes (i.e. Lorentzian 
smooth manifolds) because the rapid decrease of test functions
at infinity is controlled by some Euclidian distance 
which is not canonically defined on
general spacetime manifolds~[\onlinecite[p.~339]{Martellini-82}].

The most natural spaces of test functions on a
general spacetime $M$ are the space
$C^\infty(M)$ of real valued smooth functions on $M$
and its subspace $\calD(M)$  of compactly supported
functions.  These two spaces are identical when $M$
is compact, but physically relevant spacetimes are not
compact because they are globally hyperbolic, and a choice
must be made.

In this paper, we choose $C^\infty(M)$ (or the set
$\Gamma(M,B)$ of smooth sections of a vector bundle $B$).
There is a strong physical reason for this~\cite{Dutsch}:
in the quantization process we must be able to deal with 
on-shell fields $\varphi$, that are smooth solutions to normally 
hyperbolic equations and as such cannot be compactly supported.
Therefore, the domain of the functionals can be $C^\infty(M)$
but not $\calD(M)$.
There are also good mathematical arguments for this choice:
In particular, $C^\infty(M)$ is a Fr\'echet space
and its pointwise multiplication is continuous~[\onlinecite[p.~119]{Schwartz-66}].
Moreover, the Fr\'echet property of $C^\infty(M)$ saves us the
trouble of distinguishing Bastiani 
from convenient differentiability
which is treated in Ref.~\onlinecite{KrieglMichor}.

The choice of $C^\infty(M)$ has, however, several drawbacks:
(i) Since smooth functions are generally not integrable over $M$, 
the Lagrangian density $\calL(\varphi)$ must be multiplied
by a smooth compactly supported function $g$ so that
$\calL(\varphi) g$ is integrable over $M$~\cite{Stueckelberg-51}.
As a result, long-range interactions are suppressed and 
infrared convergence is enforced.
This simplifies the problem but
makes it difficult to deal with the physics of infrared divergence.
(ii) The function $g$ breaks the diffeomorphism invariance
of the Einstein-Hilbert action. 
(iii) The effect
of a perturbation $\varphi+\epsilon\psi$  is easier to deal with when
$\psi$ is compactly supported because it avoids the presence of
boundary terms. This problem can be solved by considering
$C^\infty(M)$ as a manifold modeled on 
$\calD(M)$~\cite{Brunetti-12,Rejzner-book}, but this is an additional
complication.

\subsection{Locally convex spaces}
\label{LCS}

The spaces of test functions and functionals
considered in the paper are all locally convex.
The most pedagogical introduction to locally
convex spaces is probably 
Horvath's book~\cite{Horvath}, so we refer the reader to 
it for more details.

We describe now the topology of the spaces of test functions
that we use. For the space of smooth test functions
$C^\infty(\bbR^d)$, the topology  is 
defined by the seminorms
\begin{eqnarray}
\pi_{m,K}(f) &=& \sup_{x\in K} \sup_{|\alpha|\le m}
  |\partial^\alpha f(x)|,
\label{pimKdef}
\end{eqnarray}
where $f\in C^\infty(\bbR^d)$, $m$ is an integer,
$K$ is a compact subset
of $\bbR^d$, $\alpha=(\alpha_1,\dots,\alpha_d)$
is a $d$-tuple of nonnegative integers,
with $|\alpha|=\alpha_1+\dots+\alpha_d$
and $\partial^\alpha=\partial^{\alpha_1}_1\dots
\partial^{\alpha_d}_d$,
with $\partial_i=\partial/\partial x^i$ the
derivative with respect to the $i$-th coordinate of 
$x$~[\onlinecite[p.~88]{Schwartz-66}].

If $U$ is open in $\bbR^d$, we denote by $C^\infty(U)$
the space of all functions defined on $U$ which possess
continuous partial derivatives of all orders. We equip
$C^\infty(U)$ with the topology defined by the seminorms
$\pi_{m,K}$ where $K$ runs now over the compact 
subsets of $U$~[\onlinecite[p.~89]{Horvath}].
For every open set $U\subset \bbR^d$, the 
space $C^\infty(U)$ is Fr\'echet, 
reflexive, Montel, barrelled~[\onlinecite[p.~239]{Horvath}],
bornological~[\onlinecite[p.~222]{Horvath}] and
nuclear~[\onlinecite[p.~530]{Treves}].

We define now $C^\infty(M)$, where
$M$ is a 
$d$-dimensional manifold
(tacitly smooth, Hausdorff, paracompact and orientable)
described by charts $(U_\alpha,\psi_\alpha)$.
If for every $U_\alpha\subset M$ we are given a 
smooth function
$g_\alpha\in C^\infty(\psi_\alpha(U_\alpha))$ such
that $g_\beta=g_\alpha\circ\psi_\alpha\circ\psi_\beta^{-1}$
on $\psi_\beta(U_\alpha\cap U_\beta)$, we call the
system $g_\alpha$ a smooth function $g$ on
$M$. The space of smooth functions on $M$
is denoted by $C^\infty(M)$~[\onlinecite[p.~143]{HormanderI}].
This definition is simple but 
to describe the topological properties of 
$C^\infty(M)$ the following more conceptual
definition is useful.

Let $M$ be a manifold and $B\to M$ a
smooth vector bundle of rank $r$ over $M$ with projection $\pi$. Let 
$E=\Gamma(M,B)$ be the space of smooth sections of $B$ 
equipped with the following topology~\cite{KrieglMichor}:
\begin{dfn}
The topology on $\Gamma(M,B)$ is defined as
follows. Choose a chart $(U_\alpha,\psi_\alpha)_\alpha$
and a trivialization map
$\Phi_\alpha:\pi^{-1}(U_\alpha)\to \Omega\times\bbR^r$, where
$\Omega$ is a fixed open set in $\mathbb{R}^d$.
Then the map $\Phi_\alpha$ allows to identify 
$\Gamma(U_\alpha,B)$ with
$C^\infty(\Omega,\bbR^r)$ by
$\Phi_\alpha:\pi^{-1}{U_\alpha }\to \Omega\times\bbR^d$
such that
\begin{eqnarray*}
\Phi_\alpha(x,s(x))=
(\psi_\alpha(x),K_\alpha(s)(\psi_\alpha(x))),
\end{eqnarray*}
where 
\begin{eqnarray*}
K_\alpha:s\in\Gamma(U_\alpha,B)&\mapsto &K_\alpha(s)\in
C^\infty(\Omega,\mathbb{R}^r).
\end{eqnarray*}
The topology on $\Gamma(M,B)$ is the weakest
topology making all the maps $K_\alpha$ continuous.
\end{dfn}
This topology does not depend on the choice
of charts or trivialization maps~[\onlinecite[p.~294]{KrieglMichor}].
To interpret this topology, denote by 
$\rho_\alpha:s\in\Gamma(M,B)\mapsto s|_{U_\alpha}\in
\Gamma(U_\alpha,B)$ the restriction
map of sections on open sets of our open cover 
$(U_\alpha)_{\alpha\in I}$ of $M$.
The space $\Gamma(M,B)$ fits into the following 
complex of vector spaces
\begin{eqnarray}
0 &\to& \Gamma(M,B) \overset{\small{(\rho_\alpha)_{\alpha\in I}}}
{\longrightarrow} \prod_{\alpha\in I} 
\Gamma(U_\alpha,B)\simeq 
\prod_{\alpha\in I}C^\infty(\Omega,\mathbb{R}^r)
\nonumber\\&&
\overset{{\small(\rho_\alpha-\rho_\beta)_{\alpha\neq \beta}}}
{\longrightarrow}
\prod_{(\alpha\neq \beta)\in I^2}
\Gamma(U_\alpha\cap U_\beta,B). 
\end{eqnarray}

The topology on $\Gamma(U_\alpha,B)$ is given by the isomorphism 
$\Gamma(U_\alpha,B)\simeq C^\infty(\Omega,\mathbb{R}^r)$ hence 
it is nuclear Fr\'echet. The countable products 
$\prod_{\alpha\in I} \Gamma(U_\alpha,B)$ and 
$\prod_{(\alpha,\beta)\in I^2}\Gamma(U_\alpha\cap U_\beta,B)$ are 
therefore nuclear Fr\'echet. 
For every pair $(\alpha,\beta)$ of distinct elements of $I$, 
the difference of restriction maps
$\rho_\alpha-\rho_\beta$ is continuous and the topology on 
$\Gamma(M,B)$
is the weakest topology which makes the above complex
topological, which implies that it is nuclear Fr\'echet as the 
kernel of $\prod_{\alpha\not=\beta}\rho_\alpha-\rho_\beta$.

Locally convex spaces are very versatile and they
are the proper framework to define spaces of smooth functionals,
i.e. smooth functions on a space of functions (or sections of a bundle).
The first step towards this goal is to provide a rigorous
definition of functional derivatives.

\subsection{Functional derivatives}

To define the space of functionals, we consider
the main examples $Z(j)$ and $F(\varphi)$.
These two functionals send smooth classical fields
to $\bbK$, where $\bbK=\bbR$ or $\bbK=\bbC$.
Moreover, functional derivatives of $Z$ and $F$
of all orders are required to obtain the Green functions 
from $Z(j)$ and to quantize the product $F(\varphi)G(\varphi)$.
Therefore, we must define the derivative of a
function $f:E\to \bbK$, where $E$ is the space of
classical fields.

It will be useful to 
generalize the problem to functions $f$ between
arbitrary locally convex spaces $E$ and $F$.
To define such a derivative we start from
\begin{dfn}
Let $U$ be an open subset of a Hausdorff locally convex space $E$
and let $f$ be a map from $U$ to a Hausdorff locally convex space
$F$. Then $f$ is said to have a derivative at $x\in U$
in the direction of $v \in E$ if the following limit
exists\footnote{Since an open set is absorbing~[\onlinecite[p.~80]{Horvath}],
for every $x\in U$ and every $v\in E$ there is an $\epsilon>0$ such that
$x+tv\in U$ if $|t|<\epsilon$. Thus, $f(x+tv)$ is well defined
for every $t$ such that $|t|<\epsilon$.
}
\begin{eqnarray*}
Df_x(v) &:=&  \lim_{t\to 0}\frac{f(x+tv)-f(x)}{t}.
\end{eqnarray*}
\end{dfn}

One can also consider the same definition restricted to 
$t>0$~\cite{Shapiro-90}.
A function $f$ is said to have a 
\emph{G\^ateaux differential}~\cite{Ladas,Kurdila} 
(or a \emph{G\^ateaux variation}~\cite{Sagan})
at $x$ if $Df_x(v)$ exists for every $v\in E$.
However, this definition is far too weak for our
purpose because $Df_x(v)$ is generally neither
linear nor continuous in $v$ and it can be linear 
without being continuous and continuous without
being linear~[\onlinecite[p.~7]{Yamamuro}]. 
Therefore, we will use a stronger definition, namely 
\emph{Bastiani differentiability}~\cite{Bastiani-64}, which is the fundamental concept of
differentiability used throughout the paper:
\begin{dfn}
\label{defC1}
Let $U$ be an open subset of a Hausdorff locally convex space $E$
and let $f$ be a map from $U$ to a Hausdorff locally convex space
$F$. Then $f$ is \emph{Bastiani differentiable} on $U$
(denoted by $f\in C^1(U)$) if $f$ has a 
G\^ateaux differential at every $x\in U$
and the map $Df: U\times E\to F$
defined by $Df(x,v)=Df_x(v)$ is continuous on $U\times E$.
\end{dfn}
With this definition, most of the properties used in physics
textbooks (e.g. chain rule, Leibniz rule, linearity) are mathematically valid.

\subsubsection{Examples}
\label{Example-sect}

We shall consider several examples of functions from
$C^\infty(M) $ to $\bbR$ or $\bbC$, where $M=\bbR^d$:
\begin{eqnarray*}
F(\varphi) &=& \int_M f(x) \varphi^n(x) dx,\\
G(\varphi) &=& \int_{M^n} g(x_1,\dots,x_n) \varphi(x_1)\dots\varphi(x_n)
  dx_1\dots dx_n,\\
H(\varphi) &=& \sum_{\mu\nu} \int_M g^{\mu\nu}(x)
   h(x) \partial_\mu \varphi(x) 
   \partial_\nu \varphi(x) dx,\\
I(\varphi) &=& \int_M f(x) e^{\varphi(x)} dx,\\
J(\varphi) &=& e^{\int_M f(x) \varphi(x) dx},\\
K(\varphi) &=& \int_M f(x) \sin\big(\varphi(x)\big) dx,
\end{eqnarray*}
where $f$, $g$, and $h$ are smooth compactly supported
functions and where $g$ is a symmetric function of its arguments.
Further examples can be found 
in~\cite{Brunetti-09,Hamilton-82}.
It is immediate to check that
\begin{eqnarray*}
D F_\varphi (v) &=& n \int_M g(x) \varphi^{n-1}(x) v (x) dx,\\
DG_\varphi (v) &=& 
n \int_{M^n} g(x_1,\dots,x_n)
  \varphi(x_1)\dots \varphi(x_{n-1})
\\&&v(x_n) dx_1 \dots dx_n, \\
DH_\varphi (v) &=&
2\sum_{\mu\nu} \int_M g^{\mu\nu}(x)
   h(x) \partial_\mu \varphi(x) 
   \partial_\nu v(x) dx,\\
DI_\varphi(v) &=& \int_M f(x) e^{\varphi(x)} v(x)dx,\\
DJ_\varphi(v) &=& 
 e^{\int_M f(x) \varphi(x) dx} \, \int_M f(x) v(x) dx,\\
DK_\varphi(v) &=& \int_M f(x) \cos\big(\varphi(x)\big) v(x)dx.
\end{eqnarray*}

\subsubsection{Historical remarks}
Definition~\ref{defC1} is due to Bastiani~\cite{BastianiPhD,Bastiani-64}
and looks quite natural.  In fact, it is not so. 
For a long time, many different approaches were tried.
For any reasonable definition of differentiability, the map $Df_x: E\to F$ is
linear and continuous, so that $Df_x\in L(E,F)$.
If $E$ and $F$ are Banach spaces, then a map
$f:E\to F$ is defined to be continuously (Fr\'echet) differentiable
if the map $x\to Df_x$ is continuous
from $U$ to $L_c(E,F)$, where $L_c(E,F)$ is the
set of continuous maps from $E$ to $F$ equipped with the
operator norm topology.
But Fr\'echet differentiability is strictly stronger than
Bastiani's
differentiability specialized to Banach spaces~\cite{Neeb-01}.
This is why Bastiani's definition was often dismissed
in the literature~\cite{Penot-73} and, for locally
convex spaces that are not Banach, 
the map $Df$ was generally required to be continuous from $U$ to 
$L(E,F)$ equipped with some well-chosen topology.
However, when $E$ is not normable, 
no topology on $L(E,F)$ provides
the nice properties of Bastiani's 
definition~[\onlinecite[p.~6]{Keller-74}]
(Hamilton~[\onlinecite[p.~70]{Hamilton-82}] gives a simple
example of a map which is continuous $U\times E \to E$
but such that the corresponding map $U\to L(E,E)$ is not continuous).
Thus, $L(E,F)$ was equipped with various non-topological 
convergence structures~[\onlinecite[p.~23]{Keller-74}].
The result is an impressive zoology of differentiabilities.
Twenty-five of them were reviewed and classified
by Averbukh and Smolyanov~\cite{Averbukh-68}.
Still more can be found in the extensive
lists given by
G\"ahler~\cite{Gahler} and 
Ver Eecke~\cite{VerEecke} 
covering the period up to 1983
(see also \cite{Yamamuro,Keller-74,Nashed}).

Nowadays, essentially two concepts of differentiability
survive, Bastiani's and the so-called 
\emph{convenient approach} developed by Kriegl--Michor in 
the reference monograph~\cite{KrieglMichor}, which is weaker than
Bastiani's for general Hausdorff locally convex spaces.
In particular, on any locally convex space which is not
bornological, there is a conveniently smooth map 
which is not continuous~[\onlinecite[p.~19]{BastianiPhD}].
However, a nice feature of both approaches is that for a Fr\'echet 
space $E$, a function 
$f:E\to \bbK$ is smooth in the sense of Bastiani
iff it is smooth in the sense of the convenient
calculus~\cite{Abbati-99}.
Bastiani differentiability became widespread after it was used by
Michor~\cite{Michor-80},
Hamilton~\cite{Hamilton-82} (for Fr\'echet spaces)
and Milnor~\cite{Milnor-83} and it is now 
vigorously developed by Gl\"ockner and Neeb
(see also~\cite{Khesin-09}).

To complete this section, we would like
to mention that the Bastiani differential
is sometimes called the Michal-Bastiani
differential~\cite{Bertram-08,Szilasi-08,Dahmen-14} 
(or even Michel-Bastiani differential~\cite{Szilasi-08}).
This is not correct.
The confusion comes from the fact that
Bastiani defines her differentiability in 
several steps. She starts from the 
G\^ateaux derivability, then she says that a map
$f:U\to F$ is 
\emph{differentiable at $x$}
(see~[\onlinecite[p.~18]{BastianiPhD}] 
and~[\onlinecite[p.~18]{Bastiani-64}])
if: i) $Df_x$ is linear and continuous
from $E$ to $F$ and 
ii) the map $m_x: \bbR\times E\to F$ 
defined by
\begin{eqnarray*}
m_x(t,v) &=& \frac{f(x+tv)-f(x)}{t}-Df_x(v),
\end{eqnarray*}
for $x+t v\in U$,
is continuous  at $(0,v)$
for all $v\in E$.
This differentiability at $x$ is indeed
equivalent to the differentiability defined by Michal~\cite{Michal-38}
in 1938, as proved in Refs.~\onlinecite{Averbukh-68,Massam-74},
[\onlinecite[p.~72]{Keller-74}] and [\onlinecite[p.~202]{VerEecke}].
What we call Bastiani differentiability is called
\emph{differentiability on an open set}
by Bastiani (see~[\onlinecite[p.~25]{BastianiPhD}] and
[\onlinecite[p.~44]{Bastiani-64}]) and
is strictly stronger than Michal differentiability.

The same distinction between Michal-Bastiani
differentiability and Bastiani differentiability
is made by Keller~[\onlinecite[p.~72]{Keller-74}] 
in his thorough review. 
Bastiani's differentiability is denoted by
$C^1_c$ by Keller~\cite{Keller-74}, who
also attributes the definition equivalent to $C^1_c$ to 
Bastiani alone~[\onlinecite[p.~11]{Keller-74}].

In her PhD thesis, Andr\'ee Bastiani developed her concept of
differentiability to define distributions
on a locally convex space $E$ with values in 
a locally convex space $F$. She started
from Schwartz' remark that a distribution is, 
locally, the derivative of a continuous function.
She used her differential $D$ to define
$F$-valued distributions over $E$~\cite{Ehresmann-11}.
A drawback of Bastiani's framework with
respect to the convenient framework is that her
category is not Cartesian closed for locally convex
spaces that are not Fr\'echet. 

\subsection{Properties of the differential}
We review now some of the basic properties
of functional derivatives which will be used in
the sequel. We strongly recommend 
Hamilton's paper~\cite{Hamilton-82},
adapted to locally convex spaces by
Neeb~\cite{Neeb-01}.

\subsubsection{Continuity}
\label{conti-sect}
We characterize continuous (nonlinear) maps between two 
locally convex spaces.
\begin{lem}
\label{conti-lem}
Let $E$ and $F$ be locally convex spaces whose topology is 
defined by the families of seminorms
$(p_i)_{i\in I}$ and $(q_j)_{j\in J}$,
respectively. 
Then $f$ is continuous at 
$x$ iff, for every seminorm $q_j$ of $F$ and every $\epsilon >0$, 
there is a finite 
number $\{p_{i_1},\dots ,p_{i_k}\}$ of seminorms of $E$
and $k$ strictly positive numbers  $\eta_1,\dots, \eta_k$ such that 
$p_{i_1}(x-y)< \eta_1,\dots ,p_{i_k}(x-y)< \eta_k$
imply $q_j\big(f(y)-f(x)\big)<\epsilon$.
\end{lem}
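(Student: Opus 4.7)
The plan is to deduce the lemma directly from the standard characterization of the topology of a locally convex space: a basis of neighborhoods of a point $x\in E$ is given by the family of ``seminorm boxes''
\begin{eqnarray*}
B(x;p_{i_1},\dots,p_{i_k};\eta_1,\dots,\eta_k) &=& \{y\in E : p_{i_\ell}(y-x)<\eta_\ell,\ \ell=1,\dots,k\},
\end{eqnarray*}
with $\{p_{i_\ell}\}$ ranging over finite subfamilies of $(p_i)_{i\in I}$ and $\eta_\ell>0$, and similarly for $F$ with $(q_j)_{j\in J}$. I would quote this from Horvath's book, which is already the recommended reference for locally convex spaces in the paper.

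For the ``only if'' direction, I fix $q_j$ and $\epsilon>0$ and consider the open neighborhood $W=\{z\in F:q_j(z-f(x))<\epsilon\}$ of $f(x)$. Continuity of $f$ at $x$ supplies an open neighborhood $U$ of $x$ with $f(U)\subset W$, and by the basis property there exist seminorms $p_{i_1},\dots,p_{i_k}$ and $\eta_1,\dots,\eta_k>0$ with $B(x;p_{i_1},\dots,p_{i_k};\eta_1,\dots,\eta_k)\subset U$. This gives the desired implication $p_{i_\ell}(y-x)<\eta_\ell\Rightarrow q_j(f(y)-f(x))<\epsilon$.

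For the ``if'' direction, I take an arbitrary neighborhood $W$ of $f(x)$ in $F$. By the basis property for $F$, $W$ contains a set of the form $\bigcap_{l=1}^{m}\{z:q_{j_l}(z-f(x))<\epsilon_l\}$. Applying the hypothesis to each pair $(q_{j_l},\epsilon_l)$ yields, for every $l$, a finite family of seminorms of $E$ and positive numbers controlling $q_{j_l}(f(y)-f(x))$. Taking the (finite) union of these seminorm families and the corresponding radii produces a basis neighborhood $U$ of $x$ with $f(U)\subset W$, proving continuity at $x$.

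The argument is almost entirely bookkeeping, so there is no genuine obstacle; the only point deserving care is to note that a sub-basis of neighborhoods of $f(x)$ given by single seminorm conditions suffices because finite intersections are again handled by pooling finitely many seminorms of $E$, which is why the statement can be phrased one $q_j$ at a time while still being equivalent to full continuity.
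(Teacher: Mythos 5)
Your proposal is correct and is essentially the paper's own argument: the paper simply observes that the lemma is the translation, in terms of seminorms, of the topological definition of continuity at a point, which is exactly the bookkeeping you carry out explicitly with the basis of seminorm boxes. Your remark that single-seminorm conditions form a sub-basis, with finite intersections handled by pooling finitely many seminorms of $E$, is the only point needing care, and you have it right.
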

\begin{proof}
This is just the translation in terms of seminorms 
of the fact that $f$ is continuous at $x$ if, for every open set
$V$ containing $f(x)$ , there is an open set $U$
containing $x$ such that $f(U)\subset V$~[\onlinecite[p.~86]{Kelley}].
\end{proof}
When the seminorms of $E$ are saturated~[\onlinecite[p.~96]{Horvath}],
as the seminorms  $\pi_{m,K}$ of $C^\infty(\bbR^d)$, the condition
becomes simpler: a map $f:C^\infty(\bbR^d)\to \bbK$
is continuous at $x$ if and only if, 
for every $\epsilon>0$, there is a seminorm
$\pi_{m,K}$ and an $\eta>0$ such that
$\pi_{m,K}(x-y) < \eta$ implies $|f(y)-f(x)|<\epsilon$.
Since Fr\'echet spaces are metrizable, we can also use the 
following characterization
of continuity~[\onlinecite[p.~154]{Bourbaki-Topo-II}]:
\begin{prop}
\label{sequen-cont}
Let $E$ be a metrizable 
topological space and $F$ a
topological space. Then, a 
map $f:E\to F$ is continuous at a point 
$x$ iff, whenever 
a sequence $(x_n)_{n\in\mathbb{N}}$
converges to $x$ in $E$, the sequence
$f(x_n)_{n\in\mathbb{N}}$ converges to $f(x)$ in $F$.
\end{prop}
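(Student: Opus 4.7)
The plan is to prove the two implications separately, with only the reverse direction requiring the metrizability hypothesis.

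For the forward direction, I would argue directly from the topological definition of continuity. Assume $f$ is continuous at $x$ and let $(x_n)_{n\in\bbN}$ be any sequence in $E$ converging to $x$. Given any open neighborhood $V$ of $f(x)$ in $F$, continuity yields an open neighborhood $U$ of $x$ with $f(U)\subset V$. Since $x_n\to x$, there is $N\in\bbN$ such that $x_n\in U$ for all $n\ge N$, whence $f(x_n)\in V$ for all $n\ge N$. This shows $f(x_n)\to f(x)$. Note this implication does not use metrizability and holds for any topological spaces $E$ and $F$.

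For the reverse direction, I would proceed by contrapositive, and this is where the metric $d$ on $E$ (obtained from the metrizability assumption) is essential. Suppose $f$ is not continuous at $x$. Then there is an open neighborhood $V$ of $f(x)$ such that for every open neighborhood $U$ of $x$, the set $U\setminus f^{-1}(V)$ is nonempty. Applying this to the countable basis of neighborhoods $U_n := \{y\in E\telque d(x,y)<1/n\}$, I can pick, for each $n\ge 1$, a point $x_n\in U_n$ with $f(x_n)\notin V$. By construction $d(x,x_n)<1/n$, so $x_n\to x$ in $E$, but $f(x_n)\notin V$ for every $n$ prevents $f(x_n)$ from converging to $f(x)$ in $F$. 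This contradicts the sequential continuity assumption and completes the proof.

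The only delicate point is making sure that the axiom of (countable) choice is invoked appropriately when selecting the $x_n$; beyond that, the argument is routine. The proof is short enough that writing it out in full should not pose any obstacle, and it can be given in essentially one short paragraph per implication.
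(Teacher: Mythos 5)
Your proof is correct: the forward implication is the purely topological argument, and the reverse implication via the contrapositive, using the countable basis of balls $\{y\in E : d(x,y)<1/n\}$ and countable choice, is the standard argument (indeed only first countability of $E$ is used, metrizability being invoked just to produce that basis). The paper does not prove this proposition at all --- it simply cites Bourbaki's \emph{General Topology} --- so your write-up supplies exactly the routine textbook proof that the citation stands for, and there is nothing to object to.
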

Another useful theorem is~[\onlinecite[p.~III.30]{Bourbaki-TVS}]:
\begin{prop}
Let $E$ and $F$ be two Fr\'echet spaces and $G$ a locally
convex space. Every separately continuous bilinear mapping
from $E\times F$ to $G$ is continuous.
\end{prop}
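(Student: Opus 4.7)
The plan is to reduce to sequential continuity at the origin by exploiting metrizability of $E\times F$ and then invoke the Banach--Steinhaus theorem, whose hypotheses are satisfied because $F$ (being Fr\'echet) is barrelled.

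Let $b:E\times F\to G$ denote the separately continuous bilinear map. I would first observe that, by bilinearity, continuity of $b$ at an arbitrary point $(x_0,y_0)$ follows from continuity at $(0,0)$, since
\begin{eqnarray*}
b(x,y)-b(x_0,y_0)=b(x-x_0,y-y_0)+b(x-x_0,y_0)+b(x_0,y-y_0),
\end{eqnarray*}
and the last two summands tend to $0$ by separate continuity of $b$ at $x_0$ and $y_0$. So it suffices to show $b$ is continuous at $(0,0)$. Since $E$ and $F$ are Fr\'echet, the product $E\times F$ is metrizable, so by Proposition~\ref{sequen-cont} it is enough to prove that whenever $(x_n,y_n)\to (0,0)$ in $E\times F$, one has $b(x_n,y_n)\to 0$ in $G$.

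Fix such a sequence and consider the family $T_n:=b(x_n,\cdot):F\to G$. Each $T_n$ is linear and continuous by separate continuity in the second variable. For every fixed $y\in F$, the map $x\mapsto b(x,y)$ is continuous at $0$ and linear, hence $T_n(y)=b(x_n,y)\to b(0,y)=0$, so in particular $\{T_n(y):n\in\bbN\}$ is bounded in $G$ for every $y$. At this point I invoke the Banach--Steinhaus theorem in the form valid for a barrelled domain and an arbitrary locally convex codomain: a pointwise bounded family of continuous linear maps from a barrelled space to a locally convex space is equicontinuous. Since the Fr\'echet space $F$ is barrelled, the family $(T_n)$ is equicontinuous, i.e. for every continuous seminorm $q$ on $G$ there exist a continuous seminorm $p$ on $F$ and a constant $C>0$ with $q\bigl(T_n(y)\bigr)\le C\,p(y)$ for every $n$ and every $y\in F$.

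With equicontinuity in hand the conclusion is immediate: $q\bigl(b(x_n,y_n)\bigr)=q(T_n y_n)\le C\,p(y_n)\to 0$ because $y_n\to 0$ in $F$. Since $q$ was arbitrary, $b(x_n,y_n)\to 0$ in $G$, which is what we needed. The one step that requires genuine input, and is the crux of the argument, is the appeal to the uniform boundedness principle; the Fr\'echet hypothesis on $E$ serves only to make $E\times F$ metrizable so that sequential continuity suffices, while the Fr\'echet hypothesis on $F$ is essential because it ensures barrelledness and thereby legitimizes Banach--Steinhaus. Without completeness of $F$ (e.g. if $F$ were merely a normed space that is not Banach) the equicontinuity step could fail.
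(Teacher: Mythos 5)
Your proof is correct, and it is the standard Banach--Steinhaus argument for this classical result: the paper itself gives no proof but simply cites Bourbaki (TVS, III.30), where essentially the same equicontinuity argument appears. Your reduction to sequential continuity at the origin via metrizability, followed by the uniform boundedness principle on the barrelled space $F$, is exactly the expected route, so there is nothing to add.
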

This result extends to multilinear 
mappings from a product $E_1\times\dots\times E_n$ of Fr\'echet spaces
to a locally convex space~\cite{ReedSimonI,Neeb-11}.

\subsubsection{The fundamental theorem of calculus}
The fundamental theorem of calculus for functionals reads
\begin{thm}
Let $f$ be a Bastiani differentiable map between
two Hausdorff locally convex spaces $E$ and $F$. Let
$U$ be an open set in $E$, $x$ in $U$
and $v$ in $E$ such that 
$(x+tv)\in U$ for every  $t$ in an open
neighborhood $I$ of $[0,1]$, so that
$g:t\mapsto f(x+tv)$ is a map from $I$ to $F$. Then,
\begin{eqnarray}
f(x+v) &=& f(x) + \int_0^1 g'(t) dt
\nonumber\\&=&
  f(x) + \int_0^1 Df_{x+tv}(v) dt.
\label{fundthmD}
\end{eqnarray}
\end{thm}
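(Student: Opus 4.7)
The plan is to reduce the statement to the classical one-variable fundamental theorem of calculus for scalar-valued functions by using the fact that, for a Hausdorff locally convex space $F$, the Hahn-Banach theorem guarantees that the continuous dual $F'$ separates points of $F$.

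First, I would analyze the auxiliary curve $g:I\to F$ given by $g(t)=f(x+tv)$. Differentiability of $g$ at every $t_0\in I$ follows directly from the G\^ateaux part of Bastiani differentiability of $f$:
\[
g'(t_0)\;=\;\lim_{s\to 0}\frac{f\bigl((x+t_0 v)+sv\bigr)-f(x+t_0 v)}{s}\;=\;Df_{x+t_0 v}(v).
\]
Continuity of $g'$ on $I$ is then an immediate consequence of Definition \ref{defC1}: the map $t\mapsto(x+tv,v)$ is continuous from $I$ into $U\times E$, and $Df:U\times E\to F$ is continuous by hypothesis, so $g'$ is the composition of two continuous maps.

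Second, I would reduce to the scalar case. For each $\ell\in F'$, set $h:=\ell\circ g:I\to\bbK$. Linearity and continuity of $\ell$ imply that $h$ is continuously differentiable with $h'(t)=\ell(g'(t))=\ell\bigl(Df_{x+tv}(v)\bigr)$. The classical fundamental theorem of calculus in $\bbK$ then yields
\[
\ell\bigl(f(x+v)\bigr)-\ell\bigl(f(x)\bigr)\;=\;h(1)-h(0)\;=\;\int_0^1 \ell\bigl(Df_{x+tv}(v)\bigr)\,dt.
\]
Since $t\mapsto Df_{x+tv}(v)$ is a continuous $F$-valued curve on the compact interval $[0,1]$, the Riemann integral $\int_0^1 Df_{x+tv}(v)\,dt$ exists (in $F$, or at worst in its completion) and commutes with every $\ell\in F'$. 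The identity above therefore rewrites as
\[
\ell\Bigl(f(x+v)-f(x)-\int_0^1 Df_{x+tv}(v)\,dt\Bigr)=0
\]
for every $\ell\in F'$, and Hahn-Banach separation of points in $F$ gives (\ref{fundthmD}).

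The main obstacle, and the only real subtlety, is giving a rigorous meaning to the $F$-valued integral when $F$ is not assumed to be complete. This is handled either by interpreting $\int_0^1 Df_{x+tv}(v)\,dt$ as a Pettis-type weak integral defined by its action on $F'$, and observing a posteriori that it coincides with the element $f(x+v)-f(x)\in F$, or by invoking the fact that continuous curves on a compact real interval into a locally convex space admit a Riemann integral in $F$ itself under mild (e.g.\ Mackey or sequential) completeness assumptions. Either route is standard; once the integral is in place, the Hahn-Banach reduction above completes the proof.
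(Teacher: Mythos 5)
Your proof is correct and follows essentially the route the paper intends: the paper does not spell out a proof but defines the integral in Eq.~(\ref{fundthmD}) precisely as the weak integral, so the content of the theorem is exactly your argument --- compute $g'(t)=Df_{x+tv}(v)$ and check its continuity from Definition~\ref{defC1}, reduce to the scalar fundamental theorem of calculus via $\ell\circ g$ for $\ell\in F'$, and identify $f(x+v)-f(x)\in F$ a posteriori as the weak integral, uniqueness following from Hahn--Banach separation since $F$ is Hausdorff. This is also why, as the paper remarks (citing Bastiani and Gl\"ockner), no completeness of $F$ is needed, so your closing hedge about Riemann integrals and Mackey or sequential completeness is unnecessary under the paper's definition of the integral.
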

To give a meaning to Eq.~(\ref{fundthmD}), we need to define an
integral of a function taking its values
in a locally convex space.
To cut a long story short~\cite{Bourbaki-Int-1-4,Bourbaki-Int-6}:
\begin{dfn}
Let $X$ be a locally compact space 
(for example $\bbR^n$ or some finite dimensional
manifold), $\mu$ a measure on $X$  and $F$ a Hausdorff locally convex space. 
Let $f$ be a compactly supported continuous function
from $X$ to $F$. Let $F'$ be the topological dual of
$F$ (i.e. the space of continuous linear maps
from $F$ to $\bbK$).
If there is an element $y\in F$ such that
\begin{eqnarray*}
\langle \alpha, y\rangle &=& 
\int_{X} \langle\alpha, f\rangle d\mu,
\end{eqnarray*}
for every $\alpha\in F'$, where $\langle\cdot,\cdot\rangle$ denotes
the duality pairing, then we
say that $f$ has a weak integral and we denote
$y$ by $\int_{X} f d\mu$.
\end{dfn}
The uniqueness of the weak integral follows from the
fact that $F$ is Hausdorff.
In general, the existence of a weak integral
requires some completeness property 
for $F$~[\onlinecite[p.~79]{Bourbaki-Int-1-4}].
However, this is not the case for the fundamental
theorem of calculus~[\onlinecite[p.~27]{BastianiPhD}].
This point was stressed by
Gl\"ockner~\cite{Glockner-02-complete}.

\subsubsection{Additional properties}
\label{lindifsect}
For maps between locally convex spaces, the linearity
of the differential is not completely 
trivial~\cite{Hamilton-82}.
\begin{prop}
Let $E$ and $F$ be locally convex spaces and $f$
be a Bastiani differentiable map from an open subset $U$ of
$E$ to $F$.  Then, for every $x\in U$, 
the differential $Df_x:E\to F$ is a linear map.
\end{prop}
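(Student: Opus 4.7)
The plan is to verify the two defining properties of linearity—homogeneity and additivity—separately, using only Gâteaux differentiability, the joint continuity of $Df$ on $U\times E$, and the fundamental theorem of calculus proved earlier (Eq.~(\ref{fundthmD})).

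\emph{Homogeneity.} For $\lambda\in\bbK\setminus\{0\}$, substituting $s=t\lambda$ in the limit defining $Df_x(\lambda v)$ gives
\[
Df_x(\lambda v)=\lim_{t\to 0}\frac{f(x+t\lambda v)-f(x)}{t}=\lambda\lim_{s\to 0}\frac{f(x+sv)-f(x)}{s}=\lambda Df_x(v),
\]
and the case $\lambda=0$ follows directly because $f(x+0)-f(x)=0$ and the limit quotient is identically zero.

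\emph{Additivity.} Fix $v,w\in E$ and, using that $U$ is open, choose $\epsilon>0$ small enough that $x+tv+stw\in U$ for all $(t,s)\in(-\epsilon,\epsilon)\times[0,1]$. Decompose the difference quotient:
\[
\frac{f(x+t(v+w))-f(x)}{t}=\frac{f(x+tv+tw)-f(x+tv)}{t}+\frac{f(x+tv)-f(x)}{t}.
\]
The second summand converges to $Df_x(v)$ by definition. For the first summand, apply the fundamental theorem of calculus to $g(s):=f(x+tv+stw)$ on $[0,1]$, together with the homogeneity just established, to get
\[
\frac{f(x+tv+tw)-f(x+tv)}{t}=\int_0^1 Df_{x+tv+stw}(w)\,ds.
\]
Now joint continuity of $Df$ on $U\times E$ (Definition~\ref{defC1}) implies that for every continuous seminorm $q$ on $F$ and every $\eta>0$, for $|t|$ small enough one has $q(Df_{x+tv+stw}(w)-Df_x(w))<\eta$ uniformly in $s\in[0,1]$. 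Since $s\mapsto Df_{x+tv+stw}(w)-Df_x(w)$ is continuous and compactly supported on $[0,1]$, its weak integral exists and satisfies $q(\int_0^1\cdot\,ds)\le\int_0^1 q(\cdot)\,ds<\eta$. Hence the first summand converges to $Df_x(w)$, and combining the two yields $Df_x(v+w)=Df_x(v)+Df_x(w)$.

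\emph{Main obstacle.} The only real subtlety is the passage of the limit $t\to 0$ inside the weak integral appearing in the additivity step. This is exactly where the full strength of Bastiani differentiability (joint continuity of $Df$ on $U\times E$) is needed: mere Gâteaux existence or separate continuity of $Df$ would not suffice to produce the uniform seminorm estimate over $s\in[0,1]$ required to commute the limit with the integral. Once that uniform estimate is in hand, the compactness of $[0,1]$ and the definition of the weak integral do the rest.
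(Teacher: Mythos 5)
Your proof is correct, and it is essentially the classical argument (homogeneity by rescaling the difference quotient, additivity via the fundamental theorem of calculus plus the joint continuity of $Df$ to pass the limit through the weak integral); the paper itself gives no proof for this proposition but defers to Hamilton~\cite{Hamilton-82} and Neeb, whose proof follows exactly this route. The uniform seminorm estimate over $s\in[0,1]$ that you single out is indeed the one place where Bastiani (joint) continuity, rather than mere G\^ateaux differentiability, is used, so your identification of the main obstacle is accurate.
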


The chain rule for Bastiani-differentiable functions
was first proved by Bastiani
herself~\cite{BastianiPhD}~\cite{Bastiani-64}
(see also \cite{Bertram-08}).
\begin{prop}
Assume that $E,F,G$ are locally convex spaces,
$U\subset E$ and $V\subset F$ are open subsets and 
$f:V\to G$ and $g:U\to V$ are 
two Bastiani-differentiable maps.
Then, the composite map $f\circ g:U\to G$ is Bastiani differentiable
and $D(f\circ g)_x=Df_{g(x)}\circ Dg_x$.
\end{prop}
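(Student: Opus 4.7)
The plan is to reduce the composite derivative to an application of the fundamental theorem of calculus (proved earlier) for the outer function $f$, combined with the linearity of $Df_y$ provided by the previous proposition. Fix $x\in U$ and $v\in E$. Since $g$ is Bastiani differentiable, the fundamental theorem of calculus already shows that $g$ is continuous: applied to a small segment $x+tv$ one writes $g(x+tv)-g(x)=\int_0^1 Dg_{x+stv}(tv)\,ds$, and the continuity of $Dg$ on $U\times E$ makes the right-hand side tend to $0$ as $t\to 0$. Because $V$ is open and $g$ is continuous, one can choose $\epsilon>0$ so small that, for $|t|<\epsilon$, the whole segment $\{g(x)+s\bigl(g(x+tv)-g(x)\bigr) : s\in[0,1]\}$ stays in $V$ (using a convex neighborhood of $g(x)$ contained in $V$).

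On that range of $t$, apply the fundamental theorem of calculus to $f$ along the segment from $g(x)$ to $g(x+tv)$, obtaining
\begin{eqnarray*}
f(g(x+tv))-f(g(x)) &=& \int_0^1 Df_{g(x)+s(g(x+tv)-g(x))}\bigl(g(x+tv)-g(x)\bigr)\,ds.
\end{eqnarray*}
For $t\neq 0$, introduce the auxiliary map $w(t):=\bigl(g(x+tv)-g(x)\bigr)/t\in F$; by definition of $Dg_x$ one has $w(t)\to Dg_x(v)$ as $t\to 0$, so setting $w(0):=Dg_x(v)$ extends $w$ continuously to a neighborhood of $0$. Using the linearity of $Df_y$ in its second argument (the preceding proposition) one divides the previous equality by $t$ to get
\begin{eqnarray*}
\frac{f(g(x+tv))-f(g(x))}{t} &=& \int_0^1 H(s,t)\,ds, \qquad H(s,t):=Df_{g(x)+st\,w(t)}\bigl(w(t)\bigr).
\end{eqnarray*}

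Next I show that $H$ extends to a continuous map on $[0,1]\times(-\epsilon,\epsilon)$. The map $(s,t)\mapsto\bigl(g(x)+st\,w(t),\,w(t)\bigr)$ is jointly continuous into $V\times F$, because $w$ is continuous at $0$ and $st\,w(t)\to 0$ as $t\to 0$; composing with the continuous map $Df:V\times F\to G$ (which is exactly the Bastiani differentiability of $f$) yields continuity of $H$. The key analytic step is then the continuity of the parameter-dependent weak integral $t\mapsto\int_0^1 H(s,t)\,ds$, which is checked seminorm by seminorm from the joint continuity of $H$ on the compact set $[0,1]\times\{t\}$ (no completeness of $G$ is needed, as recalled after the fundamental theorem of calculus). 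Taking $t\to 0$ therefore gives the desired formula
\begin{eqnarray*}
D(f\circ g)_x(v) &=& \int_0^1 H(s,0)\,ds \;=\; Df_{g(x)}\bigl(Dg_x(v)\bigr).
\end{eqnarray*}

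Finally, Bastiani differentiability of $f\circ g$ requires that $D(f\circ g):U\times E\to G$ be continuous. Writing $D(f\circ g)(x,v)=Df\bigl(g(x),Dg(x,v)\bigr)$, this is a composition of two continuous maps: the pair $(x,v)\mapsto\bigl(g(x),Dg(x,v)\bigr)\in V\times F$ is continuous because $g$ is continuous (as recalled above) and $Dg$ is continuous on $U\times E$, and $Df:V\times F\to G$ is continuous by assumption on $f$. The single nontrivial point is the interchange of limit and integral in the third step; everything else is a formal rearrangement using the linearity of $Df_y$ and the chain of continuity statements supplied by the Bastiani framework.
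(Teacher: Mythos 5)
Your argument is correct, and it is worth noting that the paper itself does not prove this proposition: it simply cites Bastiani's thesis and her 1964 paper (and Hamilton for the smooth case). What you have written is precisely the classical proof used in those references: the fundamental theorem of calculus along the chord from $g(x)$ to $g(x+tv)$ inside a convex neighborhood, homogeneity of $Df_y$ to factor out $t$, passage to the limit under the weak integral, and then joint continuity of $D(f\circ g)$ by writing it as $Df\circ(g,Dg)$. Two small points deserve attention. First, your parenthetical derivation of the continuity of $g$ only establishes continuity along rays $t\mapsto g(x+tv)$; in the last step you need continuity of $g$ as a map on $U$. This is item~2 of Proposition~2.13 of the paper (a Bastiani $C^1$ map is continuous), or it follows by running your own FTC estimate with $h$ in a balanced convex neighborhood of $0$ (so that $sh$ stays in the neighborhood for $s\in[0,1]$) together with continuity of $Dg$ at $(x,0)$; as literally stated, the radial argument does not suffice. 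Second, the interchange of limit and integral is most cleanly justified by the seminorm inequality $q\bigl(\int_0^1 u(s)\,ds\bigr)\le\int_0^1 q\bigl(u(s)\bigr)\,ds$ for weak integrals (a Hahn--Banach consequence) combined with uniform convergence of $H(\cdot,t)\to H(\cdot,0)$ on the compact $[0,1]$, or alternatively by invoking Hamilton's Theorem~2.1.5 on parameter-dependent integrals, which the paper itself uses in the proof of its Lemma on $\int_0^1 F(s\varphi)\,ds$. With these justifications made explicit, the proof is complete and matches the standard route.
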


By using these properties, the reader can prove that our
examples are all Bastiani differentiable.

\subsection{Smooth functionals}
To define smooth functionals we first define
multiple derivatives.
\begin{dfn}
\label{differ-def}
Let $U$ be an open subset of a locally convex space $E$
and $f$ a map from $U$ to a locally convex space
$F$. We say that $f$ is \emph{$k$-times Bastiani differentiable}
on $U$ if:
\begin{itemize}
\item The $k$th G\^ateaux differential
\begin{eqnarray*}
D^kf_x(v_1,\dots,v_k)
&=& \frac{\partial^k f(x+t_1 v_1+\dots + t_k v_k)}
  {\partial t_1\dots\partial t_k}\Big|_{\mathbf{t}=0},
\end{eqnarray*}
where $\mathbf{t}=(t_1,\dots,t_k)$,
exists for every $x\in U$ and every  $v_1, \dots,v_k \in E$.
\item The map
$D^k f:U\times E^k\to F$ is continuous.
\end{itemize}
\end{dfn}
Notice that for a function $f$ assumed to be $k$-times Bastiani differentiable,
the restriction to any finite dimensional affine subspace is not only 
$k$-times differentiable (in the usual sense) but indeed of class $C^k$.
The set of $k$-times Bastiani differentiable
functions on $U$ is denoted by $C^k(U)$,
or $C^k(U,F)$ when the target space $F$
has to be specified.
Bastiani gives an equivalent definition, called
$k$-times differentiability
on $U$~[\onlinecite[p.~40]{BastianiPhD}],
which is denoted by $C^k_c$ by Keller~\cite{Keller-74}.
\begin{dfn}
Let $U$ be an open subset of a locally convex space $E$
and $f$ a map from $U$ to a locally convex space
$F$. We say that $f$ is smooth on $U$ if
$f\in C^k(U,F)$ for every integer $k$.
\end{dfn}
We now list a number of useful properties of the
$k$-th Bastiani differential:
\begin{prop}
\label{Cku-prop}
Let $U$ be an open subset of a locally convex space $E$
and $f\in C^k(U,F)$, where $F$ is a locally convex space, then
\begin{enumerate}
\item $D^kf_x(v_1,\dots,v_k)$ is a $k$-linear 
  symmetric function of $v_1,\dots,v_k$~[\onlinecite[p.~84]{Hamilton-82}].
\item The function $f$ is of class $C^m$ for all 
    $0\le m \le k$~[\onlinecite[p.~40]{BastianiPhD}].
  In particular, $f$ is continuous.
\item The compositions of two functions in $C^k$ is in
  $C^k$ and the chain rule 
  holds~[\onlinecite[p.~51]{BastianiPhD}] and
  [\onlinecite[p.~84]{Hamilton-82}].
\item The map $D^mf$ is in
  $C^{k-m}(U,L(E^m,F))$~[\onlinecite[p.~40]{BastianiPhD}]
  where $L(E^m,F)$ is the space of jointly continuous
  $m$-linear maps from $E$ to $F$, equipped with the locally
  convex topology of uniform convergence on the compact sets
  of $E$: i.e. the topology generated by the seminorms
  \begin{eqnarray*}
  p_{C,j}(\alpha)=\sup_{(h_1,\dots,h_m)\in C} q_j\big(\alpha(h_1,\dots,h_m)\big),
  \end{eqnarray*}
where $C=C_1\times\dots \times C_m$,
  $C_i$ runs over the 
  compact sets of $E$ and
  $(q_j)_{j\in J}$ is a family of seminorms defining the
  topology of $F$.
\item If $E$ is metrizable, then $f\in C^k(U,F)$ iff
  $f$ belongs to $C^{k-1}(U,F)$ and 
  $D^{k-1}f:U\to L(E^{k-1},F)$ is Bastiani 
   differentiable~[\onlinecite[p.~43]{BastianiPhD}].
  Here the metrizability hypothesis is used to obtain a
  canonical injection from
  $C(U\times E,L(E^{k-1},F))$ to $C(U\times E^k,F)$.
\end{enumerate}
\end{prop}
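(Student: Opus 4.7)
The plan is to establish the five items in order, following the line of argument of Bastiani's thesis and Hamilton's paper but making explicit the points where the locally convex topology matters. Throughout, I would work by induction on $k$, taking the base case $k=1$ from the results already assembled in Section~2.4 (linearity of $Df_x$, the fundamental theorem of calculus, the chain rule).

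For item 1, symmetry in any two slots would come from the two-variable identity
\begin{equation*}
\frac{f(x+t_1 v_1+t_2 v_2)-f(x+t_1 v_1)-f(x+t_2 v_2)+f(x)}{t_1 t_2}
= \int_0^1\!\!\int_0^1 D^2f_{x+s_1 t_1 v_1+s_2 t_2 v_2}(v_1,v_2)\,ds_1 ds_2,
\end{equation*}
obtained by applying the fundamental theorem of calculus twice; the right-hand side is symmetric in $(t_1 v_1,t_2 v_2)$ by Fubini, and letting $t_1,t_2\to 0$ gives $D^2f_x(v_1,v_2)=D^2f_x(v_2,v_1)$. Iterating and using the Schwarz-type argument on any pair of adjacent arguments yields full symmetry. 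Multilinearity then reduces to the known linearity of $Df_x$ applied to the map $y\mapsto D^{k-1}f_y(v_2,\dots,v_k)$, using continuity of $D^k f$ to interchange the limit in the first slot with the $(k-1)$-fold G\^ateaux limit.

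For item 2, I would use the fundamental theorem of calculus $f(x+v)=f(x)+\int_0^1 Df_{x+tv}(v)\,dt$ to deduce continuity of $f$ from continuity of $Df$, and observe that fixing $v_{m+1},\dots,v_k=0$ in Definition~\ref{differ-def} gives the $m$-th differential as a restriction of $D^k f$, hence continuous on $U\times E^m$. For item 3, the base case is the chain rule of Section~\ref{lindifsect}. The induction step reads $D(f\circ g)_x=Df_{g(x)}\circ Dg_x=\mathrm{ev}\circ(Df\circ g,Dg)$ where $\mathrm{ev}:L(E,F)\times E\to F$ is evaluation; if item~4 has been proved at order $k-1$, all factors are $C^{k-1}$ and the composition is $C^{k-1}$, so $f\circ g\in C^k$. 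Items~3 and~4 are therefore proved in tandem by induction on $k$.

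For item 4, the key point is that $x\mapsto D^m f_x$ is $(k-m)$-times Bastiani differentiable with values in $L(E^m,F)$ endowed with the compact-open topology. Continuity of $D^m f:U\to L(E^m,F)$ for this topology follows from joint continuity of $D^m f$ on $U\times E^m$ (given by the hypothesis $f\in C^k$) together with compactness of the sets $C_1\times\cdots\times C_m$: on a compact neighborhood of any $x$ times such a $C$, joint continuity upgrades to the required uniformity, bounding $p_{C,j}\bigl(D^m f_y-D^m f_x\bigr)$. The G\^ateaux derivatives of $D^m f$ up to order $k-m$ are then identified with the partial derivatives of $D^k f$ and shown jointly continuous by the same compact-uniformity argument. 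Item 5 is handled last: the forward implication is immediate, and for the converse I would identify the G\^ateaux differential of $D^{k-1}f:U\to L(E^{k-1},F)$ at $x$ in direction $v_k$ with the map $(v_1,\dots,v_{k-1})\mapsto D^k f_x(v_1,\dots,v_k)$, then invoke the metrizability of $E$ together with Proposition~\ref{sequen-cont} to lift continuity of $D(D^{k-1}f):U\times E\to L(E^{k-1},F)$ (for the compact-open topology) to continuity of $D^k f:U\times E^k\to F$.

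The main obstacle is item 4, and more precisely the passage between the pointwise/joint continuity of $D^m f$ and its continuity as a $L(E^m,F)$-valued map for the compact-open topology; this is the hinge on which item 3 (at the induction step) and item 5 (via the canonical injection $C(U\times E,L(E^{k-1},F))\hookrightarrow C(U\times E^k,F)$, which requires metrizability) both depend. Once that compact-uniformity step is secured, the rest is a routine induction.
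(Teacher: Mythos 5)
You should first be aware that the paper does not prove Proposition~\ref{Cku-prop}: it is a compendium of results from Bastiani's thesis and Hamilton's survey and the text explicitly delegates the proofs to those references, so your sketch has to be judged on its own terms. Judged so, it has a genuine gap in item~2. The claim that ``fixing $v_{m+1},\dots,v_k=0$ in Definition~\ref{differ-def} gives the $m$-th differential as a restriction of $D^kf$'' is false: the iterated derivative $\partial^k f(x+t_1v_1+\cdots+t_kv_k)/\partial t_1\cdots\partial t_k\big|_{\mathbf{t}=0}$ does not depend on $t_{m+1},\dots,t_k$ once those directions are zero, so $D^kf_x(v_1,\dots,v_m,0,\dots,0)=0$, not $D^mf_x(v_1,\dots,v_m)$. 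Moreover your continuity argument for $f$ presupposes continuity of $Df$, which is itself the case $m=1$ of item~2, since the definition of $C^k$ used here only requires joint continuity of the top differential. The real content of item~2 is a downward induction: joint continuity of $D^{j+1}f$ yields joint continuity of $D^jf$ via $D^jf_x(v_1,\dots,v_j)-D^jf_{x_0}(v_1,\dots,v_j)=\int_0^1 D^{j+1}f_{x_0+s(x-x_0)}(v_1,\dots,v_j,x-x_0)\,ds$ for the $x$-dependence, together with a separate Taylor-with-remainder/difference-quotient argument for the dependence on the directions at fixed $x$; none of this is a restriction of $D^kf$.

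A second gap sits in your induction step for item~3: writing $D(f\circ g)=\mathrm{ev}\circ(Df\circ g,Dg)$ and concluding by composition does not work, because the evaluation map $\mathrm{ev}:L(E,F)\times E\to F$ is not continuous for the topology of uniform convergence on compact sets when $E$ is not normable (a neighborhood of a point is never compact), hence not Bastiani smooth; this is exactly the pathology the paper recalls from Keller and Hamilton as the reason for treating $D^kf$ as a map on $U\times E^k$ rather than as an $L(E^k,F)$-valued map. The standard repair stays at the level of jointly continuous maps: $D(f\circ g)(x,v)=Df\big(g(x),Dg(x,v)\big)$ is a composition of jointly continuous maps, and the higher differentials are handled term by term through the Fa\`a di Bruno expansion, without ever invoking $\mathrm{ev}$. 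A smaller slip of the same kind occurs in item~4, where you invoke ``a compact neighborhood of $x$'': such neighborhoods do not exist in infinite dimensions; the correct argument covers the compact set $C$ by finitely many neighborhoods furnished by joint continuity and intersects the corresponding neighborhoods of $x_0$, which does yield the $\sup_C$ estimate. Your treatments of items~1 and~5 (the two-variable Schwarz identity with double integral, and the metrizability/sequential-continuity lifting for the converse of item~5) are essentially sound and match the arguments in the cited literature.
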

We refer the reader to~\cite{Hamilton-82,Neeb-01}
and Bastiani's cited works for the proofs.
Other results on $C^k(U)$ functions can be found
in Keller's book~\cite{Keller-74}.
All the statements of Proposition~\ref{Cku-prop}
are valid for $k=\infty$, i.e. smooth functions.
Bastiani also defines jets of smooth 
functions between locally convex 
spaces~[\onlinecite[p.~52]{BastianiPhD}] and
[\onlinecite[p.~75]{Bastiani-64}].

Note that Neeb~\cite{Neeb}
and Gl\"ockner~\cite{Glockner-05} 
agree with Bastiani for the definition of the
first derivative but they use
an apparently simpler definition of higher
derivatives by saying that $f$ is $C^k$ iff
$df$ is $C^{k-1}$ iff $d^{k-1}f$ is $C^1$.
However, this definition is less natural because,
for example, $f\in C^2$ if 
$df:U\times E \to F$ is $C^1$. In the definition 
of the first derivative, $U$ is now replaced by
$U\times E$ and $E$ by $E\times E$. In other words,
$d^2$ is a continuous map from 
$U\times E^3$ to $F$. 
More generally $d^k$ is a 
continuous map from
$U\times E^{2^k-1}$ to $F$~[\onlinecite[p.~20]{Glockner-05}].
Moreover, according to Proposition~1.3.13~[\onlinecite[p.~23]{Glockner-05}],
a map $f$ belongs to $C^k$ if and only if 
it belongs to $C^k(U)$ is the sense of Bastiani,
and Bastiani's $D^kf$ is denoted by
$d^{(k)}f$ by Gl\"ockner~[\onlinecite[p.~23]{Glockner-05}] and 
called the $k$-th differential of $f$.
The $k$-th derivatives $d^kf$ and $d^{(k)}f=D^kf$
are not trivially related. 
For example~[\onlinecite[p.~24]{Glockner-05}]:
$d^2f(x,h_1,h_2,h_3)=D^2f(x,h_1,h_2)+Df(x,h_3)$.

The Taylor formula with remainder for
a function in $C^{n+1}(U)$ reads~[\onlinecite[p.~44]{BastianiPhD}]:
\begin{eqnarray}
f(x+th) &=& f(x) + \sum_{k=1}^{n} \frac{t^k}{k!}
  D^kf_x(h^k)
\nonumber\\&&
 + 
   \int_0^t \frac{(t-\tau)^n}{n!}  D^{n+1}f_{x+\tau h}(h^{n+1})
  d\tau,
\label{Taylor-eq}
\\&=&
f(x) + \sum_{k=1}^{n} \frac{t^k}{k!}
  D^kf_x(h^k)
\nonumber\\&+&
   \int_0^t   \frac{(t-\tau)^{n-1}}{(n-1)!}
  \big(D^{n}f_{x+\tau h}(h^{n}) -D^{n}f_{x}(h^{n})\big)
  d\tau.
\nonumber
\end{eqnarray}
Taylor's formula with remainder is a very important tool to deal with smooth functions
on locally convex spaces.


The reader can check that all our examples are smooth
functionals in the sense of Bastiani.
\begin{eqnarray*}
D^kF_\varphi(v_1,\dots,v_k) &=& 
\frac{n!}{(n-k)!} \int_M f(x) \varphi^{n-k}(x) 
\\&&
   v_1(x)\dots v_k(x) dx,
\end{eqnarray*}
for $k\le n$ and $D^kF_\varphi=0$ for $k>n$.
\begin{eqnarray*}
D^kG_\varphi(v_1,\dots,v_k) &=& 
\frac{n!}{(n-k)!} \int_{M^n} g(x_1,\dots,x_n)
\\&&
   v_1(x_1)\dots v_k(x_k) \varphi(x_{k+1})
  \dots \varphi(x_n) \\&& dx_1\dots dx_n,
\end{eqnarray*}
for $k\le n$ and $D^kG_\varphi=0$ for $k>n$.
Recall that $g$ is a symmetric, smooth compactly supported
function of its arguments.
The functional $H$ has only two non-zero derivatives
and
\begin{eqnarray*}
D^2H_\varphi(v_1,v_2) &=& 
2\sum_{\mu\nu} \int_M g^{\mu\nu}(x) 
  h(x) \partial_\mu v_1(x) \partial_\nu v_2(x) dx.
\end{eqnarray*}
The example $I$ has an infinite number of nonzero
derivatives:
\begin{eqnarray*}
D^kI_\varphi(v_1,\dots,v_k) &=& 
\int_M f(x) e^{\varphi(x)} v_1(x)\dots v_k(x) dx.
\end{eqnarray*}
Finally
\begin{eqnarray*}
D^kJ_\varphi(v_1,\dots,v_k) &=& 
e^{\int_M f(x) \varphi(x) dx} 
\int_{M^k} f(x_1)\dots f(x_k) 
\\&& v_1(x_1)\dots v_k(x_k)
  dx_1\dots dx_k.
\end{eqnarray*}

The functionals $F$, $G$ and $H$ are
polynomials in the sense of 
Bastiani~[\onlinecite[p.~53]{BastianiPhD}]:
\begin{dfn}
Let $E$ and $F$ be locally convex spaces. A polynomial of
degree $n$ on $E$
is a smooth function $f: E \to F$ such that 
$D^kf=0$ for all $k>n$.
\end{dfn}

Let $u$ be a distribution in ${\mathcal D}'(M^k) $, then the functional
$f:\calD(M)\to \bbK$ defined by
$f(\varphi)=u( \varphi^{\otimes k}) $
is polynomial in the sense of Bastiani and its $k$-derivative is:
$$ D^k f_\varphi(v_1, \dots,v_k) = 
 \sum_\sigma u(v_{\sigma(1)}\otimes \dots \otimes v_{\sigma(k)}), $$
where $\sigma$ runs over the permutations
of $\{1,\dots,k\}$ and the canonical inclusion 
$ \calD(M)^{\otimes k} \subset \calD(M^k)$
was used.

If $F$ and $G$ are smooth maps from
$E$ to $\bbK$, we can compose the smooth map
$\varphi\mapsto \big(F(\varphi),G(\varphi)\big)$
and the multiplication in $\bbK$ to show that:
\begin{prop}
Let  $E$ be a locally convex space, and $U$ an open
set in $E$. Then the space of 
smooth functionals from $U$ to $\bbK$ is a sub-algebra of the 
algebra of real valued functions.
\end{prop}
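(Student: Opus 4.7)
The plan is to realize the sum $F+G$ and the product $F\cdot G$ as compositions of two smooth maps and then invoke the chain rule (Proposition~\ref{Cku-prop}, item~3, applied at every order). Concretely, for $F,G\in C^\infty(U,\bbK)$ I would define the pairing map
\begin{eqnarray*}
(F,G): U &\to& \bbK\times \bbK,\\
\varphi &\mapsto& \big(F(\varphi),G(\varphi)\big),
\end{eqnarray*}
and then compose with the addition map $s:\bbK\times\bbK\to\bbK$, $s(a,b)=a+b$, and the multiplication map $m:\bbK\times\bbK\to\bbK$, $m(a,b)=ab$. One has $F+G=s\circ(F,G)$ and $FG=m\circ(F,G)$, and closure under scalar multiplication is the special case where $G$ is a constant.

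The first step is to check that $(F,G)$ is smooth. Its directional G\^ateaux differentials at any order are computed coordinatewise,
\begin{eqnarray*}
D^k(F,G)_\varphi(v_1,\dots,v_k)=\big(D^kF_\varphi(v_1,\dots,v_k),D^kG_\varphi(v_1,\dots,v_k)\big),
\end{eqnarray*}
and continuity of the resulting map $U\times E^k\to \bbK\times\bbK$ is immediate from continuity of each component together with the fact that $\bbK\times\bbK$ carries the product topology; so $(F,G)\in C^\infty(U,\bbK\times\bbK)$.

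The second step is to observe that $s$ and $m$ are smooth in the sense of Bastiani. This is trivial on the finite-dimensional Banach space $\bbK\times\bbK$: $s$ is continuous linear, hence smooth with $Ds_{(a,b)}=s$ and all higher derivatives vanishing; $m$ is continuous bilinear, hence a Bastiani polynomial of degree $2$ (cf.\ the definition above), with $Dm_{(a,b)}(\alpha,\beta)=a\beta+b\alpha$, $D^2m$ constant, and $D^km=0$ for $k\geq 3$.

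The final step is to apply the chain rule from Proposition~\ref{Cku-prop}(3): since $(F,G)\in C^\infty(U,\bbK\times\bbK)$ and $s,m\in C^\infty(\bbK\times\bbK,\bbK)$, the compositions $F+G=s\circ(F,G)$ and $FG=m\circ(F,G)$ lie in $C^\infty(U,\bbK)$. No step is really an obstacle here; the only subtlety worth noting is the verification that a pair of Bastiani-smooth maps yields a Bastiani-smooth map into the product, which relies on the universal property of the product topology and causes no difficulty because one can reduce continuity in $U\times E^k$ to the component-wise statements already granted for $F$ and $G$.
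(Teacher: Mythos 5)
Your proof is correct and follows essentially the same route as the paper, which likewise obtains $FG$ (and $F+G$) by composing the smooth pairing map $\varphi\mapsto\big(F(\varphi),G(\varphi)\big)$ with the multiplication (resp. addition) in $\bbK$ and invoking the chain rule of Proposition~\ref{Cku-prop}. Your added verifications of the smoothness of the pairing map and of the bilinear multiplication are just the details the paper leaves implicit.
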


\section{Properties of functionals}
\label{funct-sect}
We now  prove important properties of smooth functionals.
We first investigate the support of a functional.
The fact that $DF_\varphi$ is continuous
from $C^\infty(M)$ to $\bbK$ exactly means that 
$DF_\varphi$ is a compactly supported distribution
for every $\varphi$. The support of $F$ is then
essentially the union over $\varphi$
of the supports of $DF_\varphi$. We prove that,
for any smooth functional and any $\varphi\in C^\infty(M)$,
there is a neighborhood $V$ of $\varphi$ such that
$F|_V$ is compactly supported.

The second property that we investigate is required
to establish a link with quantum field theory.
In this paper, we deal with functionals that are
smooth functions $F$ on an open subset $U$ of
$E=\Gamma(M,B)$, where $\Gamma(M,B)$
is the space of smooth sections of some finite
rank vector bundle $B$  on the manifold $M$.
There is a discrepancy between
$D^kF_\varphi$, which is a continuous \emph{multilinear} map
from $E^k$ to $\bbK$, and the quantum field
amplitudes (e.g. represented pictorially by Feynman diagrams) that are 
continuous \emph{linear} maps
from $E^{\hat{\otimes}_\pi k}=\Gamma(M^k,B^{\boxtimes k})$
to $\bbK$, i.e. elements 
of the space $\Gamma'(M^k,(B^*)^{\boxtimes k})$ of compactly
supported distributions with values in the
$k$-th external tensor power of the dual bundle
$B^*$.
It is easy to see that there is a canonical correspondence
between $D^kF_\varphi$ and its associated
distribution 
on $E^{\hat{\otimes}_\pi k}$, that we denote by $F^{(k)}_\varphi$.
However, the equivalence between the continuity
of $D^kF$ on $U\times E^k$ and the
continuity of $F^{(k)}$ on $U\times E^{\hat{\otimes}_\pi k}$
requires a proof.

Finally, we show that the order of $F^{(k)}$ is
locally bounded.

\subsection{Support of a functional}\label{SuppFunct}

Brunetti, D{\"u}tsch and Fredenhagen proposed to
define the support of a functional $F$
by the property that, if the support of the smooth function
$\psi$ does not meet the support of $F$,
then $F(\varphi+\psi)=F(\varphi)$ for all $\varphi$.
More precisely~\cite{Brunetti-09}:
\begin{dfn}\label{firstsupportdefi}
Let $F: U\to \bbK$ be a  Bastiani smooth function, 
with $U$ a subset of $C^\infty(M)$.
The support of 
$F$ is the set of points $x\in M$ such that,
for every open set ${\mathcal U}_x$ containing $x$, there is a
$\varphi \in U$ and a $\psi$ in $C^\infty(M)$ with $ \varphi + \psi \in U$
such that $\supp\psi\subset {\mathcal U}_x$ and $F(\varphi+\psi)\not=F(\varphi)$.
\end{dfn}
We want to relate this definition of the support of $F$ with
the support of $D_\varphi F$, which is compactly supported
as every distribution over $C^\infty(M)$~\cite{HormanderI}.
To do so, we need a technical lemma about
connected open subsets 
in locally convex spaces.
\begin{lem}\label{connexbrokenarcs}
Let $U$ be a connected open set in a locally convex space $E$
then any pair $(x,y)\in U^2$ can be connected by a 
piecewise affine path. 
\end{lem}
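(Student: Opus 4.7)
The plan is to use the standard connectedness argument: in a locally convex space every point admits a base of convex open neighborhoods, so the relation ``connectible by a piecewise affine path in $U$'' is both reflexive/symmetric/transitive and has open equivalence classes, whence connectedness of $U$ forces a single class.

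More concretely, I would fix $x\in U$ and set
\begin{eqnarray*}
V_x \;=\; \{\,y\in U : \text{there is a piecewise affine path in } U \text{ from } x \text{ to } y\,\}.
\end{eqnarray*}
Clearly $x\in V_x$, so $V_x\neq\emptyset$. To show $V_x$ is open, pick $y\in V_x$. Since $E$ is locally convex and $U$ is open, there is a convex open neighborhood $W\subset U$ of $y$. For any $z\in W$, the segment $[y,z]=\{(1-t)y+tz:t\in[0,1]\}$ lies in $W\subset U$, so concatenating the path from $x$ to $y$ with $[y,z]$ yields a piecewise affine path in $U$ from $x$ to $z$; hence $W\subset V_x$.

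To show $U\setminus V_x$ is open, pick $z\in U\setminus V_x$ and again choose a convex open neighborhood $W'\subset U$ of $z$. If some $y\in W'$ belonged to $V_x$, concatenating a piecewise affine path from $x$ to $y$ with the affine segment $[y,z]\subset W'\subset U$ would produce a piecewise affine path from $x$ to $z$ in $U$, contradicting $z\notin V_x$. Thus $W'\subset U\setminus V_x$.

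Since $U$ is connected and decomposes as the disjoint union of the nonempty open set $V_x$ and the open set $U\setminus V_x$, we must have $U\setminus V_x=\emptyset$, i.e.\ $V_x=U$. This shows that any $y\in U$ is joined to $x$ by a piecewise affine path, which is the claim. There is no real obstacle here: the only input beyond elementary topology is the existence of a convex neighborhood base at each point, which is definitional for locally convex spaces.
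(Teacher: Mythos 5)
Your proof is correct and follows essentially the same route as the paper: both arguments show that the set of points reachable from a fixed $x$ by piecewise affine paths in $U$ is nonempty and clopen in $U$ (using convex neighborhoods supplied by local convexity to extend or obstruct paths by a single affine segment), and then invoke connectedness. The only cosmetic difference is that the paper phrases the closedness step via points in the closure of the equivalence class, whereas you show openness of the complement; these are equivalent.
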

\begin{proof}
Define the equivalence relation
$\sim$ in $E$ as follows, two elements
$(x,y)$ are equivalent iff they are
connected by a 
piecewise affine path. 
Let us prove that this equivalence relation is both
open and closed hence any non empty equivalence class
for $\sim$ is both open and closed
in $U$ hence equal to $U$.

Let $x\in U$ then there exists a convex neighborhood $V$ of $x$ in $U$ which
means that every element in $V$ lies in the class of $x$, the relation is open.
Conversely let $y$ be in the closure of the equivalence class of $x$,
then any neighborhood $V$ of $y$ contains an element equivalent to $x$.
Choose some convex neighborhood $V$ then we find $z\in V$ s.t. $z\sim x$,
but $z\sim y$ hence $x\sim z\sim y$ and we just proved that the equivalence class of $x$
was closed.
\end{proof}

We can now prove an alternative formula, 
due to Brunetti, Fredenhagen and Ribeiro~\cite{Brunetti-12}.
\begin{lem}\label{supportsecond}
For every Bastiani smooth function $F: U\to \bbK$, 
with $U$ a connected open subset of $C^\infty(M)$:
\begin{eqnarray}
\supp (F) &=& \overline{\bigcup_{\varphi\in U}\supp DF_\varphi}.
\label{suppF}
\end{eqnarray}
\end{lem}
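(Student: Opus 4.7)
Plan: I would establish the equality by separately verifying the two inclusions.

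For the easy inclusion $\overline{\bigcup_{\varphi\in U}\supp DF_\varphi}\subseteq \supp F$: if $x\in \supp DF_\varphi$ for some $\varphi\in U$ and $\mathcal{U}_x$ is any open neighborhood of $x$, the definition of the support of the distribution $DF_\varphi$ yields $\psi\in\mathcal{D}(\mathcal{U}_x)$ with $DF_\varphi(\psi)\neq 0$. Openness of $U$ guarantees $\varphi+t\psi\in U$ for $|t|$ small, and the scalar map $t\mapsto F(\varphi+t\psi)$ has derivative $DF_\varphi(\psi)\neq 0$ at $t=0$, so some arbitrarily small $t$ produces $F(\varphi+t\psi)\neq F(\varphi)$. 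By Definition \ref{firstsupportdefi} this places $x\in \supp F$. The set $\supp F$ is closed (if $\mathcal{U}_x$ witnesses $x\notin \supp F$, the same $\mathcal{U}_x$ witnesses $y\notin \supp F$ for every $y\in \mathcal{U}_x$), so taking the closure yields the inclusion.

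For the reverse inclusion $\supp F\subseteq \overline{\bigcup_{\varphi\in U}\supp DF_\varphi}$ I argue by contrapositive. Assume $x\notin\overline{\bigcup_\varphi \supp DF_\varphi}$ and pick an open $\mathcal{U}_x$ disjoint from every $\supp DF_\varphi$, so that $DF_\varphi(\psi)=0$ for every $\varphi\in U$ and every $\psi\in C^\infty(M)$ with $\supp\psi\subset\mathcal{U}_x$. I must show that, with this same $\mathcal{U}_x$, $F(\varphi+\psi)=F(\varphi)$ for every $\varphi\in U$ and every such $\psi$ with $\varphi+\psi\in U$. The natural tool is the fundamental theorem of calculus: if the straight segment $t\mapsto\varphi+t\psi$ happens to lie in $U$, then
\[
F(\varphi+\psi)-F(\varphi)=\int_0^1 DF_{\varphi+t\psi}(\psi)\,dt=0
\]
since $\supp\psi\subset\mathcal{U}_x$. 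When the segment exits $U$, I would invoke Lemma \ref{connexbrokenarcs} to replace it by a piecewise affine path $\varphi=\chi_0,\chi_1,\dots,\chi_n=\varphi+\psi$ in $U$ and apply the fundamental theorem of calculus on each segment.

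The main obstacle is that the individual directions $\chi_{i+1}-\chi_i$ along such a piecewise affine path in $U$ need not be supported in $\mathcal{U}_x$, even though their total $\psi$ is. My intended resolution is to introduce a smooth cut-off $\rho$ with $\rho\equiv 1$ on $\supp\psi$ and $\supp\rho\subset\mathcal{U}_x$, splitting each direction into a $\mathcal{U}_x$-supported piece $\rho(\chi_{i+1}-\chi_i)$, on which $DF_\phi$ vanishes, and a residual piece $(1-\rho)(\chi_{i+1}-\chi_i)$ whose total telescopes to $(1-\rho)\psi=0$. Turning this telescoping into an actual cancellation of the sum of integrals requires controlling the variation of $DF_\phi$ along the path --- for instance by refining the piecewise affine approximation enough that the continuous dependence of $DF_\phi(v)$ on $\phi$ makes the residual contributions collapse. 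Reconciling the local vanishing of $DF$ on $\mathcal{U}_x$-supported directions with the possibly nontrivial topology of $U$ is the technical heart of the argument.
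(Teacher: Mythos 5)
Your first inclusion is correct and is essentially the paper's own argument (the paper phrases it through complements: $x\notin\supp F$ forces $DF_\varphi(\psi)=0$ for all $\psi$ supported near $x$, hence a neighborhood of $x$ avoiding every $\supp DF_\varphi$). The gap is exactly where you located it, and the repair you sketch does not close it. After splitting each increment $\delta_i=\chi_{i+1}-\chi_i$ into $\rho\delta_i$ and $(1-\rho)\delta_i$, the surviving terms are $\int_0^1 DF_{\chi_i+t\delta_i}\bigl((1-\rho)\delta_i\bigr)\,dt$, and the identity $\sum_i(1-\rho)\delta_i=0$ produces no cancellation because each term is evaluated at a different base point; refining the subdivision does not help, since the discrepancy is not a Riemann-sum error but a holonomy-type obstruction. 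A two-dimensional model makes this concrete: on the connected open set $U_0=\bbR^2\setminus(\{0\}\times[-1,1])$ take $f$ with $\partial_x f\equiv 0$, namely $f\equiv 0$ for $x>0$ and $f(x,y)=b(y)$ for $x<0$ with $b$ a bump supported in $(-1/2,1/2)$; then the derivative in the $x$-direction vanishes at every point of $U_0$, yet $f(-1,0)\neq f(1,0)$ although these two points differ by a translation in the ``flat'' direction --- they are joined inside $U_0$ only by a broken path, not by the segment. Pulling such a model back along a continuous linear map $C^\infty(M)\to\bbR^2$ shows that no argument using only ``$DF$ annihilates $\mathcal{U}_x$-supported directions at every point of $U$'' together with piecewise affine connectivity of $U$ can establish $F(\varphi+\psi)=F(\varphi)$; some further input is indispensable.

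This is also where your route departs from the paper's. The paper does not attempt the broken-path bookkeeping at all: it invokes Lemma~\ref{connexbrokenarcs} to ``reduce to the case where $U$ is an open convex set,'' and for convex $U$ the converse inclusion is precisely your easy case --- the segment $[\varphi,\varphi+\psi]$ lies in $U$, so $F(\varphi+\psi)-F(\varphi)=\int_0^1 DF_{\varphi+t\psi}(\psi)\,dt=0$ by the fundamental theorem of calculus. So to align with the paper you should prove the statement on a convex $U$ exactly as in your first display, and treat the passage from connected to convex $U$ as a separate reduction step rather than trying to make the cut-off/telescoping argument work; be aware that the paper dispatches that reduction in a single sentence, and, as the model above indicates, it is not a purely formal matter --- in the non-convex case this is where all the remaining content of the lemma is concentrated.
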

\begin{proof}
Using the result of Lemma
\ref{connexbrokenarcs}, we may reduce
to the case where $U$ is an open convex set.
We prove that both sets
$\overline{\bigcup_{\varphi}\supp DF_\varphi}$ and 
$\supp F$ as defined in Def.~\ref{firstsupportdefi}
have identical complements. Indeed, for every point $x\in M$, 
$x\notin\text{supp }F$ means by definition of the support that
there exists an open neighborhood $\Omega$ of $x$ such that 
$\forall (\psi,\varphi)
\in\mathcal{D}(\Omega)\times C^\infty(M)$,
$F(\varphi+\psi) = F(\varphi)$. It follows that
for all $\psi\in\mathcal{D}(\Omega)$,
there exists $\varepsilon>0$ such that  $\vert t\vert\leqslant\varepsilon\implies \varphi+t\psi\in U$ and 
$t\in[-\varepsilon,\varepsilon]\mapsto F(\varphi+t\psi)$ is a constant function of $t$
therefore $\frac{dF(\varphi+t\psi)}{dt}|_{t=0}=DF_\varphi(\psi)=0$.
This means that for all $\varphi\in U$, the support of $DF_\varphi\in\mathcal{E}^\prime(M)$
does not meet $\Omega$ hence $\Omega$ lies in the complement
of $\cup_{\varphi\in U}\text{supp }(DF_\varphi)$ and therefore
$x\in \Omega$ does not meet the closure $\overline{\cup_{\varphi\in U}\text{supp }(DF_\varphi)}$.
 
Conversely if $x$ does not meet the closure $\overline{\cup_{\varphi\in U}\text{supp }(DF_\varphi)}$, then
there is some neighborhood $\Omega$ of $x$ which does not meet
$\cup_{\varphi\in U}\text{supp }(DF_\varphi)$ therefore for all
$(\varphi,\psi)\in U\times\mathcal{D}(\Omega)$ s.t. $\varphi+\psi\in U$,
the whole straight path $[\varphi,\varphi+\psi]$
lies in $U$ (by convexity of $U$) hence
$$\forall t\in [0,1], DF_{\varphi+t\psi}(\psi)=0\implies 
\int_0^1 dt DF_{\varphi+t\psi}(\psi)=0, $$
and by the fundamental theorem of calculus
$ F(\varphi+\psi)=F(\varphi)$.
\end{proof}

Now we show that any smooth functional
is locally compactly supported:
\begin{prop}
\label{localcompactsupport}
Let $F:U \mapsto \bbK$
be a Bastiani smooth function, where $U$ is
an open connected
subset of $E=C^\infty(M)$. For every
$\varphi \in U $, there is a neighborhood
$V$ of $\varphi$ in $U$ and a compact
subset $K\subset M$ such that 
the support of $F$ restricted to $V$
is contained in $K$.
Moreover for all integers $n$ and all $\varphi\in U$,
the distributional support $\supp\left(D^n F_\varphi\right)$ is 
contained in $K^n\subset M^n$.
\end{prop}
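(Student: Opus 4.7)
The plan is to extract a single compact set $K \subset M$ from the joint continuity of the first differential $DF$ at the point $(\varphi, 0) \in U \times E$, and then apply Lemma~\ref{supportsecond} on a convex neighborhood of $\varphi$. Since $DF:U\times E\to\bbK$ is continuous and $DF_\varphi(0) = 0$, Lemma~\ref{conti-lem} combined with the saturated character of the seminorms $\pi_{m,K}$ on $C^\infty(M)$ provides an integer $m$, a compact $K\subset M$, and $\eta>0$ such that
\[
\pi_{m,K}(\psi-\varphi)<\eta \text{ and } \pi_{m,K}(v)<\eta \;\Longrightarrow\; |DF_\psi(v)|<1.
\]
Set $V := \{\psi\in U : \pi_{m,K}(\psi-\varphi)<\eta\}$, which is an open ball for a seminorm, hence a convex (and therefore connected) open neighborhood of $\varphi$ contained in $U$.

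For every $\psi\in V$ and every $v\in E$ with $\supp v\cap K = \emptyset$, one has $\pi_{m,K}(\lambda v) = 0$ for all $\lambda\in\bbK$, so $|\lambda|\,|DF_\psi(v)| = |DF_\psi(\lambda v)| < 1$ for every $\lambda$, which forces $DF_\psi(v) = 0$ by linearity of $DF_\psi$. Hence $\supp(DF_\psi)\subset K$ for every $\psi\in V$, and Lemma~\ref{supportsecond} applied to the Bastiani smooth functional $F|_V$ on the connected open set $V$ gives
\[
\supp(F|_V) \;=\; \overline{\bigcup_{\psi\in V}\supp(DF_\psi)} \;\subset\; K,
\]
where the last inclusion uses that $K$ is closed.

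For the moreover part, fix $\psi\in V$ and choose $v_1,\dots,v_n\in E$ with $\supp v_1\cap K=\emptyset$. For sufficiently small $t_1,\dots,t_n$, both points $\psi + t_1 v_1 + t_2 v_2 + \cdots + t_n v_n$ and $\psi + t_2 v_2 + \cdots + t_n v_n$ lie in $V$ and differ by $t_1 v_1$, whose support is disjoint from $K\supset \supp(F|_V)$. Definition~\ref{firstsupportdefi} applied to $F|_V$ then guarantees that the two values of $F$ coincide, so $F(\psi + t_1 v_1 + \cdots + t_n v_n)$ is independent of $t_1$ near zero; differentiating in $t_1,\dots,t_n$ at the origin yields $D^n F_\psi(v_1,\dots,v_n) = 0$. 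By the symmetry of $D^n F_\psi$, the same vanishing holds whenever any one argument $v_i$ has support disjoint from $K$.

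The main obstacle I expect to encounter is promoting this tensor-wise vanishing to the distributional statement $\supp(D^n F_\psi)\subset K^n$ on $M^n$. I will handle it by covering $M^n\setminus K^n$ with the cylindrical open sets $M^{i-1}\times(M\setminus K)\times M^{n-i}$, using a partition of unity to decompose any test function supported in $M^n\setminus K^n$ into finitely many pieces each supported in one such cylinder, and approximating each piece by finite sums of pure tensors one of whose factors is supported in $M\setminus K$, via the density of the algebraic tensor product in the nuclear Fr\'echet space $\calD(M^n)$; continuity of the distribution associated with the multilinear map $D^n F_\psi$ then closes the argument.
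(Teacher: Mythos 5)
Your first half is essentially the paper's own argument: joint continuity of $DF$ at $(\varphi,0)$, the rescaling trick to handle directions with $\pi_{m,K}(v)=0$, the conclusion $\supp DF_\psi\subset K$ for all $\psi$ in a seminorm ball $V$, and then Lemma~\ref{supportsecond}. Two repairs are needed, both minor. First, as written $V=\{\psi\in U:\pi_{m,K}(\psi-\varphi)<\eta\}$ is the intersection of a ball with $U$ and need not be convex (or even connected); since $U$ is open and the seminorms $\pi_{m,K}$ form a saturated family, you should first enlarge $m$, $K$ and shrink $\eta$ so that the ball itself lies inside $U$, after which your argument goes through verbatim. Second, the citation of Definition~\ref{firstsupportdefi} in the ``moreover'' step is not a valid deduction on its own: that definition only guarantees invariance of $F$ under perturbations supported in a \emph{small neighbourhood of a single point} off the support, and upgrading this to invariance under an arbitrary perturbation supported off $K$ is precisely the content of the convexity/fundamental-theorem argument in Lemma~\ref{supportsecond}. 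You do not actually need it: constancy in $t_1$ follows directly from what you proved one paragraph earlier, since $\partial_{t_1}F(\psi+t_1v_1+\cdots+t_nv_n)=DF_{\psi+t_1v_1+\cdots+t_nv_n}(v_1)=0$, the base point staying in $V$ (because $\pi_{m,K}(v_1)=0$) and $DF$ on $V$ annihilating functions supported off $K$.

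For the ``moreover'' part your route genuinely differs from the paper's. The paper takes a cutoff $\chi$ equal to $1$ on a compact neighbourhood of $\supp F|_V$, uses $F(\varphi)=F(\chi\varphi)$ and the chain rule to get $\supp D^nF_\varphi\subset(\supp\chi)^{\times n}$, and intersects over all such $\chi$. You instead prove tensor-wise vanishing of $D^nF_\psi(v_1,\dots,v_n)$ when one factor has support disjoint from $K$ (using symmetry of $D^nF_\psi$), and then promote this to $\supp F^{(n)}_\psi\subset K^n$ by covering $M^n\setminus K^n$ with the $n$ cylinders $M^{i-1}\times(M\setminus K)\times M^{n-i}$, a partition of unity, and density of finite sums of pure tensors; this is the same technique the paper itself employs in Proposition~\ref{localsecondderivativediagonal}, and it closes correctly since $F^{(n)}_\psi\in\calE'(M^n)$ is continuous on $C^\infty(M^n)$ (only note that $\calD(M^n)$ is an LF space rather than a nuclear Fr\'echet space; the density statement you need is unaffected). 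The paper's cutoff trick buys brevity, but relies on the slightly glossed identity $F(\varphi)=F(\chi\varphi)$; your version stays entirely inside the neighbourhood $V$ and rests only on the first-derivative vanishing you established, at the cost of the explicit tensor-density argument.
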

\begin{proof}
By definition of the support of a functional, it is enough
to show that, for every $\varphi\in V$,
$\supp DF_\varphi \subset K$.
If $F$ is smooth, then $DF:U\times E\to \bbK$ is
continuous. Thus, it is continuous in the neighborhood
of $(\varphi_0,0)$ for every $\varphi_0\in U$. In other words,
for every $\epsilon >0$,
there is a neighborhood $V$ of $\varphi_0$, a seminorm
$\pi_{m,K}$ and an $\eta>0$ such that
$|DF_\varphi(\chi)| < \epsilon$ for every
$\varphi \in V$ and every $\chi\in E$ such that
$\pi_{m,K}(\chi)<\eta$. Now, for every $\psi\in E$
such that $\pi_{m,K}(\psi)\not=0$,
we see that $\chi=\psi\eta/(2\pi_{m,K}(\psi))$
satisfies $\pi_{m,K}(\chi)<\eta$.
Thus, $|DF_\varphi(\chi)| < \epsilon$ for every
$\varphi\in V$ and, by linearity,
$|DF_\varphi(\psi)| < (2\epsilon/\eta) \pi_{m,K}(\psi)$.
On the other hand, if $\pi_{m,K}(\psi)=0$, then for any $\mu>0$
$\psi_{m,K}(\mu\psi)=0 < \eta$, so that
$|DF_\varphi(\mu\psi)|<\epsilon$. By linearity,
$|DF_\varphi(\psi)|<\epsilon/\mu$ for any $\mu>0$
and we conclude that $|DF_\varphi(\psi)|=0$.
Thus, for every $\varphi\in V$ and every $\psi\in E$,
\begin{eqnarray*}
|DF_\varphi(\psi)| &\le & 2\frac{\epsilon}{\eta} \pi_{m,K}(\psi).
\end{eqnarray*}
Of course, this inequality implies that $ DF_\varphi (\psi) = 0 $ 
when $\psi$ is identically zero on the compact subset $K$.

Let us show that this implies that the support of $DF_\varphi$ is contained in $K$.
To avoid possible problems at the boundary, take any compact neighborhood $K'$
of $K$. Now, take an open set $\Omega$ in $M$ such that
$\Omega\cap K'=\emptyset$. Then, for every smooth
function $\psi$ supported in $\Omega$, we have
$\pi_{m,K}(\psi)=0$ because the seminorm $\pi_{m,K}$
takes only into account the points of $K$. As
a consequence, the restriction of $DF_\varphi$ to 
$\Omega$ is zero, which means that 
$\Omega$ is outside the support of $DF_\varphi$
for every $\varphi\in V$.
Thus, $\supp F|_V\subset K$ because,
for every $\varphi\in V$, the support of 
$DF_\varphi$ is included
in every compact neighborhood of $K$.
Finally we show that, if $F$ is compactly supported,
then all $D^nF_\varphi$ are compactly supported
with $\supp\, D^nF_\varphi \subset (\supp\, F)^{\times n}$.
This is easily seen by the following cutoff function argument:
if $F$ is compactly supported then for every cutoff function $\chi$ 
equal to $1$ on an arbitrary compact neighborhood of
$\supp F$, we have:
$F(\varphi)=F(\chi\varphi),\forall \varphi\in E$.
Then it is immediate by definition of $D^nF_\varphi$ that
\begin{eqnarray*}
D^nF_\varphi(\psi_1,\dots,\psi_n) &=&
\frac{d^nF(\chi(\varphi+t_1\psi_1+\dots+t_n\psi_n))}{dt_1\dots dt_n}|_{\mathbf{t}=0}
\\&=&
D^nF_{\chi\varphi}(\chi\psi_1,\dots,\chi\psi_n)
\end{eqnarray*}
thus
$\supp D^nF_\varphi \subset \supp\chi^{\times n} $
for all test function $\chi$ s.t. $\chi|_{\supp F}=1$ and therefore
$\supp\, D^nF_\varphi \subset (\supp\, F)^{\times n}$ since
$\underset{\chi|_{\supp F}=1}{\bigcap} \supp\chi=\supp F$.
\end{proof}

\subsection{A multilinear kernel theorem with parameters.}

We work with $M$ a smooth manifold and $B\to M$ a smooth vector bundle of 
finite rank over $M$.
Let $E=\Gamma(M,B)$ be its space of smooth sections and 
$U$ an open subset of $E$. 
We consider smooth
maps $F:E\mapsto \mathbb{K}$ where $\mathbb{K}$ is the field
$\mathbb{R}$ or $\mathbb{C}$.
In this section we relate the
Bastiani derivatives $D^kF$, which are $k$-linear
on $\Gamma(M,B)$ to  the distributions used in
quantum field theory, which are linear on
$\Gamma(M^k,B^{\boxtimes k})$. 
Since the $k$-th derivative $D^kF$ of a smooth map
is multilinear and continuous in the last $k$
variables, we can use the following result~[\onlinecite[p.~471]{Shult}]
and [\onlinecite[p~259]{Bourbaki-AlgebraI}]:
\begin{lem}
Let $E$ be a Hausdorff locally convex space. 
There is a canonical isomorphism between any
$k$-linear map $f: E^k\to \bbK$ and the map
$\overline{f}: E^{\otimes k}\to \bbK$, where $\otimes$ is the
algebraic tensor product,  which is linear
and defined as follows: if
$\chi=\sum_j \chi^j_1\otimes \dots \otimes \chi^j_k$
is a finite sum of tensor products, then
\begin{eqnarray}
\label{barf}
\bar{f}(\chi) &=& \sum_j 
  f(\chi^j_1, \dots ,\chi^j_k).
\label{barfdef}
\end{eqnarray}
\end{lem}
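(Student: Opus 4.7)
The plan is to invoke the universal property of the algebraic tensor product, which is the content of this lemma once one notices that topology and the locally convex structure of $E$ play no role: only the underlying $\bbK$-vector space structure matters, so we may work in the purely algebraic category.

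First I would set up the two maps that are claimed to be inverse to each other. On one side, let $L^k(E;\bbK)$ denote the space of $k$-linear maps $E^k\to\bbK$, and let $\iota:E^k\to E^{\otimes k}$ be the canonical $k$-linear map sending $(v_1,\dots,v_k)$ to $v_1\otimes\cdots\otimes v_k$. Define $\Psi:\mathrm{Hom}(E^{\otimes k},\bbK)\to L^k(E;\bbK)$ by $\Psi(g)=g\circ\iota$, which is $k$-linear because $\iota$ is $k$-linear and $g$ is linear. In the opposite direction, given $f\in L^k(E;\bbK)$, I would produce a linear map $\bar f:E^{\otimes k}\to\bbK$ satisfying $\bar f\circ\iota=f$, and then set $\Phi(f)=\bar f$. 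On an elementary tensor, $\bar f$ is forced to be $\bar f(v_1\otimes\cdots\otimes v_k)=f(v_1,\dots,v_k)$, and then extended by linearity. Evaluating on a general element $\chi=\sum_j\chi_1^j\otimes\cdots\otimes\chi_k^j$ then gives exactly formula~(\ref{barf}).

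The only point that needs argument is the well-definedness of $\bar f$: if two finite sums $\sum_j\chi_1^j\otimes\cdots\otimes\chi_k^j$ and $\sum_l\eta_1^l\otimes\cdots\otimes\eta_k^l$ represent the same element of $E^{\otimes k}$, one has to show $\sum_j f(\chi_1^j,\dots,\chi_k^j)=\sum_l f(\eta_1^l,\dots,\eta_k^l)$. This is where I would invoke the standard construction of $E^{\otimes k}$ as the quotient of the free vector space on $E^k$ by the subspace $R$ generated by the multilinearity relations. The map $f$, viewed as a linear functional on the free vector space, vanishes on $R$ precisely because it is $k$-linear, so it descends to a well-defined linear functional on the quotient $E^{\otimes k}$. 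This descended map is exactly $\bar f$.

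Finally I would check the two compositions: $\Psi\circ\Phi=\id_{L^k(E;\bbK)}$ holds because $(\bar f\circ\iota)(v_1,\dots,v_k)=\bar f(v_1\otimes\cdots\otimes v_k)=f(v_1,\dots,v_k)$; conversely $\Phi\circ\Psi=\id_{\mathrm{Hom}(E^{\otimes k},\bbK)}$ holds because two linear maps on $E^{\otimes k}$ that agree on the generating family of elementary tensors agree everywhere. I do not anticipate any real obstacle: the statement is essentially the universal property of the algebraic tensor product, and the subtlety about Hausdorff locally convex structure is a red herring here since no continuity of $\bar f$ is asserted and no topology is placed on $E^{\otimes k}$.
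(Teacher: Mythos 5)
Your argument is correct: it is the standard universal-property proof of the algebraic tensor product (well-definedness of $\bar f$ via the free-vector-space-modulo-multilinearity-relations construction, then checking the two compositions), and your remark that the locally convex topology plays no role is exactly right since no continuity is claimed here. The paper itself offers no proof but delegates precisely this argument to the cited algebra references, so your route coincides with the intended one.
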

Let us give a topological version of this lemma, using the
projective topology.
We recall the definition of a family of seminorms
defining the projective topology on tensor powers of locally convex spaces 
following~[\onlinecite[p.~23]{Ryan-PhD}].
For arbitrary seminorms $p_1,\dots, p_k$ on $E$ there exists
a seminorm $p_1\otimes\dots\otimes p_k$ on $E^{\otimes k}$ defined
for every $\psi\in E^{\otimes k}$ by
$$p_1\otimes\dots\otimes p_k(\psi)=\inf \sum_n   
p_1(e_{1,n})\dots p_k(e_{k,n}),  $$ 
where the infimum is taken over the representations of $\psi$ as finite sums: $\psi=\sum_ne_{1,n}\otimes\dots\otimes e_{k,n}$. 
Following K{\"o}the~[\onlinecite[p.~178]{Kothe-II}], one can prove that
$p_1\otimes\dots\otimes p_k$ is the largest seminorm 
on $E^{\otimes k}$ such that
\begin{eqnarray}
p(x_1\otimes\dots\otimes x_k) &=& p_1(x_1) \dots p_k(x_k),
\label{tensorseminorm}
\end{eqnarray}
 for 
all $x_1,\dots,x_k$ in $E$. More precisely, if 
$p$ is a seminorm on $E^{\otimes k}$ satisfying
Eq.~\eqref{tensorseminorm}, then 
$p(X)\leqslant  (p_1\otimes\dots\otimes p_k)(X)$ for
every $X\in E^{\otimes k}$.

The projective topology on $E^{\otimes k}$ is defined
by the family of seminorms
$p_1\otimes\dots\otimes p_k$
where each $p_i$ runs over a family
of seminorms defining the topology of $E$~[\onlinecite[p.~24]{Ryan-PhD}].

When $E$ is Fr\'echet, its topology is defined
by a countable family of seminorms and it follows that 
the family of seminorms $p_1\otimes\dots\otimes p_k$ on 
$E^{\otimes k}$ is countable. Hence they can be used to construct 
a metric on $E^{\otimes k}$
which defines the same topology as the projective topology and
$E^{\hat{\otimes}_\pi k}$ is defined as the completion 
of $E^{\otimes k}$ relative to this metric
or equivalently with respect to the projective topology.
A fundamental property of the projective topology
is that $f:E^k\to \bbK$ is (jointly) continuous
iff $\bar{f}:E^{\otimes k}\to \bbK$, still defined by Eq.~(\ref{barfdef}) is continuous
with respect to the projective topology~[\onlinecite[p.~I-50]{Grothendieck-55}].
Then $\bar{f}$ extends uniquely to a continuous map (still denoted by
$\bar{f}$) on the completed tensor product 
$E^{\hat{\otimes}_\pi k}$~[\onlinecite[p.~III.15]{Bourbaki-TVS}].

If $E=C^\infty(M)$, then
$E^{\hat{\otimes}_\pi k}=C^\infty(M^k)$~[\onlinecite[p.~530]{Treves}],
and $\bar{f}$ becomes a compactly supported distribution
on $M^k$. More generally, 
if $E=\Gamma(M,B)$, then $E$ is 
Fr\'echet nuclear and 
$E^{\hat{\otimes}_\pi k}=
\Gamma(M^k,\mathcal{E}^{\boxtimes k})$~[\onlinecite[p.~72]{Tarkhanov}].
Thus, $\bar{f}$ becomes a compactly supported distributional section
on $M^k$. 
If $f=D^kF_\varphi$ we denote $\bar{f}$ by $F_\varphi^{(k)}$.

Recall that, if $U$ is an open subset of a Hausdorff locally 
convex space $E$, a map $F:U\to \bbK$ is smooth iff every
$D^k F$ is continuous from $U\times E^k$ to $\bbK$.
According to the previous discussion, continuity
on $E^k$ is equivalent to continuity on $E^{\hat{\otimes}_\pi k}$.
Therefore, it is natural to wonder when joint continuity 
on $U\times E^k$ is equivalent to joint continuity on
$U\times E^{\hat{\otimes}_\pi k}$.
This is the subject of the next paragraphs.

We now prove an equicontinuity lemma.

\begin{lem}
\label{equicontlem}
Let $E$ be a Fr\'echet space, $U$ open in $E$, and 
$F:U\times E^k\mapsto \bbK$ a continuous
map, multilinear in the last $k$ variables. 
Then for every $\varphi_0\in U$, there exist a neighborhood
$V$ of $\varphi_0$, a seminorm $q$ of $E^{\hat{\otimes}_\pi k}$ and 
a constant $C>0$ such that
$$\forall\varphi\in V, \forall\psi\in E^{\hat{\otimes}_\pi k},\,\ 
\vert \barF(\varphi,\psi)-\barF(\varphi_0,\psi) \vert\leqslant C\,q(\psi).$$
\end{lem}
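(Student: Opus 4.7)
The plan is to reduce the statement to a classical multilinear boundedness estimate and then invoke the universal property of the projective tensor seminorm. First, I would introduce the auxiliary map $G:U\times E^k\to\bbK$ defined by
\[
G(\varphi,v_1,\dots,v_k)=F(\varphi,v_1,\dots,v_k)-F(\varphi_0,v_1,\dots,v_k).
\]
Then $G$ is continuous, $k$-linear in the last $k$ variables, and vanishes identically at $\varphi=\varphi_0$; in particular $G(\varphi_0,0,\dots,0)=0$. Working with $G$ rather than $F$ will produce the difference $\barF(\varphi,\psi)-\barF(\varphi_0,\psi)$ on the nose after passing to the tensor linearization, so it suffices to prove an estimate of the form $|\bar G(\varphi,\psi)|\le C\,q(\psi)$ on a neighborhood of $\varphi_0$.

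Next, I would use continuity of $G$ at $(\varphi_0,0,\dots,0)$. Since $E$ is Fr\'echet, its topology is given by an increasing sequence of seminorms $(p_n)_{n\in\bbN}$. Applied to the neighborhood $\{|G|<1\}$ of $0$, continuity yields an open neighborhood $V$ of $\varphi_0$ in $U$, an integer $n$, and $\eta>0$ such that, setting $p=p_n$,
\[
\varphi\in V,\quad p(v_1)<\eta,\dots,p(v_k)<\eta\ \Longrightarrow\ |G(\varphi,v_1,\dots,v_k)|<1.
\]
A standard rescaling argument using $k$-linearity then upgrades this to a global bound. For $v_1,\dots,v_k\in E$ with $p(v_i)>0$, apply the estimate to $w_i=\eta v_i/(2p(v_i))$ and use multilinearity to obtain
\[
|G(\varphi,v_1,\dots,v_k)|\le\bigl(2/\eta\bigr)^k\,p(v_1)\cdots p(v_k).
\]
If $p(v_i)=0$ for some $i$, replacing $v_i$ by $\lambda v_i$ and letting $\lambda\to+\infty$ forces $G(\varphi,v_1,\dots,v_k)=0$, so the same bound holds trivially. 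Setting $C=(2/\eta)^k$, we therefore have
\[
|G(\varphi,v_1,\dots,v_k)|\le C\,p(v_1)\cdots p(v_k)\qquad\forall\varphi\in V,\ \forall v_1,\dots,v_k\in E.
\]

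The last step exploits the universal property of the projective tensor seminorm. Let $q=p\otimes\cdots\otimes p$ be the seminorm on $E^{\otimes k}$ introduced above. For any finite decomposition $\psi=\sum_ne_{1,n}\otimes\dots\otimes e_{k,n}$, linearity of $\bar G(\varphi,\cdot)$ together with the previous estimate gives
\[
|\bar G(\varphi,\psi)|\le\sum_nC\,p(e_{1,n})\cdots p(e_{k,n}),
\]
and taking the infimum over all such representations yields $|\bar G(\varphi,\psi)|\le C\,q(\psi)$ on $V\times E^{\otimes k}$. Since $q$ is continuous on $E^{\hat\otimes_\pi k}$ and $E^{\otimes k}$ is dense there, the inequality extends by continuity to all $\psi\in E^{\hat\otimes_\pi k}$, which is exactly the claim.

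The only genuinely delicate point is the scaling argument when some $p(v_i)$ vanish; everything else is either continuity at a point or a direct application of the defining property of the projective tensor seminorm.
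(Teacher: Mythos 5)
Your proof is correct and follows essentially the same route as the paper's: continuity at $(\varphi_0,0,\dots,0)$, a rescaling argument via multilinearity (including the degenerate case where a seminorm vanishes) to get a uniform bound by a product of seminorms on a neighborhood $V$, then the infimum over representations to pass to the projective seminorm and extension to $E^{\hat{\otimes}_\pi k}$ by density. The only cosmetic difference is that you bound the difference $G(\varphi,\cdot)=F(\varphi,\cdot)-F(\varphi_0,\cdot)$ directly, whereas the paper bounds $|F(\varphi,\cdot)|$ on $V$ and doubles the constant via the triangle inequality; both are equally valid.
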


\begin{proof}
By continuity of $F:U\times E^k\mapsto \bbK$,
for every $\epsilon >0$,
there exist a neighborhood $V$ of $\varphi_0$ and 
neighborhoods $U_1,\dots,U_k$ of zero such that
$\varphi\in V$ and $e_i\in U_i$ for $i=1,\dots, k$
imply 
$|F(\varphi,e_1,\dots,e_k)|\le \epsilon$.
Since $E$ is locally convex,
there are continuous seminorms $p_1,\dots, p_k$ on $E$
and strictly positive numbers $\eta_1,\dots,\eta_k$
such that $e_i\in U_i$ if $p_i(e_i) \le  \eta_i$.
Consider now arbitrary elements $e_1,\dots,e_k$
of $E$ such that $p_i(e_i)\not=0$. Then,
if $f_i= e_i\eta_i/p_i(e_i)$ we have
$p_i(f_i) \le \eta_i$.
Thus,
$|F(\varphi,f_1,\dots, f_k)| < \epsilon$ and,
by multilinearity,
$|F(\varphi,e_1,\dots, e_k)| < M p_1(e_1)\dots p_k(e_k)$
where
$M=\epsilon/ (\eta_1,\dots,\eta_k)$.
The argument in the proof of 
Proposition~\ref{localcompactsupport} shows that
$|F(\varphi,e_1,\dots, e_k)|=0$ if $p_i(e_i)=0$.
Therefore, for every $\varphi\in V$ and every
$e_1$,\dots, $e_k$ in $E$, 
$|F(\varphi,e_1,\dots, e_k)| \le M p_1(e_1)\dots p_k(e_k)$.
By defining $C=2M$ we obtain for every $ \varphi\in V$
and every $(e_1,\dots,e_k)\in E^k$
\begin{equation}
\vert F(\varphi,e_1,\dots,e_k)-F(\varphi_0,e_1,\dots,e_k)\vert\leqslant 
C p_1(e_1)\dots p_k(e_k).
\end{equation}
By definition of $\barF$, for all $(\varphi,\psi)\in V\times E^{\otimes k}$:  
\begin{eqnarray*}
\vert\barF(\varphi,\psi)-\barF(\varphi_0,\psi)\vert
&\leqslant & \sum_n\vert 
F(\varphi,e_{1,n},\dots,e_{k,n})
 \\&&-
F(\varphi_0,e_{1,n},\dots,e_{k,n})\vert 
\\
&\leqslant & C\sum_n  p_1(e_{1,n})\dots p_k(e_{k,n}),
\end{eqnarray*}
for all representations of $\psi$ as finite sum 
$\psi=\sum_ne_{1,n}\otimes\dots\otimes e_{k,n}$. 
Taking the infimum over such representations yields the estimate:
\begin{equation}
\forall (\varphi,\psi)\in V\times E^{\otimes k},\,\
\vert\barF(\varphi,\psi)-\barF(\varphi_0,\psi)\vert \leqslant 
C q(\psi),
\end{equation}
for the seminorm $q=p_1\otimes\dots\otimes p_k$ on $E^{\otimes_\pi k}$
and the above inequality extends to any $\psi$ of $E^{\hat{\otimes}_\pi k}$ 
since, in a Fr\'echet space, $\psi$ can be approximated by a convergent
sequence of elements in $E^{\otimes k}$ by density of $E^{\otimes k}$ 
in $E^{\hat{\otimes}_\pi k}$ and
by continuity of the seminorm $p_1\otimes\dots\otimes p_k$ for the 
topology
of $E^{\hat{\otimes}_\pi k}$.
\end{proof}

Another way to state the previous result is to say
that the family of linear maps
$\{F(\varphi,\cdot)\telque \varphi\in V\}$ is
equicontinuous~[\onlinecite[p.~II.6]{Bourbaki-TVS}].

\subsubsection{Proof of the main result.}
We are now ready to prove
\begin{prop}\label{multilinearkernelprop}
Let $E$ be a Fr\'echet space and $U\subset E$ an open subset. 
Then
$F:U\times E^k \mapsto \bbK$, multilinear in the 
last $k$ variables, is jointly continuous 
iff the corresponding map $\barF: U\times E^{\hat{\otimes}_\pi k}\to 
\bbK$ is jointly continuous.
\end{prop}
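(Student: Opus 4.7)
The plan is to prove the equivalence in two directions, with the easy direction essentially formal and the harder direction relying crucially on the equicontinuity Lemma~\ref{equicontlem} that precedes the statement.

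For the easy direction ($\bar{F}$ continuous $\Rightarrow F$ continuous), I would simply observe that the canonical $k$-linear map $\iota: E^k \to E^{\hat{\otimes}_\pi k}$ sending $(e_1,\dots,e_k)$ to $e_1\otimes\cdots\otimes e_k$ is jointly continuous by the definition of the projective tensor norm (indeed $p_1\otimes\cdots\otimes p_k$ satisfies $p_1\otimes\cdots\otimes p_k(e_1\otimes\cdots\otimes e_k)=p_1(e_1)\cdots p_k(e_k)$). Hence $F = \bar F \circ (\mathrm{id}_U \times \iota)$ is jointly continuous on $U\times E^k$.

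For the harder direction ($F$ continuous $\Rightarrow \bar F$ continuous), fix $(\varphi_0,\psi_0)\in U\times E^{\hat{\otimes}_\pi k}$. I would split
\begin{eqnarray*}
\bar F(\varphi,\psi)-\bar F(\varphi_0,\psi_0) &=& \bar F(\varphi,\psi-\psi_0) \\ &&+\,[\bar F(\varphi,\psi_0)-\bar F(\varphi_0,\psi_0)],
\end{eqnarray*}
using linearity of $\bar F$ in the second argument. The first bracket is controlled uniformly for $\varphi$ in a neighborhood $V$ of $\varphi_0$: the proof of Lemma~\ref{equicontlem} actually produces a constant $M$, a neighborhood $V$, and a projective seminorm $q=p_1\otimes\cdots\otimes p_k$ such that $|\bar F(\varphi,\chi)|\leqslant M\,q(\chi)$ for all $\varphi\in V$ and all $\chi\in E^{\hat{\otimes}_\pi k}$ (the bound is first established on pure tensors by the multilinear seminorm argument, then extended by density/continuity as in Lemma~\ref{equicontlem}). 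Hence $|\bar F(\varphi,\psi-\psi_0)|\leqslant M\,q(\psi-\psi_0)\to 0$ as $\psi\to\psi_0$ in $E^{\hat{\otimes}_\pi k}$.

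The main obstacle is the second bracket, i.e.\ continuity of $\varphi\mapsto \bar F(\varphi,\psi_0)$ at $\varphi_0$ when $\psi_0$ is a \emph{general} element of the completion, not a finite sum of pure tensors. I would handle this by a density/$\varepsilon/3$ argument: since $E$ is Fr\'echet, $E^{\otimes k}$ is dense in $E^{\hat{\otimes}_\pi k}$, so given $\varepsilon>0$ choose $\psi_n\in E^{\otimes k}$ with $M\,q(\psi_0-\psi_n)<\varepsilon/3$. For such $\psi_n=\sum_j e_{1,n,j}\otimes\cdots\otimes e_{k,n,j}$ (finite sum), the map $\varphi\mapsto \bar F(\varphi,\psi_n)=\sum_j F(\varphi,e_{1,n,j},\dots,e_{k,n,j})$ is continuous in $\varphi$ by the assumed joint continuity of $F$ on $U\times E^k$, so there is a neighborhood $V_n\subset V$ of $\varphi_0$ on which $|\bar F(\varphi,\psi_n)-\bar F(\varphi_0,\psi_n)|<\varepsilon/3$. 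Writing
\begin{eqnarray*}
\bar F(\varphi,\psi_0)-\bar F(\varphi_0,\psi_0) &=& \bar F(\varphi,\psi_0-\psi_n) \\
&&+\,[\bar F(\varphi,\psi_n)-\bar F(\varphi_0,\psi_n)] \\
&&+\,\bar F(\varphi_0,\psi_n-\psi_0),
\end{eqnarray*}
and bounding the first and third terms via the equicontinuity estimate by $\varepsilon/3$ each, one gets $|\bar F(\varphi,\psi_0)-\bar F(\varphi_0,\psi_0)|<\varepsilon$ for $\varphi\in V_n$. Combining with the first bracket completes the proof of joint continuity of $\bar F$ at $(\varphi_0,\psi_0)$.
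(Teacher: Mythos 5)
Your proof is correct and follows essentially the same route as the paper's: the easy direction via the continuity of the canonical map $E^k\to E^{\hat{\otimes}_\pi k}$, and the hard direction via the equicontinuity Lemma~\ref{equicontlem}, density of the algebraic tensor product $E^{\otimes k}$, and continuity of $F$ in $\varphi$ on finite sums of pure tensors. The only (harmless) difference is that you use the uniform absolute bound $|\barF(\varphi,\chi)|\leqslant M\,q(\chi)$ for all $\varphi\in V$, extracted from the proof of Lemma~\ref{equicontlem} (it does hold, and could equally be deduced from the lemma's statement plus continuity of $\barF(\varphi_0,\cdot)$), whereas the paper splits directly into three terms and uses the difference estimate together with the continuity of $\barF(\varphi_0,\cdot)$ on the completion.
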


\begin{proof}
One direction of this theorem is straightforward and holds
if $E$ is any locally convex space.
Indeed, by definition of the projective tensor product,
the canonical multilinear mapping
$E^k\to E^{\hat\otimes k}$ is
continuous~[\onlinecite[p.~I-50]{Grothendieck-55}]. Therefore,
if $\barF$ is continuous on $U\times E^{\hat\otimes k}$ then, 
by composition with the canonical
multilinear mapping, $F$ is continuous on
$U\times E^k$.

Let us prove that the continuity of $F$ implies the continuity
of $\barF$. 
According to Lemma~\ref{conti-lem}, we have to show
that, for every $\varphi_0\in U$ and every
 $\varepsilon>0$, there exist a finite number of 
continuous seminorms $q_i$ on $E^{\hat{\otimes}_\pi k}$, 
a neighborhood $V$ of $\varphi_0$ and $\eta_i>0$ such that,
if $\varphi$ belongs to $V$ and 
$\psi\in E^{\hat{\otimes}_\pi k}$ satisfies
$q_i(\psi-\psi_0) \leqslant \eta_i$, then
$|\barF(\varphi,\psi)-\barF(\varphi_0,\psi_0)\vert\leqslant 
\varepsilon$.

In order to bound $\barF(\varphi,\psi)-\barF(\varphi_0,\psi_0)$,
we cut it into three parts
\begin{equation}
\label{eq:3term}
\begin{array}{rcl}
\barF(\varphi,\psi)-\barF(\varphi_0,\psi_0)&=& 
  \barF(\varphi,\psi_k)-\barF(\varphi_0,\psi_k) \\
&+&\barF(\varphi,\psi-\psi_k)-\barF(\varphi_0,\psi-\psi_k)\\
&+& \barF(\varphi_0,\psi)-\barF(\varphi_0,\psi_0),
\end{array}
\end{equation}
where $\psi_k$ is some element of the algebraic tensor product
$E^{\otimes k}$ close enough to $\psi_0$ that we choose now.
The equicontinuity Lemma~\ref{equicontlem} yields a neighborhood $V_2$
of $\varphi_0$, a constant $C>0$ and a continuous seminorm $q_2$ on 
$E^{\hat{\otimes}_\pi k}$ so that 
$$\forall\varphi\in V_2,\forall \psi\in E^{\hat{\otimes}_\pi k},\,\ 
\vert\barF(\varphi,\psi)-\barF(\varphi_0,\psi) 
\vert\leqslant  C q_2(\psi).$$
Now we use the fact that the algebraic tensor product $E^{\otimes
k}$ is everywhere dense in $E^{\hat{\otimes}_\pi k}$ 
hence there is some element $\psi_k$ in the algebraic tensor product
$E^{\otimes k}$  such that
$q_2(\psi_0-\psi_k)\leqslant \eta_2 $ 
with $\eta_2 := \frac{\varepsilon}{6C}$. 

Now that $\psi_k$ is chosen, we can bound the second term of the sum 
(\ref{eq:3term}), namely $\barF(\varphi,\psi-\psi_k)-
\barF(\varphi_0,\psi-\psi_k)$. From  the previous relation, 
for every $\varphi\in V_2$ and every
$\psi$ such that $q_2(\psi-\psi_0) \le \eta_2$, the triangle inequality 
for $q_2(\psi-\psi_k)$ gives us
\begin{eqnarray*}
\vert\barF(\varphi,\psi-\psi_k)-\barF(\varphi_0,\psi-\psi_k)
\vert &\leqslant& C(q_2(\psi-\psi_0)\\&& +q_2(\psi_0-\psi_k))\leqslant
\frac{\varepsilon}{3}.
\end{eqnarray*}

We continue by bounding the last term $\barF (\varphi_0,\psi) - \barF (\varphi_0,\psi_0)$ in the sum (\ref{eq:3term}).
Since $\varphi_0$ is fixed, the map $\psi\mapsto \barF(\varphi_0,\psi)$ is continuous in $\psi$ 
because, since $\barF(\varphi_0,\cdot)$ is continuous
on $E^{\otimes_\pi k}$, its extension to the completion
$E^{\hat\otimes_\pi k}$, also denoted by $\barF(\varphi_0,\cdot)$,
is continuous. It follows that 
there is some seminorm $q_1$ of $E^{\hat{\otimes}_\pi k}$
and a number  $\eta_1>0$ such that if $\psi \in U$ satisfies $q_1\left(\psi-\psi_0\right)\leqslant \eta_1$ then
$$\vert \barF(\varphi_0,\psi)-\barF(\varphi_0,\psi_0)\vert
\leqslant \frac{\varepsilon}{3}.$$

To bound the first term $ \barF(\varphi,\psi_k)-\barF(\varphi_0,\psi_k)$ in the sum (\ref{eq:3term}), we use the fact that
$\psi_k\in E^{\otimes k}$. Thus, 
$\psi_k=\sum_{j=1}^p (e_{1,j}\otimes\dots\otimes e_{k,j})$
for some $(e_{1,j},\dots,e_{k,j})\in E^{k}$.
By definition of $\barF$, 
\begin{eqnarray*}
\barF(\varphi,\psi_k)-\barF(\varphi_0,\psi_k) &=& \sum_{j=1}^p
F(\varphi,e_{1,j},\dots,e_{k,j})\\&&
-F(\varphi_0,e_{1,j},\dots,e_{k,j}).
\end{eqnarray*} 
By continuity of $F$ in the first factor, the finite sum
$\sum_{j=1}^p F(\varphi,e_{1,j},\dots,e_{k,j})$ is continuous in
$\varphi$ and there is some neighborhood $V_3$ of $\varphi_0$
such that for all $\varphi\in V_3$ the following bound
$$\vert\sum_{j=1}^p  F(\varphi,e_{1,j},\dots,e_{k,j})-
F(\varphi_0,e_{1,j},\dots,e_{k,j})\vert\leqslant
\frac{\varepsilon}{3},$$
holds true.
\ecam

Finally we found some neighborhood $V=V_2\cap V_3$ of $\varphi_0$, 
two seminorms $q_1$ and $q_2$ of $E^{\hat{\otimes}_\pi k}$, 
and two numbers $\eta_1>0$ and $\eta_2=\epsilon/6C$ such
that $q_1(\psi-\psi_0)< \eta_1$ and $q_2(\psi-\psi_0) < \eta_2$
imply
$$|\barF(\varphi,\psi)-\barF(\varphi_0,\psi_0)|\leqslant \varepsilon.$$
The proposition is proved.
\end{proof}

Now we can specialize our result to the space of
smooth sections of vector bundles.
We recall a fundamental result on the projective
tensor product of sections~[\onlinecite[p.~72]{Tarkhanov}]:
\begin{prop}
Let $\Gamma(M,B)$ be the  space
of smooth sections of some smooth finite rank vector bundle $B\to M$
on a manifold $M$.
Then $\Gamma(M,B)^{\hat{\otimes}_\pi k}=
\Gamma(M^k,B^{\boxtimes k})$.
\end{prop}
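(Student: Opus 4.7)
The plan is to reduce to the trivial bundle case and then globalize via the local-to-global description of $\Gamma(M,B)$ already set up in the paper. By associativity of $\hat{\otimes}_\pi$ and induction on $k$, it suffices to treat the case $k=2$. The bilinear map $(s,t) \mapsto s \boxtimes t$, defined pointwise by $(s \boxtimes t)(x,y) = s(x) \otimes t(y)$, is separately continuous and hence jointly continuous by the Fr\'echet hypothesis (via the Bourbaki statement already cited in the Continuity subsection). The universal property of $\hat{\otimes}_\pi$ then provides a continuous linear map $\Phi: \Gamma(M,B) \hat{\otimes}_\pi \Gamma(M,B) \to \Gamma(M^2, B \boxtimes B)$. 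Both sides are Fr\'echet, so by the open mapping theorem it is enough to show that $\Phi$ is a continuous bijection.

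For the trivial bundle over an open subset $\Omega$ of $\bbR^d$, the content is the classical Schwartz--Grothendieck identity $C^\infty(\Omega) \hat{\otimes}_\pi C^\infty(\Omega') \cong C^\infty(\Omega \times \Omega')$, which has already been invoked above via Treves. Density of the algebraic tensor product in $C^\infty(\Omega \times \Omega')$ can be established by first convolving a given smooth function with a mollifier and then Taylor-expanding on each compact set, so as to approximate in every seminorm $\pi_{m,K}$ by finite sums $\sum_i f_i(x) g_i(y)$. Injectivity on the algebraic tensor product is automatic; nuclearity of $C^\infty(\Omega)$ identifies the projective topology with the injective one, for which separation by pairs of compactly supported distributions is transparent, so $\Phi$ remains injective after completion.

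For a general finite rank bundle $B \to M$, pick a countable atlas $(U_\alpha, \psi_\alpha)$ over which $B$ trivializes, so that $\Gamma(U_\alpha, B) \cong C^\infty(\psi_\alpha(U_\alpha))^r$ as nuclear Fr\'echet spaces. The paper has already realized $\Gamma(M,B)$ as the kernel of the difference-of-restrictions map $\prod_\alpha \Gamma(U_\alpha, B) \to \prod_{\alpha \neq \beta} \Gamma(U_\alpha \cap U_\beta, B)$, and the same construction applied to the product atlas $(U_\alpha \times U_\beta)$ realizes $\Gamma(M^2, B \boxtimes B)$ as the kernel of an analogous restriction difference. The local trivial-bundle result identifies $\Gamma(U_\alpha, B) \hat{\otimes}_\pi \Gamma(U_\beta, B)$ with $\Gamma(U_\alpha \times U_\beta, B \boxtimes B)$, and these identifications intertwine the two differentials. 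The main obstacle is then the passage from local to global: showing that $\Gamma(M,B) \hat{\otimes}_\pi \Gamma(M,B)$ coincides with the kernel of the induced map on $\prod_{\alpha,\beta} \Gamma(U_\alpha \times U_\beta, B \boxtimes B)$. This is where nuclearity is decisive, through the exactness of $- \hat{\otimes}_\pi F$ on short exact sequences of nuclear Fr\'echet spaces (Grothendieck) and its commutation with countable products in that category; once these structural facts are granted, the remaining verification is a routine diagram chase.
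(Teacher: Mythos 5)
A preliminary remark on comparison: the paper does not prove this proposition at all — it is recalled as a known result with a citation to Tarkhanov — so your attempt is supplying a proof where the paper has none. Your architecture (joint continuity of $(s,t)\mapsto s\boxtimes t$ via the Fr\'echet hypothesis, the classical identity $C^\infty(\Omega)\hat{\otimes}_\pi C^\infty(\Omega')\cong C^\infty(\Omega\times\Omega')$ for the trivial local case, then globalization through the kernel-of-restriction-differences description of $\Gamma(M,B)$ already set up in the paper) is sound and close to standard proofs in the literature.

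There is, however, a genuine gap in the global step as written. You invoke exactness of $-\hat{\otimes}_\pi F$ on \emph{short exact} sequences of nuclear Fr\'echet spaces, but the complex $0\to\Gamma(M,B)\to\prod_\alpha\Gamma(U_\alpha,B)\to\prod_{\alpha\neq\beta}\Gamma(U_\alpha\cap U_\beta,B)$ is only left exact: the difference-of-restrictions map is not surjective and its image is not obviously closed, so it cannot be cut into a short exact sequence of Fr\'echet spaces to which that theorem applies. What you actually need is preservation of kernels, $\ker(u)\hat{\otimes}_\pi F=\ker(u\,\hat{\otimes}\,\mathrm{id}_F)$, together with commutation of $\hat{\otimes}_\pi$ with countable products; neither follows from the statement you cite. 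Both are true here, and the clean route is the Schwartz $\varepsilon$-product: for the complete nuclear spaces at hand $E\hat{\otimes}_\pi F=E\hat{\otimes}_\varepsilon F\cong L(F'_c,E)$, an operator $F'_c\to B$ annihilated by post-composition with $u$ takes values in the closed subspace $\ker u$, which gives kernel preservation, and commutation with products is immediate in the same picture. Alternatively, you can bypass the \v{C}ech argument entirely via Serre--Swan: $B$ is a direct summand of a trivial bundle $M\times\bbR^N$, so $\Gamma(M,B)$ is a complemented subspace of $C^\infty(M)^N$, and complemented subspaces pass trivially through $\hat{\otimes}_\pi$, reducing everything to $C^\infty(M)\hat{\otimes}_\pi C^\infty(M)\cong C^\infty(M\times M)$. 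Two smaller points: injectivity of $\Phi$ on the completion does not follow from injectivity on the dense algebraic tensor product plus separation by distributions — what saves you is that the $\varepsilon$ (hence, by nuclearity, $\pi$) topology on $C^\infty(\Omega)\otimes C^\infty(\Omega')$ coincides with the topology induced from $C^\infty(\Omega\times\Omega')$, so the completion maps isomorphically onto the closure; say this explicitly rather than appealing to separation. And the final ``routine diagram chase'' must reconcile agreement of the $s_{\alpha\beta}$ on overlaps in each factor separately (what the iterated tensored complex controls) with agreement on all overlaps of the product cover (what defines $\Gamma(M^2,B\boxtimes B)$); this is sheaf gluing and does work, but it is where the remaining content sits, not a formality.
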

Note that we could remove the index $\pi$ in
$\hat{\otimes}_\pi$ because we saw that $\Gamma(M,B)$
is nuclear.
If we specialize
Proposition~\ref{multilinearkernelprop}
to sections of vector bundles (which is a Fr\'echet space)
we obtain
\begin{thm}\label{multilinearkernelbundles}
Let $E=\Gamma(M,B)$ be the space
of smooth sections of some smooth finite rank vector bundle 
$B\to M$.
Then 
$F:U\times E^k \to \bbK$ multilinear in the last $k$
variables is jointly continuous
iff the corresponding map $\barF: U\times \Gamma(M^k,B^{\boxtimes k})
\to \bbK$ is jointly continuous.
\end{thm}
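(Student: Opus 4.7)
The plan is to observe that this theorem is a direct specialization of Proposition~\ref{multilinearkernelprop} combined with the identification of the completed projective tensor power of the section space with sections of the external tensor power bundle. So there is essentially no new work.

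First I would recall from Section~\ref{LCS} that $E=\Gamma(M,B)$ is a nuclear Fr\'echet space, since it was exhibited as the kernel of a continuous map between countable products of nuclear Fr\'echet spaces of the form $C^\infty(\Omega,\bbR^r)$. This is the crucial hypothesis required to invoke Proposition~\ref{multilinearkernelprop}, which asserts that for any Fr\'echet space $E$ and any open subset $U\subset E$, a map $F:U\times E^k\to\bbK$ multilinear in the last $k$ variables is jointly continuous if and only if the associated linearization $\barF:U\times E^{\hat{\otimes}_\pi k}\to\bbK$ is jointly continuous.

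Next I would invoke the proposition stated immediately before the theorem (attributed to Tarkhanov), which identifies
\begin{equation*}
\Gamma(M,B)^{\hat{\otimes}_\pi k} \;=\; \Gamma(M^k,B^{\boxtimes k}).
\end{equation*}
Substituting this identification into the conclusion of Proposition~\ref{multilinearkernelprop} immediately yields the statement of Theorem~\ref{multilinearkernelbundles}: joint continuity of $F$ on $U\times E^k$ is equivalent to joint continuity of $\barF$ on $U\times\Gamma(M^k,B^{\boxtimes k})$.

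There is no real obstacle here, because the two ingredients have already been established. If anything, the only thing worth emphasizing explicitly is why the hypotheses of Proposition~\ref{multilinearkernelprop} are satisfied, that is, why $\Gamma(M,B)$ is Fr\'echet — this justifies the use of the density of the algebraic tensor product $E^{\otimes k}$ inside $E^{\hat{\otimes}_\pi k}$ and the metrizability arguments underlying that proposition. Once this is noted, the proof of the theorem reduces to one sentence combining the two previous results.
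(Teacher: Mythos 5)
Your proposal is correct and coincides with the paper's own proof: Theorem~\ref{multilinearkernelbundles} is obtained there, exactly as you argue, by combining Proposition~\ref{multilinearkernelprop} (applicable since $\Gamma(M,B)$ is Fr\'echet) with the identification $\Gamma(M,B)^{\hat{\otimes}_\pi k}=\Gamma(M^k,B^{\boxtimes k})$. Your added remark on why the Fr\'echet hypothesis holds is a harmless elaboration, not a deviation.
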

\begin{proof}
The proof is an immediate consequence of the
fact that $E^{\hat{\otimes}_\pi k}=\Gamma(M^k,B^{\boxtimes k})$
and Proposition \ref{multilinearkernelprop}.
\end{proof}
The definition of a Bastiani smooth functional implies the
following corollary:
\begin{thm}\label{joint:cont}
Let $E=\Gamma(M,B)$ be the space
of smooth sections of some smooth finite rank vector bundle 
$B\to M$.
A map $F:U\to\bbK$, where $U$ is open in $E$,
is Bastiani smooth iff
the maps $F^{(k)}: U\times \Gamma(M^k,B^{\boxtimes k}) \to \bbK$
are (jointly) continuous for every $k\ge 1$.
\end{thm}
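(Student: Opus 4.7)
The theorem is a direct corollary of Theorem \ref{multilinearkernelbundles} combined with the definition of Bastiani smoothness, so my plan is essentially to unpack both characterizations and invoke the kernel/continuity equivalence level by level. The main structural observation is that $F$ being Bastiani smooth is by Definition \ref{differ-def} exactly the statement that for every $k\geq 1$ the iterated G\^ateaux differential $D^kF:U\times E^k\to\bbK$ exists and is jointly continuous, while $F^{(k)}$ is by construction the linearization $\overline{D^kF}$ defined on the completed projective tensor power via the lemma in Section \ref{funct-sect}.

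First I would fix $k$ and recall from Proposition \ref{Cku-prop} that, whenever $D^kF_\varphi(v_1,\dots,v_k)$ exists, it is $k$-linear (and symmetric) in $(v_1,\dots,v_k)$. Hence the hypothesis of Theorem \ref{multilinearkernelbundles} applies: we may identify $D^kF$ with its unique linear extension $F^{(k)}=\overline{D^kF}$ on $U\times E^{\hat{\otimes}_\pi k}$, where by the proposition of Tarkhanov quoted just above Theorem \ref{multilinearkernelbundles} we have $E^{\hat{\otimes}_\pi k}=\Gamma(M^k,B^{\boxtimes k})$.

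Second, I would apply Theorem \ref{multilinearkernelbundles} in both directions. For the forward implication, suppose $F$ is Bastiani smooth; then $D^kF$ is jointly continuous on $U\times E^k$, so by Theorem \ref{multilinearkernelbundles} the linearized kernel $F^{(k)}$ is jointly continuous on $U\times\Gamma(M^k,B^{\boxtimes k})$, for every $k\geq 1$. For the converse, suppose each $F^{(k)}$ is jointly continuous on $U\times\Gamma(M^k,B^{\boxtimes k})$ (which implicitly presupposes existence of $D^kF$ as a multilinear form, since $F^{(k)}$ is defined as $\overline{D^kF}$); composing with the canonical multilinear map $E^k\to E^{\hat{\otimes}_\pi k}$ — continuous by definition of the projective tensor product — yields joint continuity of $D^kF$ on $U\times E^k$, which is exactly membership of $F$ in $C^k(U)$. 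Since this holds for all $k$, $F$ is Bastiani smooth.

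I do not expect any real obstacle here: once Theorem \ref{multilinearkernelbundles} is in place, the only thing to double-check is that each $D^kF$ is indeed multilinear in the last $k$ variables (supplied by Proposition \ref{Cku-prop}) and that the topological identification $\Gamma(M,B)^{\hat{\otimes}_\pi k}=\Gamma(M^k,B^{\boxtimes k})$ is available (supplied by the cited result of Tarkhanov). The proof is therefore a clean two-line application of the preceding theorem, with the real analytic content — density of the algebraic tensor product and the equicontinuity estimate — having already been absorbed into Lemma \ref{equicontlem} and Proposition \ref{multilinearkernelprop}.
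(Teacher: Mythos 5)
Your proposal is correct and follows essentially the same route as the paper, which obtains Theorem~\ref{joint:cont} as an immediate consequence of Theorem~\ref{multilinearkernelbundles} (itself resting on the identification $\Gamma(M,B)^{\hat{\otimes}_\pi k}=\Gamma(M^k,B^{\boxtimes k})$ and Proposition~\ref{multilinearkernelprop}) together with the definition of Bastiani smoothness as joint continuity of every $D^kF$ on $U\times E^k$. Your remark that the converse implicitly presupposes the existence of the G\^ateaux differentials, since $F^{(k)}=\overline{D^kF}$, matches the paper's implicit reading of the statement.
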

To interpret Theorem~\ref{multilinearkernelprop} in terms of 
distributional kernels,
let $B\to M$ denote a smooth vector bundle of finite rank over
a manifold $M$ equipped with a fixed density $\vert dx\vert$ and
$B^*\to M$ the corresponding dual bundle.
Recall that $\Gamma(M,B)^\prime\simeq
\Gamma(M,B^*)\otimes_{C^\infty(M)}
\mathcal{E}^\prime(M)$~\cite{Grosser-08},
where $\Gamma(M,B^*)\otimes_{C^\infty(M)}\mathcal{E}^\prime(M)$
denotes
the compactly supported distributional sections of the dual bundle
$B^*$. 
In global analysis, to every continuous linear map
$L:\Gamma(M,B)\to \Gamma(M,B)^\prime$,
we associate the
continuous bilinear map $B:(\varphi_1,\varphi_2)\in
\Gamma(M,B)^2\mapsto  \langle
\varphi_1 ,L\varphi_2 \rangle \in\bbK$ where the pairing is understood
as a pairing between a smooth function
and a distribution once the smooth density on $M$ is fixed.

The usual kernel theorem of the theory of distributions
states that a bilinear map can be represented
by a distribution:
$K_L\in \mathcal{E}^\prime(M\times M)\otimes_{C^\infty(M^2)}
\Gamma(M^2,B^*\boxtimes B^*)$
living on configuration space $M^2$ such that,
for every $(\varphi_1,\varphi_2)\in \Gamma(M,B)^2$
$$\langle K_L,\varphi_1\boxtimes \varphi_2
\rangle_{\Gamma_2^\prime,\Gamma_2}=\langle
\varphi_1 ,L\varphi_2
\rangle_{\Gamma(M,B),\Gamma(M,B)^\prime}, $$
where $\Gamma_2=\Gamma(M^2,B\boxtimes B)$.
Theorem~\ref{multilinearkernelprop} generalizes the kernel
theorem by using multilinear maps instead of
bilinear ones and by considering that these
multilinear maps depend continuously and non-linearly
on a parameter $\varphi$.

\subsection{Order of distributions}
If $F$ is a Bastiani smooth map from an open subset $U$
of $E=C^\infty(M)$ to $\bbK$, then,
for every $\varphi\in U$,
$D^kF_\varphi$ is a compactly supported distribution.
Therefore, the order of $F^{(k)}_\varphi$ is 
finite~[\onlinecite[p.~88]{Schwartz-66}].
For some applications, for example to local functionals,
it is important to require the order of
$F^{(k)}_\varphi$ to be  \emph{locally bounded}:
\begin{prop}
\label{bounded-order}
Let $E=C^\infty(M)$ and $F:E\to \bbK$ be a smooth 
functional on an open subset $U$ of $E$. Then, for
every $\varphi_0\in U$ and every integer $k$,
there is a neighborhood $V$
of $\varphi_0$, an integer $m$ and a compact $K\subset M^k$
such that, for every $\varphi\in V$, 
the order of $F^{(k)}_\varphi$ is smaller than $m$
and $F^{(k)}_\varphi$ is supported in $K$.
\end{prop}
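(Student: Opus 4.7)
The plan is to combine Proposition \ref{localcompactsupport} with the equicontinuity Lemma \ref{equicontlem} and the mere fact that the single distribution $F^{(k)}_{\varphi_0}$ has finite order. The support claim requires no new work: Proposition \ref{localcompactsupport} already provides a neighborhood $V_1$ of $\varphi_0$ and a compact $K_0\subset M$ such that $\supp F^{(n)}_\varphi\subset K_0^n$ for every $\varphi\in V_1$ and every $n$, so we simply set $K=K_0^k\subset M^k$.

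For the uniform order bound, I would apply Lemma \ref{equicontlem} to the continuous multilinear map $D^kF$ (continuity of which is built into Bastiani $k$-differentiability). This gives a neighborhood $V_2$ of $\varphi_0$, a constant $C>0$, and a continuous seminorm $q$ on $E^{\hat\otimes_\pi k}=C^\infty(M^k)$ with
$$|F^{(k)}_\varphi(\psi)-F^{(k)}_{\varphi_0}(\psi)|\le C\,q(\psi)$$
for every $\varphi\in V_2$ and every $\psi\in C^\infty(M^k)$. Because $C^\infty(M^k)$ is a Fréchet space whose topology is generated by the seminorms $\pi_{m,L}$, the continuous seminorm $q$ is dominated by some $C_1\pi_{m_1,L_1}$ for an integer $m_1$ and compact $L_1\subset M^k$. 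Separately, the compactly supported distribution $F^{(k)}_{\varphi_0}$ has finite order, so $|F^{(k)}_{\varphi_0}(\psi)|\le C_0\pi_{m_0,L_0}(\psi)$ for some $m_0$, compact $L_0$, and $C_0>0$.

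Putting these bounds together with the triangle inequality, on $V=V_1\cap V_2$ we obtain
$$|F^{(k)}_\varphi(\psi)|\le (C_0+CC_1)\,\pi_{m,L}(\psi)$$
for every $\varphi\in V$ and $\psi\in C^\infty(M^k)$, where $m=\max(m_0,m_1)$ and $L$ is any compact containing $L_0\cup L_1$. This is precisely the statement that $F^{(k)}_\varphi$ is of order at most $m$, uniformly for $\varphi\in V$, which combined with the support claim above finishes the proof.

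The argument is essentially an assembly step: the equicontinuity lemma does the heavy lifting by turning joint continuity of $D^kF$ into a single seminorm bound valid for all $\varphi$ near $\varphi_0$, and the only delicate point is the passage from the projective tensor seminorm $q$ to a Fréchet seminorm $\pi_{m_1,L_1}$. This passage is immediate from the identification $C^\infty(M)^{\hat\otimes_\pi k}=C^\infty(M^k)$ recalled just before Theorem \ref{multilinearkernelbundles}, so no genuine obstacle arises.
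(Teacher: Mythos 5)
Your proof is correct and follows essentially the same route as the paper: the uniform order bound comes from Lemma~\ref{equicontlem} applied to $D^kF$, domination of the resulting projective seminorm by some $\pi_{m_1,L_1}$ on $C^\infty(M^k)$, and the triangle inequality with the finite-order estimate for the single distribution $F^{(k)}_{\varphi_0}$. The only (harmless) difference is that you obtain the uniform support from Proposition~\ref{localcompactsupport}, whereas the paper reads it off the same seminorm estimate: if $\supp\psi$ misses the compact set appearing in the seminorm then the bound forces $F^{(k)}_\varphi(\psi)=F^{(k)}_{\varphi_0}(\psi)$, so $\supp F^{(k)}_\varphi$ lies in that compact set united with $\supp F^{(k)}_{\varphi_0}$.
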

\begin{proof}
According to Lemma~\ref{equicontlem}, for every 
$\varphi_0$ in $U$, there is a neighborhood $V$ of $\varphi_0$,
a constant $C$ and a seminorm $\pi_{n,K}$ of $C^\infty(M)$ such that 
\begin{eqnarray*}
\vert F^{(k)}_\varphi(\psi)-F^{(k)}_{\varphi_0}(\psi) \vert &\leqslant &
C\,\pi_{n,K}(\psi).
\end{eqnarray*}
This means that the order of 
$F^{(k)}_\varphi-F^{(k)}_{\varphi_0}$ is bounded
by $n$~[\onlinecite[p.~64]{Schwartz-66}], and the order
of $F^{(k)}_\varphi$ is bounded by $n$ plus the order
of $F^{(k)}_{\varphi_0}$.
Moreover, if $\supp\psi\cap K=\emptyset$, then $\pi_{n,K}(\psi)=0$
and $F^{(k)}_\varphi(\psi)-F^{(k)}_{\varphi_0}(\psi)=0$.
This means that the support of
$F^{(k)}_\varphi-F^{(k)}_{\varphi_0}$ is contained in $K$
and the support of $F^{(k)}_\varphi(\psi)$ is contained
in the compact $K\cup \supp F^{(k)}_{\varphi_0}$.
\end{proof}

Note also that, in general, the order of the
distributions is not bounded on $U$:
\begin{lem}
\label{counterorderlem}
Let $g\in \calD(\bbR)$ and $(\chi_n)_{n\in\bbZ}$
a sequence of functions such that
$\chi_n\in \calD([n-1,n+1])$
and $\sum_{n\in \bbZ}\chi_n=1$. Then, the functional
\begin{eqnarray*}
F(\varphi) &=& \sum_{n=-\infty}^\infty 
\int_\bbR \chi_n(\varphi(x)) 
\frac{d^{|n|} \varphi}{dx^{|n|}}(x)
g(x)dx,
\end{eqnarray*}
is Bastiani smooth on $C^\infty(\bbR)$ but the order of 
$F^{(k)}$ is not bounded on $C^\infty(\bbR)$.
\end{lem}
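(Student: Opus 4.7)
The claim has two parts---Bastiani smoothness of $F$ and unboundedness of the order of some $F^{(k)}$---and I would take $k=2$ for the second.

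My plan for smoothness is to show that the defining sum is locally finite. Fix $\varphi_0\in C^\infty(\bbR)$. Because $\supp g$ is compact and $\varphi_0$ is continuous, $\varphi_0(\supp g)$ lies in some interval $[-A,A]$. On the seminorm ball $V=\{\varphi : \pi_{0,\supp g}(\varphi-\varphi_0)<1\}$ one then has $\varphi(\supp g)\subset[-A-1,A+1]$, so $\chi_n(\varphi(x))$ vanishes identically on $\supp g$ whenever $|n|>A+2$. Thus $F|_V=\sum_{|n|\le A+2}F_n$, a finite sum, with $F_n(\varphi)=\int\chi_n(\varphi)\varphi^{(|n|)}g\,dx$; each $F_n$ is readily seen to be Bastiani smooth by the chain rule and differentiation under the integral sign, the required continuity estimates being furnished by the seminorms $\pi_{m,\supp g}$. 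Since Bastiani smoothness is a local property and $\varphi_0$ is arbitrary, $F$ is Bastiani smooth on $C^\infty(\bbR)$.

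For the unboundedness of the order, I would begin with the computation
\begin{eqnarray*}
D^2 F_{n,\varphi}(\psi_1,\psi_2) &=& \int \chi_n''(\varphi)\varphi^{(|n|)}g\,\psi_1\psi_2\,dx\\
&& +\int\chi_n'(\varphi)g\bigl(\psi_1\psi_2^{(|n|)}+\psi_2\psi_1^{(|n|)}\bigr)dx,
\end{eqnarray*}
and then probe with $\varphi_N(x)=Nx$, which satisfies $\varphi_N^{(j)}\equiv 0$ for $j\ge 2$. For $|n|\ge 2$ the first term drops out and the coefficient of the cross-term is $c_n(x)=\chi_n'(Nx)g(x)$, a smooth compactly supported function concentrated near $x=n/N$. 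The corresponding distributional kernel on $\bbR^2$ takes the form $T_n(\Phi)=\int c_n(x)\partial_y^{|n|}\Phi(x,x)\,dx$ (plus its $x\leftrightarrow y$ symmetrization), and I would verify that $T_n$ has order exactly $|n|$ by testing against $\Phi_k(x,y)=\chi(x)f(k(y-x))$ for suitable $\chi,f$: this yields $T_n(\Phi_k)\gtrsim k^{|n|}$ while $\|\Phi_k\|_{C^{|n|-1}}\lesssim k^{|n|-1}$, precluding any strictly lower order.

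To finish, let $n^{*}$ denote the largest index with $c_{n^{*}}\not\equiv 0$, which is comparable in magnitude to $N\sup_{x\in\supp g}|x|$. Since $c_n\equiv 0$ for $|n|>|n^{*}|$, no summand of $F^{(2)}_{\varphi_N}=\sum_n F^{(2)}_{n,\varphi_N}$ other than $T_{n^{*}}$ contributes to the $|n^{*}|$-th derivative order, so the top-order component cannot cancel; hence $F^{(2)}_{\varphi_N}$ has order at least $|n^{*}|$, which grows linearly with $N$. The main obstacle is precisely this non-cancellation argument, which is handled by the extremal choice of $n^{*}$. Note incidentally that the analogous statement would fail at $k=1$, because integration by parts converts $\int\chi_n(\varphi)g\,\psi^{(|n|)}\,dx$ into a smooth-coefficient distribution of order $0$, so $F^{(1)}_\varphi$ always has order $0$; the obstruction only appears for $k\ge 2$, where derivatives on a single diagonal coordinate cannot be removed by integration by parts.
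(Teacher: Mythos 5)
Your smoothness argument is essentially the paper's: on a $\pi_{0}$-ball around $\varphi_0$ (over $\supp g$) only the indices $|n|\le A+2$ survive, so $F$ is locally a finite sum of manifestly smooth functionals. For the order statement, however, you take a genuinely different route: the paper argues at $k=1$, saying that $F^{(1)}_\varphi$ ``contains a factor $\psi^{(n)}$'' and is therefore of order at least $n$, whereas you work at $k=2$. Your closing observation is correct and is precisely the reason the naive $k=1$ argument is too quick: $\int\chi_n(\varphi)g\,\psi^{(|n|)}dx$ integrates by parts into a smooth compactly supported density, so each $F^{(1)}_\varphi$ has order $0$; the derivatives that cannot be removed are the ones transversal to the diagonal, which first appear in $F^{(2)}_\varphi$. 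So your strategy is the right one in spirit.

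There is, however, a genuine gap in your key estimate. The kernel you must bound from below is the symmetrized one, $T_n(\Phi)=\int c_n(x)\bigl([\partial_y^{|n|}\Phi](x,x)+[\partial_x^{|n|}\Phi](x,x)\bigr)dx$, and on your test functions $\Phi_k(x,y)=\chi(x)f(k(y-x))$ one computes $T_n(\Phi_k)=\bigl(1+(-1)^{|n|}\bigr)f^{(|n|)}(0)\,k^{|n|}\int c_n\chi+O(k^{|n|-1})$: for odd $|n|$ the leading terms of the two halves cancel. Indeed $\partial_x^{m}+\partial_y^{m}$ restricted to the diagonal has transversal order $m$ only when $m$ is even; for $m=1$ one has $(\partial_x+\partial_y)\Phi(x,x)=\frac{d}{dx}\bigl[\Phi(x,x)\bigr]$, so after one integration by parts $T_n$ has order $0$, and in general an odd $m$ loses one order. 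Hence ``$T_n$ has order exactly $|n|$'' and the bound $T_n(\Phi_k)\gtrsim k^{|n|}$ are false for odd $|n|$, and the non-cancellation step via the largest index $n^{*}$ collapses when $n^{*}$ is odd; note also that adjacent coefficients $c_n$ have overlapping supports (on the overlap $\chi_n'=-\chi_{n\pm1}'$), and an odd index $n+1$ contributes at transversal order $|n|$ through terms involving $c_{n+1}'$, so ``largest index'' alone does not isolate the top order even after adjusting parity. The argument is repairable: choose a large \emph{even} $n$ with $[n-1,n]\subset N\cdot\{g\neq 0\}$ and localize $\chi$ where $Nx\in(n-1,n)$, so that only $T_n$ and the strictly lower-order $T_{n-1}$ meet $\supp\chi$; then $F^{(2)}_{\varphi_N}$ has order at least $n$, which still grows linearly in $N$ and yields the lemma (one may equivalently invoke the uniqueness of the Schwartz representation of diagonal-supported distributions). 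But as written, the parity cancellation is a real hole in your proof.
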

\begin{proof}
The functional  $F$ is smooth because, for every 
$\varphi_0\in C^\infty(\bbR)$, we can define a neighborhood of 
$\varphi_0$ by 
$V=\{\varphi\telque \pi_{0,K}(\varphi-\varphi_0) < \epsilon\}$,
where $K$ is a compact neighborhood of the support of $g$.
Let $N$ be the smallest integer strictly greater
than $\pi_{0,K}(\varphi_0)+\epsilon$. Then,
$-N <  \varphi(x) <  N$ for every $\varphi\in V$ and
every $x \in K$ and 
\begin{eqnarray*}
F(\varphi) &=& \sum_{n=-N-1}^{N+1} 
\int \chi_n(\varphi(x)) \varphi^{(|n|)}(x) g(x)dx,
\end{eqnarray*}
is a finite sum of smooth functionals. 

However, the order of
\begin{eqnarray*}
F^{(1)}_\varphi(\psi) &=& \sum_{n=-\infty}^\infty 
\int \Big(\chi_n(\varphi(x)) \psi^{(|n|)}(x)
\\&& + \chi^\prime_n(\varphi(x)) 
\psi^{(1)}(x) \varphi^{(|n|)}(x) \Big)g(x)dx,
\end{eqnarray*}
is not bounded on $C^\infty(\bbR)$. Indeed, for any
positive integer $n$, we can find a smooth function $\varphi$ such
that
$\chi_n\big(\varphi(x)\big) g(x) \not=0$ for some $x\in \supp g$.
Since $F^{(1)}_\varphi(\psi) $ contains a
factor $\psi^{(n)}(x)$ it is at least of order $n$.
\end{proof}

\subsection{Derivatives as smooth functionals}
In the next section we equip several spaces of
functionals with a topology. As a warm-up
exercise, we show here that the maps
$F^{(k)}$ are smooth functionals from
$C^\infty(M)$ to $\calE'(M^k)$.

We adapt to the case of functionals
the general result given in item~4 of Prop.~\ref{Cku-prop} 
stating that, if $F$ is a smooth functional on $U$, then 
$D^kF$ is  a Bastiani smooth map
from $U$ to $L(E^k,\bbK)$.
We need to identify the topology of
$L(E^k,\bbK)$ used by Bastiani. Let us start with $L(E,\bbK)$. Bastiani
furnishes $E$ with 
the topology of convergence on all compact sets of 
$E$. In other words, the seminorms that define
the topology of $L(E,\bbK)$ are
$p_C(u)=\sup_{f\in C} |\langle u,f\rangle|$,
where $C$ runs over the compact subsets of 
$C^\infty(M)$. Since $C^\infty(M)$ is a
Montel space~[\onlinecite[p.~239]{Horvath}], 
the topology of uniform convergence on compact
sets is the same as the strong 
topology~[\onlinecite[p.~235]{Horvath}]. 
This means that $L(E,\bbR)$ is the
space $\calE'(M)$ of compactly supported
distributions with its usual topology.
Similarly, $L(E^k,\bbR)$ can be identified to a subset of
$\calE'(M^k)$ with its usual topology.
We just obtained the following result:
\begin{prop}
\label{Dksmooth}
Let $U$ be an open subset of $C^\infty(M)$ and
$F:U\to \bbK$ a Bastiani smooth functional.
Then, for every integer $k$, the map
$F^{(k)}:U\to \calE'(M^k)$ is smooth in the
sense of Bastiani.
\end{prop}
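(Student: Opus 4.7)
The plan is to deduce the statement directly from item~4 of Proposition~\ref{Cku-prop} together with the topological identification of $L(E^k,\bbK)$ with $\calE'(M^k)$ that is set up in the paragraph immediately preceding the proposition.

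\textbf{Step 1.} Since $F$ is Bastiani smooth on $U$, Proposition~\ref{Cku-prop}(4) applied with $m=k$ (and letting the outer regularity index tend to $\infty$) yields that
\begin{equation*}
D^kF : U \longrightarrow L(E^k,\bbK)
\end{equation*}
is Bastiani smooth, where $L(E^k,\bbK)$ carries the locally convex topology of uniform convergence on compact subsets $C=C_1\times\cdots\times C_k$ of $E^k$, i.e.\ the topology generated by the seminorms $p_C(u)=\sup_{(v_1,\ldots,v_k)\in C}|u(v_1,\ldots,v_k)|$.

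\textbf{Step 2.} I would then invoke the canonical isomorphism of Lemma~3.7 to identify any continuous multilinear $u\in L(E^k,\bbK)$ with its linearization $\bar u\in L(E^{\hat\otimes_\pi k},\bbK)$, and recall from the discussion following Proposition~\ref{multilinearkernelprop} that $E^{\hat\otimes_\pi k}=C^\infty(M^k)$, so that $L(E^{\hat\otimes_\pi k},\bbK)=\calE'(M^k)$ as a vector space. Under this identification the map $D^kF$ becomes precisely $F^{(k)}$.

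\textbf{Step 3.} The key technical check is that the identification is a topological isomorphism between $L(E^k,\bbK)$ with its Bastiani topology and $\calE'(M^k)$ with its usual strong topology. One direction is easy: the canonical multilinear map $E^k\to E^{\hat\otimes_\pi k}$ is continuous, so the image of a compact $C\subset E^k$ is compact in $C^\infty(M^k)$, giving $p_C(u)=p_{\widetilde C}(\bar u)$; hence the map $\bar u\mapsto u$ is continuous. For the converse direction I would use that $E=C^\infty(M)$ is nuclear Fr\'echet and Montel, so $E^{\hat\otimes_\pi k}=C^\infty(M^k)$ is also Fr\'echet--Montel; the strong topology on $\calE'(M^k)$ therefore coincides with uniform convergence on compact subsets of $C^\infty(M^k)$, and these compacts can be absorbed into finite products of compacts of $E$ using the density of $E^{\otimes k}$ in $E^{\hat\otimes_\pi k}$ combined with the nuclearity of $E$ (which ensures every compact in the projective tensor product is contained in the closed absolutely convex hull of a product of compacts in $E$).

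\textbf{Step 4 and main obstacle.} Once Step~3 is in place, Bastiani smoothness transfers through the topological isomorphism and the conclusion follows: $F^{(k)}:U\to\calE'(M^k)$ is Bastiani smooth. The only nontrivial point — and the main obstacle — is the comparison of topologies in Step~3; all other parts are bookkeeping. If a cleaner alternative is desired, one could instead verify smoothness of $F^{(k)}$ hands-on by computing its G\^ateaux differentials as $DF^{(k)}_\varphi(v)=\overline{D^{k+1}F_\varphi(v,\cdot)}$, noting that this agrees with the linearization of the smooth map $D^{k+1}F$, and then invoking Theorem~\ref{multilinearkernelbundles} together with the Montel property to upgrade joint continuity on $U\times E^{\hat\otimes_\pi k}$ to continuity of $DF^{(k)}:U\times E\to \calE'(M^k)$ — iterating this argument produces all higher Bastiani derivatives of $F^{(k)}$.
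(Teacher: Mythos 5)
Your proof is correct and follows essentially the same route as the paper: item~4 of Proposition~\ref{Cku-prop} gives Bastiani smoothness of $D^kF$ into $L(E^k,\bbK)$ with the topology of uniform convergence on products of compacts, and the Montel property of $C^\infty(M^k)$ identifies this space topologically with $\calE'(M^k)$, your Step~3 merely filling in the detail the paper compresses into the word ``Similarly''. One minor remark: Grothendieck's description of compact subsets of $E^{\hat\otimes_\pi k}$ as contained in closed absolutely convex hulls of tensor products of compacts of $E$ requires only metrizability of $E$, not nuclearity, so your appeal to nuclearity there is stronger than needed.
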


\section{Topologies on spaces of functionals}
We need to define a topology on the various
spaces of functionals used in quantum field theory.
The generally idea is to define seminorms on
$F$ and its derivatives $F^{(k)}$.
The topology proposed by Brunetti, D{\"u}tsch
and Fredenhagen~\cite{Brunetti-09}
is the initial topology of all the
maps $F\to F^{(k)}_\varphi$, where 
each $F^{(k)}_\varphi$ belongs to a nuclear
space determined by a wavefront set condition.
This topology is nuclear, but the absence of
a control of the dependence on $\varphi$ 
makes it generally not complete.
We then describe Bastiani's topology, which
is complete but has two drawbacks: it does not
take wavefront set conditions into account and
it is generally not nuclear.
Finally we shall describe the family of topologies
proposed by Dabrowski~\cite{Dabrowski-14} which are both
nuclear and complete.

\subsection{Bastiani's topology}

Bastiani defines several topologies on the
space of Bastiani smooth maps between two locally
convex spaces~[\onlinecite[p.~65]{BastianiPhD}].
For the case of functionals, we
consider the topology defined by
the following seminorms:
\begin{eqnarray*}
p_{C_0}(F)&=& \sup_{\varphi\in C_0} |F(\varphi)|,\\
p_{C_0,C}(F) &=&
\sup_{(\varphi,h_1,\dots,h_k)\in C_0\times C} |D^kF_\varphi(h_1,\dots,h_k)|,
\end{eqnarray*}
where $C=C_1\times\dots \times C_k$ and
$C_i$ runs over the compact sets of $\Gamma(M,B)$.
By using Bastiani's results~[\onlinecite[pp.~66]{BastianiPhD}]
we obtain
\begin{prop}
Let $B\overset{\pi}{\rightarrow} M$ be a finite
rank vector bundle over the manifold $M$
and $\Gamma(M,B)$ be the space of smooth
sections of $B$.
Then, with the seminorms defined above,
the space of smooth functionals on $\Gamma(M,B)$
is a complete locally convex space.
\end{prop}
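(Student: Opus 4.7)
The plan is to verify the two assertions separately: that the seminorm family defines a Hausdorff locally convex topology, and that Cauchy nets converge. The locally convex and Hausdorff properties are essentially automatic from the construction of the topology: any family of seminorms generates a locally convex topology, and Hausdorffness follows because $p_{C_0}(F)=0$ for every one-point compact $C_0=\{\varphi\}$ already forces $F\equiv 0$. So the substance of the proposition is completeness, and that will be the main obstacle.

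For completeness, I would take a Cauchy net $(F_\alpha)$ in this topology and produce a candidate limit $F$ together with candidate limits $G^{(k)}$ for each derivative. By the defining seminorms, for every compact $C_0\subset \Gamma(M,B)$ and every product of compacts $C=C_1\times\cdots\times C_k\subset \Gamma(M,B)^k$, the net $(D^kF_\alpha)$ is uniformly Cauchy on $C_0\times C$ in the supremum norm; since $\bbK$ is complete, $(D^kF_\alpha)$ converges uniformly on every such product of compacts to a function $G^{(k)}: \Gamma(M,B)\times \Gamma(M,B)^k\to\bbK$. Uniform limits of continuous functions on compact sets are continuous, and since $\Gamma(M,B)$ is a $k_\omega$-type space with plenty of compacts (and in particular its topology is determined by convergent sequences since it is Fréchet), one concludes that each $G^{(k)}$ is continuous on $\Gamma(M,B)\times \Gamma(M,B)^k$. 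Multilinearity and symmetry of $G^{(k)}$ in the last $k$ arguments pass to the limit pointwise.

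It remains to check that $F:=G^{(0)}$ is Bastiani smooth and that $D^kF=G^{(k)}$. For $k=1$, I would apply the fundamental theorem of calculus to $F_\alpha$, writing
\begin{eqnarray*}
F_\alpha(\varphi+tv)-F_\alpha(\varphi) &=& \int_0^t DF_{\alpha,\varphi+sv}(v)\,ds,
\end{eqnarray*}
and pass to the limit. Uniform convergence of $DF_\alpha$ on the compact set $\{(\varphi+sv,v)\telque s\in[0,t]\}$ justifies the interchange of limit and integral, yielding $F(\varphi+tv)-F(\varphi)=\int_0^t G^{(1)}(\varphi+sv,v)\,ds$. Dividing by $t$ and letting $t\to0$ (using continuity of $G^{(1)}$) produces the Gâteaux differential and identifies it with $G^{(1)}(\varphi,v)$; joint continuity of $G^{(1)}$ is already established, so $F\in C^1$ with $DF=G^{(1)}$. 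Iterating this argument on $G^{(k-1)}$, with the fundamental theorem of calculus applied in the first variable only, shows inductively that $D^kF$ exists, equals $G^{(k)}$, and is jointly continuous on $\Gamma(M,B)\times \Gamma(M,B)^k$, so $F$ is smooth in the sense of Bastiani.

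The delicate step, and the main obstacle, is this interchange of the limit in $\alpha$ with both the integral (in the fundamental theorem of calculus) and the limit $t\to0$ defining the Gâteaux differential. The key input is that the Cauchy condition on $(F_\alpha)$ is uniform on arbitrary compacts of $\Gamma(M,B)^{k+1}$, not merely pointwise, so that line segments $\{\varphi+sv\telque s\in[0,1]\}$ and their iterates (which are compact) carry uniform control of all derivatives. Once that is secured, the argument is essentially the classical one for completeness of $C^\infty(U,F)$ with the compact-open $C^\infty$ topology, extended from finite-dimensional $U$ to open subsets of a locally convex space by replacing partial derivatives with Bastiani differentials.
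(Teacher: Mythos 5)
Your argument is correct, but it is worth noting that the paper does not actually prove this proposition: it simply invokes Bastiani's results ([BastianiPhD, p.~66]) after listing the seminorms, so your write-up supplies the classical completeness argument that the citation stands for, rather than mirroring a proof in the text. The structure you use — pointwise limits $G^{(k)}$ from Cauchyness on singletons, uniform convergence on products of compacts, continuity of the limits, then the fundamental theorem of calculus to identify $G^{(k)}$ as the Bastiani differentials of $F=G^{(0)}$ — is exactly the standard route, and each interchange of limits you flag is indeed justified by the uniform control on compact sets (note that an arbitrary compact of $\Gamma(M,B)^{k+1}$ is contained in the product of its projections, so the seminorm family $p_{C_0,C}$ does control it).

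Two minor points, neither of which is a gap. First, the remark that $\Gamma(M,B)$ is "of $k_\omega$ type" is inaccurate (a first-countable hemicompact space would be locally compact, which $\Gamma(M,B)$ is not); what you actually need, and also state, is that the space is metrizable, hence sequential, so continuity of $G^{(k)}$ follows from uniform convergence on the compact set consisting of a convergent sequence together with its limit. Second, in the inductive step you should make explicit that the fundamental theorem of calculus is applied to $\varphi\mapsto D^{k-1}F_{\alpha,\varphi}(v_1,\dots,v_{k-1})$, which is legitimate because for a Bastiani $C^k$ map this function is $C^1$ with directional derivative $D^kF_{\alpha,\varphi}(v_1,\dots,v_k)$ (this is contained in the properties of $D^mF$ recalled in Proposition 2.15); passing to the limit then gives the integral identity for $G^{(k-1)}$ and $G^{(k)}$, and differentiating at $t=0$, together with the symmetry of $G^{(k)}$ inherited pointwise, yields the mixed partial derivatives required by the definition of $C^k$. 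With these clarifications your proof is complete and, unlike the paper, self-contained.
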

A similar topology was used
by Gl\"ockner~[\onlinecite[p.~367]{Glockner-02}]
and Wockel~[\onlinecite[p.~29]{Wockel-03}] and [\onlinecite[p.~12]{Wockel-06}].

\subsection{Nuclear and complete topologies}
\label{Dabtoposect}
Quantum field theory uses different spaces of
functionals defined by conditions on the
wave front set of $F^{(k)}_\varphi$.
Recall that the wave front set describes 
the points and the directions of singularity
of a distribution~\cite{BDH-13}.
Yoann Dabrowski~\cite{Dabrowski-14} recently described
nuclear and complete topologies for 
spaces of functionals with wave front set conditions.
We present some of his topologies for several common spaces
of functionals.

Dabrowski's definition differs from Bastiani in two respects.
To describe the first difference, recall that,
according to Proposition~\ref{Dksmooth}, if 
$F:U\to\bbR$ is a smooth functional, then the derivatives
$F^{(k)}:U \to \calE'(M^k)$ are smooth functionals.
To add the wave front set conditions, Dabrowski requires
$F^{(k)}$ to be smooth from $U$ to 
$\calE'_{\Gamma_k}(M^k)$, which is the space of 
compactly supported distributions whose wave front sets
are included in $\Gamma_k$, a cone in $T^*M^k$.
In fact, Dabrowski supplements this definition with a more
refined wave front set (the dual wave front set) which
enables him to equip $\calE'_{\Gamma_k}(M^k)$
with a Montel, complete, ultrabornological and nuclear
topology. He also considers support conditions
which are different from compact.

To describe the second difference, recall that 
Bastiani's topology gives a locally
convex space which is complete. However it is generally not
nuclear. This is due to a theorem by Colombeau and 
Meise~\cite{Colombeau-81} which says, 
broadly speaking, that a function space over a
Fr\'echet space cannot be nuclear for the topology
of convergence over some
balanced, convex, compact sets
of infinite dimension.
To avoid that problem, the variable
$\varphi$ is made to run over \emph{finite dimensional}
compact sets. More precisely, Dabrowski considers
compact sets in $\bbR^m$ for any finite value of $m$
and smooth maps $f$ from $\bbR^m$ to an open
subset of $C^\infty(M)$. He defines two
families of seminorms:
\begin{eqnarray}
p_{f,K}(F) &=& \sup_{\varphi\in f(K)} |F(\varphi)|,
  \label{normfK}\\
p_{n,f,K,C}(F) &=& \sup_{\varphi\in f(K)} \sup_{v\in C}
    |\langle F^{(n)}_\varphi,v\rangle|,
  \label{normnfKC}
\end{eqnarray}
where $K$ is a compact subset of $\bbR^m$ for some $m$
and $C$ is an equicontinuous subset of the dual
of the space of distributions to which $F^{(n)}_\varphi$
belong.
Dabrowski proved that, with this family of seminorms,
the space of functionals $\frakF$ is a complete locally convex
nuclear space~\cite{Dabrowski-14-2}.

We describe now several types of functionals
that have been used in the literature and we
specify more precisely their topologies.

\subsection{The regular functionals}
A polynomial functional of the form
\begin{eqnarray*}
F(\varphi) &=& \sum_n \int_{M^n} dx_1\dots dx_n 
  f_n(x_1,\dots,x_n) \varphi(x_1)\dots\varphi(x_n),
\end{eqnarray*}
where the sum over $n$ is finite and
$f_n\in \calD(M^n)$, is called a
\emph{regular functional}~\cite{Fredenhagen-13},
because
all its derivatives are smooth functions~\cite{Fredenhagen-Houches},
i. e. the wave front set of $F^{(k)}_\varphi$ is empty.
More generally,
we define the space $\frakF_{\mathrm{reg}}(M)$ of
regular functionals to be the set of Bastiani smooth functionals $F$
such that
$\WF(F^{(n)}) =\emptyset$ for every $n>0$.
Thus, $F^{(n)}\in \calE'_\emptyset(M^n)=\calD(M^n)$
and the sets $C$ in 
Equation~(\ref{normnfKC}) are the equicontinuous
sets of $\calD'(M^n)$. By a general theorem~[\onlinecite[p.~200]{Horvath}], 
the topology of uniform
convergence on the equicontinuous sets of
$\calD'(M^n)$ is equivalent to the
topology given by the seminorms
of its dual $\calD(M^n)$. In other words, the 
topology of $\frakF_{\mathrm{reg}}(M)$ is defined by
the seminorms~\cite{Dabrowski-14-2}:
\begin{eqnarray}
p_{f,K}(F) &=& \sup_{\varphi\in f(K)} |F(\varphi)|,
  \\
p_{n,f,K,\alpha}(F) &=& \sup_{\varphi\in f(K)}
  p_{\alpha,n} \big( F^{(n)}_\varphi\big),\label{pnfKalpha}
\end{eqnarray}
where $p_{\alpha,n}$ runs over a defining
family of seminorms of 
$\calD(M^n)$~\cite{Baer-14}.
With this topology, $\frakF_{\mathrm{reg}}(M)$ is
nuclear and complete.

Note that the tensor
product of elements of 
$\calD(M^m)$ with elements
of $\calD(M^n)$ is not continuous
in $\calD(M^{m+n})$~\cite{Hirai-01}.
Thus, the product in 
$\frakF_{\mathrm{reg}}(M)$ is hypocontinuous
but not continuous.

\subsection{The microcausal functionals}
It is possible to describe quantum field theory 
(up to renormalization) as the
deformation quantization of classical field 
theory~\cite{Dutsch}.
For the deformation quantization of the product
$F G$ of two functionals to first order in $\hbar$, 
we need to evaluate
$\langle DF_\varphi \otimes DG_\varphi,\Delta_+\rangle$,
where $\Delta_+$ is a singular distribution
(the Wightman propagator) and $\langle \cdot,\cdot\rangle$
is an extension of the duality pairing
between distributions and test functions~\cite{Dabrouder-13}. 
For this pairing to be meaningful to all orders
in $\hbar$, the wave front set of $\Delta_+$
imposes that the wave front set of $D^{(k)}_\varphi$
must not meet the cone $\Gamma_k$ defined as follows~\cite{Dutsch}.

Let $M$ be a Lorentzian manifold with pseudo-metric
$g$. Let $V_{x}^+$ (resp. $V_{x}^-$) be the set of 
$(x;\xi)\in T^*_xM$ such that
$g^{\mu\nu}(x) \xi_\mu \xi_\nu \ge 0$
and $\xi_0\ge0$ (resp. $\xi_0\le0$), where
we assume that $g^{00}>0$.
We define the closed cone
\begin{eqnarray*}
\Gamma_k &=& \{(x_1,\dots,x_k;\xi_1,\dots,\xi_k)
\in \dotT^*M^n\telque
(\xi_1,\dots,\xi_n)\in \\&&
(V_{x_1}^+\times\dots\times  V_{x_n}^+)
\cup 
(V_{x_1}^-\times\dots\times  V_{x_n}^-)
\},
\end{eqnarray*}
where $\dotT^*M^n$ is the cotangent bundle
$T^*M^n$ without its zero section.
The space  $\frakF_\mathrm{mc}$ of microcausal functionals
was originally defined as the
set of Bastiani smooth functionals such that
$F^{(n)}_\varphi\in \calE'_{\Xi_n}(M^n)$ for
every $\varphi$, where
$\Xi_n=\dotT^*M^n\backslash \Gamma_n$ is an open 
cone~\cite{Dutsch,Hollands,Brunetti-09,%
Rejzner-PhD,Fredenhagen-11,Brunetti-12,Fredenhagen-13}.

However, the space $\calE'_{\Xi_n}(M^n)$ being not even
sequentially complete~\cite{Dabrouder-13}, it
is not
suitable to define a complete space of functionals.
Therefore, Dabrowski defines the space $\frakF_\mathrm{mc}$
of microcausal functionals to
be the set of Bastiani smooth functionals such that
$F^{(n)}_\varphi\in \calE'_{\Xi_n,\overline{\Xi}_n}(M^n)$,
which is the completion of $\calE'_{\Xi_n}(M^n)$.
Dabrowski proved that $\calE'_{\Xi_n,\overline{\Xi}_n}(M^n)$
is the set of compactly supported distributions
$u\in \calE'(M^n)$ such that 
the \emph{dual wavefront set} of $u$ is in $\Xi_n$
and the wavefront set of $u$ is in its closure
$\overline{\Xi}_n$ (see~\cite{Dabrowski-14} for a precise
definition of these concepts and of the topology).
This completion is not only complete, but even
Montel and nuclear~\cite{Dabrowski-14}.
According to the general results of Ref.~\onlinecite{Dabrowski-14},
the sets $C$ are now equicontinuous sets of the
bornologification of the normal topology
of $\calD'_{\Gamma_n}$.  However, it was shown~\cite{Dabrowski-14}
that these equicontinuous sets are the same as
the bounded sets of $\calD'_{\Gamma_n}$ with its
normal topology. Therefore, the sets $C$ are the 
well-known bounded sets of 
$\calD'_{\Gamma_n}$\cite{Dabrouder-13}.

With this topology, the space $\frakF_\mathrm{mc}$
is a complete nuclear algebra with
hypocontinuous product.

\subsection{Local functionals}
\label{Loctopsect}
As discussed in the introduction, local
functionals are the basic building block
(Lagrangian) of quantum field theory.  
We shall see that local functionals
are a closed subset of the set of
smooth functionals such that 
$F^{(1)}_\varphi$ can be identified with
an element of $\calD(M)$ that we denote by $\nabla F_\varphi$ 
 and the wave front set of
$F^{(k)}_\varphi$ is included in the conormal $C_k$
of $D_k=\{(x_1,\dots,x_k)\in M^k \telque
 x_1=\dots=x_k\}$. Recall that the conormal
of $D_k$ is the set
of $(x_1,\dots,x_k;\xi_1,\dots,\xi_k)\in T^*M^k$ such that
$x_1=\dots=x_k$ and $\xi_1+\dots+\xi_k=0$.

Since the additivity property (defined in
Section~\ref{additivity-sect}) of local functionals
complicates the matter, we follow Dabrowski~\cite{Dabrowski-14} and, 
for any open set $\Omega\subset M$, we first define
$\frakF_C(\Omega)$ to be the set of smooth maps such that 
$\varphi  \mapsto  \nabla F_\varphi$ is Bastiani smooth
from $C^\infty(\Omega)$ to $\calD(M)$ and, for every
integer $k$,  $\varphi \mapsto F^{(k)}_\varphi$ is Bastiani 
smooth from $C^\infty(\Omega)$ to $\calE'_{C_k}(M^k)$
(we do not need to index $\calE'(M^k)$ with two cones because
$C_k$ is closed~\cite{Dabrowski-14}).
The set $\frakF_\mathrm{loc}(\Omega)$ of local functionals is then
the subset of $\frakF_C(\Omega)$ satisfying the addivity condition.

The topology of $\frakF_C$ is induced by the family of
seminorms given by Eq.~~(\ref{pnfKalpha}) that depend on the 
equicontinuous sets of
the dual 
$\calD'_{\Lambda_k,\overline{\Lambda}_k}(M^k)$ of $\calD'_{C_k}(M^k)$, where 
$\Lambda_k=\dotT^*M^k\backslash C_k$.
They were determined by Dabrowski~[\onlinecite[Lemma~28]{Dabrowski-14}]:
\begin{prop}
A subset $B$ of $\calD'_{\Lambda_k}(M^k)$ is equicontinuous
if and only if there is a closed cone $\Gamma\subset \Lambda_k$
such that $\WF(u)\subset \Gamma$ for every $u\in B$ and
$B$ is bounded in $\calD'_\Gamma(M^k)$.
\end{prop}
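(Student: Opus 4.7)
The space $\calD'_{\Lambda_k}(M^k)$ should be read, following the H\"ormander--Dabrowski construction, as the regular inductive limit $\varinjlim_{\Gamma\subset\Lambda_k} \calD'_\Gamma(M^k)$ over closed subcones $\Gamma$ of the open cone $\Lambda_k$, each step carrying its normal topology. By the structural results already cited from~\cite{Dabrowski-14}, every $\calD'_\Gamma(M^k)$ is nuclear, Montel and complete, and $\calD'_{\Lambda_k}(M^k)$ is in natural duality with $\calE'_{C_k}(M^k)$ through the H\"ormander pairing, so equicontinuity here means: there is a $0$-neighborhood $U$ in $\calE'_{C_k}(M^k)$ with $|\langle u,v\rangle|\le 1$ for every $u\in B$ and every $v\in U$. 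My plan is to run a direct two-way argument exploiting this inductive-limit presentation together with the Montel property of each step.

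For the easy direction, suppose $B$ is bounded in some $\calD'_\Gamma(M^k)$ with $\Gamma\subset\Lambda_k$ closed. The Montel property makes $B$ relatively compact, hence equicontinuous when viewed as a family of functionals on the strong predual of $\calD'_\Gamma$; pulling back by the continuous inclusion of $\calE'_{C_k}(M^k)$ into that predual, which is legitimate because $\Gamma\subset\Lambda_k=\dotT^*M^k\setminus C_k$, transports the equicontinuity into $\calD'_{\Lambda_k}(M^k)$. The converse requires unpacking the $0$-neighborhood $U$: it is described by a compact $K\subset M^k$, a finite open-cone cover of $C_k$, a choice of associated seminorms of the normal topology, and constants. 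Feeding the uniform bound $|\langle u,v\rangle|\le 1$ on $B\times U$ into the characterization of wave front sets through the decay of localized Fourier transforms yields uniform decay of the Fourier transforms of $\chi u$ (for a suitable cutoff $\chi$) outside a closed cone $\Gamma$ depending only on $U$; one then reads off simultaneously the uniform inclusion $\WF(u)\subset\Gamma$ and a uniform bound on $u$ in the seminorms of $\calD'_\Gamma(M^k)$, establishing boundedness in that step.

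The main obstacle is precisely this extraction of a single closed cone $\Gamma\subset\Lambda_k$ uniform in $u\in B$. The difficulty is that $\calD'_{\Lambda_k}$ is not a strict inductive limit of Fr\'echet spaces, so the classical Dieudonn\'e--Schwartz lemma is not directly available. One must therefore argue either by a compactness-on-the-cosphere-bundle step (using that a $0$-neighborhood in $\calE'_{C_k}$ can only probe finitely many cotangent directions at once), or, in the spirit of~\cite{Dabrowski-14}, by invoking the nuclearity of each $\calD'_\Gamma$ to reduce equicontinuity to weak boundedness and then bootstrap using the explicit form of the seminorms of the normal topology. This is the only non-formal step in the whole proof.
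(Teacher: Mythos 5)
There is a genuine gap, and it sits exactly where the mathematical content of the proposition lies. Note first that the paper does not prove this statement at all: it is quoted verbatim from Dabrowski [Lemma~28 of \onlinecite{Dabrowski-14}], where the hard work is done through a detailed analysis of the normal topology on $\calE'_{C_k}(M^k)$, its bornologification, and the resulting duality with $\calD'_{\Lambda_k}(M^k)$. Your proposal, by contrast, reduces the converse direction to ``extraction of a single closed cone $\Gamma\subset\Lambda_k$ uniform in $u\in B$, together with uniform bounds in the seminorms of $\calD'_\Gamma(M^k)$'', and then stops: you name two candidate strategies (a compactness argument on the cosphere bundle, or a nuclearity-plus-bootstrap argument) without carrying either one out, and you yourself flag this as ``the only non-formal step''. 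But that step \emph{is} the proposition. Moreover, your opening move already begs the question: reading $\calD'_{\Lambda_k}(M^k)$ as a \emph{regular} inductive limit of the spaces $\calD'_\Gamma(M^k)$ (so that bounded or equicontinuous families live in a single step) is essentially equivalent to the uniform-cone statement you are trying to prove, and, as you correctly observe, the Dieudonn\'e--Schwartz lemma does not apply to this non-countable, non-strict limit, so regularity cannot be assumed but must be established.

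The ``easy'' direction as written is also not quite right. From $B$ bounded in $\calD'_\Gamma(M^k)$ you invoke the Montel property to get relative compactness and then claim equicontinuity on the predual; but relative compactness of $B$ in $\calD'_\Gamma$ does not by itself yield equicontinuity of $B$ as a family of functionals on $\calE'_{C_k}(M^k)$. What is actually needed is a barrelledness (or ultrabornologicity) property of the predual carrying the pairing, so that weakly bounded families of functionals are equicontinuous — and this is again one of Dabrowski's nontrivial structural results, not a formal consequence of the Montel property of the individual steps. So both implications, as proposed, lean on precisely the functional-analytic input that the cited Lemma~28 supplies and that your sketch does not reconstruct.
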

The bounded sets of $\calD'_\Gamma(M^k)$ are characterized
in detail in Ref.~\onlinecite{Dabrouder-13}.
The topology of $\calD'_{\Lambda_k}(M^k)$, where $\Lambda_k$ is
open,  can be described as
a non-countable inductive limit as follows.
Write the complement $\Lambda_k^c=\cup \Gamma_n$, where
each $\Gamma_n$ is a compactly supported closed set.
We write the open set $\Gamma_n^c$ as a countable union
of closed sets $\Gamma_n^c=\cup_m \Lambda_{n,m}$, so that
$\Gamma_n = \cap_m \Lambda_{n,m}^c$
and $\Lambda_k^c  = \cup_n \cap_{m} \Lambda_{n,m}^c$ is
a countable union of countable intersections of open sets.
We obtain
$\Lambda_k=\cap_n \cup_m \Lambda_{n,m}$.
We define for a sequence $\alpha$ the closed set
$\Pi_\alpha=\cap_n \Lambda_{n,\alpha(n)}$, such that
$\alpha \le \beta$ implies $\Pi_\alpha\subset \Pi_\beta$.
Then $\Lambda=\cup_\alpha \Pi_\alpha$ is a non-countable
inductive limit of closed cones from which we can
define the topology of $\calD'_{\Lambda_k}(M^k)$
as a non-countable inductive limit of
$\calD'_{\Pi_\alpha}(M^k)$.

The space $\frakF_C$ furnished with the topology induced
by the seminorms defined by Eqs.~(\ref{normfK}) and (\ref{normnfKC})
is complete and nuclear.
The space $\frakF_\mathrm{loc}$ of local functionals
is the closed subset of $\frakF_C$ defined by the additivity
condition defined in the next section.
As a closed subspace of a nuclear complete space,
the space of local functionals is nuclear and complete.

Further examples of spaces of functionals are given
by Dabrowski~\cite{Dabrowski-14-2}.

\section{Additivity}
The characterization of local functionals is a long-standing
mathematical problem. 
According to Rao~\cite{Rao-80}, the first criterium was
proposed by Pinsker in 1938 and called
\emph{partial additivity}~\cite{Pinsker-38}. 
This criterium is also used in physics, but we shall
see that it is not what we need by exhibiting a
partially additive functional which is not 
local.  Then, we shall discuss a more stringent
criterium which is exactly what we need.

\subsection{Partial additivity}

When looking for an equation to characterize functionals
having the form of Eq.~(\ref{deflocal}), one can make the 
following observation. 
Let $\varphi_1$ and $\varphi_2$ be two smooth functions with
disjoint support $K_1$ and $K_2$ and assume that
$f(x,\varphi(x),\dots)=0$ if $\varphi=0$ on a neighborhood of
$x$~\cite{Pinsker-38}, so that $F(0)=0$. Then, 
since the support of $\varphi_1+\varphi_2$ is included in
$K_1\cup K_2$,
\begin{eqnarray*}
F(\varphi_1+\varphi_2) &=& 
 \int_{K_1}  dx f(x,\varphi_1(x)+\varphi_2(x),\dots)
\\&&
+ \int_{K_2}  dx f(x,\varphi_1(x)+\varphi_2(x),\dots)
\\&=&
 \int_{K_1}  dx f(x,\varphi_1(x),\dots)
\\&&
+ \int_{K_2}  dx f(x,\varphi_2(x),\dots) =F(\varphi_1)+F(\varphi_2).
\end{eqnarray*}
Therefore, it is tempting to use the condition of locality:
\begin{equation}
F(\varphi_1+\varphi_2)=F(\varphi_1)+F(\varphi_2),
\label{perturbfunctionaleq}
\end{equation}
for $\varphi_1$ and $\varphi_2$ with disjoint support and functionals
$F$ such that $F(0)=0$.
And indeed, many authors since 1938,
including Gelfand and Vilenkin~[\onlinecite[p.~275]{Gelfand-IV}],
used condition~(\ref{perturbfunctionaleq}), but with
disjoint support 
replaced by $\varphi_1\varphi_2=0$ and smooth functions
by measurable functions
(see \cite{Rao-80} for a review).
In perturbative quantum field theory, partial additivity 
in our sense is also 
used when the function $f$ in Eq.~(\ref{deflocal}) is
polynomial~\cite{Dutsch04,Brunetti-09,Keller-09}
because, in that case, partial additivity is equivalent
to locality in the sense of Eq.~(\ref{deflocal})~\cite{Brunetti-09}.

However, this definition of locality does not suit our
purpose, essentially because the set of functions $\varphi$ that
can be written as $\varphi=\varphi_1+\varphi_2$ (with
$\supp\varphi\cap\supp\varphi_2=\emptyset$) is not dense 
in the space of smooth functions.
We show this now and we construct a partially additive functional
which is not local.

\subsection{A non-local partially additive functional}
\label{nonlocalpartiallyadditive}
We work in the space $C^\infty(\bbS^1)$ of
smooth functions on the unit circle.
We denote by $\mathcal{I}$ the subset of functions 
$f=\varphi_1+\varphi_2$ which are sums
of two elements of $C^\infty(\bbS^1)$
whose supports are disjoint. 
It is not a vector subspace of $C^\infty(\bbS^1)$.

The idea of the construction is the following. 
In the metric space $C^\infty(\mathbb{S}^1)$, we will show that the subset
$\mathcal{I}$ is bounded away from the constant function $f=1$. This means 
that the functional equation~(\ref{perturbfunctionaleq}) only concerns 
the restriction $F|_{\mathcal{I}}$ to a subset which is
bounded away from $1$. Therefore there is some open neighborhood of 
$f=1$ which does not meet $\mathcal{I}$. 
Then we use Sobolev norms to build some cut--off function $\chi$ 
to glue a local functional
near $\mathcal{I}$ with a nonlocal functional near $f=1$.

\begin{lem}
\label{triviallemma1}
The constant function $f=1$ is bounded away from
$\mathcal{I}$ in $C^\infty(\bbS^1)$:
if $f\in \calI$, then
$||f-1||_{C^0}=\sup_{x\in\mathbb{S}^1} \vert f(x)-1\vert\ge 1$.
\end{lem}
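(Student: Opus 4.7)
The plan is to produce, for every $f \in \mathcal{I}$, an explicit point $x_0 \in \bbS^1$ at which $f(x_0) = 0$; the bound is then trivial since $\|f - 1\|_{C^0} \ge |f(x_0) - 1| = 1$.

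Write $f = \varphi_1 + \varphi_2$ with $K_i := \supp \varphi_i$ disjoint (for $i = 1,2$). I take it as implicit in the definition of $\mathcal{I}$ that the decomposition is genuine, i.e.\ that neither $\varphi_i$ vanishes identically, since otherwise the partial-additivity equation carries no information and in particular $f = 1$ would belong to $\mathcal{I}$ via the trivial splitting $1 + 0$. Under that mild hypothesis the $K_i$ are two nonempty disjoint closed subsets of $\bbS^1$.

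The key topological step is then a one-liner: $\bbS^1$ is connected, so it cannot be written as a disjoint union of two nonempty closed sets. Consequently $K_1 \cup K_2$ is a proper closed subset of $\bbS^1$, and its complement $U := \bbS^1 \setminus (K_1 \cup K_2)$ is nonempty and open. Pick any $x_0 \in U$. Since $x_0 \notin \supp \varphi_i$, each $\varphi_i$ vanishes on a neighborhood of $x_0$; in particular $\varphi_1(x_0) = \varphi_2(x_0) = 0$, whence $f(x_0) = 0$ and
\begin{equation*}
\|f - 1\|_{C^0} \;=\; \sup_{x \in \bbS^1} |f(x) - 1| \;\ge\; |f(x_0) - 1| \;=\; 1.
\end{equation*}

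There is no substantive obstacle: the proof is essentially a single invocation of the connectedness of $\bbS^1$ combined with the definition of the support. The only thing to watch is the handling of the trivial decomposition issue mentioned above, which is a matter of convention about $\mathcal{I}$ rather than a genuine mathematical difficulty. Note that the same argument would fail on a disconnected one-dimensional manifold (say two disjoint copies of $\bbS^1$), which is consistent with the role this lemma plays later in constructing a cutoff that separates the local and non-local regimes in configuration space.
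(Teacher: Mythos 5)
Your proof is correct and is essentially the paper's: both arguments produce a point $x_0\in\bbS^1$ lying outside $\supp\,\varphi_1\cup\supp\,\varphi_2$, so that $f(x_0)=0$ and $\|f-1\|_{C^0}\ge |f(x_0)-1|=1$; the paper deduces the existence of such a point from the disjoint compact supports being at positive distance, while you invoke connectedness of $\bbS^1$ --- an interchangeable one-line topological fact. Your caveat about the trivial splitting $f=f+0$ is a fair observation (the paper's wording tacitly assumes both supports are nonempty, which is exactly what rules out $1=1+0$), but it concerns how $\calI$ is to be read rather than a genuine gap in either argument.
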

\begin{proof}
Let us denote by $\Vert .\Vert_{C^0}$ the norm
$\Vert f\Vert_{C^0}=\sup_{x\in\mathbb{S}^1} \vert f(x)\vert$.
It is a continuous norm for the Fr\'echet topology
of $C^\infty(\bbS^1)$ because
$||f||_{C^0}=\pi_{0,\bbS^1}(f)$. 
Then, if $\supp\varphi_1\cap\supp\varphi_2=\emptyset$ we have
$ \Vert \varphi_1+\varphi_2-1\Vert_{C^0} \geqslant 1.$
Indeed, the supports of $\varphi_1$ and $\varphi_2$ being
compact, the fact that they do not meet
implies that they are at a finite distance.
Thus, there is a point $x\in \bbS^1$ such that
$\varphi_1(x)=\varphi_2(x)=0$.
Hence, $\vert \varphi_1(x)+\varphi_2(x)-1\vert =1$ and 
$\sup_{x\in\mathbb{S}^1}
\vert\varphi_1(x)+\varphi_2(x)-1\vert\ge 1$.
\end{proof}

The second step is to build a smooth function $\chi$
such that $\chi(1)=1$ and $\chi|_\calI=0$.
\begin{lem}
\label{lesstriviallemma2}
There is a smooth function
$\chi:C^\infty(\bbS^1)\to \bbR$ such that
$\chi=1$ on a neighborhood of $f=1$ and 
$\chi(f)=0$ if $||f-1||_{C^0}\ge 1$.
In particular, $\chi|_\calI=0$.
\end{lem}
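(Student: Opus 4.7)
The obstacle is that the functional $f \mapsto \|f\|_{C^0}$ is not Bastiani smooth --- the supremum over $x$ is not even differentiable in any reasonable sense --- so we cannot just feed the $C^0$ norm into a scalar bump function. The plan is therefore to replace the $C^0$ norm by a \emph{smooth} functional that nevertheless dominates it, using a Sobolev embedding on $\bbS^1$, and then compose with a smooth cutoff on $\bbR$.

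\medskip

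First, introduce the $H^1$-type functional
\[
N(f) \;:=\; \int_{\bbS^1}\bigl(f(x)^2 + f'(x)^2\bigr)\,dx,\qquad f\in C^\infty(\bbS^1).
\]
This is a polynomial functional of degree $2$ in the sense of Bastiani (it is built from the continuous bilinear pairing $(f,g)\mapsto \int fg + \int f'g'$, and $f\mapsto f'$ is continuous on $C^\infty(\bbS^1)$), and in particular $N$ is Bastiani smooth on $C^\infty(\bbS^1)$.

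\medskip

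Second, invoke the Sobolev embedding $H^1(\bbS^1)\hookrightarrow C^0(\bbS^1)$: there is a constant $C>0$ such that for every $f\in C^\infty(\bbS^1)$,
\[
\|f\|_{C^0}^2 \;\leqslant\; C^2\, N(f).
\]
This is the step that converts the non-smooth $C^0$ control into a smooth $H^1$ control. Pick a smooth function $g:\bbR\to[0,1]$ with $g(t)=1$ for $t\leqslant 1/(4C^2)$ and $g(t)=0$ for $t\geqslant 1/C^2$.

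\medskip

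Third, define
\[
\chi(f) \;:=\; g\bigl(N(f-1)\bigr).
\]
Smoothness of $\chi$ follows from the chain rule for Bastiani-smooth maps (Prop.~2.4.5 in Section~\ref{lindifsect}) applied to the composition of the smooth scalar function $g$, the smooth quadratic functional $N$, and the smooth affine map $f\mapsto f-1$. The open set $V:=\{f\in C^\infty(\bbS^1)\telque N(f-1)<1/(4C^2)\}$ is a neighborhood of the constant function $1$ (as $N$ is continuous and $N(0)=0$), and on $V$ we have $\chi=1$. Conversely, if $\|f-1\|_{C^0}\geqslant 1$ then the Sobolev inequality gives $N(f-1)\geqslant 1/C^2$, hence $\chi(f)=g(N(f-1))=0$. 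In particular, since $\|f-1\|_{C^0}\geqslant 1$ on all of $\calI$ by Lemma~\ref{triviallemma1}, we have $\chi|_{\calI}=0$.

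\medskip

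The only non-routine ingredient is the Sobolev bound, which is classical on the compact manifold $\bbS^1$; everything else is the smoothness of a quadratic Bastiani polynomial and the chain rule, both of which are already established in the paper.
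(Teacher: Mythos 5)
Your construction is correct and follows essentially the same route as the paper: replace the non-smooth $C^0$ norm by a smooth squared Sobolev norm that dominates it via the Sobolev embedding on $\bbS^1$, and compose with a scalar bump function, invoking the chain rule for Bastiani smoothness. The only (immaterial) difference is that you use the $H^1$ quadratic form where the paper uses $\Vert\cdot\Vert_{H^2}^2$; both embed continuously into $C^0(\bbS^1)$, so the argument goes through unchanged.
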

\begin{proof}
First recall that the Sobolev norm $H^{2k}$ on 
$\mathbb{S}^1$ is defined as
\begin{eqnarray}
\Vert f\Vert_{H^{2k}} &=& \sqrt{\int_{\mathbb{S}^1}\left(
(1-\Delta)^kf(x)\right)^2dx}
\nonumber\\&=&
2\pi \left(\sum_{n\in\mathbb{Z}}
(1+n^2)^{2k}\vert\widehat{f}(n)\vert^2\right)^{\frac{1}{2}},
\end{eqnarray}
where the last representation uses the Fourier series
$f(x)=\sum_n \hat{f}(n) e^{in x}$.
By the Sobolev injections, $H^2(\mathbb{S}^1) $ injects continuously in 
$C^0(\mathbb{S}^1)$.
In other words, there is a constant $C>0$ such that
$\Vert f\Vert_{C^0} \leqslant C \Vert f\Vert_{H^2}$
for every $f\in C^\infty(\bbS^1)$.

Now we take a function $g\in C^\infty(\bbR)$ such
that $g(t)=1$ when $t\le 1/3C^2$ and 
$g(t)=0$ when $t\ge 1/2 C^2$ and
we define $\chi:\C^\infty(\bbS^1)\to \bbR$ by composing $g$ with 
the square of the Sobolev norm.
\begin{eqnarray*}
\chi(f) &=& g\big(||1-f||^2_{H^2}\big).
\end{eqnarray*}
If $\Vert 1-f\Vert_{C^0}\geqslant 1$ 
(in particular, if $f\in \calI$ by 
Lemma~\ref{triviallemma1})
the Sobolev injection leads to:
$$1\leqslant   \Vert 1-f\Vert_{C^0}\leqslant C \Vert 1-f\Vert_{H^2}\implies
\Vert 1-f\Vert_{H^2}^2\geqslant \frac{1}{C^2} $$
hence $g\left(\Vert 1-f\Vert_{H^2}^2\right)=0$ 
by definition of $g$.

On the other hand $\Vert 1-f\Vert_{H^2)}\leqslant \frac{1}{\sqrt{3}C} $ 
means that $f$ belongs to the
 neighborhood of the constant function $f=1$
defined by $V=\{ f \telque \Vert 1-f\Vert_{H^2}\leqslant 1/\sqrt{3}C\}
$.
On this neighborhood,
$g\left(\Vert 1-f\Vert_{H^2}^2\right)=1$.
The smoothness of $\chi$ is an immediate consequence
of the chain rule, the smoothness of $g$ and of the squared Sobolev norm
$\Vert .\Vert_{H^2(\mathbb{S}^1)}^2$.
\end{proof}

We are now ready to define our counterexample:
\begin{thm}
The functional $F_\mathrm{nl}$ on $C^\infty(\mathbb{S}^1)$ defined
for any integer $N>1$ by 
\begin{equation}
F_\mathrm{nl}(f) = \big(1-\chi(f)\big) \int_{\mathbb{S}^1}f(x)dx+\chi(f)
  \left(\int_{\mathbb{S}^1}f(x)dx\right)^{N},
\end{equation}
is partially additive but not local.
\end{thm}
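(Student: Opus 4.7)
The argument splits into (i) partial additivity, which exploits that the cut-off $\chi$ was designed to vanish on $\calI$, and (ii) non-locality, which uses that in a neighborhood of the constant function $1$ the functional reduces to $\bigl(\int_{\bbS^1} f\,dx\bigr)^N$, whose second Bastiani derivative corresponds to a tensor product of constants and is therefore incompatible with the diagonal support forced on the second derivative of any local functional by Definition~\ref{defourlocal}.

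For partial additivity, I would first observe that $F_\mathrm{nl}(0) = 0$, because $\|1-0\|_{C^0} = 1$ forces $\chi(0) = 0$ by Lemma~\ref{lesstriviallemma2} and both summands then vanish through the factor $\int_{\bbS^1} 0\,dx = 0$. Given $\varphi_1, \varphi_2 \in C^\infty(\bbS^1)$ with $\supp\varphi_1 \cap \supp\varphi_2 = \emptyset$, if one of them vanishes identically the required identity reduces to $F_\mathrm{nl}(0) = 0$; otherwise both supports are non-empty, disjoint and compact, so neither can equal $\bbS^1$. Each $\varphi_i$ therefore vanishes at some point of $\bbS^1$, giving $\|\varphi_i - 1\|_{C^0} \ge 1$ and $\chi(\varphi_i) = 0$ by Lemma~\ref{lesstriviallemma2}. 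Since also $\varphi_1 + \varphi_2 \in \calI$ we have $\chi(\varphi_1+\varphi_2) = 0$, and hence
\[
F_\mathrm{nl}(\varphi_1+\varphi_2) \;=\; \int_{\bbS^1}(\varphi_1+\varphi_2)\,dx \;=\; F_\mathrm{nl}(\varphi_1) + F_\mathrm{nl}(\varphi_2).
\]

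For non-locality, on the open neighborhood $W = \{f : \|1-f\|_{H^2}^2 < 1/(3C^2)\}$ of $\varphi_0 = 1$ the cut-off $\chi$ is identically $1$, so $F_\mathrm{nl}(f) = \bigl(\int_{\bbS^1} f\,dx\bigr)^N$ on $W$. A direct computation (using the chain and Leibniz rules from Proposition~\ref{Cku-prop}) yields
\[
D^2 F_{\mathrm{nl},\varphi}(\psi_1,\psi_2) \;=\; N(N-1)\Bigl(\int_{\bbS^1}\varphi\,dx\Bigr)^{N-2}\Bigl(\int_{\bbS^1}\psi_1\,dx\Bigr)\Bigl(\int_{\bbS^1}\psi_2\,dx\Bigr)
\]
for $\varphi \in W$ and $\psi_1,\psi_2 \in C^\infty(\bbS^1)$; via Theorem~\ref{multilinearkernelbundles} the associated distribution $F^{(2)}_{\mathrm{nl},\varphi}$ on $\bbS^1\times\bbS^1$ is a non-zero multiple of $\mathbf{1}\boxtimes\mathbf{1}$, whose support is the whole of $\bbS^1\times\bbS^1$. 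If $F_\mathrm{nl}$ were local, Definition~\ref{defourlocal} applied at $\varphi_0 = 1$ would produce an integer $k$ and a smooth function $f$ such that $F_\mathrm{nl}(1+\psi) - F_\mathrm{nl}(1) = \int_{\bbS^1} f(j^k_x \psi)\,dx$ for all $\psi$ in some neighborhood of $0$. Differentiating this representation twice in $\psi$ at $\psi = 0$ produces a distributional kernel of the form $\sum_{\alpha,\beta} \partial^\alpha_x \partial^\beta_y\bigl(h_{\alpha\beta}(x)\,\delta(x-y)\bigr)$, which is supported on the diagonal of $\bbS^1\times\bbS^1$. Since $N \ge 2$ the kernel $\mathbf{1}\boxtimes\mathbf{1}$ is plainly not supported on the diagonal, contradicting uniqueness of $F^{(2)}_{\mathrm{nl},1}$.

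The only genuine technical points are (a) the edge case $\varphi_i \equiv 0$ in partial additivity, handled by the explicit value $F_\mathrm{nl}(0) = 0$, and (b) the identification of $F^{(2)}_{\mathrm{nl},1}$ with a non-zero multiple of $\mathbf{1}\boxtimes\mathbf{1}$ so that the diagonal-support comparison bites; with these in hand, non-locality is refuted directly from Definition~\ref{defourlocal} without invoking the not-yet-proved characterization theorem.
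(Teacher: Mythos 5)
Your proof is correct, and the partial-additivity half is essentially the paper's own argument: reduce to $\chi(\varphi_1+\varphi_2)=\chi(\varphi_1)=\chi(\varphi_2)=0$ via Lemmas~\ref{triviallemma1} and~\ref{lesstriviallemma2} (your explicit treatment of the case $\varphi_i\equiv 0$ and of $F_\mathrm{nl}(0)=0$ is a harmless refinement). Where you genuinely diverge is the non-locality half: the paper stops at the observation that $F_\mathrm{nl}$ coincides with $\bigl(\int_{\bbS^1}f\bigr)^N$ on a neighborhood of $f=1$ and simply declares this multilocal functional to be non-local, whereas you substantiate the claim by computing $F^{(2)}_{\mathrm{nl},1}=N(N-1)\bigl(\int_{\bbS^1}dx\bigr)^{N-2}\,\mathbf{1}\boxtimes\mathbf{1}$ (non-zero since $N\ge 2$) and showing that any functional satisfying Definition~\ref{defourlocal} at $\varphi=1$ has second derivative with kernel supported on the diagonal, obtained by differentiating the jet representation twice under the integral. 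This is in the spirit of Proposition~\ref{localsecondderivativediagonal} and of the explicit-forms computation of Section~VI, but, as you note, it is done directly from the definition and so does not presuppose the characterization Theorem~\ref{TheoPrincipal}, which is proved only later; what your route buys is a self-contained, rigorous refutation of locality at the price of a short kernel computation and a uniqueness-of-kernel remark, while the paper's route buys brevity by leaning on the (unproved at that point) intuition that $\bigl(\int f\bigr)^N$ is the archetypal multilocal, non-local functional.
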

\begin{proof}
For every
$(\varphi_1,\varphi_2)\in C^\infty(\mathbb{S}^1)^2$
whose supports are disjoint, $f=\varphi_1+\varphi_2\in\mathcal{I}$ 
hence $\chi(f)=0$ by Lemma
\ref{lesstriviallemma2}.
Moreover, we saw that, if $\supp\varphi_1\cap\supp\varphi_2=\emptyset$,
then there is a point $x\in \bbS^1$ such that
$\varphi_1(x)=\varphi_2(x)=0$. Thus,
$||1-\varphi_1||_{C^0}\ge 1$ and 
$||1-\varphi_2||_{C^0}\ge 1$. As a consequence,
$\chi(\varphi_1)=\chi(\varphi_2)=0$ by Lemma \ref{lesstriviallemma2} and 
$F_\mathrm{nl}(\varphi_1+\varphi_2)=
\int_{\mathbb{S}^1}(\varphi_1(x)+\varphi_2(x))dx
=F_\mathrm{nl}(\varphi_1)+F_\mathrm{nl}(\varphi_2)$.

On the other hand, in the neighborhood $V$
of $f=1$ given by Lemma \ref{lesstriviallemma2}, $\chi(f)=1$
hence $F_\mathrm{nl}(f)=\left(\int_{\mathbb{S}^1}f(x)dx\right)^N$ 
which is not local. It is even a typical example
of a multilocal functional~\cite{Fredenhagen-11}.
\end{proof}

Since partial additivity is equivalent to locality for
polynomial functions, the non-locality of $F_\mathrm{nl}$ 
can be considered to be non perturbative.
Moreover,  the fact that the
derivatives $D^nF_\mathrm{nl}$ calculated at 
$f=0$ are
supported in the thin diagonal of $(\mathbb{S}^{1})^n$,
although $F_\mathrm{nl}$ is not local, means that
locality cannot be controlled by the support of 
differentials taken at a single function $f$.
We come now to the property that is relevant for
quantum field theory.

\subsection{Additive functionals}
\label{additivity-sect}
In 1965, Chacon and Friedman~\cite{Chacon-65} introduced a more stringent 
concept of additivity which meets our needs:
\begin{dfn}
We say that a Bastiani smooth map $F:C^\infty(M)\to \bbK$ is
\emph{additive} if,
for every triple $(\varphi_1,\varphi_2,\varphi_2)$ of 
smooth functions on $M$, the property  
$\supp\,\varphi_1\cap \supp\,\varphi_3=\emptyset$  implies the property 
\begin{eqnarray}
F(\varphi_1+\varphi_2+\varphi_3) &=& F(\varphi_1+\varphi_2) 
  + F(\varphi_2+\varphi_3)
  - F(\varphi_2).
\nonumber\\&&
\label{Hammerstein}
\end{eqnarray}
\end{dfn}
In the literature, the additivity equation (\ref{Hammerstein})
is also called the \emph{Hammerstein
property}~\cite{Batt-73,Fesmire-74,Korvin-81,Lermer-98,Ercan-99,%
Millington-07}.
The additivity property is equivalent
to the fact that the functional derivatives
are supported on the thin 
diagonal $D_n=\{(x_1,\dots,x_n)\in M^n\telque
x_1=\dots=x_n\}$~\cite{Brunetti-09,Keller-09}.
\begin{prop}\label{localsecondderivativediagonal}
A smooth functional $F$ on $C^\infty(M)$ is additive iff 
$\supp\,F^{(2)}_\varphi\subset D_2$ for every 
$\varphi\in C^\infty(M)$, where 
$D_2=\{(x,y)\in M^2\telque x=y\}$.
If $F$ is an additive functional, then
$\supp\,F^{(n)}_\varphi\subset D_n$ for every 
$\varphi\in C^\infty(M)$, where 
$D_n=\{(x_1,\dots,x_n)\in M^n\telque x_1=\dots=x_n\}$.
\end{prop}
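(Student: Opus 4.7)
The plan is to prove the equivalence in the case $n = 2$ directly and then bootstrap to general $n$ by a parametric differentiation trick. For the \emph{forward direction}, assuming $F$ additive, I would substitute $\varphi_1 \leftarrow t_1 \psi_1$, $\varphi_3 \leftarrow t_3 \psi_3$ with $\supp \psi_1 \cap \supp \psi_3 = \emptyset$ and $\varphi_2 = \varphi$ into the Hammerstein identity, then apply $\partial_{t_1}\partial_{t_3}\vert_{t_1=t_3=0}$. Each term on the right-hand side depends on at most one of $t_1, t_3$, so these terms are killed and the identity collapses to $D^2 F_\varphi(\psi_1, \psi_3) = 0$. To pass from this ``vanishing on separated tensor products'' to $\supp F^{(2)}_\varphi \subset D_2$, I would take $f \in \calD(M^2)$ whose compact support avoids $D_2$, cover $\supp f$ by finitely many product boxes $U_i \times V_i$ with $U_i \cap V_i = \emptyset$, split $f$ via a subordinated partition of unity, and use density of $\calD(U_i) \otimes \calD(V_i)$ in $\calD(U_i \times V_i)$ together with the joint continuity of $F^{(2)}_\varphi$ granted by Theorem~\ref{multilinearkernelbundles}.

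For the \emph{converse}, given $(\varphi_1, \varphi_2, \varphi_3)$ with $\supp \varphi_1 \cap \supp \varphi_3 = \emptyset$, I would introduce
\[
h(s,t) := F(\varphi_2 + s\varphi_1 + t\varphi_3) - F(\varphi_2 + s\varphi_1) - F(\varphi_2 + t\varphi_3) + F(\varphi_2),
\]
and note $h(0,t) = h(s,0) = 0$ for all $s,t$. The chain rule gives
\[
\partial_s\partial_t h(s,t) \;=\; D^2 F_{\varphi_2 + s\varphi_1 + t\varphi_3}(\varphi_1,\varphi_3),
\]
which vanishes identically because $\varphi_1 \otimes \varphi_3$ is supported in $\supp\varphi_1 \times \supp\varphi_3$, a set disjoint from $D_2$. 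Two applications of the fundamental theorem of calculus from the boundary conditions then yield $h(1,1) = 0$, which is precisely the Hammerstein identity.

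For the \emph{higher derivatives} claim, the idea is to reduce to the quadratic case. Fix $\psi_2, \dots, \psi_{n-1} \in \calD(M)$ and set $G(\eta) := D^{n-2} F_\eta(\psi_2, \dots, \psi_{n-1})$; this is Bastiani smooth in $\eta$ by item~4 of Proposition~\ref{Cku-prop}. Differentiating the Hammerstein identity for $F$ at $0$ in the directions $\psi_2, \dots, \psi_{n-1}$ — an operation legitimized by the smoothness of $F$ and the fact that the identity is a pointwise equality of smooth functions of parameters — shows that $G$ is itself additive. The already-established $n=2$ case gives $D^2 G_\varphi(\psi_1, \psi_n) = 0$ whenever $\supp\psi_1 \cap \supp\psi_n = \emptyset$, and by symmetry of the Bastiani differential this coincides with $D^n F_\varphi(\psi_1, \dots, \psi_n)$. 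The partition-of-unity and density argument from the forward direction then upgrades this to $\supp F^{(n)}_\varphi \subset D_n$. The main conceptual obstacle throughout is precisely this bridge from the pointwise multilinear statement ``$D^n F_\varphi$ vanishes on tensor products of separated test functions'' to the distributional support statement on $F^{(n)}_\varphi$; it hinges on the joint continuity of $F^{(n)}_\varphi$ and on density of algebraic tensor products in $\calD$ of open product sets.
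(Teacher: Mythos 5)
Your proposal is correct and takes essentially the same route as the paper: differentiating the Hammerstein identity in the scaling parameters plus density of tensor products of separated test functions gives the support statement, and the converse is the same double application of the fundamental theorem of calculus to kill the cross term. The only cosmetic difference is in the higher-order step, which you organize as "the contracted functional $G(\eta)=D^{n-2}F_\eta(\psi_2,\dots,\psi_{n-1})$ is additive, so apply the $n=2$ case," whereas the paper directly differentiates the identity $D^2F_\varphi(\psi,\chi)=0$ with respect to the background field $\varphi$; both rest on the same mechanism of differentiating in $\varphi$ and then localizing via density and the symmetry of $D^nF_\varphi$.
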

\begin{proof}
We first prove that the second derivative of an additive functional
is localized on the diagonal~\cite{Brunetti-09}.
If we use the additivity property with
$\varphi_1=\lambda\psi$, $\varphi_3=\mu\chi$
and $\supp\,\psi\cap\supp\,\chi=\emptyset$,
then 
\begin{eqnarray*}
F(\lambda\psi+\varphi_2+\mu\chi) &=& 
  F(\lambda\psi+\varphi_2) 
  + F(\varphi_2+\mu\chi)
  - F(\varphi_2).
\end{eqnarray*}
Since no term on the right hand side of this equation depends
on both $\lambda$ and $\mu$, we have
\begin{eqnarray*}
\frac{\partial^2F(\lambda\psi+\varphi_2+\mu\chi)}{\partial\lambda
\partial\mu}  &=& 
  D^2F_{\lambda\psi+\varphi_2+\mu\chi}(\psi,\chi)
\\&=&
  F^{(2)}_{\lambda\psi+\varphi_2+\mu\chi}(\psi\otimes \chi)=0.
\end{eqnarray*}
This equation, being true for every $\varphi_2$, can be
written
 $F^{(2)}_{\varphi}(\psi\otimes \chi)=0$ for every $\varphi$
and every pair $(\psi,\chi)$ with disjoint supports.
Now for every point $(x,y)\in M^2$ such that 
$x\not=y$, there are two open sets $U_x$ containing
$x$ and $U_y$ containing $y$ such that 
$U_x\cap U_y=\emptyset$. Then, any pair of functions
$\psi$ and $\chi$ supported in $U_x$ and $U_y$
satisfies $F^{(2)}_\varphi(\psi\otimes \chi)=0$.
Since the functions $\psi\otimes \chi$ are dense
in $\calD(M^2)$, this implies that every test functions
$f\in \calD(M^2)$
supported in $U_x\times U_y$ 
satisfies $F^{(2)}_\varphi(f)=0$. Thus
$(x,y)\notin \supp\, F^{(2)}_\varphi$
and 
$\supp F^{(2)}_\varphi\subset D_2$.
To determine the support of $F^{(n)}_\varphi$, 
consider a point $(x_1,\dots,x_n)$ which is not in $D_n$.
Then, there are two indices $i$ and $j$ such that
$x_i\not= x_j$. Denote by $U_x$ an open neighborhood of
$x_i$ and by $U_y$ an open neighborhood of $x_j$ and
repeat the previous proof to obtain 
$F^{(2)}_\varphi(\psi\otimes\chi)=0$ for every $\varphi$
and every pair $(\psi,\chi)$ with supports in
$U_x$ and $U_y$.
Now, rewrite $\varphi=\varphi_0+\sum \lambda_k \psi_k$,
where $\psi_k(x_k)\not=0$ and the sum is over all integers from 1 to $n$
except $i$ and $j$. Then, the derivatives with respect
to $\lambda_k$ are all zero and we find again with the
same argument
that $(x_1,\dots,x_n)$ is not in the support of 
$F^{(n)}_\varphi$ for every $\varphi$.

Conversely~\cite{Keller-09,Brunetti-12}, assume
that $\supp\,F^{(2)}_\varphi\subset D_2$
for every $\varphi$. As we have seen in the
first part of the proof, this means that, if
$\psi$ and $\chi$ have disjoint support, then
$D^2F_\varphi(\psi,\chi)=F^{(2)}_\varphi(\psi\otimes\chi)=0$.
By the fundamental theorem of calculus,
\begin{eqnarray*}
F(\varphi+\psi+\chi) &=& F(\varphi+\psi) 
  + \int_0^1 d\mu \frac{d}{d\mu} F(\varphi+\psi+ \mu \chi),\\
F(\varphi+\psi+\mu\chi) &=& F(\varphi+\mu\chi) 
  + \int_0^1 d\lambda \frac{d}{d\lambda} 
     F(\varphi+\lambda\psi+ \mu \chi).
\end{eqnarray*}
Thus,
\begin{eqnarray*}
F(\varphi+\psi+\chi) &=& F(\varphi+\psi) 
  + \int_0^1 d\mu \frac{d}{d\mu} F(\varphi+ \mu \chi)
\\&&+
   \int_0^1 d\lambda \int_0^1 d\mu \frac{\partial^2}
   {\partial\lambda\partial\mu} 
     F(\varphi+\lambda\psi+ \mu \chi).
\end{eqnarray*}
The last term is zero because 
$D^2F_\varphi(\psi,\chi)=0$ and the second
term is $F(\varphi+\chi)-F(\varphi)$. 
We recover the additivity condition.
\end{proof}
Finally, additivity is stronger than partial additivity
because the latter corresponds to the case $\varphi_2=0$
and $F(0)=0$. 
It is strictly stronger because $F_\mathrm{nl}$ is not additive.

\section{Characterization of smooth local functionals}\label{Smooth:loc:funct} 
In this section, we give a characterization of local functionals
inspired by the topology
described in Section~\ref{Loctopsect}.
In the sequel, we shall deal with compactly supported distributions $u$
with empty wavefront sets. 
We repeat the definition of local functionals
in terms of jets:
\begin{dfn}
Let $U$ be an open subset of $C^\infty(M)$.
A Bastiani smooth functional $F:U\to \bbK$ is said to be
\emph{local}
if, for every $\varphi\in U$, there is a neighborhood 
$V$ of $\varphi$, an integer $k$,
an open subset $\mathcal{V}\subset J^kM$ and a smooth
function $f\in C^\infty(\mathcal{V})$ such that
$x\in M\mapsto f(j^k_x\psi)$ is supported in a compact subset $K\subset M$ and
\begin{eqnarray*}
F(\varphi+\psi) &=& F(\varphi)+\int_M f(j^k_x\psi) dx,
\end{eqnarray*}
whenever $\varphi+\psi\in V$
and where $j^k_x\psi$ denotes the $k$-jet of $\psi$ at $x$.
\end{dfn}
We invite the reader not familiarized with jet bundles to have a look at
Section~\ref{sec:jets}, where these objects are carefully defined.
Note that the representation of $F$ by $f$ is not
unique: adding the total derivative of a function
does not change the result. 
We shall see that $f$ belongs to a unique cohomology class for some
specific cohomology theory on the space of local functionals.

 Before we state the main Theorem
of this section, let us start by some useful definition-lemma~:
\begin{lem}\label{nablaFunique}
Let $U$ be an open subset of $C^\infty(M)$ and 
$F:U\to \bbK$ be Bastiani smooth.
For every $\varphi$ such that the distribution $DF_\varphi\in \mathcal{E}^\prime(M)$ 
has empty wave front set, there exists a \textbf{unique 
function} $\nabla F_\varphi\in \mathcal{D}(M)$ such that
\begin{equation}
DF_\varphi[h]=\int_M\nabla F_\varphi(x)h(x)dx.
\end{equation} 
\end{lem}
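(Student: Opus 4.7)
The plan is to invoke two standard facts about distributions with empty wave front set and to add the compact support condition coming from Proposition~\ref{localcompactsupport}. Since $F$ is Bastiani smooth on the open subset $U\subset C^\infty(M)$, the differential $DF_\varphi:C^\infty(M)\to \bbK$ is linear and continuous, hence defines an element of $\mathcal{E}'(M)$, so it is compactly supported. By hypothesis, its wave front set is empty.

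First I would recall the microlocal regularity statement: the singular support of a distribution is the projection of its wave front set onto the base (see~[\onlinecite[Chap.~8]{HormanderI}]). Since $\WF(DF_\varphi)=\emptyset$, we get $\mathrm{sing\,supp}(DF_\varphi)=\emptyset$, which means $DF_\varphi$ is represented by a smooth function $\nabla F_\varphi\in C^\infty(M)$ in the sense that
\begin{equation*}
DF_\varphi[h]=\int_M \nabla F_\varphi(x)\,h(x)\,dx \quad \text{for all } h\in \mathcal{D}(M).
\end{equation*}
Since $DF_\varphi$ has compact support in $M$, the function $\nabla F_\varphi$ also has compact support (contained in $\supp DF_\varphi$, which lies in some compact $K\subset M$ by Proposition~\ref{localcompactsupport}), so $\nabla F_\varphi\in \mathcal{D}(M)$.

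Next I would upgrade the identity from test functions $h\in \mathcal{D}(M)$ to arbitrary $h\in C^\infty(M)$. Since $\nabla F_\varphi$ is compactly supported in some $K$, the integral $\int_M \nabla F_\varphi(x)h(x)\,dx$ makes sense for any smooth $h$ and depends only on the restriction $h|_K$. On the other hand, $DF_\varphi$ acts on $C^\infty(M)$ by the standard extension of a compactly supported distribution to smooth functions, and this extension is likewise determined by $h$ on a neighborhood of $\supp DF_\varphi$. A cut-off argument (multiply $h$ by a bump function equal to $1$ on a neighborhood of $K$) shows that both sides agree on all of $C^\infty(M)$.

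Finally, uniqueness of $\nabla F_\varphi$ follows from the non-degeneracy of the pairing $\mathcal{D}(M)\times \mathcal{D}(M)\to \bbK$: if two smooth compactly supported functions induce the same distribution, their difference integrates to zero against every test function, hence vanishes identically. No real obstacle is expected here; the only point demanding care is the identification of the microlocal regularity result in a form that immediately yields a \emph{compactly supported} smooth representative, which is handled by combining empty wave front set (giving smoothness) with compact support of $DF_\varphi$ (inherited from Bastiani smoothness).
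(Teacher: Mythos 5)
Your proof is correct and follows essentially the same route as the paper's: empty wave front set gives a smooth representative of $DF_\varphi$ (singular support being the projection of the wave front set), and uniqueness comes from the injectivity of the map sending a continuous function to the distribution it defines by integration against the fixed density. The paper leaves the compact support of $\nabla F_\varphi$ and the extension of the identity from $\mathcal{D}(M)$ to all of $C^\infty(M)$ implicit, whereas you spell them out; this is just added detail, not a different argument.
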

\begin{proof}
Once a density $dx$ is fixed on $M$,
functions in $L^1_{loc}(M)$ (in particular in $C^\infty(M)$) can be identified with distributions
by the map~:
$$f\in L^1_{loc}(M)\mapsto\left(\phi\mapsto \int_M f\phi dx \right)  $$
and~[\onlinecite[Theorem 1.2.4]{HormanderI}] shows that 
the distribution is uniquely defined when $f$ is continuous hence when $f$ is smooth.

Since $WF(DF_\varphi)=\emptyset$, there exists a unique
$C^\infty$ function $\nabla F_\varphi$ which represents
the distribution $DF_\varphi\in \mathcal{E}^\prime(M)$ by integration on $M$ against $dx$.
\end{proof}

The main theorem of this section is
\begin{thm}
\label{TheoPrincipal}
Let $U$ be an open subset of $C^\infty(M)$ and 
$F:U\to \bbK$ be Bastiani smooth.
Then, $F$ is local if and only if
the following two conditions are satisfied:
\begin{enumerate}
\item $F$ is additive,
\item for every $\varphi\in U$, the differential 
   $DF_\varphi = F^{(1)}_\varphi $ of $F$ at $\varphi$ has an
   empty  wave front set and the map 
   $\varphi\mapsto \nabla F_\varphi$ is Bastiani smooth
  from $U$ to $\calD(M)$.
\end{enumerate}
\end{thm}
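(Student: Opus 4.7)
The plan is to treat the two implications separately, with essentially all the substance concentrated in the converse direction. For the forward implication, assume $F$ admits a local representation $F(\varphi_0+\psi)=F(\varphi_0)+\int_M f(j^k_x\psi)\,dx$ on a neighbourhood $V$ of $\varphi_0$. Differentiating under the integral at an arbitrary $\psi$ and then integrating by parts realises $DF_{\varphi_0+\psi}$ as integration against the smooth compactly supported function
\[
\nabla F_{\varphi_0+\psi}(x)=\sum_{|\alpha|\le k}(-1)^{|\alpha|}\partial^\alpha_x\bigl((\partial_{a_\alpha}f)(j^k_x\psi)\bigr),
\]
and the chain rule applied to $f$ gives smoothness of $\psi\mapsto \nabla F_{\varphi_0+\psi}$ as a map into $\calD(M)$, yielding~(2). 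For~(1), observe that if $\supp\psi_1\cap\supp\psi_3=\emptyset$ then at each $x\in M$ one of the jets $j^k_x\psi_1,j^k_x\psi_3$ vanishes, so a pointwise check gives $f(j^k_x(\psi_1+\psi_3))=f(j^k_x\psi_1)+f(j^k_x\psi_3)-f(j^k_x 0)$; integrating yields the additivity identity for small perturbations around every $\varphi_0\in U$. This forces $\supp F^{(2)}_\varphi\subset D_2$ for all $\varphi\in U$, and Proposition~\ref{localsecondderivativediagonal} then delivers global additivity.

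For the converse direction, fix $\varphi_0\in U$ and use Propositions~\ref{localcompactsupport} and~\ref{bounded-order} to choose a convex neighbourhood $V$ of $\varphi_0$, a compact $K\subset M$ and an integer $m$ such that $F^{(n)}_\varphi$ is supported in $K^n$ and of order at most $m$ for every $\varphi\in V$ and $n\le 2$. The fundamental theorem of calculus together with condition~(2) gives
\[
F(\varphi_0+\psi)-F(\varphi_0)=\int_0^1\int_M \nabla F_{\varphi_0+t\psi}(x)\,\psi(x)\,dx\,dt
\]
whenever the segment $\{\varphi_0+t\psi:t\in[0,1]\}$ lies in $V$. The goal is to rewrite the inner integrand as $f(x,j^k_x\psi)$ for a single smooth $f$ on an open subset of $J^kM$. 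Additivity now enters decisively: pairing the derivative of the Bastiani-smooth map $\varphi\mapsto \nabla F_\varphi$ with a test function $\chi$ reproduces $F^{(2)}_\varphi(\psi\otimes\chi)$, which by Proposition~\ref{localsecondderivativediagonal} is supported on the diagonal $D_2$. Hence, for each fixed $y\in M$, the distribution $\psi\mapsto D(\nabla F_{\cdot})_\varphi(\psi)(y)$ is supported at $\{y\}$ alone and, by the order bound, is a linear combination of derivatives of $\delta_y$ of order $\le m$.

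The main obstacle is to promote this infinitesimal statement into a genuine \emph{jet factorisation}: a smooth function $G:\mathcal{V}\to\bbK$ on an open $\mathcal{V}\subset J^m M$ such that $\nabla F_\varphi(x)=G(x,j^m_x\varphi)$ for all $\varphi\in V$. This step---essentially a parametric Peetre-type theorem---requires combining the pointwise jet dependence, the uniform order bound and the Bastiani-smoothness of $\varphi\mapsto \nabla F_\varphi\in\calD(M)$ to obtain a function $G$ that is jointly smooth in the base point and the fibre coordinates. Once $G$ is produced, set $k=m$ and define, on the open set of pairs $(x,a)$ such that $j^k_x\varphi_0+ta\in\mathcal{V}$ for all $t\in[0,1]$,
\[
f(x,a):=a_0\int_0^1 G\bigl(x,\,j^k_x\varphi_0+t\,a\bigr)\,dt,
\]
where $a_0$ denotes the zeroth-order component of the jet $a$. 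Then $f(x,j^k_x\psi)=\psi(x)\int_0^1 \nabla F_{\varphi_0+t\psi}(x)\,dt$, whose integral over $M$ is exactly $F(\varphi_0+\psi)-F(\varphi_0)$. Smoothness of $f$ is inherited from $G$, and the uniform compact support of $x\mapsto f(x,j^k_x\psi)$ over the neighbourhood follows from Proposition~\ref{localcompactsupport}, completing the proof.
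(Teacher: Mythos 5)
Your forward implication is essentially sound and follows the same lines as the paper (differentiate under the integral, integrate by parts to get the Euler--Lagrange expression for $\nabla F$, check additivity pointwise on jets and globalize via Proposition~\ref{localsecondderivativediagonal}). The problem is the converse. There you reduce everything to what you yourself call ``the main obstacle'': producing a \emph{smooth} function $G$ on an open subset of the jet bundle with $\nabla F_{\varphi_0+\psi}(x)=G(x,j^k_x\psi)$, and then you simply assume it is ``produced''. But that parametric Peetre-type statement \emph{is} the theorem; everything after it (the definition of $f(x,a)$ by integrating $G$ along the segment of jets, the FTC identity) is the easy part. As written, your argument for it consists of two unproved claims. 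First, from ``$F^{(2)}_\varphi$ is supported on the diagonal and of order $\le m$'' you infer that $\nabla F$ has finite jet dependence; Schwartz's structure theorem only gives $F^{(2)}_\varphi(x,y)=\sum_{|\alpha|\le k}f_\alpha(\varphi,x)\partial^\alpha_y\delta(x-y)$ with coefficients that are a priori distributions in $x$ and with no control in $\varphi$, whereas the argument needs the $f_\alpha(\varphi)$ to be elements of $\calD(K)$ depending Bastiani-smoothly on $\varphi$ (this is Lemma~\ref{hardLemma}, proved in the paper by identifying $F^{(2)}_\varphi$ with $D\nabla F_\varphi$ and testing against $e^{-i\langle\xi,x\rangle}$). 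Moreover, passing from point support of the \emph{derivative} $D\nabla F$ to equality of the finite increments $\nabla F_{\varphi_0+\psi_1}(x)=\nabla F_{\varphi_0+\psi_2}(x)$ when $j^k_x\psi_1=j^k_x\psi_2$ requires the FTC-in-$\varphi$ argument with the pairing against $\delta_{x_0}$ (justified there by wave front set transversality and hypocontinuity) and the double integral over $s,t$ --- this is Proposition~\ref{finiteJetdep}, which you do not supply.

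Second, even granted pointwise jet dependence, the smoothness of the factorized function $G$ (equivalently $\tilde c$) on $J^kM$ is not automatic: the paper needs Proposition~\ref{prop:NeDependQueDesJets}, whose proof constructs smooth local sections of the jet evaluation morphism ${\mathfrak j}^k:C^\infty(M)\times M\to J^kM$ through every point and shows ${\mathfrak j}^k$ is a submersion, so that $\tilde c=c\circ{\mathfrak t}^k$ is smooth by composition; it also needs the Bastiani smoothness of $\psi\mapsto c_\psi=\psi\int_0^1\nabla F_{\varphi_0+t\psi}\,dt$ into $\calD(K)$, which rests on Lemmas~\ref{lemme-produit} and~\ref{trivialLemm}. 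None of this appears in your proposal, so what you have is a correct outline of the strategy (the same strategy as the paper's) with the central analytic and geometric steps left as assertions rather than proofs.
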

Note that our definition of locality is strictly
more general than the usual one because the counterexample 
described in Lemma~\ref{counterorderlem} is local in our sense
but not in the sense of Eq.~(\ref{deflocal}) since
its order is infinite.

The proof is delayed to Section \ref{sec:proofPrincipal}.
Since this theorem deals with jets, we start with a short presentation of the jet bundle.
Our point of view on jets is based on the concept of 
infinitesimal neighborhoods due to Grothendieck and is closely related
to several expositions in the literature~\cite{Navarro, moosa2004jet, kantor1977formes}.

\subsection{\bcam The manifold of jets of functions on a manifold \ecam}
\label{sec:jets}

Let $M$ be a manifold. For every smooth real-valued function $\varphi $ on $M$, 
we call \emph{$k$-jet of $\varphi$
at a point $x \in M$} the class $j^k_x (\varphi)$ of $\varphi$ in the 
quotient $ C^\infty(M)/I_x^{k+1}$, with the understanding that $I_x^{k+1}$
stands for the $(k+1)$-th power of the ideal $I_x$ of smooth functions on 
$M$ vanishing at $x \in M$. Recall that $I_x^{k+1}$ coincides
with the ideal of smooth functions on $M$ whose $k+1$ first 
derivatives vanish at the point $x$.

For all $x \in M$, the space $J^k_x(M)$ of all $k$-jets of functions 
on $M$ at $x$ coincides with $ C^\infty(M)/I_x^{k+1}$ 
and is called the 
\emph{space of $k$-jets at $x$}. It is clearly a vector space.
The disjoint union $J^k(M) := \coprod_{x \in M}  J^k_x(M) $ 
is a smooth vector bundle over $M$ called
the bundle of $k$-th jets. 
Consider the map:
$$ \begin{array}{rrcl} {\mathfrak J}_\Delta: &
 C^\infty (M \times M) & \to & \Gamma( J^k(M) )      \\ &
 \psi & \mapsto & x \mapsto j^k_x({\mathfrak i}_x^*\psi),
    \end{array}
 $$
where $ {\mathfrak i}_x : M \to M \times M$ is the map $y \mapsto (x,y)$.
It is known that ${\mathfrak J}_\Delta$ is surjective onto the space of 
smooth sections of $J^k(M)$ and its kernel is 
the $(k+1)$-th power of the ideal ${\mathcal I}_\Delta $ of 
functions on $M \times M$ vanishing on the diagonal.  

Last, the projection $p_1: M \times M \to M$ onto the first component dualizes in an algebra morphism $\varphi\mapsto p_1^*\varphi$ from $C^\infty (M)$
to $C^\infty (M \times M)$ which endows  $C^\infty(M \times M)$ with a $C^\infty(M)$-module structure.
The space of sections of $J^k(M)$ is also a $C^\infty(M)$-module,
and it is routine to check that $ {\mathfrak J}_\Delta$ is 
a morphism of $C^\infty(M)$-modules.
Therefore, the space of sections of $J^k(M)$ is, as a 
$C^\infty(M)$-module, isomorphic
to the quotient $ C^\infty (M \times M)/{\mathcal I}_\Delta^{k+1} $ and 
$\Gamma(J^k(M))$ fits into the following exact sequence of 
$C^\infty(M)$-modules:
\begin{eqnarray*}
0 &\to & {\mathcal I}_\Delta^{k+1}\to C^\infty(M\times M)
\to \Gamma( J^k(M) )\simeq 
\\&& C^\infty(M\times M)/{\mathcal I}_\Delta^{k+1}
\to 0.
\end{eqnarray*}
And the map that to $f$ associates
its $k$-jet reads:
\begin{eqnarray*}
f\in C^\infty(M)\mapsto [(p_1^*f)]\in  C^\infty(M\times M)/{\mathcal I}_\Delta^{k+1}.
\end{eqnarray*}

The purpose of the rest of this section is to prove the technical Proposition \ref{prop:NeDependQueDesJets}, the statement of which we now explain. For all integer $k \in {\mathbb N}$, there is a natural vector bundle morphism ${\mathfrak j}^k$ from the trivial bundle over $M$ with typical fiber $E=C^\infty(M)$ to the bundle $J^k(M) \to M$ of $k$-jets. This morphism simply consists in assigning 
to a pair $(f,x)$ in $ E \times M \to M$  the $k$-jet of $f$ at $x$. 
In equation:
 \begin{equation}\label{eq:assignJet} \begin{array}{rrcl} {\mathfrak j}^k :& E \times M & \mapsto & j^k(M) \\  & (f,x) &\to & j_x^k (f) .\end{array} \end{equation}
The result goes as follows.

\begin{prop} \label{prop:NeDependQueDesJets}
Let $E=C^\infty(M)$ and
$ V \subset E$ be an open subset and $k \in {\mathbb N}$ an integer.
\begin{enumerate}
\item The subset ${\mathfrak j}^{k} ( V \times  M )$ is an open subset of $j^k(M)$.
\item Let $c$ be a smooth $\bbK$-valued function on 
$ V \times M$, with $ V \subset E$ an open subset. Assume that
$c(f,x)$ depends only on the $k$-jet of the function $f$ at the point $x$. 
Then there exists a unique smooth $\bbK$-valued function 
$ {\tilde c}$ on the open subset  ${\mathfrak j}^k(V \times M)\subset j^k(M)$ 
that makes the following diagram commutative:
\begin{equation} 
\label{eq:diagramme}  
\xymatrix{ V \times M \ar[d]^{{\mathfrak j}^k}  \ar[r]^{c}& 
{\bbK}.
\\ {\mathfrak j}^k(V \times M) \ar@{.>}[ur]^{\tilde{c}}  & } 
\end{equation}
i.e, such that the relation  $c(f,x) = \tilde{c}(j^k_x(f))$ holds 
for all $f \in V$ and $x \in M$. 
\end{enumerate}
\end{prop}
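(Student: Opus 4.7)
The plan is to prove both statements by constructing, around any point of the image, an explicit local right-inverse of the jet map $\mathfrak{j}^k$ whose values can be made arbitrarily close to a chosen reference function $f_0\in V$. Once such a local smooth section is available, openness falls out immediately, and smoothness of $\tilde c$ follows by composition.

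For statement (1), I would fix $j_0=j_{x_0}^k(f_0)\in\mathfrak{j}^k(V\times M)$ and work in a chart around $x_0$ with coordinates $(x^i)$, so that a nearby jet $j\in J^k(M)$ is described by $(x,(a_\alpha)_{|\alpha|\le k})$. The task is to exhibit, for each such $j$ near $j_0$, an $f=f_0+h\in V$ with $j_x^k(f)=j$. I would pick a cutoff $\chi\in\mathcal{D}(M)$ with $\chi\equiv 1$ on a neighborhood of $x_0$ and supported inside the chart, and set
\begin{equation*}
h_j(y)\;=\;\chi(y)\sum_{|\alpha|\le k}\frac{(y-x)^\alpha}{\alpha!}\bigl(a_\alpha-\partial^\alpha f_0(x)\bigr).
\end{equation*}
For $x$ close enough to $x_0$, $\chi\equiv 1$ near $x$, so $\partial^\alpha h_j(x)=a_\alpha-\partial^\alpha f_0(x)$, i.e.\ $j_x^k(f_0+h_j)=j$. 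Each seminorm $\pi_{m,K}(h_j)$ is bounded by a constant (coming from derivatives of $\chi$ on the fixed compact $K$) times $\max_{|\alpha|\le k}|a_\alpha-\partial^\alpha f_0(x)|$, which tends to $0$ as $j\to j_0$ by continuity of $\partial^\alpha f_0$. Hence $f_0+h_j\to f_0$ in $E$, so $f_0+h_j\in V$ on an open neighborhood $W$ of $j_0$ in $J^k(M)$, proving $W\subset\mathfrak{j}^k(V\times M)$.

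For statement (2), uniqueness of $\tilde c$ is automatic since $\mathfrak{j}^k$ is surjective onto its image. For existence, I would set $\tilde c(j)=c(f,x)$ for any $(f,x)$ with $j_x^k(f)=j$; this is well-defined by hypothesis. To show smoothness near any $j_0$, I reuse the construction of part (1) as a local section
\begin{equation*}
s:W\longrightarrow V,\qquad s(j)\;=\;f_0+h_j,
\end{equation*}
so that $\mathfrak{j}^k(s(j),\pi(j))=j$ on $W$. Then on $W$
\begin{equation*}
\tilde c(j)\;=\;c\bigl(s(j),\pi(j)\bigr),
\end{equation*}
which is smooth as soon as $s$ is Bastiani smooth from the finite-dimensional manifold $W$ into $E$ and $c$ is smooth on $V\times M$; the chain rule of Proposition for Bastiani composition then concludes.

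The main obstacle is verifying that $s$ is genuinely Bastiani smooth as a map $W\to C^\infty(M)$, not just pointwise smooth. I would check it by observing that the assignment $(j,y)\mapsto s(j)(y)$ is jointly smooth on $W\times M$ (since it is a polynomial in $y-\pi(j)$ with coefficients smoothly depending on $j$, multiplied by the fixed smooth function $\chi(y)$), and then deduce Bastiani smoothness of $j\mapsto s(j)\in C^\infty(M)$ by differentiating under the assignment and using that differentiation commutes with the finite-dimensional parameter $j$; the continuity of all partial derivatives of $s(j)(y)$ in $(j,y)$ on compact sets gives exactly the continuity of the seminorms $\pi_{m,K}$ needed for Bastiani differentiability to every order. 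Everything else (openness, well-definedness of $\tilde c$, the identity $\tilde c\circ\mathfrak{j}^k=c$) is a direct bookkeeping consequence.
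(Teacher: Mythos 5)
Your proposal is correct and follows essentially the same strategy as the paper: both arguments hinge on producing a smooth local section of ${\mathfrak j}^k$ through a given point $(f_0,x_0)$, valued in $V$, from which openness of the image and the formula $\tilde{c}=c\circ s$ (hence smoothness of $\tilde c$ by the chain rule) both follow. The only difference is cosmetic: the paper first builds a global section via a partition of unity, shifts it affinely to pass through $(f_0,x_0)$, and invokes the resulting submersion property for openness, whereas you write down the local section $j\mapsto f_0+\chi\cdot P_j$ explicitly in a chart and check directly that it lands in $V$ near $j_0$; your remaining task of verifying Bastiani smoothness of this finite-dimensionally parametrized section is routine and your sketch of it is adequate.
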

 
When $V=E$, Proposition \ref{prop:NeDependQueDesJets} specializes to the following easier statement:

\begin{cor} \label{coro:NeDependQueDesJets}
Let $c$ be a smooth function from $ E \times M$ to 
$\bbK$.
Assume that there exists an integer $k$ such that
$c(f,x)$ depends only on the $k$-jet of the function $f$ at the point $x$. 
Then there exists a unique smooth $\bbK$-valued function 
$ {\tilde c}$ on  $ j^k(M)$ such that the following diagram commutes:
\begin{equation}
\label{eq:diagrammeCoro}  
\xymatrix{ E \times M \ar[d]^{{\mathfrak j}^k}  \ar[r]^{c}& 
{\bbK}.
\\ j^k(M) \ar@{.>}[ur]^{\tilde{c}}  & } 
\end{equation}
i.e. such that the relation  $c(f,x) = \tilde{c}(j^k_x(f))$ 
holds for all $f \in V$ and $x \in M$. 
\end{cor}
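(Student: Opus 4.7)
The plan is to derive the corollary directly from Proposition~\ref{prop:NeDependQueDesJets} applied with $V = E = C^\infty(M)$. Once this is done, only one additional ingredient is needed: the surjectivity of the jet map ${\mathfrak j}^k : E \times M \to j^k(M)$. Indeed, the proposition already produces a unique smooth function $\tilde c$ on the open set ${\mathfrak j}^k(E \times M) \subset j^k(M)$ making the diagram commute, so if that open set is all of $j^k(M)$, the corollary follows at once.

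First I would verify surjectivity. Fix a point $\xi \in j^k(M)$ sitting over some $x \in M$. By definition $j^k_x(M) = C^\infty(M)/I_x^{k+1}$, and in a chart around $x$ this quotient is canonically identified with the space of polynomials of degree $\le k$ in the local coordinates centred at $x$. Pick a representative polynomial $P$ in such a chart $(U,\psi)$, choose a compactly supported bump function $\chi \in \calD(U)$ equal to $1$ on a neighbourhood of $x$, and set $f := \chi \cdot (P \circ \psi)$, extended by zero outside $U$. Then $f \in C^\infty(M) = E$ and $j^k_x(f) = \xi$ because $\chi \equiv 1$ near $x$ implies $f$ and $P\circ\psi$ differ by a function in $I_x^{k+1}$. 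Hence ${\mathfrak j}^k(f,x) = \xi$, proving that ${\mathfrak j}^k(E\times M) = j^k(M)$.

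Second, I would invoke Proposition~\ref{prop:NeDependQueDesJets} with $V := E$. Its first item states that ${\mathfrak j}^k(V\times M)$ is open in $j^k(M)$; combined with surjectivity this just recovers the (trivially open) fact that $j^k(M) = {\mathfrak j}^k(E\times M)$. Its second item then produces a unique smooth function $\tilde c$ on this open set, i.e.\ on all of $j^k(M)$, satisfying $c(f,x) = \tilde c(j^k_x(f))$ for every $(f,x) \in E\times M$. Uniqueness on $j^k(M)$ follows from uniqueness in the proposition together with surjectivity: any other candidate would have to agree with $\tilde c$ on the image of ${\mathfrak j}^k$, which is the whole bundle.

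The only genuinely non-formal step is the surjectivity of ${\mathfrak j}^k$, and it is standard: it amounts to the elementary observation that any formal polynomial datum of order $k$ at a point of a (paracompact) manifold can be realised as the $k$-jet of a globally defined smooth function via a bump-function cut-off. Everything else is a direct appeal to the more general proposition.
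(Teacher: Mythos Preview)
Your argument is correct. You specialize Proposition~\ref{prop:NeDependQueDesJets} to $V=E$ and supply the one missing piece, namely that ${\mathfrak j}^k(E\times M)=j^k(M)$, via a direct bump-function construction; this is exactly the surjectivity statement contained in Lemma~\ref{lem:existsSection}.

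The paper organizes the proof slightly differently. Rather than invoking the general proposition as a black box, it observes that in the case $V=E$ one can use the \emph{global} smooth section ${\mathfrak s}^k$ of ${\mathfrak j}^k$ produced by Lemma~\ref{lem:existsSection} and simply set $\tilde c := c\circ {\mathfrak s}^k$; smoothness of $\tilde c$ is then immediate by composition, and the factorization $c = \tilde c\circ {\mathfrak j}^k$ holds because $c$ is constant along the fibres of ${\mathfrak j}^k$. Your route goes through the general machinery of the proposition (which internally uses \emph{local} sections ${\mathfrak t}^k$ through each point), whereas the paper's argument for the corollary is a single-line shortcut exploiting the existence of a global section. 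Both yield the same $\tilde c$; the paper's version is more explicit and avoids re-entering the proof of the proposition, while yours has the advantage of being a pure specialization with no new construction beyond surjectivity.
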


Before establishing these results, we shall need several lemmas.

\begin{lem}\label{lem:existsSection}
The vector bundle morphism $ {\mathfrak j}^k$ described in (\ref{eq:assignJet}) is surjective and admits a smooth section~${\mathfrak s}^k $.\end{lem}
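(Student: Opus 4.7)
My plan is to construct a global smooth section ${\mathfrak s}^k$ explicitly; surjectivity of ${\mathfrak j}^k$ will then be immediate from ${\mathfrak j}^k\circ {\mathfrak s}^k = \id_{J^k(M)}$. The core idea is that locally on $M$ a $k$-jet is nothing but a finite tuple of Taylor coefficients, and the Taylor polynomial built from these coefficients provides a canonical local polynomial representative; the only real work is gluing these local representatives into a global smooth assignment.

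First I would work in a single chart $(U_\alpha,\phi_\alpha)$ with $\phi_\alpha: U_\alpha \to \Omega_\alpha\subset\mathbb{R}^d$. Over such a chart, $J^k(M)$ trivializes as $\Omega_\alpha\times \mathbb{R}^{N_k}$ via the Taylor coefficient map, and each jet $j\in J^k_x(M)$ with $x\in U_\alpha$ admits the Taylor polynomial
\[
P_\alpha(j)(y) \;=\; \sum_{|\beta|\leq k} \frac{a_\beta^\alpha(j)}{\beta!}\bigl(\phi_\alpha(y)-\phi_\alpha(x)\bigr)^\beta, \qquad y\in U_\alpha,
\]
whose $k$-jet at $x$ equals $j$ by construction. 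The assignment $j\mapsto P_\alpha(j)\in C^\infty(U_\alpha)$ is affine in the Taylor coefficients $a_\beta^\alpha(j)$, which are smooth fiber coordinates on $J^k(M)\vert_{U_\alpha}$, hence it is smooth.

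Next I would globalize by a partition of unity argument. Choose a locally finite cover $\{U_\alpha\}$ of $M$ by chart domains, a subordinate partition of unity $\{\psi_\alpha\}$, and cutoffs $\chi_\alpha\in\mathcal{D}(U_\alpha)$ satisfying $\chi_\alpha\equiv 1$ on a neighborhood of $\supp\psi_\alpha$. Setting $\widetilde{P}_\alpha(j) := \chi_\alpha\cdot P_\alpha(j)$, extended by zero off $U_\alpha$, I would define
\[
{\mathfrak s}^k(j) \;:=\; \Bigl(\sum_\alpha \psi_\alpha(\pi(j))\,\widetilde{P}_\alpha(j),\ \pi(j)\Bigr) \;\in\; E\times M.
\]
The section property is then verified pointwise: for $j\in J^k_x(M)$, every index $\alpha$ with $\psi_\alpha(x)\neq 0$ satisfies $\chi_\alpha\equiv 1$ near $x$, so the $k$-jet at $x$ of $\widetilde{P}_\alpha(j)$ coincides with the $k$-jet at $x$ of $P_\alpha(j)$, which is $j$; since $\sum_\alpha \psi_\alpha(x)=1$, linearity of the jet map gives $j^k_x({\mathfrak s}^k(j)_1)=j$.

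The step I expect to require the most care is verifying smoothness of ${\mathfrak s}^k$ as a map $J^k(M)\to E\times M$ with $E=C^\infty(M)$ in its Fr\'echet topology. The sum defining the first component is locally finite in $\pi(j)$, and each term is smooth in $j$ because the Taylor coefficients $a_\beta^\alpha$ are smooth fiber coordinates and multiplication by the fixed cutoff $\chi_\alpha$ is continuous $C^\infty(U_\alpha)\to C^\infty(M)$; controlling each seminorm $\pi_{m,K}$ then reduces to bounding finitely many terms, each of which depends smoothly (indeed polynomially) on a finite collection of fiber coordinates and smoothly on $\pi(j)$. The smoothness of the transitions between trivializations of $J^k(M)$ ensures that this description is coordinate independent.
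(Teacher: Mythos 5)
Your proposal is correct and follows essentially the same route as the paper: local Taylor-polynomial sections over charts, extension to global functions by a fixed cutoff, gluing via a partition of unity, and the observation that ${\mathfrak j}^k$, being a vector bundle morphism, commutes with multiplication by the partition functions so that the glued map is still a section. The only cosmetic difference is that you separate the partition of unity $\psi_\alpha$ from the cutoffs $\chi_\alpha$, whereas the paper builds the cutoff into each local section before summing against the partition of unity; both arguments verify the section property and smoothness in the same way.
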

\begin{proof}
The section ${\mathfrak s}^k $, when it exists, being by construction a right inverse of $ {\mathfrak j}^k$, the latter is surjective. It suffices therefore to prove the existence of ${\mathfrak s}^k $.

We first prove that the lemma holds true for $ M$ an open subset $V $ of $ {\mathbb R}^n$. In that case, 
 the bundle of $k$-jets $j^k_x (V)$ is isomorphic to the trivial bundle over $V$ with typical fiber the space of polynomials of degree less or equal to $k$.  There is an obvious candidate for the section of  $ {\mathfrak j}^k$: it consists in mapping $\alpha_x \in j^k_x (V)$ to the unique polynomial of degree $k$ whose $k$-jet at $x \in V$ is $\alpha$. The henceforth obtained assignment, that we denote by ${\mathfrak s}^k_{V} $, is a smooth vector bundle morphism from $j^k (V) \to V$ to the trivial bundle $C^\infty(V) \times V \to V$.  It is by construction a section of $ {\mathfrak j}^k$.

We now go back to the general case of an arbitrary manifold $M$. For every point $x \in M$, choose $V_x$ a coordinate neighbourhood  and let $\chi$ be a smooth function with compact support on $V_x$ which is identically equal to $1$ in a neighbourhood $V_x' \subset V_x$ of $x$. Since $V_x$ is a coordinate neighbourhood, it can be identified with an open subset of ${\mathbb R}^n $, which allows to consider
 $${\mathfrak s}^k_{V_x} : j^k(V_x) \mapsto C^\infty(V_x) \times V_x $$
as in the previous paragraph. We can then consider the composition of vector bundle morphisms over $V_x$:
 $$ \xymatrix{ j^k(M) |_{V_x} \simeq j^k(V_x)  \ar[r]^{{\mathfrak s}^k_{V_x}  \, \, \,} & C^\infty(V_x) \times V_x  
 \ar[r]^{  m_\chi \times id} &  E \times V_x  }$$
where $ \left. j^k(M) \right|_{V_x} \simeq j^k(V_x) $ is the obvious identification of the $k$-jet bundle of $V_x$
to the restriction to $V_x$ of the $k$-jet bundle on $M$, 
and where $ m_\chi $ is the smooth linear map from $C^\infty(V_x)$ to $E=C^\infty(M)$ defined by $m_\chi (f) = f \chi$.
Since $\chi=1$ identically equal to $1$ on $V_x'$, the restriction to $V_x' $ of this vector bundle morphism is by construction a section of the restriction of  $ {\mathfrak j}^k $ to $V_x'$.

Since the manifold $M$ is paracompact\footnote{
Since all 
manifolds in this article are assumed to be paracompact, 
for every property $T$ on the set of all open subsets on $M$, 
provided that \emph{(i)} open subsets of subsets satisfying 
$T$ satisfy $T$, and \emph{(ii)} every point of $M$ admits a 
neighbourhood that satisfies property $T$, then there exists an open 
cover $(U_i)_{i \in I}$ made of subsets satisfying the property $T$ 
that admits a partition of unity $(\chi_i)_{i \in I} $ relative to it.}, 
the latter point implies that the manifold $M$ can be covered by open 
subsets $(U_i)_{i \in I}$ such that the restriction of $ {\mathfrak j}^k $ 
to $U_i$ admits a section ${\mathfrak s}^k_i$. Without any loss of 
generality, we can assume the existence of a smooth partition of unity 
$(\chi_i)_{i \in I} $ relative to this open cover. A global smooth section 
of $ {\mathfrak j}^k$ is then given by the explicit formula 
$ {\mathfrak s}^k = \sum_{i \in I} \chi_i \, {\mathfrak s}^k_i $, as 
follows from the obvious computation:
 $$ {\mathfrak j}^k \circ {\mathfrak s}^k = \sum_{i \in I} \chi_i \,  
  {\mathfrak j}^k \circ {\mathfrak s}^k_i  
= \sum_{i \in I} \chi_i \, {\rm id}_{j^k(M)} = {\rm id}_{j^k(M)}$$
where we used the fact that $\mathfrak{j}^k$ commutes with multiplications
by $\chi_i$ since $\mathfrak{j}^k:E\times M\mapsto J^kM$
is a vector bundle morphism.
 This completes the proof.
\end{proof}

Since $c:V\times M\to \bbK$ is only defined on the
open subset $V\times M$ of $E\times M$, we need
the following refinement of Lemma~\ref{lem:existsSection}
where the local sections ${\mathfrak t}^k_x$
of $ {\mathfrak j}^k$ are valued in $V\times M$:
\begin{lem}\label{lem:existsSection2}
For every $ (f,x) \in E \times M$, the vector bundle morphism $ {\mathfrak j}^k$ described in (\ref{eq:assignJet}) admits a smooth section ${\mathfrak t}^k $ through\footnote{i.e. such that
 ${\mathfrak t}^k_x \circ {\mathfrak j}^k_x (f)  = f $.} $(f,x)$. 
\end{lem}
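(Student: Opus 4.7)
The plan is to obtain $\mathfrak{t}^k$ by an additive modification of the global section $\mathfrak{s}^k$ furnished by Lemma~\ref{lem:existsSection}. Write $\mathfrak{s}^k(j^k_x(f)) = (g_0, x)$; the obstruction to $\mathfrak{s}^k$ already passing through $(f,x)$ is $\Delta := f - g_0$, which lies in $I_x^{k+1}$ precisely because $\mathfrak{s}^k$ is a section. If I can construct a smooth map $\Phi : M \to E$ satisfying $\Phi(y) \in I_y^{k+1}$ for every $y \in M$ and $\Phi(x) = \Delta$, then setting
\[
\mathfrak{t}^k(\alpha_y) \;:=\; \mathfrak{s}^k(\alpha_y) + (\Phi(y),0)
\]
produces the desired section: the correction has vanishing $k$-jet at every base point, so $\mathfrak{t}^k$ is still a section of $\mathfrak{j}^k$, and over $x$ it sends $j^k_x(f)$ to $(g_0+\Delta, x) = (f,x)$. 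The whole proof thus reduces to building $\Phi$.

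For the construction I would fix a coordinate chart $(U,\psi)$ centered at $x$ and choose nested bump functions $\rho, \chi \in \mathcal{D}(M)$ with $\rho(x) = 1$, $\supp\rho \Subset U'$, $\supp\chi \subset U$, and $\chi \equiv 1$ on $U'$. For $y \in U$, let $P_y$ denote the order-$k$ Taylor polynomial of $\Delta$ at $y$ computed in the chart coordinates. Define
\[
\Phi(y) \;:=\; \rho(y)\bigl(\Delta - \chi \cdot P_y\bigr),
\]
extending by zero for $y$ outside the chart. Joint smoothness of $(y,z) \mapsto \Phi(y)(z)$ on $M \times M$ then gives Bastiani smoothness of $y \mapsto \Phi(y) \in C^\infty(M)$. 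Two checks then suffice: for $y \in \supp\rho \subset U'$ the cutoff $\chi$ is identically $1$ on a neighborhood of $y$, so $j^k_y(\chi P_y) = j^k_y(P_y) = j^k_y(\Delta)$ and hence $j^k_y(\Phi(y)) = 0$; and at $y = x$ the polynomial $P_x$ vanishes identically because $\Delta \in I_x^{k+1}$, so $\Phi(x) = \rho(x)\,\Delta = \Delta$ exactly. Outside $\supp\rho$ the function $\Phi(y)$ is identically zero, so the jet condition is trivial there.

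The main obstacle is arranging the cutoffs so that the jet condition $\Phi(y) \in I_y^{k+1}$ holds at \emph{every} $y$, not just at $x$: a naive choice such as $\Phi(y) := \rho(y)\Delta$ would achieve $\Phi(x) = \Delta$ but would fail the jet condition at $y \neq x$, because $\Delta$ has nonzero $k$-jet there in general. Subtracting the locally-defined $\chi\cdot P_y$ corrects this, and the nested supports $\supp\rho \subset U' \subset \{\chi \equiv 1\}$ are exactly what is needed for the identity $j^k_y(\chi P_y) = j^k_y(P_y)$ to hold on the support of $\rho$. A secondary subtlety is that $\Delta = f - g_0$ is not assumed to be compactly supported, which forces the identity $\Phi(x) = \Delta$ to be secured via the vanishing $P_x \equiv 0$ rather than by any cutoff applied directly to $\Delta$.
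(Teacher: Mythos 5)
Your proposal is correct, and its skeleton is the same as the paper's: modify the global section ${\mathfrak s}^k$ of Lemma~\ref{lem:existsSection} by adding a correction which has vanishing $k$-jet at every base point and which equals $\Delta = f - {\mathfrak s}^k_x(j^k_x f)$ at $x$. Where you differ is in how that correction is produced. The paper simply takes $\Phi(y) = f - {\mathfrak s}^k_y\circ j^k_y(f)$, i.e.\ it reuses ${\mathfrak s}^k$ itself composed with the smooth jet prolongation $y \mapsto j^k_y(f)$; with this choice the three facts you have to work for --- $j^k_y(\Phi(y)) = 0$ for all $y$ (because ${\mathfrak s}^k$ is a section), $\Phi(x) = \Delta$ (by definition of $\Delta$), and smoothness of the resulting ${\mathfrak t}^k$ (a composition of smooth maps and fiberwise addition) --- are all one-line verifications, with no chart, cutoff, Taylor polynomial or function-space argument. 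Your hand-built $\Phi(y) = \rho(y)\bigl(\Delta - \chi P_y\bigr)$ also does the job: the nested supports give $j^k_y(\chi P_y) = j^k_y(P_y) = j^k_y(\Delta)$ for $y \in \supp\rho$, the identity $P_x \equiv 0$ (valid since $\Delta \in I_x^{k+1}$) gives $\Phi(x) = \Delta$, and outside $\supp\rho$ the correction vanishes. The one step you assert rather than prove is the passage from joint smoothness of $(y,z)\mapsto \Phi(y)(z)$ to Bastiani smoothness of $y\mapsto\Phi(y)\in C^\infty(M)$; this is the exponential law $C^\infty(M\times M)\cong C^\infty(M,C^\infty(M))$, which is legitimate here because the source is finite-dimensional and $C^\infty(M)$ is Fr\'echet (so convenient and Bastiani smoothness agree), but it deserves a citation, and it is exactly the kind of machinery the paper's choice of correction avoids. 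The only extra feature your construction buys is that the correction is supported in a prescribed neighbourhood of $x$, which the lemma does not ask for.
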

\begin{proof}
Notice that Lemma~\ref{lem:existsSection2} can be derived from 
Lemma \ref{lem:existsSection}
for any vector bundle morphism over the identity of $M$. 
A careful check shows that the arguments below are absolutely general and indeed show that for any two vector bundles $E_1,E_2$ over $M$, 
any vector bundle morphism $E_1\mapsto E_2$ over $M$ that admits a section 
is a submersion, and admits a section through every point of $E_1$.

We prefer to do it, however, in our particular setting -- since one 
of the bundles is infinite dimensional and requires careful attention.

Let $ {\mathfrak s}^k$ be a section of $ {\mathfrak j}^k $ as in 
Lemma \ref{lem:existsSection}.
Consider the smooth map defined
at all point  $ y \in M$ by
$$ \begin{array}{rrcl} {\mathfrak t}^k_y : 
  & j^k_y(M) &  \to &  (E \times  M)_y \simeq E, \\
 & \beta & \mapsto & {\mathfrak s}^k_y (\beta)+ 
( f - {\mathfrak s}^k_y \circ j^k_y(f) )
\end{array} $$
This map is smooth by construction. 
It is again a section ${\mathfrak j}^k$, as follows from 
the following computation, valid for all $y \in M, \beta \in j^k_y(M)$:
\begin{eqnarray*}  {\mathfrak j}^k_y \circ {\mathfrak t}^k_y (\beta) 
& = & {\mathfrak j}^k_y \big({\mathfrak s}^k_y (\beta)+   
( f - {\mathfrak s}^k_y \circ j^k_y(f) ) \big) \\
& = & {\mathfrak j}^k_y  \circ {\mathfrak s}^k_y (\beta)+  
{\mathfrak j}^k_y(f) - {\mathfrak j}^k_y \circ
{\mathfrak s}^k_y \circ j^k_y(f)
 \\
&=& \beta +   {\mathfrak j}^k_y (f) -  {\mathfrak j}^k_y (f) = \beta.
\end{eqnarray*}
Then the section ${\mathfrak t}^k $ above satisfies by construction:
$$ {\mathfrak t}^k_x \circ  {\mathfrak j}^k_x(f) = 
{\mathfrak s}^k_x \circ 
j^k_x(f)  + f - {\mathfrak s}^k_x \circ j^k_x(f) = f.$$
 This completes the proof.
\end{proof}

Lemma~\ref{lem:existsSection2} has the following immediate consequence.

\begin{lem}\label{lem:JIsSubmersion}
The vector bundle morphism $ {\mathfrak j}^k$ described in (\ref{eq:assignJet}) is  a submersion.
\end{lem}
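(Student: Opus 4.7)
The plan is to derive this immediately from Lemma \ref{lem:existsSection2}, interpreting the existence of a smooth section through every point as the splitting surjectivity of the differential, which is precisely submersivity in the Bastiani setting.

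First I would fix an arbitrary point $(f_0,x_0) \in E \times M$ and invoke Lemma \ref{lem:existsSection2} to obtain a smooth section $\mathfrak{t}^k : J^k(M) \to E \times M$ of $\mathfrak{j}^k$ with $\mathfrak{t}^k_{x_0}\bigl(\mathfrak{j}^k_{x_0}(f_0)\bigr) = f_0$. Note that here the codomain $E \times M$ is trivial-bundle-like with the infinite-dimensional Fr\'echet factor $E = C^\infty(M)$, so Bastiani smoothness is the relevant notion and all of the $C^k$ calculus from Section~2 applies.

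Next, applying the Bastiani chain rule (item~3 of Proposition \ref{Cku-prop}) to the identity $\mathfrak{j}^k \circ \mathfrak{t}^k = \mathrm{id}_{J^k(M)}$ at the point $\mathfrak{j}^k(f_0,x_0) = j^k_{x_0}(f_0)$ yields
\begin{equation*}
D\mathfrak{j}^k_{(f_0,x_0)} \circ D\mathfrak{t}^k_{j^k_{x_0}(f_0)} \;=\; \mathrm{id}_{T_{j^k_{x_0}(f_0)} J^k(M)}.
\end{equation*}
Hence the differential $D\mathfrak{j}^k_{(f_0,x_0)}$ admits the continuous linear map $D\mathfrak{t}^k_{j^k_{x_0}(f_0)}$ as a right inverse; in particular it is surjective, and its kernel splits topologically as a direct summand (the complement being the image of $D\mathfrak{t}^k$). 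Since $(f_0,x_0)$ was arbitrary, this establishes that $\mathfrak{j}^k$ is a submersion at every point.

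The only conceptual point worth emphasizing, rather than a real obstacle, is that in the infinite-dimensional category the intrinsic definition of a submersion is exactly the existence of such a splitting right inverse of the differential (equivalently, of a local smooth section through each point); this is precisely what Lemma \ref{lem:existsSection2} delivers, so no additional work beyond the chain rule is required.
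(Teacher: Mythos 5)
Your proof is correct and follows essentially the same route as the paper: take the section $\mathfrak{t}^k$ through the given point provided by Lemma~\ref{lem:existsSection2} and differentiate $\mathfrak{j}^k\circ\mathfrak{t}^k=\mathrm{id}$, so that $D\mathfrak{t}^k$ is a right inverse of $D\mathfrak{j}^k$ and the latter is surjective at every point. Your added remark that this right inverse also provides the topological splitting required by the infinite-dimensional notion of submersion is a harmless (and welcome) elaboration of what the paper leaves implicit.
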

\begin{proof}
For every $ (f,x)$ in $E \times M$, let us choose $ {\mathfrak t}^k $ to be a section through $(f,x)$ as in  
Lemma \ref{lem:existsSection2}. By construction, the differential of  $ {\mathfrak j}^k $
at $(f,x)$ admits the differential of $ {\mathfrak t}^k $ at $ {\mathfrak j}^k (f,x)$ as right inverse, so it is surjective. 
\end{proof}

We can now prove Proposition \ref{prop:NeDependQueDesJets}. 
\begin{proof}
Since the vector bundle morphism $ {\mathfrak j}^k$ described in (\ref{eq:assignJet})
is a submersion by Lemma \ref{lem:JIsSubmersion} and since $V \times M$ is open in $E \times M$, the subset ${\mathfrak j}^k(V \times M) $  is an open subset of $j^k(M)$. This proves the first item in Proposition \ref{prop:NeDependQueDesJets}. 

Let us now prove the second item. Assume that we are given a function 
$c:V \times M \mapsto \bbK$ such that the value $c(f,x)$ 
at an arbitrary $f \in E$ and $x \in M$ depends only the $k$-jet of $f$ at $x$.
The existence of an unique function $\tilde{c}$ from $ j^k(M)$ to 
$\bbK$ making the diagram (\ref{eq:diagramme}) commute
is simply a set-theoretic property: the difficulty is to show that this 
function $\tilde{c} $ is smooth.

When $V=E$ (i.e. under the assumptions of Corollary \ref{coro:NeDependQueDesJets}), the smoothness of $\tilde{c} $ 
follows directly from Lemma \ref{lem:existsSection}, which implies that the commutative diagram (\ref{eq:diagrammeCoro}) can be completed to 
$$  \xymatrix{ E \times M \ar@/^/[d]^{{\mathfrak j}^k}  \ar[r]^{c}& 
{\bbK}.\\ 
j^k(M) \ar@/^/[u]^{{\mathfrak s}^k} \ar@{.>}[ur]^{\tilde{c}}  & } $$
which amounts to say that the following relation holds
 \begin{equation}\label{eq:RetrouverTilde_c} \tilde{c}  = c \circ {\mathfrak s}^k.\end{equation}
The latter formula and the smoothness of ${\mathfrak s}^k $ implies that, when $c$ is assumed to be a smooth function, so is the function $ \tilde{c} $ by composition. This proves Corollary \ref{coro:NeDependQueDesJets}.

For the general case, we have to choose, for all 
$\alpha \in  {\mathfrak j}^k(V \times M) $ a section 
$ {\mathfrak t}^k$ of $ {\mathfrak j}^k$ such that 
$ {\mathfrak t}^k(\alpha) \in V \times M$. Such a section 
$ {\mathfrak t}^k$ always exists by Lemma \ref{lem:existsSection2}. 
Since ${\mathfrak t}^k$ is smooth, there exists a neighbourhood $W_\alpha$ of $\alpha $ in $ {\mathfrak j}^k(V \times M) $ 
on which ${\mathfrak t}^k$ takes values in the domain of definition 
$ V \times  M $ of $c$,
which implies that the commutative diagram (\ref{eq:diagramme}) can be completed to 
$$  \xymatrix{ 
(V\times M)\cap ({\mathfrak j}^k)^{-1}(W_\alpha) \ar@/^/[d]^{{\mathfrak
j}^k}  \ar[r]^{c}& {\bbK}.\\ 
W_\alpha \ar@/^/[u]^{{\mathfrak t}^k} \ar@{.>}[ur]^{\tilde{c}}  & } $$
 In turn, the  commutativity of this diagram gives the explicit description of $ {\tilde{c}} $ through the following formula, valid on $W_\alpha$:  
 \begin{equation}
\label{eq:RetrouverTilde_c2} 
\tilde{c} = c \circ {\mathfrak t}^k.
\end{equation}
Formula (\ref{eq:RetrouverTilde_c2}) and the smoothness of 
${\mathfrak t}^k $ imply that, when $c$ is assumed to be a smooth 
function, so is,  by composition, the restriction to $W_\alpha$ of 
the function $ \tilde{c} $. Since every 
$\alpha \in {\mathfrak j}^k(V \times M)$ admits a neighbourhood on which 
the restriction of $\tilde{c}$ is smooth, the function $\tilde{c}$ 
is a smooth function. This completes the proof. 
\end{proof}

\subsection{Properties of $F^{(2)}_\varphi$}

We first show that the two assumptions of our theorem
are equivalent to some strong assumptions on the second derivative of $F$:
\begin{lem}\label{hardLemma}
Let $U$ be an open subset of $C^\infty(M)$ and 
$F:U\to \bbK$ be Bastiani smooth. Assume that
for every $\varphi\in U$, the differential 
$DF_\varphi = F^{(1)}_\varphi $ of $F$ at $\varphi$ has no wave front set, i.e.
$\WF(F^{(1)}_\varphi)=\emptyset$. Then the two following properties are equivalent:
\begin{enumerate}
\item $F$ is additive and the map $\varphi\mapsto \nabla F_\varphi$ 
  is Bastiani smooth from $U$ to $\calD(M)$.
\item For every $\varphi_0\in U$, 
   there is a neighborhood $V$ of $\varphi_0$, a compact $K\subset M$ 
   and a finite family of Bastiani smooth maps 
   $f_\alpha:V\to \calD(K)$ with 
   $\vert\alpha\vert\leqslant k$, such that in any system of 
   local coordinates $(x,y)$ on $M^2$:
   \begin{eqnarray}
     F^{(2)}_\varphi(x,y) &=& \sum_{\vert\alpha\vert\leqslant k} 
     f_\alpha(\varphi)(x)
        \partial^\alpha_y \delta(x-y),
     \label{repF2}
   \end{eqnarray}
   for every $\varphi\in V$.
\end{enumerate}
\end{lem}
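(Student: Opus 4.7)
The proof splits into two directions, and I would dispose of $(2)\Rightarrow(1)$ first as it is essentially computational. Assuming the representation~\eqref{repF2}, the support of $F^{(2)}_\varphi$ lies in the diagonal $D_2$, so Proposition~\ref{localsecondderivativediagonal} gives the additivity of $F$. Pairing $F^{(2)}_\varphi$ against $\psi_1\otimes\psi_2$ and integrating by parts yields
\[
D(\nabla F)_\varphi(\psi)(x) \;=\; \sum_{|\alpha|\le k}(-1)^{|\alpha|}\, f_\alpha(\varphi)(x)\,\partial^\alpha\psi(x),
\]
which is a differential operator in $\psi$ with coefficients in $\calD(K)$ that depend Bastiani smoothly on $\varphi$. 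Joint continuity of $(\varphi,\psi)\mapsto D(\nabla F)_\varphi(\psi)$ into $\calD(K)$ follows from the bilinear continuity of multiplication $\calD(K)\times C^\infty(M)\to\calD(K)$, and iterating this argument through higher Bastiani derivatives shows that $\varphi\mapsto\nabla F_\varphi$ is Bastiani smooth into $\calD(M)$.

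The substantive direction is $(1)\Rightarrow(2)$. The plan proceeds in five steps: \emph{(a)} combine additivity with Proposition~\ref{localsecondderivativediagonal} to obtain $\supp F^{(2)}_\varphi\subset D_2$; \emph{(b)} apply Proposition~\ref{bounded-order} to find a neighborhood $V$ of $\varphi_0$, a compact $K\subset M$, and an integer $k$ such that, uniformly for $\varphi\in V$, $F^{(2)}_\varphi$ is supported in $K^2$ and has order at most $k$; \emph{(c)} in any coordinate chart, after the linear change of variable $(x,y)\mapsto(x,y-x)$ flattening $D_2$ into a coordinate subspace, invoke the classical structure theorem for distributions supported on a submanifold to get a unique representation
\[
F^{(2)}_\varphi \;=\; \sum_{|\alpha|\le k} f_\alpha(\varphi)\,\partial_y^\alpha\delta(x-y),
\]
with each $f_\alpha(\varphi)\in\calE'(M)$ supported in $K$.

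The heart of the proof consists in upgrading the distributional coefficients $f_\alpha(\varphi)$ to elements of $\calD(K)$ that depend Bastiani smoothly on $\varphi$. Pairing the representation against $\psi_1\otimes\psi_2$ and separating variables produces the distributional identity
\[
D(\nabla F)_\varphi(\psi_2) \;=\; \sum_{|\alpha|\le k}(-1)^{|\alpha|}\, f_\alpha(\varphi)\,\partial^\alpha\psi_2,
\]
whose left-hand side, by hypothesis, is a smooth compactly supported function on $M$. Choosing $\psi_2$ to coincide with the monomial $(y-x_0)^\beta/\beta!$ on a small coordinate ball around an arbitrary point $x_0\in K$ and inducting on $|\beta|$, one solves this triangular system to express $f_\alpha(\varphi)$ near $x_0$ as an explicit finite $\bbK$-linear combination of the smooth functions $D(\nabla F)_\varphi(\psi_{2,\beta})$. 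Locally each $f_\alpha(\varphi)$ is therefore smooth; by the uniqueness in step~\emph{(c)} these local expressions coincide with the global distribution $f_\alpha(\varphi)$, so $f_\alpha(\varphi)\in\calD(K)$. The very same extraction formulas, now regarded as functions of $\varphi$, transport Bastiani smoothness of $\nabla F$ into Bastiani smoothness of each $\varphi\mapsto f_\alpha(\varphi)$ valued in $\calD(K)$ once the local pieces are patched by a partition of unity subordinate to a finite chart cover of $K$.

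The main obstacle I anticipate is precisely this upgrade step: the structure theorem alone only supplies $f_\alpha(\varphi)$ as a distribution, and the hypothesis on $\nabla F$ must be channelled through an inductive choice of auxiliary test functions $\psi_{2,\beta}$ in such a way that the resulting local formulas deliver \emph{simultaneously} global smoothness on $M$, uniform compact support in the fixed $K$, and Bastiani smooth dependence on $\varphi$ in the fine topology of $\calD(K)$, rather than the coarse topology of $\calE'(M)$ that comes for free from the Bastiani smoothness of $\varphi\mapsto F^{(2)}_\varphi$ already established in Proposition~\ref{Dksmooth}.
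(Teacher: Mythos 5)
Your implication (1)$\Rightarrow$(2) is essentially the paper's own argument: additivity plus Proposition~\ref{localsecondderivativediagonal} localizes $F^{(2)}_\varphi$ on the diagonal, Propositions~\ref{localcompactsupport} and \ref{bounded-order} give the uniform compact support and uniform order on a neighborhood $V$, Schwartz' structure theorem gives the representation with distributional coefficients, and the hypothesis on $\nabla F$ is then used to upgrade the coefficients. The only real divergence is the extraction device: you test against cut-off monomials $(y-x_0)^\beta/\beta!$ and solve a triangular system locally, patching with a partition of unity, while the paper feeds $g=e^{-i\langle \xi, x\rangle}$ into the Bastiani-smooth map $(\varphi,g)\mapsto\int_M F^{(2)}_\varphi(x,\cdot)g(x)dx$ and recovers $f_\alpha(\varphi)$ by differentiating in $\xi$ at $\xi=0$; the two devices are interchangeable and buy the same thing (smoothness of each $f_\alpha$, membership in $\calD(K)$, and Bastiani smoothness in $\varphi$ in the topology of $\calD(K)$, since the extraction formulas are continuous linear operations applied to $\varphi\mapsto D\nabla F_\varphi[\psi_{2,\beta}]$). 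One step you pass over quickly is the identification $\langle F^{(2)}_\varphi, g\otimes h\rangle=\int_M D\nabla F_\varphi[g](y)h(y)\,dy$, which requires exchanging $d/dt$ with $\int_M$; the paper justifies this with the continuity of $(t,x)\mapsto \tfrac{d}{dt}\nabla F_{\varphi+tg}(x)$ supplied by the hypothesis that $\nabla F$ is Bastiani smooth into $\calD(K)$. That is a gloss rather than a gap.

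The genuine gap is in your (2)$\Rightarrow$(1). You begin by ``integrating by parts'' to write $D(\nabla F)_\varphi(\psi)=\sum_{|\alpha|\le k}(-1)^{|\alpha|}f_\alpha(\varphi)\,\partial^\alpha\psi$ and then argue continuity of this expression; but in this direction the Bastiani differentiability of the $\calD(M)$-valued map $\varphi\mapsto\nabla F_\varphi$ is precisely what must be proved, so you cannot start from its derivative. Knowledge of the scalar kernel $F^{(2)}_\varphi$ only controls $t\mapsto\int_M\nabla F_{\varphi+t\psi}\,\psi_1$ for each fixed test function $\psi_1$, i.e. weak differentiability of the increments of $\nabla F$, not differentiability in the topology of $\calD(K)$, and ``iterating through higher derivatives'' cannot create the missing first step. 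The paper closes this by reconstructing the increment itself with the fundamental theorem of calculus: for $\varphi+\psi$ in a convex neighborhood $V$,
\[
\nabla F_{\varphi+\psi}(x)=\nabla F_\varphi(x)+\sum_{|\alpha|\le k}(-1)^{|\alpha|}\,\partial^\alpha\psi(x)\int_0^1 f_\alpha(\varphi+s\psi)(x)\,ds,
\]
and then deduces Bastiani smoothness of $\psi\mapsto\nabla F_{\varphi+\psi}$ from the assumed smoothness of the $f_\alpha$, the stability of smoothness under the $s$-integration (Lemma~\ref{trivialLemm}), and the continuity of the product $\calD(K)\times C^\infty(M)\to\calD(K)$ (Lemma~\ref{lemme-produit}). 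Your formula for $D\nabla F$ is the correct end product, but some such reconstruction of $\nabla F_{\varphi+\psi}-\nabla F_\varphi$ from the $f_\alpha$ is needed before $\nabla F$ may be differentiated at all; the additivity half of item 1 is indeed immediate from Proposition~\ref{localsecondderivativediagonal}, as you say.
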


In particular, both conditions imply that $D^2F_\varphi$ is represented by a 
distribution $F^{(2)}_\varphi$ whose wave front set is the conormal bundle of 
the diagonal in $M^2$~[\onlinecite[p.~32]{Viet-16}].

In the sequel, we shall often use the following simple lemma~:
\begin{lem}
Let $E$, $F$ and $G$ be locally convex spaces.
If $f:E\to F$ is Bastiani smooth
and $\ell:F\to G$ is linear and continuous, then
$\ell\circ f:E\to G$ is Bastiani smooth
and $D^k(\ell\circ f) = \ell\circ D^k f$.
\end{lem}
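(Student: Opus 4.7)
The plan is to prove by induction on $k$ that $D^k(\ell\circ f)$ exists as a Gâteaux derivative and equals $\ell\circ D^k f$, and then invoke the continuity of $\ell$ to get continuity of $D^k(\ell\circ f)$ as a map $U\times E^k\to G$.

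For the base case $k=1$, fix $x\in U$ and $v\in E$. Since $\ell$ is linear,
\begin{eqnarray*}
\frac{(\ell\circ f)(x+tv)-(\ell\circ f)(x)}{t} &=& \ell\!\left(\frac{f(x+tv)-f(x)}{t}\right).
\end{eqnarray*}
As $t\to 0$, the argument of $\ell$ on the right converges to $Df_x(v)\in F$ by Bastiani differentiability of $f$, and since $\ell$ is continuous the limit passes inside, giving $D(\ell\circ f)_x(v)=\ell(Df_x(v))$. Thus $D(\ell\circ f)=\ell\circ Df$ as maps on $U\times E$, and this map is continuous as the composition of the continuous map $Df:U\times E\to F$ with the continuous linear map $\ell:F\to G$. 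Hence $\ell\circ f\in C^1(U,G)$.

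For the inductive step, assume the statement for $k-1$. By definition of the $k$-th Gâteaux differential (Def.~\ref{differ-def}), one must show the existence of
\begin{eqnarray*}
\frac{\partial^k (\ell\circ f)(x+t_1v_1+\dots+t_k v_k)}{\partial t_1\cdots\partial t_k}\bigg|_{\mathbf{t}=0}.
\end{eqnarray*}
Applying the base-case argument iteratively one variable at a time, linearity and continuity of $\ell$ allow to pull $\ell$ outside each partial derivative, yielding
\begin{eqnarray*}
D^k(\ell\circ f)_x(v_1,\dots,v_k) &=& \ell\!\left(D^k f_x(v_1,\dots,v_k)\right).
\end{eqnarray*}
Equivalently, $D^k(\ell\circ f)=\ell\circ D^k f$ as maps on $U\times E^k$, so $D^k(\ell\circ f)$ is continuous as a composition of continuous maps. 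Thus $\ell\circ f\in C^k(U,G)$ for every $k$, which by definition means $\ell\circ f$ is Bastiani smooth.

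There is no real obstacle: the only subtlety is the commutation of $\ell$ with the limits defining the successive Gâteaux derivatives, which is immediate from the linearity and (sequential) continuity of $\ell$. Note in particular that the proof does not require any completeness hypothesis on $F$ or $G$, nor any further assumption on $\ell$ beyond continuity and linearity.
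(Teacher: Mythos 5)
Your proof is correct, but it takes a different route from the paper's. The paper disposes of this lemma in three lines by citing already-established general facts: a continuous linear map is Bastiani smooth, the composition of two Bastiani smooth maps is Bastiani smooth, and the chain rule then yields the formula for the derivatives (in the general chain rule for $D^k(\ell\circ f)$ only the single term $\ell\circ D^kf$ survives because all higher differentials of the linear map $\ell$ vanish). You instead give a direct, self-contained inductive computation that pulls $\ell$ through each difference quotient using linearity and (sequential) continuity, and then obtains continuity of $D^k(\ell\circ f)=\ell\circ D^kf$ on $U\times E^k$ by composition. Your approach has the advantage of not invoking the chain-rule machinery for higher-order Bastiani derivatives at all; the paper's has the advantage of brevity given results already in place (Proposition~\ref{Cku-prop}). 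The only place where your argument is slightly informal is the inductive step: to pull $\ell$ out of the $j$-th iterated partial derivative you implicitly use that the inner iterated partials of $t\mapsto f(x+t_1v_1+\dots+t_kv_k)$ exist for all $\mathbf{t}$ in a neighbourhood of $0$, not just at $\mathbf{t}=0$; this is guaranteed because $f$ is Bastiani smooth at every point of the open set $U$, so the identity $\partial_{t_j}(\ell\circ h)=\ell\circ\partial_{t_j}h$ holds throughout that neighbourhood and the induction goes through. Your closing remark that no completeness hypothesis is needed is also correct.
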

\begin{proof}
This is a consequence of three facts:
the map $\ell$ is Bastiani smooth because it is linear and
continuous, $\ell\circ f$ is Bastiani smooth because it is
the composition of two Bastiani smooth maps and
the chain rule. 
\end{proof}

We also need the following lemma in the proof of Lemma \ref{hardLemma}:
\begin{lem}\label{trivialLemm}
Let $U$ be a convex open subset of $E=C^\infty(M)$
containing the origin and $F:U\to E$ a
Bastiani smooth map.
Then, $G:U\to E$ defined by
$G(\varphi)=\int_0^1 F(s\varphi)ds$ is Bastiani smooth.
\end{lem}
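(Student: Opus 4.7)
My plan is to show that $G$ is Bastiani smooth by establishing, inductively, the formula
\begin{eqnarray*}
D^kG_\varphi(\psi_1,\dots,\psi_k) &=& \int_0^1 s^k\, D^kF_{s\varphi}(\psi_1,\dots,\psi_k)\, ds,
\end{eqnarray*}
and verifying that each side is jointly continuous in $(\varphi,\psi_1,\dots,\psi_k)\in U\times E^k$. The first step is to check that $G(\varphi)$ itself is well defined as an element of $E$. Since $F$ is Bastiani smooth it is continuous, so $s\mapsto F(s\varphi)$ is a continuous compactly supported map from $[0,1]$ into the Fr\'echet space $E=C^\infty(M)$; because $E$ is complete, its weak integral exists in $E$ (see the discussion following Eq.~(\ref{fundthmD})). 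Convexity of $U$ and the hypothesis $0\in U$ guarantee that $s\varphi\in U$ for every $s\in[0,1]$ and every $\varphi\in U$, so the integrand is defined throughout $[0,1]$.

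The next step is to compute the directional derivative. For $\varphi\in U$ and $\psi\in E$, openness of $U$ gives an $\epsilon>0$ such that $\varphi+t\psi\in U$ for $|t|<\epsilon$, and by convexity also $s(\varphi+t\psi)\in U$ for all $s\in[0,1]$. The fundamental theorem of calculus (Eq.~(\ref{fundthmD})) applied to $F$ gives
\begin{eqnarray*}
\frac{G(\varphi+t\psi)-G(\varphi)}{t} &=& \int_0^1\!\!\int_0^1 s\, DF_{s\varphi+rst\psi}(\psi)\, dr\, ds.
\end{eqnarray*}
Continuity of $DF$ on $U\times E$, together with the compactness of $\{(s,r,t)\in[0,1]^2\times[-\epsilon/2,\epsilon/2]\}$ and of the set $\{s\varphi+rst\psi\}$ swept out in $U$, allows me to pass to the limit $t\to 0$ inside the iterated weak integral (this uses that on a compact subset of $U\times E$ the map $DF$ is uniformly continuous against any continuous seminorm on $E$). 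This yields $DG_\varphi(\psi)=\int_0^1 s\, DF_{s\varphi}(\psi)\, ds$ as an identity in $E$.

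To finish the $C^1$ case, I need the joint continuity of $(\varphi,\psi)\mapsto DG_\varphi(\psi)$. Fix $(\varphi_0,\psi_0)\in U\times E$ and a neighborhood $V\times W$ of it on which the closure $\overline{\{s\varphi\telque s\in[0,1],\varphi\in V\}}$ is compact in $U$; joint continuity of $DF$ then gives a uniform bound on any seminorm of $DF_{s\varphi}(\psi)-DF_{s\varphi_0}(\psi_0)$ for $s\in[0,1]$, $(\varphi,\psi)$ near $(\varphi_0,\psi_0)$. Weak-integrating over $s$ preserves these seminorm estimates (since seminorms pass inside a weak integral by duality), proving continuity of $DG$ on $U\times E$, hence $G\in C^1(U,E)$.

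For higher orders I proceed by induction using item~5 of Proposition~\ref{Cku-prop}: since $E$ is metrizable, $G\in C^k$ as soon as $DG:U\to L(E,E)$ is $C^{k-1}$, and by the induction hypothesis applied to $\varphi\mapsto s\, DF_{s\varphi}(\psi)$ (jointly in $s$ and $\psi$) one differentiates under the integral sign $k-1$ more times, obtaining the formula above and its joint continuity in exactly the same way. I expect the main obstacle to be the rigorous justification of swapping the $t\to 0$ limit with the weak integral: this is the point that requires invoking continuity of $DF$ on compact subsets of $U\times E$ together with the fact that estimating a weak integral by a seminorm reduces to estimating the integrand uniformly, which is possible precisely because the parameter $s$ ranges over the compact interval $[0,1]$ and the relevant subset of $U$ is precompact.
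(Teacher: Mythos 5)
Your overall strategy (establish $D^kG_\varphi(\psi_1,\dots,\psi_k)=\int_0^1 s^k D^kF_{s\varphi}(\psi_1,\dots,\psi_k)\,ds$ by differentiating under the integral and then check joint continuity) is the same as the paper's, but the justification of the joint continuity of $DG$ rests on a step that fails. You ask for a neighborhood $V$ of $\varphi_0$ such that $\overline{\{s\varphi \,:\, s\in[0,1],\ \varphi\in V\}}$ is compact in $U$. No such $V$ exists: $E=C^\infty(M)$ is an infinite-dimensional topological vector space, hence not locally compact, so no neighborhood of $\varphi_0$ (and a fortiori no set containing $V=1\cdot V$) has compact closure. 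The uniformity you need is not uniformity over a precompact subset of $U\times E$; it is uniformity in the integration parameter $s$ alone. The correct argument is that $\chi(s,\varphi,\psi)=s\,DF_{s\varphi}(\psi)$ is jointly continuous on $[0,1]\times U\times E$ (composition of continuous maps), and since $[0,1]$ is compact a tube-lemma argument gives, for each continuous seminorm $q$ and $\epsilon>0$, a single neighborhood of $(\varphi_0,\psi_0)$ on which $q\big(\chi(s,\varphi,\psi)-\chi(s,\varphi_0,\psi_0)\big)<\epsilon$ for \emph{all} $s\in[0,1]$ simultaneously; the seminorm then passes under the weak integral. This is precisely what the paper obtains by invoking [\onlinecite[Thm 2.1.5, p.~72]{Hamilton-82}], which asserts that the integral over a compact interval of a jointly continuous Fr\'echet-space-valued map is continuous in the remaining variables. (Your earlier compactness argument for the limit $t\to 0$ is fine, because there $\varphi$ and $\psi$ are fixed and the swept set is the continuous image of a compact parameter cube.)

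The higher-order step is also shakier than it looks. The induction hypothesis of the lemma concerns maps $U\to E$, whereas item~5 of Proposition~\ref{Cku-prop} asks you to show that $DG:U\to L(E,E)$ is $C^{k-1}$, and the integrand $s^k D^kF_{s\varphi}(\psi_1,\dots,\psi_k)$ is not literally of the form $H(s\varphi)$ for a Bastiani smooth $H:U\to E$ unless you carry the directions $\psi_i$ and the factor $s^k$ along explicitly; as written, the inductive appeal does not type-check. The paper avoids this detour: it verifies Definition~\ref{differ-def} directly, computing the iterated G\^ateaux derivatives pointwise in $x\in M$ (differentiation under the $s$-integral being justified by dominated convergence on compact subsets of $M$), which yields the formula for $D^kG$ for every $k$ at once, and then applies the same continuity-of-parametrized-integrals theorem to $s^kD^kF_{s\varphi}(\psi_1,\dots,\psi_k)$ for each $k$. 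If you repair the continuity step as above and run that direct argument for each $k$, your proof goes through without passing through $L(E,E)$.
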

\begin{proof}
The first step is to
define a candidate for the Bastiani differential $D^kG$ by determining
$D^kG(x)$ pointwise in $x\in M$.  
For every $(t_1,\dots,t_k,x)\in [0,1]^k\times M$ and
$(\varphi,\psi_1,\dots,\psi_k)\in U \times E^k$, the function
$(t_1,\dots,t_k,x)\mapsto\int_0^1 ds 
F(s(\varphi+t_1\psi_1+\dots+t_k\psi_k
))(x)$ is smooth in $(t_1,\dots,t_k,x)$
by dominated convergence theorem since $x$ can always be restricted to some
compact subset $K\subset M$ to obtain uniform bounds. 
We can differentiate
in $(t_1,\dots,t_k)$ outside and inside the integral and both differentials
coincide. Therefore, for every $x\in M$, the Bastiani $k$th--differential
$D^kG(x)$ of $G(x)=\int_0^1 ds F(s\varphi(x))$ exists and satisfies the
relation
$D^kG_\varphi(\psi_1,\dots,\psi_k)(x)=\int_0^1 ds s^k
D^kF_{s\varphi}(\psi_1,\dots,\psi_k)(x)$.
Let us show that $D^kG:U\times E^k\mapsto E$ is jointly
continuous in $(\varphi,\psi_1,\dots,\psi_k)$.

We know that the map $\chi:(s,\varphi,\psi_1,\dots,\psi_k)\in [0,1]\times U\times
E^k\mapsto s^k D^kF_{s\varphi}(\psi_1,\dots,\psi_k)\in E$ 
is continuous by joint continuity of $D^kF: U\times E^k\mapsto E$ 
and composition of the continuous maps
$$(s,\varphi,\psi)
\mapsto (s\varphi, \psi)
 \mapsto s^k D^kF_{s\varphi}(\psi),$$
where $\psi=(\psi_1,\dots,\psi_k)$.
Then by [\onlinecite[Thm 2.1.5 p.~72]{Hamilton-82}] applied to the function
$\chi$, the
integrated map $(\varphi,\psi_1,\dots,\psi_k)
\mapsto \int_0^1 ds s^k D^kF_{s\varphi}(\psi_1,\dots,\psi_k)$
is continuous and the proof is complete because continuity
holds true for every $k$.
\end{proof}

Let us now prove Lemma \ref{hardLemma}.

\begin{proof}
First of all, by Proposition~\ref{localsecondderivativediagonal}, $F$ is additive 
iff its second derivative is represented by a distribution supported in the 
diagonal.
We start by proving the direct sense
assuming that $\varphi\in U\mapsto \nabla F_\varphi\in C^\infty(M)$ 
is Bastiani smooth.

We first show that item~1 implies item~2 in
Lemma~\ref{hardLemma}.
Since $F$ is Bastiani smooth for any $\varphi_0\in U$, 
we already know by
Proposition \ref{localcompactsupport}
that there is some neighborhood 
$V$ of $\varphi_0$ on which $F|_V$ has fixed compact support that we denote 
by $K$. Therefore, 
$\nabla F_\varphi$ belongs to $\calD(K)$ for every
$\varphi \in V$ and 
$F^{(2)}_\varphi$ is supported in $K\times K$.
Since $F^{(2)}_\varphi$ is also supported in 
the diagonal of $M^2$ 
by Proposition~\ref{localsecondderivativediagonal},
the support of $F^{(2)}_\varphi$ is contained in the
diagonal of $K\times K$ which can be identified with $K$ itself.

Since $DF_\varphi$ has an empty wavefront set by assumption, 
its singular support is empty and
it can be represented by a \emph{unique} smooth compactly supported
function $\nabla F_\varphi$~[\onlinecite[p.~37]{HormanderI}]
such that
\begin{equation}\label{gradientformula}
\frac{d}{dt}F(\varphi+th)|_{t=0}= DF_\varphi(h)=\int_M \nabla F_\varphi(x) h(x)dx .
\end{equation}
The main step is to represent $F^{(2)}_\varphi$ as the Bastiani 
differential of $\nabla F_\varphi$
by calculating the second derivatives in two different ways.
The Bastiani smoothness of $F$ yields:
\begin{eqnarray*}
D^2F_\varphi(g,h)&=&\frac{d^2}{dt_1dt_2}F(\varphi+t_1h+t_2g)|_{t_1=t_2=0}\\
&=&\frac{d}{dt_2}\left( \frac{d}{dt_1}
F(\varphi+t_1h+t_2g)|_{t_1=0}\right)|_{t_2=0}\\
&=&\frac{d}{dt_2}\left( \int_M \nabla F_{\varphi+t_2g}(x)
h(x)dx\right)|_{t_2=0},
\end{eqnarray*}
where we used the Schwarz lemma and Equation~(\ref{gradientformula}).
To justify switching $\frac{d}{dt_2}$ and integration over $M$,
observe that the map
$\varphi\in U\mapsto \nabla F_{\varphi}\in \calD(M)$
is Bastiani smooth hence $C^1$. It follows by the chain rule that 
$t\mapsto \frac{d}{dt}\nabla F_{\varphi+tg} $ is a $C^0$ map 
valued in $\calD(M)$.
Since $\nabla F_\varphi$ is actually in 
$\calD(K)$ for every $\varphi\in V$ and the topology
induced by $\calD(M)$ on $\calD(K)$ is the usual
Fr\'echet topology of $\calD(K)$, the map
$\nabla F$ is smooth from $V$ to
the Fr\'echet space $\calD(K)$.

Since $\calD(K)$ injects continuously in $(C^0(K),\pi_{0,K})$, 
this implies that $(t,x)\in [-1,1]\times K \mapsto \frac{d}{dt}
\nabla F_{\varphi+tg}(x)\in C^0([-1,1]\times K)$.  
Hence the integrand $\frac{d}{dt_2}\nabla F_{\varphi+t_2g}(x) h(x)$
is in $C^0([-1,1] \times K)$ and 
is bounded on the integration domain. A continuous map $u:t\in [-1,1]\mapsto u(t,.)\in(C^{0}(K),\pi_{0,K})$ corresponds to a map
also denoted by $u\in C^0([-1,1]\times K)$. Indeed for every
convergent sequence $(t_n,x_n)\underset{n\rightarrow \infty}{\rightarrow}(t,x)$ in $[-1,1]\times K$, the simple estimate 
$\vert u(t,x)-u(t_n,x_n) \vert\leqslant  \vert u(t,x)-u(t,x_n) 
\vert+\vert u(t,x_n)-u(t_n,x_n) \vert\leqslant 
\vert u(t,x)-u(t,x_n) \vert+\pi_{0,K}(u(t_n,.)-u(t,.))$ 
shows that $u(t_n,x_n)\underset{n\rightarrow \infty}{\rightarrow} u(t,x)$.

By the dominated convergence theorem, we can differentiate under the 
integral sign:  
\begin{eqnarray*}
D^2F_\varphi(g,h)&=&\frac{d}{dt_2}\left( \int_M \nabla 
F_{\varphi+t_2g}(x) h(x)dx\right)|_{t_2=0}\\
&=&\int_M \left( \frac{d}{dt_2}\nabla F_{\varphi+t_2g}(x)|_{t_2=0}\right)
 h(x)dx.
\end{eqnarray*}
By definition  $\frac{d}{dt_2}\nabla F_{\varphi+t_2g}|_{t_2=0}$ is only
the Bastiani derivative 
\begin{eqnarray}
D\nabla F:(\varphi,g)\in V\times C^\infty(M) \mapsto D\nabla 
F_\varphi[g]\in \mathcal{D}(K),
\label{DnablaF}
\end{eqnarray}
where $D\nabla F$ is a Bastiani smooth map since 
$\nabla F$ is Bastiani smooth.

Note also that by Theorem~\ref{multilinearkernelbundles}, the second derivative
$D^2F_\varphi(g,h)$ can be represented by a map
$\varphi\in U \mapsto F^{(2)}_\varphi\in \mathcal{E}^\prime(M^2) $ 
such that
$\forall (\varphi,g,h)\in U\times C^\infty(M)^2,\,\ D^2F_\varphi(g,h)=
\left\langle F^{(2)}_\varphi,g\otimes h\right\rangle$
which means that $F^{(2)}_\varphi$ is the distributional kernel of the
second derivative $D^2F_\varphi$.
We now arrive at the following equality which identifies two different
representations of the second derivative.
For every $(\varphi,g,h)\in U\times C^\infty(M)^2$
\begin{eqnarray}\label{identitysecondderivatives}
\left\langle F^{(2)}_\varphi,g\otimes h\right\rangle=
\int_M D\nabla F_{\varphi}[g](y) h(y)dy.
\end{eqnarray}
By the same Theorem~\ref{multilinearkernelbundles} and the chain rule, 
$D\nabla F_{\varphi}[g](y)=ev_y D\nabla F_{\varphi}[g]$
is linear continuous in $g\in C^\infty(M)$,
hence there is a distribution, denoted by $D\nabla F_{\varphi}(x,y)$,
such that $\int_MD\nabla F_{\varphi}(x,y)g(x)dx=
D\nabla F_{\varphi}[g](y)$
and $\int_MD\nabla F_{\varphi}(x,y)g(x)dx$ is
in $\calD(K)$ by Eq.~(\ref{DnablaF}).
Since the above identity holds for all $(g,h)\in C^\infty(M)^2$, we have
in the sense of distributions that
$F^{(2)}_\varphi(x,y)=D\nabla F_{\varphi}(x,y)$
where the map
\begin{eqnarray}
(\varphi,g)\in V\times C^\infty(M)\mapsto \int_M 
F^{(2)}_\varphi(x,\cdot)g(x)dx\in \calD(K),
\nonumber\\\label{smoothpushforward}
\end{eqnarray}
is Bastiani smooth. 

It suffices to do the last part of the proof, which is local in nature,
on   $M=\bbR^d$.
We now represent $F^{(2)}_\varphi(x,y)$ as a $C^\infty(M)$-linear 
combination of derivatives of 
Dirac distributions concentrated on the diagonal.
By Proposition \ref{localsecondderivativediagonal}, 
the additive property satisfied by $F$ implies that 
the distribution $F^{(2)}$ associated to the 
second derivative $D^2F$ is supported in the diagonal $D_2\subset M\times M$.
By Proposition \ref{bounded-order}, the kernel $F^{(2)}_\varphi(x,y) \in 
\calE^\prime(M\times M)$ 
has bounded distributional order uniformly in $\varphi\in V$.
Schwartz' theorem on distributions supported on a 
submanifold~[\onlinecite[p.~101]{Schwartz-66}] states 
that in local coordinates, there exists a finite sequence of
distributions $\left(\varphi\in V\mapsto f_\alpha(\varphi,.)\in 
\calD(K)\right)_{\vert\alpha\vert\leqslant k}$
such that $F^{(2)}_\varphi=\sum_{\vert\alpha\vert\leqslant k} 
     f_\alpha(\varphi,x) \partial^\alpha_y \delta(x-y)$.
We denote the distributions $f_\alpha(\varphi)$ by
$f_\alpha(\varphi,x)$ because we shall show that
$\varphi\mapsto f_\alpha(\varphi)$ is Bastiani smooth
from $V$ to $\calD(K)$.

By Equation~(\ref{smoothpushforward}), we know 
 that for every $(\varphi,g)\in V\times C^\infty(M)$
the map from $V\times C^\infty(M)$ to $\mathcal{D}(K)$
defined by
$$(\varphi,g)\mapsto
\int_M F^{(2)}_\varphi(x,\cdot)g(x)dx 
=\sum_{\vert\alpha\vert\leqslant k}(-1)^{\vert \alpha\vert}  
  f_\alpha(\varphi,.)\partial^{\alpha}g(.)$$
is smooth.
Choosing $g$ to be equal to the Fourier oscillatory function 
$e^{-i\langle \xi.x\rangle}$, we obtain
by the chain rule 
the maps that sends $(\varphi,\xi)$ to
\begin{eqnarray*}
\int_M F^{(2)}_\varphi(x,y)
   e^{-i\langle \xi.x\rangle}dy &=&
\int_M\sum_{\vert\alpha\vert\leqslant k}(-1)^{|\alpha|} 
  f_\alpha(\varphi,y)
\\&&
\delta(x-y)\partial^\alpha_y e^{-i\langle \xi.x\rangle}dy\\
&=&\sum_{\vert\alpha\vert\leqslant k}(-1)^{\vert \alpha\vert}  
f_\alpha(\varphi,x)(-i\xi)^{\alpha},
\end{eqnarray*}
is Bastiani smooth. 
Moreover, since the image of the map 
in Eq.~\eqref{smoothpushforward} is in $\calD(K)$ for
every smooth $g$, we obtain that
$\sum_{\vert\alpha\vert\leqslant k}(-1)^{\vert \alpha\vert}  
f_\alpha(\varphi,\cdot)(-i\xi)^{\alpha}$ is in
$\calD(K)$ for every $\xi$. 
This is only possible if $f_\alpha(\varphi)\in \calD(K)$
for every $|\alpha|\leqslant k$.
Therefore $\varphi\mapsto 
f_\alpha(\varphi)=(i\frac{d}{d\xi})^\alpha\int_M F^{(2)}_\varphi(.,y)
e^{-i\langle \xi.y\rangle}dy|_{\xi=0}$ is Bastiani smooth 
from $V$ to $\calD(K)$ and the proof of the direct sense is complete.

Conversely, we want to prove that
if there is a neighborhood $V$ of $\varphi_0$, a compact $K\subset M$ 
and a finite family of smooth maps 
$\varphi\mapsto f_\alpha(\varphi)\in \mathcal{D}(K),
\vert\alpha\vert\leqslant k$ 
such that in any system of local coordinates $(x,y)$ on $M^2$:
$$F^{(2)}_\varphi(x,y)=\sum_{\vert\alpha\vert\leqslant k} 
f_\alpha(\varphi)(x)\partial^\alpha_y\delta(x-y),$$
then
$\varphi\mapsto \nabla F_\varphi\in\mathcal{D}(M)$
is Bastiani smooth.
Without loss of generality, we assume that $V$ is convex.
By the Taylor formula with remainder for Bastiani smooth functions, 
for every $(\varphi,\psi_1,\psi_2)\in V\times C^\infty(M)^2$:
\begin{eqnarray*}
D^2F_{\varphi+s_1\psi_1+s_2\psi_2}(\psi_1,\psi_2) &=&
\partial_{s_1}\partial_{s_2}F(\varphi+s_1\psi_1+s_2\psi_2)
\\&=&
\partial_{s_2}DF_{\varphi+s_1\psi_1+s_2\psi_2}(\psi_1),
\end{eqnarray*}
for $s_1$ and $s_2$ small enough.
It follows by the fundamental theorem
of calculus and by evaluating at $s_1=0$
the previous relation
that
\begin{eqnarray*}
DF_{\varphi+t\psi_2}(\psi_1) &=&
DF_\varphi(\psi_1)+\int_0^t \partial_{s_2} DF_{\varphi+s_2\psi_2}(\psi_1)ds_2 
\\&=&
 DF_\varphi(\psi_1)+\int_0^tD^2F_{\varphi+s\psi_2}(\psi_1,\psi_2)ds,
\end{eqnarray*}
where by assumption 
$DF_\varphi(\psi_1)$ is represented by integration against a smooth function
\begin{eqnarray*}
DF_\varphi(\psi_1) &=& \int_M \nabla F_\varphi(x)\psi_1(x)dx,\\
D^2F_{\varphi+s\psi_2}(\psi_1,\psi_2) &=& \int_{M\times M} 
F^{(2)}_{\varphi+s\psi_2}(x,y)\psi_1(x)\psi_2(y)dxdy,
\end{eqnarray*} 
and 
$F^{(2)}_{\varphi+s\psi}$ is
supported on a subset of the diagonal $D_2\subset M\times M$
that can be identified with $K$.
Hence, for $\psi\in C^\infty(M)$ such that
$\varphi+\psi \in V$:
\begin{eqnarray*}
\nabla F_{\varphi+\psi}(x) &=&
\nabla F_\varphi(x)+\int_0^1\left(\int_{M} 
  F^{(2)}_{\varphi+s\psi}(x,y) \psi(y)dy\right)ds\\
&=& \nabla F_\varphi(x)+
\sum_{\vert\alpha\vert\leqslant k} (-1)^{|\alpha|}
\partial^\alpha\psi(x)
\\&&
\int_0^1
f_{\alpha}(\varphi+s\psi)(x)ds .
\end{eqnarray*}
To show that the map $\chi:V\to \calD(K)$ defined by
$\chi(\psi)=\nabla F_{\varphi+\psi}$ is smooth,
we notice that, according to the last equation,
$\nabla F_{\varphi+\psi}$
is the sum of the constant $\nabla F_\varphi$
and a finite linear combination of products of
$\psi\mapsto \partial^\alpha\psi$ 
by an integral over $s$.
The integrand $f_{\alpha}(\varphi+s\psi)$ is
smooth by assumption.
Therefore, the map $$\psi\in (V-\varphi)\mapsto \int_0^1
f_{\alpha}(\varphi+s\psi)(x)ds\in \mathcal{D}(K) $$  is smooth
by Lemma~\ref{trivialLemm}
and the fact that the topology induced on
$\calD(K)$ by the topology of $C^\infty(M)$
is the standard topology of $\calD(K)$.
The map $\psi\mapsto \partial^\alpha\psi$ is 
smooth because it is linear and continuous.
Finally, the product of the integral by
$\partial^\alpha\psi$ is smooth by a trivial
extension of Lemma~\ref{lemme-produit}.
This completes the proof of Lemma \ref{hardLemma}.
\end{proof}

We are now ready to prove Theorem~\ref{TheoPrincipal}
characterizing local functionals.

\subsection{Proof of Theorem~\ref{TheoPrincipal}}
 \label{sec:proofPrincipal}

Let us start by proving the
converse sense where we assume that $F$ is the integral 
of some local function on jet space. Let $\varphi\in U$ and $V$
some neighborhood of $\varphi$ such that
$F(\varphi+\psi)=\int_Mf(x,j^k_x\psi)dx$ for every $\psi\in V$ where 
$j_x^k\psi$ is the $k$-jet of $\psi$ at $x$ and where $f$ is smooth
and compactly 
supported in the variable $x$ in some fixed compact $K\subset M$. 
Without loss of generality, 
we can restrict the support $K$ of $f$ by a smooth partition of unity and 
assuming that $K$ is contained in some open chart of $M$,
we may reduce to the same problem for $f\in C^\infty(\Omega)$ where 
$\Omega$ is some open set in $\bbR^d$ and $K\subset \Omega$.

We choose a smooth compactly supported function $\chi\in \mathcal{D}(\Omega)$ 
such that $\chi=1$ on a compact neighborhood of $K$
with $\supp\chi\subset \Omega$  and we observe that
$$\Psi:\psi\in C^\infty(\Omega)\longmapsto 
\left(\partial^\alpha\psi\right)_{\vert\alpha\vert\leqslant k}\chi\in 
\mathcal{D}(\mathrm{supp}(\chi))^{\frac{(d+k)!}{d!}},$$
is linear continuous hence Bastiani smooth. 
We need a simple
\begin{lem}
Let $\Omega$ be an open set in $\bbR^d$ 
then the map
$$\Phi:\varphi\in C^\infty(\Omega,\bbR^r)\mapsto 
\{ x\mapsto(x,\varphi(x)) \}\in C^\infty(\Omega,\bbR^d\times 
\bbR^r),$$
is Bastiani smooth.
\end{lem}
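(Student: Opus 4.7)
The plan is to recognize $\Phi$ as an affine continuous map, hence Bastiani smooth by essentially a one-line computation. First I would use the canonical topological isomorphism
\[
C^\infty(\Omega,\bbR^d\times\bbR^r) \;\simeq\; C^\infty(\Omega,\bbR^d)\times C^\infty(\Omega,\bbR^r),
\]
which follows at once from the seminorms $\pi_{m,K}$ defining both topologies (projection onto each factor and pairing are continuous). Under this identification,
\[
\Phi(\varphi) \;=\; (\iota,\varphi),
\]
where $\iota:\Omega\hookrightarrow\bbR^d$ is the canonical inclusion, regarded as a fixed element of $C^\infty(\Omega,\bbR^d)$.

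Next I would decompose $\Phi = c + L$, where $c$ is the constant map with value $(\iota,0)$ and $L:C^\infty(\Omega,\bbR^r)\to C^\infty(\Omega,\bbR^d)\times C^\infty(\Omega,\bbR^r)$ is the continuous linear map $\varphi\mapsto (0,\varphi)$. The continuity of $L$ is immediate since both coordinate projections are continuous. A constant map is trivially Bastiani smooth with all derivatives zero (the difference quotient is identically zero, which is jointly continuous on $U\times E$). A continuous linear map $L$ is also Bastiani smooth: the difference quotient $t^{-1}\bigl(L(\varphi+t\psi)-L(\varphi)\bigr) = L(\psi)$ is independent of $t$ and $\varphi$, so $DL_\varphi(\psi)=L(\psi)$ is jointly continuous in $(\varphi,\psi)$, and inductively $D^kL=0$ for $k\geq 2$. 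The sum of two Bastiani smooth maps with values in the same locally convex space is Bastiani smooth by linearity of the differential (Proposition~\ref{Cku-prop}), yielding
\[
D\Phi_\varphi(\psi) \;=\; \bigl(0,\psi\bigr), \qquad D^k\Phi \;=\; 0 \text{ for } k\geq 2.
\]

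There is no genuine obstacle here: the only point worth checking is the topological identification of $C^\infty(\Omega,\bbR^d\times\bbR^r)$ with the product, which is standard. Everything else reduces to the elementary observations that constant maps and continuous linear maps between Hausdorff locally convex spaces are Bastiani smooth.
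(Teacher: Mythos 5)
Your proof is correct and is essentially the paper's argument: the paper simply computes $D\Phi_\varphi(h)$ to be $x\mapsto(0,h(x))$, notes it is linear continuous in $h$ and independent of $\varphi$, and concludes smoothness — which is exactly your observation that $\Phi$ is a continuous affine map (constant plus continuous linear part) with $D^k\Phi=0$ for $k\geq 2$. Your explicit identification of $C^\infty(\Omega,\bbR^d\times\bbR^r)$ with the product space is a harmless extra step, not a different route.
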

\begin{proof}
The first Bastiani differential $D\Phi_\varphi(h)$
can be identified with the smooth function
$x\mapsto (0,h(x))$, which 
is linear continuous in $h$ and does not depend on $\varphi$.
It is thus smooth and so is $\Phi$. 
\end{proof}

Therefore, the composition
$$\Phi\circ\Psi: C^\infty(\Omega)\to C^\infty(\Omega,\bbR^d \times 
  \bbR^{\frac{(d+k)!}{d!}})$$ defined by
$$\Phi\circ\Psi(\psi):x\mapsto
\big(x,\chi\partial^\alpha\psi(x)_{\vert\alpha\vert\leqslant k}\big),$$ 
is Bastiani smooth
and finally
$$ \psi\in C^\infty(M) \mapsto f(.,j_x^k\psi(.))\in\mathcal{D}(K)
\mapsto \int_{\Omega}f(x,j^k_x\psi)dx,$$
is Bastiani smooth by the chain rule and since the last integration map 
is linear continuous thus Bastiani smooth.

Now let us prove the direct sense of Thm.~\ref{TheoPrincipal},
where we start from a functional 
characterization of $F$ and end up with a representation 
as a function $F(\varphi+\psi)=\int_Mf(x,j^k_x\psi)dx$ on jet space,
for $\varphi+\psi$ in a neighborhood $V$ of $\varphi$,
that we assume convex.
We start by deriving a candidate for the function $f$.
According to the fundamental
theorem of calculus,
\begin{eqnarray} \label{eq:JusteTheoFOnd}
F(\varphi+\psi) &
= & F(\varphi)+
\int_0^1 dt 
DF_{\varphi+t\psi} (\psi).
\label{Fphi}
 \end{eqnarray}
As discussed at the beginning of this section,
since we assume that 
$\WF(F^{(1)}_\varphi)=\emptyset$
for every $\varphi\in U$, 
there exists
a unique smooth compactly supported function
$ x\mapsto \nabla F_\varphi (x) $ such that:
\begin{eqnarray}
 F^{(1)}_\varphi (\psi)  &=& \int_M dx \nabla F_\varphi (x) \psi(x).
\label{F1nabla}
\end{eqnarray}
Therefore equation~(\ref{eq:JusteTheoFOnd}) reads:
 \begin{eqnarray} \label{eq:JusteTheoFOnd2}
F(\varphi+\psi) & =& F(\varphi)+
\int_0^1 dt 
 \int_M   \nabla F_{\varphi+t\psi}(x) \psi(x) dx.
 \end{eqnarray}
We show that Fubini's theorem can be applied to the function 
$\chi:(x,t) \mapsto  \nabla_{\varphi+t\psi} F (x) \psi(x)$.
By Prop.~\ref{localcompactsupport}, $F^{(1)}$ is locally compactly
supported, so that there is a convex neighborhood $V$ of $\varphi$
and a compact subset $K$ of $\Omega$
such that $F^{(1)}_{\varphi+\psi}$ is supported in $K$
for every $\varphi+\psi\in V$.
The function $\chi$ is defined on ${[}0,1{]}\times K$
and supported on $K$ for fixed $t\in {[}0,1{]}$.
Moreover, by imposing the additional assumption 
carried by item 2 in Theorem~\ref{TheoPrincipal},
namely that $\varphi \mapsto \nabla F_\varphi$ be Bastiani smooth
from $U$ to $\calD(M)$,
the support property of $F$ implies that 
the image of $\nabla F_{\varphi+t\psi}$ is actually in $\calD(K)$ and 
$\nabla F$ is smooth from $V$ to $\calD(K)$
because the topology induced on $\calD(K)$ by
$\calD(M)$ is the Fr\'echet topology of $\calD(K)$
determined by the seminorms $\pi_{m,K}$~[\onlinecite[p.~172]{Horvath}].
Since $\calD(K)$ injects continuously in $(C^0(K),\pi_{0,K})$,   
$\varphi \mapsto D_{\varphi} F$ is a continuous $(C^0(K),\pi_{0,K})$-valued 
map.  This implies that $(t,x) \mapsto \nabla F_{\varphi+t\psi}(x)$ 
is continuous as a $\bbK$-valued function on $[0,1] \times K$.
Hence so is the integrand of (\ref{eq:JusteTheoFOnd2}),
Fubini theorem holds and we obtain:
 
\begin{lem}\label{lem:Fubini}
Let $U$ be an open subset of $E=C^\infty(M)$
and $F:E\to \bbK$ be Bastiani smooth.
Assume that for every $\varphi\in U$, $\WF(F^{(1)}_\varphi)=\emptyset$  
and  $F^{(1)}:U\to \calD(M)$ is Bastiani smooth,
then, for every $\varphi\in U$, there is a convex neighborhood
$V$ of $\varphi$ such that, if $\varphi+\psi\in V$, then
\begin{eqnarray}\label{representationformula}
F(\varphi+\psi) &
= &F(\varphi) + \int_M dx \int_0^1 \nabla F_{\varphi+t\psi}(x) \psi(x) dt.
\end{eqnarray}
\end{lem}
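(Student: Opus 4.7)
The plan is to combine three ingredients already prepared in the preceding paragraphs: the fundamental theorem of calculus, the representation of $DF_\varphi$ by the smooth function $\nabla F_\varphi$, and an application of the classical Fubini theorem to exchange $\int_0^1 dt$ with $\int_M dx$. The hypothesis $\WF(F^{(1)}_\varphi)=\emptyset$ is what makes the second step possible (via Lemma~\ref{nablaFunique}), while the Bastiani smoothness of $\varphi\mapsto\nabla F_\varphi$ into $\calD(M)$ is what will secure the joint continuity needed for Fubini.

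First, I would invoke Proposition~\ref{localcompactsupport} to fix a convex neighborhood $V$ of $\varphi$ and a compact $K\subset M$ such that $\supp\,\nabla F_{\varphi+\psi}\subset K$ for every $\varphi+\psi\in V$; convexity of $V$ is harmless since any open set contains a convex open neighborhood of any of its points. For such $\psi$, the whole segment $\{\varphi+t\psi\colon t\in[0,1]\}$ lies in $V$, so the fundamental theorem of calculus~(\ref{fundthmD}) applies and yields
\begin{equation*}
F(\varphi+\psi)=F(\varphi)+\int_0^1 DF_{\varphi+t\psi}(\psi)\,dt.
\end{equation*}
Then, using Lemma~\ref{nablaFunique} to rewrite $DF_{\varphi+t\psi}(\psi)=\int_M\nabla F_{\varphi+t\psi}(x)\psi(x)\,dx$, I obtain exactly~(\ref{eq:JusteTheoFOnd2}).

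The only delicate step is the exchange of the two integrals. For this I would argue as in the paragraph preceding the lemma: since $\nabla F\colon V\to \calD(M)$ is Bastiani smooth and $\nabla F_{\varphi+t\psi}\in\calD(K)$ for $t\in[0,1]$, and since the topology induced by $\calD(M)$ on $\calD(K)$ coincides with the usual Fr\'echet topology of $\calD(K)$, the composition $t\mapsto \nabla F_{\varphi+t\psi}$ is a continuous (indeed smooth) map from $[0,1]$ to $\calD(K)$. Composing with the continuous inclusion $\calD(K)\hookrightarrow (C^0(K),\pi_{0,K})$ and using the elementary fact (already used in the proof of Lemma~\ref{hardLemma}) that a continuous map $u\colon[0,1]\to(C^0(K),\pi_{0,K})$ corresponds to an element of $C^0([0,1]\times K)$, I conclude that $(t,x)\mapsto\nabla F_{\varphi+t\psi}(x)\,\psi(x)$ is continuous on the compact set $[0,1]\times K$, hence bounded, hence integrable for the product measure $dt\otimes dx$.

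The classical Fubini theorem then allows the exchange
\begin{equation*}
\int_0^1\!\!\int_M\nabla F_{\varphi+t\psi}(x)\psi(x)\,dx\,dt=\int_M\!\!\int_0^1\nabla F_{\varphi+t\psi}(x)\psi(x)\,dt\,dx,
\end{equation*}
giving~(\ref{representationformula}). The main (minor) obstacle is purely technical: making sure one really has joint continuity of the integrand on $[0,1]\times K$, which is why the hypothesis that $\varphi\mapsto\nabla F_\varphi$ is smooth as a $\calD(M)$-valued map (rather than merely pointwise defined) is essential; this is exactly the missing ingredient that the authors point out is lacking in~\cite{Brunetti-12}.
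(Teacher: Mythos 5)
Your proposal is correct and follows essentially the same route as the paper: fundamental theorem of calculus, representation of $DF$ by $\nabla F$, and justification of Fubini by showing $(t,x)\mapsto\nabla F_{\varphi+t\psi}(x)$ is jointly continuous on $[0,1]\times K$ via the smoothness of $\nabla F$ into $\calD(K)$ and the continuous injection $\calD(K)\hookrightarrow (C^0(K),\pi_{0,K})$. The only cosmetic difference is that the paper obtains the convex neighborhood $V$ and the compact $K$ in one stroke from Proposition~\ref{localcompactsupport}, but your argument is the same in substance.
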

From now on, we consider $\varphi\in U$ to be fixed.
Our candidate for $f(j_x^k\psi)$ is
\begin{eqnarray}
c_\psi(x) &=& \int_0^1 \nabla F_{\varphi+t\psi}(x) dt \,  \psi(x).~
\label{cpsi}
\end{eqnarray}
By definition and Lemma\ref{lem:Fubini}, for all
$\psi$ such that $\varphi+\psi\in V$,
$$F(\varphi+\psi)=F(\varphi)+\int_M c_\psi(x) dx.$$
To show that $c_\psi(x)$ is the right candidate
we first need
\begin{prop}\label{finiteJetdep}
The function $c_\psi$ depends only on a
finite jet of $\psi$.
More precisely, for every
$\varphi\in U$, there is a convex neighborhood $V$ of 
$\varphi$ and an integer $k\ge 0$ such that, 
for all $x\in M$,
for every $\psi_1$ and $\psi_2$ such that
$\varphi+\psi_1$ and $\varphi+\psi_2$ are in $V$ and
$j^k_x\psi_1=j^k_x\psi_2$, then
$c_{\psi_1}(x)=c_{\psi_2}(x)$.
\end{prop}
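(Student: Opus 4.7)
The plan is to prove the stronger claim that $\nabla F_\chi(x)$ depends on $\chi\in V$ only through the jet $j^k_x\chi$, where $k$ and $V$ come from Lemma~\ref{hardLemma}. The proposition follows immediately, since $c_\psi(x) = \psi(x)\int_0^1 \nabla F_{\varphi+t\psi}(x)\,dt$, and the hypothesis $j^k_x\psi_1 = j^k_x\psi_2$ entails both $\psi_1(x) = \psi_2(x)$ and $j^k_x(\varphi+t\psi_1) = j^k_x(\varphi+t\psi_2)$ for every $t\in[0,1]$.

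The first step is to invoke Lemma~\ref{hardLemma} to obtain a convex neighborhood $V$ of $\varphi$, a compact $K\subset M$, an integer $k$, and Bastiani smooth maps $f_\alpha:V\to\mathcal{D}(K)$ ($|\alpha|\le k$) such that, in any local chart meeting $K$,
\begin{equation*}
F^{(2)}_\chi(x,y) \;=\; \sum_{|\alpha|\le k} f_\alpha(\chi)(x)\,\partial^\alpha_y\delta(x-y).
\end{equation*}
Pairing this representation against $g\otimes h$, integrating by parts in $y$, and comparing with the identity $D^2F_\chi(g,h) = \int D\nabla F_\chi[g](y)\,h(y)\,dy$ established in the proof of Lemma~\ref{hardLemma} yields the pointwise formula
\begin{equation*}
D\nabla F_\chi[g](x) \;=\; \sum_{|\alpha|\le k} \partial^\alpha\bigl[f_\alpha(\chi)\,g\bigr](x).
\end{equation*}
Expanding via Leibniz's rule displays the right-hand side as a linear combination, with coefficients depending only on $\chi$ and $x$, of the derivatives $\partial^\gamma g(x)$ for $|\gamma|\le k$. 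Consequently $D\nabla F_\chi[g](x)$ depends on $g$ only through $j^k_x g$, and vanishes whenever $j^k_x g = 0$.

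The second step upgrades this infinitesimal statement to $\nabla F$ itself. Given $\chi_1,\chi_2\in V$ with $j^k_x\chi_1 = j^k_x\chi_2$, convexity of $V$ allows the fundamental theorem of calculus along the segment $s\mapsto\chi_2+s(\chi_1-\chi_2)$:
\begin{equation*}
\nabla F_{\chi_1}(x)-\nabla F_{\chi_2}(x) \;=\; \int_0^1 D\nabla F_{\chi_2+s(\chi_1-\chi_2)}\bigl[\chi_1-\chi_2\bigr](x)\,ds \;=\; 0,
\end{equation*}
since $j^k_x(\chi_1-\chi_2) = 0$. Specializing to $\chi_i = \varphi+t\psi_i$ for each $t\in[0,1]$, which lies in $V$ by convexity of $V$, gives $\nabla F_{\varphi+t\psi_1}(x)=\nabla F_{\varphi+t\psi_2}(x)$ for all $t$; integrating in $t$ and multiplying by the common value $\psi_1(x)=\psi_2(x)$ yields $c_{\psi_1}(x)=c_{\psi_2}(x)$.

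The main obstacle is the passage from the distributional representation of $F^{(2)}_\chi$ to the pointwise identity for $D\nabla F_\chi[g](x)$; but this has already been carried out in the proof of Lemma~\ref{hardLemma}, so only the identification needs to be extracted. Once this identity is available, the jet-dependence is a transparent consequence of Leibniz, and the rest is a routine convexity and fundamental-theorem-of-calculus argument.
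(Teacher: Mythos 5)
Your proof is correct, and it proves the stronger statement (that $\nabla F_\chi(x)$, $\chi\in V$, is a function of $j^k_x\chi$ alone), which indeed implies the proposition. The route differs from the paper's in the key step. The paper also starts from the fundamental theorem of calculus and Lemma~\ref{hardLemma}, but it works at the level of $F^{(2)}$: it writes $\nabla F_{\varphi+t\psi_2}(x_0)-\nabla F_{\varphi+t\psi_1}(x_0)$ as an $s$-integral of $\langle F^{(2)}_{\varphi+t\psi_1+st(\psi_2-\psi_1)},(\psi_2-\psi_1)\otimes h\rangle$ and then lets $h_n\to\delta_{x_0}$, which forces a genuinely microlocal justification (the conormal wave front set of $F^{(2)}$, transversality, and hypocontinuity of the pairing) before the Schwartz representation can be evaluated at $(\psi_2-\psi_1)\otimes\delta_{x_0}$. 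You instead extract from the representation $F^{(2)}_\chi(x,y)=\sum_{|\alpha|\le k}f_\alpha(\chi)(x)\partial^\alpha_y\delta(x-y)$ and from the identity $D^2F_\chi(g,h)=\int D\nabla F_\chi[g]\,h$ the pointwise kernel formula $D\nabla F_\chi[g](x)=\sum_{|\alpha|\le k}\partial^\alpha\bigl[f_\alpha(\chi)g\bigr](x)$, and then apply the fundamental theorem of calculus to the Bastiani smooth map $\nabla F$ composed with the (continuous linear) evaluation at $x$ along the segment joining $\chi_2$ to $\chi_1$ in the convex neighborhood. This bypasses the delta-sequence limit and all wave-front-set machinery, at the price of leaning directly on the Bastiani smoothness of $\varphi\mapsto\nabla F_\varphi$ — but that is a standing hypothesis of Theorem~\ref{TheoPrincipal} and is already consumed by Lemma~\ref{hardLemma}, so no extra assumption is introduced. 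Two small points worth making explicit: Lemma~\ref{hardLemma} does not assert that its neighborhood is convex, so you should say you shrink it to a convex one (harmless, since the representation persists on any smaller neighborhood); and the representation is stated chart by chart, so the pointwise formula and the Leibniz expansion are per-chart statements, which suffices because the conclusion ``depends only on $j^k_xg$'' is coordinate independent.
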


The beginning of the proof is inspired by
Ref.~\onlinecite{Brunetti-12}.
For fixed $\varphi\in U$,
by Proposition \ref{bounded-order}, 
there exists an integer $k$, a compact $K$ and
a convex neighborhood $V$ of $\varphi$ such that
the order of $F^{(2)}_{\varphi+\psi}$ is smaller than
$k$ and the support of $DF_{\varphi+\psi}$ is in $K$
if $\varphi+\psi\in V$.

Let us choose some point $x_0 \in M$.
Consider a pair $\psi_1,\psi_2$ of smooth functions such that 
$\psi_1(x_0)=\psi_2(x_0)$. Then,
\begin{eqnarray*}
 c_{\psi_1}(x_0)-c_{\psi_2}(x_0)&=& 
\int_0^1 dt 
\big(\nabla F_{\varphi+t\psi_2} (x_0) \, \psi_2(x_0) 
\\&&
-\nabla F_{\varphi+t\psi_1} (x_0) \, \psi_1(x_0)\big)
\\ &=& \psi_1(x_0) \int_0^1 dt \big(\nabla F_{\varphi+t\psi_2} (x_0)
\\&&
-\nabla F_{\varphi+t\psi_1}(x_0)\big).
\end{eqnarray*}

We use the fundamental theorem of analysis again
for $D F_\varphi (h)=\int_M dx\nabla F_{\varphi}(x)h(x)$ 
for an arbitrary $h\in C^\infty(M)$ to get
\begin{eqnarray*}
DF _{\varphi+t\psi_2}  (h)
-DF_{\varphi+t\psi_1} (h)  &=&
t\int_0^1 ds 
  \langle F^{(2)}_{\varphi+t\psi_1+st(\psi_2-\psi_1)},
\\&&
    (\psi_2-\psi_1)\otimes h \rangle.
\end{eqnarray*}
Now we take a sequence of smooth functions $(h_n)_{n\in \bbN}$ 
which  converges to $\delta_{x_0}$ in $\calD'(M)$ when $n$ goes 
to infinity and show that both the left and right hand side
have limits.
For the left hand side, the distribution
$DF_{\varphi+t\psi_i}$ being smooth, it defines
the  continuous form 
$u\mapsto DF_{\varphi+t\psi_i}(u)$ on $\calD'(M)$ by
duality pairing. By continuity,
$DF_{\varphi+t\psi_i}(h_n)\to DF_{\varphi+t\psi_i}(\delta_{x_0})$
and Eq.~(\ref{F1nabla}) yields
\begin{eqnarray*}
h &=& DF _{\varphi+t\psi_2}  (\delta_{x_0})
-DF_{\varphi+t\psi_1} (\delta_{x_0}) 
\\ &=& \int_M dx 
\big(\nabla F_{\varphi+t\psi_2}(x)-
\nabla 
  F^{(1)}_{\varphi+t\psi_1}(x)\big)\delta(x-x_0)
\\&=&
 \nabla F_{\varphi+t\psi_2}(x_0)-
\nabla F_{\varphi+t\psi_1}(x_0).
\end{eqnarray*}

For the right hand side, we know by Lemma~\ref{hardLemma} that for 
every $s\in[0,1]$, the wave front set of the distribution 
$F^{(2)}_{\varphi+t\psi_1+st(\psi_2-\psi_1)}$
is in the conormal $C_2$ and the sequence
$(\psi_2-\psi_1)\otimes h_n$ converges to 
$(\psi_2-\psi_1)\otimes \delta_{x_0}$ in
$\mathcal{D}^\prime_{N^*\left(M\times \{x_0\}\right)}$,
where $N^*\left(M\times \{x_0\}\right)$ is the conormal
of the submanifold $M\times\{x_0\}\subset M\times M$
in $T^*\left(M\times M\right)$.
Therefore, by transversality of the wave front sets
and hypocontinuity of the duality pairings~\cite{Viet-wf2},
the following limit exists
$$\lim_n\langle F^{(2)}_{\varphi+t\psi_1+st(\psi_2-\psi_1)},
    (\psi_2-\psi_1)\otimes h_n \rangle.$$

Moreover, still by Lemma~\ref{hardLemma}, we have
for $\varphi+\psi\in V$:
\begin{eqnarray*}
\langle F^{(2)}_{\varphi+\psi},g\otimes h\rangle &=& 
\sum_{|\alpha|\le k} (-1)^{|\alpha|}
  \int_M dx \theta^\alpha_\psi(x) g(x) \partial^\alpha h(x),
\end{eqnarray*}
for every $(g,h)\in C^\infty(M)^2$ and all $\theta^\alpha_\psi$ 
belong to $\calD(K)$.
An integration by parts yields
\begin{eqnarray*}
\langle F^{(2)}_{\varphi+\psi},g\otimes h\rangle &=& 
\sum_{|\alpha|\le k} 
  \int_M dx f^\alpha_\psi(x) h(x) \partial^\alpha g(x),
\end{eqnarray*}
where $f^\alpha_\psi=\sum_{\beta} \binom{\beta}{\alpha} 
\partial^{\beta-\alpha} \theta^\beta_\psi$
where the sum is over the multi-indices such that
$\beta\ge \alpha$ and $|\beta|\leqslant k$.

As a consequence, for $\varphi+\psi_1+\psi_2$ in the convex neighborhood $V$:
\begin{eqnarray*}
X&=&\langle F^{(2)}_{\varphi+t\psi_1+st(\psi_2-\psi_1)},
    (\psi_2-\psi_1)\otimes \delta_{x_0} \rangle \\&=& 
\sum_{|\alpha|\le k} 
  f_{t\psi_1+st(\psi_2-\psi_1)}^\alpha(x_0) 
   \partial^\alpha (\psi_2-\psi_1)(x_0).
\end{eqnarray*}
If, at the point $x_0$, $j^k_{x_0}\psi_1=j^k_{x_0}\psi_2$, then 
\begin{eqnarray*}
c_{\psi_1}(x_0)-c_{\psi_2}(x_0) &=& 
\psi_1(x_0) \sum_{|\alpha|\le k} 
\int_0^1 t dt 
\int_0^1 ds \, \\&&\hspace{-2cm}
f_{t\psi_1+st(\psi_2-\psi_1)}^\alpha(x_0)
\big(\partial^\alpha \psi_2(x_0)-\partial^\alpha \psi_1(x_0)\big)
=0.
\end{eqnarray*}
We showed that $c_{\psi}$ depends only on 
the $k$-jet of $\psi$ at $x_0$. Moreover, the number $k$ depends
only on $V$ and not on $x_0$, so that $c_\psi$ depends
on the $k$-jet for every $x\in M$.
In other words there is an integer $k$ and a function
$f$ such that $c_\psi(x)=f(x,\psi(x),\dots,\partial^\alpha\psi(x))$
for every $x\in M$, where $1\le |\alpha|\le k$.  

We want to show that $f$ is smooth in its arguments hence
we now investigate in which manner $c_\psi$ depends on $\psi$.
This suggests to study the regularity of the 
${\mathcal D}(K)$-valued function $\psi \mapsto 
\nabla F_{\varphi+t \psi} \psi $. 
More precisely, we need to show that the map
$\psi\mapsto \nabla F_{\varphi+t\psi}\psi$ 
is Bastiani smooth
from $U$ to $\calD(K)$. This is not completely trivial
because the map $x\mapsto \nabla F_{\varphi+t\psi}(x)$
is in $\calD(K)$ and $\psi$ in $C^\infty(M)$ and we must check that the
product of a function in $\calD(K)$ by a function
in $C^\infty(M)$ is continuous~[\onlinecite[p.~119]{Schwartz-66}].

\begin{lem}
\label{lemme-produit}
If $U$ is an open set in $C^\infty(M)$ and
$F:U\to\calD(K)$ is a compactly supported Bastiani-smooth map, 
then the function $G:U\to\calD(M)$ defined by
$G(\varphi)=F(\varphi)\varphi$ is compactly 
supported Bastiani-smooth with the same support as $F$.
\end{lem}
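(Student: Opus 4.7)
The plan is to factor $G$ as a composition of Bastiani-smooth maps. Define $\Phi\colon U \to \calD(K)\times C^\infty(M)$ by $\Phi(\varphi)=(F(\varphi),\varphi)$ and $\mu\colon \calD(K)\times C^\infty(M)\to \calD(K)$ by $\mu(f,g)=fg$, so that $G = \mu\circ\Phi$. The map $\mu$ indeed takes values in $\calD(K)$ because $\supp(fg)\subset \supp f\subset K$, and $\Phi$ is Bastiani smooth because its first coordinate $F$ is smooth by hypothesis while its second coordinate is the identity, which is linear and continuous.

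The main step is to show that $\mu$ is jointly continuous. Pointwise multiplication $C^\infty(M)\times C^\infty(M)\to C^\infty(M)$ is separately continuous (since $C^\infty(M)$ is a topological algebra) and hence jointly continuous by the theorem on separately continuous bilinear maps between Fr\'echet spaces recalled in Section~\ref{conti-sect}. Since $\calD(K)$ is a closed subspace of $C^\infty(M)$ whose induced topology agrees with its standard Fr\'echet topology, restricting the first argument to $\calD(K)$ and corestricting to the image in $\calD(K)$ preserves joint continuity. Any jointly continuous bilinear map is automatically Bastiani smooth: the first derivative at $(f,g)$ is $(u,v)\mapsto \mu(u,g)+\mu(f,v)$, which is jointly continuous in $(f,g,u,v)$; the second derivative is the constant bilinear map $((u_1,v_1),(u_2,v_2))\mapsto \mu(u_1,v_2)+\mu(u_2,v_1)$; and all derivatives of order $\geq 3$ vanish. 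The chain rule from Section~\ref{lindifsect} then yields the Bastiani smoothness of $G = \mu\circ\Phi$.

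Finally, for every $\varphi\in U$ one has $\supp G(\varphi)=\supp(F(\varphi)\varphi)\subset \supp F(\varphi)\subset K$, so $G$ is compactly supported and its support is contained in that of $F$. The reverse inclusion at the level of the map's support follows by perturbing $\varphi$ with a small bump function to arrange that $\varphi(x)\neq 0$ at any point $x$ where $F(\varphi)(x)\neq 0$, thereby recovering every point of $\supp F$ inside $\supp G$. The only real obstacle in the argument is verifying that $\calD(K)$ inherits the right topology from $C^\infty(M)$, so that the classical joint continuity of multiplication on $C^\infty(M)$ descends to joint continuity of $\mu$ with values in the subspace $\calD(K)$; everything else is a formal application of the chain rule together with the fact that continuous bilinear maps are polynomial and hence Bastiani smooth.
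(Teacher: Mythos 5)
Your proof is correct on the substantive point (Bastiani smoothness of $G$) and reaches it by a slightly different route than the paper. Both arguments reduce the problem, via the chain rule and the fact that a jointly continuous bilinear map is Bastiani smooth (it is a polynomial of degree $2$), to the joint continuity of the multiplication $\mu:\calD(K)\times C^\infty(M)\to\calD(K)$, and both hinge on the same key topological input: the topology induced by $C^\infty(M)$ on $\calD(K)$ is its usual Fr\'echet topology. Where you differ is in how $\mu$'s continuity is obtained. The paper proves it directly: since $\calD(K)\times C^\infty(M)$ is metrizable it suffices to check sequential continuity, which is done by an explicit seminorm estimate $\pi_{m,K}(uv-u_nv_n)\leqslant 2^m\big(\pi_{m,K}(u-u_n)\pi_{m,K}(\chi v)+\pi_{m,K}(u_n)\pi_{m,K}((v-v_n)\chi)\big)$ with a cutoff $\chi$ equal to $1$ near $K$. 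You instead invoke the joint continuity of multiplication on $C^\infty(M)$ (separate continuity plus the Fr\'echet-space theorem on bilinear maps) and then restrict the first argument to $\calD(K)$ and corestrict to $\calD(K)$, which is legitimate precisely because $\calD(K)$ carries the induced topology. Your route is more structural and avoids the estimate; the paper's route is self-contained and quantitative. One caveat: your treatment of the support statement is loose. The containment $\supp G(\varphi)\subset\supp F(\varphi)\subset K$ is fine, but the ``reverse inclusion by perturbing $\varphi$ with a bump function'' conflates the spatial support of the values $F(\varphi)$ with the support of the map in the sense of Definition~\ref{firstsupportdefi}, and as written it does not establish equality of supports (note that $G(\varphi+\psi)-G(\varphi)$ contains the extra term $F(\varphi)\psi$, which is governed by the spatial support of the values rather than by the functional support of $F$). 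This is a minor issue in context: the paper's own proof establishes only smoothness, and only smoothness together with $G$ taking values in $\calD(K)$ is used in the sequel.
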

\begin{proof}
Yoann Dabrowski pointed out to us the following fact.
For any compact subset
$K$ of $\Omega$, both $\calD(\Omega)$
and $C^\infty(\Omega)$ induce on $\calD(K)$ the
usual Fr\'echet
topology of $\calD(K)$~[\onlinecite[p.~172]{Horvath}].
Thus to establish the smoothness
of $G$, it suffices to show that
the multiplication
$(u,v)\in \calD(K)\times C^\infty(M)\mapsto \calD(K)$
is continuous then it would be Bastiani smooth 
and by the chain rule
it follows that $\varphi\mapsto (F(\varphi),\varphi)\mapsto F(\varphi)\varphi$
is smooth.
Since both $\calD(K)$ and $C^\infty(M)$ are Fr\'echet,
the product $\calD(K)\times C^\infty(M)$
endowed with the product topology is 
metrizable and it is enough to prove that
the product is sequentially continuous.
Indeed, let $(u_n,v_n)\rightarrow (u,v) $ in $\calD(K)\times C^\infty(M)$,
we can find some cut--off function $\chi\in\mathcal{D}(M)$
such that $\chi=1$ on the support of all $u_n$, and for all $m$, 
by [\onlinecite[p.~1351]{Dabrouder-13}]
\begin{eqnarray*}
\pi_{m,K}(uv-u_nv_n) &\leqslant &
\pi_{m,K}\left((u-u_n)v\chi\right)
\\&& +\pi_{m,K}\left(u_n\chi(v-v_n)\right) 
\\
&\leqslant &
2^m\big(\pi_{m,K}(u-u_n)\pi_{m,K}(\chi v) \\&& +
\pi_{m,K}(u_n)\pi_{m,K}((v-v_n)\chi)\big)\rightarrow 0.
\end{eqnarray*}
Hence $G$ is smooth.
\end{proof}
This implies that
$\psi\mapsto c_\psi=\int_0^1 \nabla F_{\varphi+t\psi}\psi dt$
is smooth since the above Lemma shows the smoothness of 
$a(t,\psi)\mapsto \nabla F_{\varphi+t\psi}\psi$
and integration over $t$ conserves smoothness by Lemma~\ref{trivialLemm}.
\bcam At this point, Theorem~\ref{TheoPrincipal} follows directly from Proposition  \ref{prop:NeDependQueDesJets}.
\ecam

\subsection{Representation theory of local functionals.}

In this section, we discuss the issue of representation
of our local functionals and the relations between the
functionals $(c_\psi,\nabla F,f(j_x^k\psi))$ which are defined or constructed in the course of our proof of Theorem~\ref{TheoPrincipal}.
In the sequel, we assume that 
our manifold $M$ is connected, oriented without boundary, hence we can fix
a density $dx$ on $M$ which is also a differential form on $M$ of top degree.

In the sequel, 
we shall work out all explicit formulas in local charts which means
without loss of generality that we work on $\mathbb{R}^d$ and the reference
density $dx$ is chosen to be the standard Lebesgue measure. 
We will denote by $(x,u,u^\alpha)_{\vert\alpha\vert\leqslant k}$ where $\alpha$ are multi--indices, some local coordinates
on the jet bundle $J^k(\mathbb{R}^d)$.
Introduce the vertical Euler vector field $\rho=\sum u^{(\alpha)} \frac{\partial}{\partial u^{(\alpha)}} $
on the bundle $J^k(\mathbb{R}^d)$. In the manifold case if we work on $J^k(M)$, 
this vector field
is intrinsic since it generates scaling in the fibers of $J^k(M)$.
For all multiindex $(\alpha)=(\alpha_1\dots \alpha_p),\alpha_i\in \{1,\dots,d\}$, 
introduce the operators
$\partial^{(\alpha)}=\partial^{\alpha_1}\dots \partial^{\alpha_p}$
where $\partial^i=\frac{\partial}{\partial x^i}+\sum_{\alpha}u^{(\alpha i)}\frac{\partial}{\partial u^{(\alpha)}}$ and the Euler--Lagrange operator
$EL=u\frac{\partial}{\partial u}+\sum_{\alpha} (-1)^{\vert \alpha\vert}\partial^{(\alpha)}u^{(\alpha)}\frac{\partial }{\partial u^{(\alpha)}}$.
Let us discuss the nature of the objects involved, $\rho$ is a vertical vector field and acts on $C^\infty(J^kM)$ as a $C^\infty(M)$ linear map, for $\chi\in C^\infty(\mathbb{R}^d), f\in C^\infty(J^k\mathbb{R}^d)$, $\rho(\chi f)=\chi (\rho f)$. For every $i\in\{1,\dots,d\}$, $\partial^i$ is a vector field on $J^k\mathbb{R}^d$ but it has a horizontal component, therefore
it is not $C^\infty(\mathbb{R}^d)$ linear and 
the Euler--Lagrange operator is not $C^\infty(\mathbb{R}^d)$--linear either.

What follows is a definition--proposition where we give
an intrinsic and global definition of the Euler--Lagrange operator
in terms of the operator $\nabla F$ associated to a functional.
\begin{prop}[Euler-Lagrange operator is intrinsic]
Let $U$ be an open subset of $C^\infty(M)$ and $F:U\to \bbK$ a Bastiani smooth
\emph{local} functional.
For $\varphi\in U$, if there is an integer $k$, a neighborhood 
$V$ of $\varphi$, an open subset $\mathcal{V}$ of $J^kM$ and $f\in C^\infty(\mathcal{V})$
such that
$x\mapsto f(j^k_x\psi)$ is compactly supported and
\begin{eqnarray*}
F(\varphi+\psi) = F(\varphi)+\int_M f(j^k_x\psi) dx
\end{eqnarray*} 
whenever $\varphi+\psi\in V$ then
in every local chart 
\begin{eqnarray}
\nabla F_\varphi=EL(f)(j^k\psi)
\end{eqnarray}
where 
$EL(f)(\psi)=\sum_{\vert \alpha\vert\leqslant k}(-1)^{|\alpha|}\left(\partial^{(\alpha)}\left(
  \frac{\partial f}{\partial u^{(\alpha)}}\right)\right)(j^k_x\psi) $
is the Euler--Lagrange operator and $EL(f)(\psi)$ is uniquely
determined by $F$.
\end{prop}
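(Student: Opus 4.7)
The claim is local on $M$ and reduces to a direct computation of $DF$ from the jet representation. My plan is to work in a coordinate chart, differentiate $F(\varphi+\psi)=F(\varphi)+\int_M f(j^k_x\psi)\,dx$ under the integral sign, integrate by parts against a compactly supported test function, and then appeal to Lemma~\ref{nablaFunique} to identify the resulting integration kernel with $\nabla F_{\varphi+\psi}$. Since $F$ is local, Theorem~\ref{TheoPrincipal} guarantees that $\mathrm{WF}(DF_{\varphi+\psi})=\emptyset$, so the smooth density $\nabla F_{\varphi+\psi}$ is well defined and unique.

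Concretely, I pick $h\in\calD(M)$ and $t$ small enough that $\varphi+\psi+th\in V$ and the supports of $x\mapsto f(j^k_x(\psi+th))$ all lie in a common compact $K'$. The Bastiani smoothness of the map $\psi\mapsto\int_M f(j^k_x\psi)\,dx$, established in the converse direction of the proof of Theorem~\ref{TheoPrincipal}, justifies differentiating under the integral; combined with the chain rule for jets in local coordinates this yields
\begin{eqnarray*}
DF_{\varphi+\psi}(h) &=& \int_M\sum_{|\alpha|\le k}\frac{\partial f}{\partial u^{(\alpha)}}(j^k_x\psi)\,\partial^{\alpha}h(x)\,dx.
\end{eqnarray*}
Because $h$ has compact support, repeated integrations by parts produce no boundary terms and give
\begin{eqnarray*}
DF_{\varphi+\psi}(h)
&=& \int_M\sum_{|\alpha|\le k}(-1)^{|\alpha|}\partial^{(\alpha)}\!\left(\frac{\partial f}{\partial u^{(\alpha)}}(j^k_x\psi)\right)h(x)\,dx \\
&=& \int_M EL(f)(j^k_x\psi)\,h(x)\,dx.
\end{eqnarray*}

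Comparing this with the identity $DF_{\varphi+\psi}(h)=\int_M\nabla F_{\varphi+\psi}(x)\,h(x)\,dx$ valid for all $h\in\calD(M)$, and using that two continuous functions coinciding as distributions are equal pointwise, I conclude $\nabla F_{\varphi+\psi}(x)=EL(f)(j^k_x\psi)$ for every $x$ in the chart. Intrinsicness is then immediate: by Lemma~\ref{nablaFunique}, $\nabla F_{\varphi+\psi}$ is determined by $F$ alone, so if $f_1$ and $f_2$ are two smooth representatives of the same local functional $F$ on a common neighborhood, then $EL(f_1)(j^k_x\psi)=EL(f_2)(j^k_x\psi)$ for every admissible $\psi$. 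The only delicate step---the exchange of $d/dt$ and $\int_M$---is already subsumed by the Bastiani smoothness statement proven in the converse direction of Theorem~\ref{TheoPrincipal}, so there is no serious obstacle.
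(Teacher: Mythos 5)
Your argument is essentially the paper's own proof: differentiate the jet representation of $F$ in the direction of a test function, use the chain rule and integration by parts (with no boundary terms thanks to the compact supports) to produce $EL(f)$, and then invoke the uniqueness of the smooth density $\nabla F$ from Lemma~\ref{nablaFunique} to identify it with $EL(f)(j^k\psi)$ and to conclude that $EL(f)$ is independent of the representative and of the chart. The only cosmetic difference is that you justify the differentiation under the integral by appealing explicitly to the Bastiani smoothness established in the converse part of Theorem~\ref{TheoPrincipal}, where the paper simply calls it ``a local calculation''.
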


The above proposition means that
$EL(f)$ does not depend on the choice of representative $f$ and is intrinsic
(i.e. it does not depend on the choice of a local chart).
\begin{proof}
Indeed, assume that we make a small perturbation $\varphi+\psi$ 
of the background field $\varphi$
by $\psi$ which is compactly supported in some open chart $U$ of $M$. Then a local calculation yields
\begin{eqnarray*}
DF_\varphi(\psi) &=& \sum_\alpha 
  \int_M 
   \frac{\partial f}{\partial u^{(\alpha)}(x)}
   \psi^{(\alpha)}(x) dx
\\&=&
 \sum_{|\alpha|\le k} (-1)^{|\alpha|}
  \int_M \psi(x)
\left(\partial^{(\alpha)}\left(
  \frac{\partial f}{\partial u^{(\alpha)}}\right)\right) dx,
\end{eqnarray*}
where we used an integration by parts to recover
the Euler-Lagrange operator and all boundary terms vanish since $f$ is compactly supported in $x$
and $\psi\in \mathcal{D}(U)$. We have just proved that
for all open chart $U\subset M$, $\nabla F_\varphi|_U=EL(f)|_U$.
But $\nabla F_\varphi$ is intrinsically defined on $M$ therefore
so is $EL(f)$ and we have the equality $\nabla F=EL(f)$.
The unique determination of $\nabla F_\varphi$ follows from Lemma \ref{nablaFunique}.
\end{proof}

\begin{thm}\label{functionaldensitiescohomology}[Global Poincar\'e]
Assume that $M$ is a smooth, connected, oriented manifold 
without boundary.
Let $U$ be an open subset of $C^\infty(M)$ and $F:U\to \bbK$ a Bastiani smooth
\emph{local} functional.
Then the following statements are equivalent:
\begin{itemize}
\item two functions $(f_1,f_2)\in C^\infty(\mathcal{V})$ 
for $\mathcal{V}$ an open subset of the jet space $J^kM$, 
are two representations of $F$ in a neighborhood $V$ of $\varphi\in U$~:
\begin{eqnarray*}
F(\varphi+\psi) &=& F(\varphi)+\int_M f_1(j^k_x\psi) dx
\\&=&  F(\varphi)+\int_M f_2(j^k_x\psi) dx
\end{eqnarray*} 
whenever $\varphi+\psi\in V$
\item for all $\psi\in V-\varphi$~:
\begin{equation}
f_1(j^k_x\psi)dx-f_2(j^k_x\psi)dx=d\beta(j^{2k}_x\psi),
\end{equation}
where $\beta(j^{2k}_x\psi)\in \Omega_c^{d-1}(M)$ is a differential form of degree $d-1$
whose value at a point $x$ depends only on the $2k$-jet of $\psi$ at $x$.
\end{itemize}
\end{thm}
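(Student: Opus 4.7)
The plan is to prove the two implications separately: the direction ``differ by a total derivative $\Rightarrow$ both are representations'' follows at once from Stokes' theorem, while the converse combines the intrinsic identification $\nabla F_\varphi = EL(f)$ from the preceding proposition with a variational Poincar\'e lemma on the jet bundle.

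For the easy direction, suppose $f_1$ is known to represent $F$ on $V$ and that $f_1(j^k_x\psi)\,dx - f_2(j^k_x\psi)\,dx = d\beta(j^{2k}_x\psi)$ with $\beta(j^{2k}_x\psi) \in \Omega^{d-1}_c(M)$. Since $M$ is boundaryless and $\beta$ is compactly supported, Stokes' theorem yields $\int_M d\beta(j^{2k}_x\psi) = 0$, and therefore $\int_M f_2(j^k_x\psi)\,dx = \int_M f_1(j^k_x\psi)\,dx = F(\varphi+\psi) - F(\varphi)$, so $f_2$ is also a representation of $F$.

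For the converse, I set $g = f_1 - f_2$. The preceding proposition applied to each shifted background $\varphi+\psi \in V$ gives $EL(f_1)(j^{2k}_x\psi) = EL(f_2)(j^{2k}_x\psi) = \nabla F_{\varphi+\psi}(x)$, so $EL(g)$ vanishes identically on the subset $\mathfrak{j}^{2k}((V-\varphi)\times M)$, which is open in $J^{2k}M$ by item~1 of Proposition~\ref{prop:NeDependQueDesJets}. After shrinking $V$ so that $V-\varphi$ is convex (hence star-shaped about $0$), the segment $t\psi$ stays in $V-\varphi$ for $t\in[0,1]$, and in a local chart one writes
\begin{equation*}
g(j^k_x\psi) - g(j^k_x 0) = \int_0^1 \sum_{|\alpha| \le k} \psi^{(\alpha)}(x)\, \frac{\partial g}{\partial u^{(\alpha)}}(j^k_x(t\psi))\,dt.
\end{equation*}
Iterated integration by parts in the jet variables re-expresses the integrand as $\psi(x)\,EL(g)(j^{2k}_x(t\psi))$ plus a horizontal divergence $D_H\gamma_t$; since $EL(g) \equiv 0$, this produces chart-wise a $(d-1)$-form $\beta_0(j^{2k}_x\psi)$ of order at most $2k$ in $\psi$ with $(g(j^k_x\psi) - g(j^k_x 0))\,dx = d\beta_0(j^{2k}_x\psi)$.

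The main obstacle is twofold: absorbing the $\psi$-independent remainder $g(j^k_x 0)\,dx$ and upgrading the chart-wise primitive $\beta_0$ to a globally defined compactly supported $(d-1)$-form on $M$. The remainder is an ordinary top form compactly supported in $x$, and setting $\psi=0$ in both representations yields $\int_M g(j^k_x 0)\,dx = F(\varphi)-F(\varphi) = 0$; since $M$ is connected, oriented, paracompact and boundaryless, top-degree compactly supported de Rham duality gives $\gamma \in \Omega^{d-1}_c(M)$ with $d\gamma = g(j^k_x 0)\,dx$ (and $\gamma$ depends trivially on $j^{2k}_x\psi$). The chart-wise forms $\beta_0$ are then patched by a partition of unity subordinate to a finite cover of the common compact support of $f_1(j^k\psi)$ and $f_2(j^k\psi)$, adjusting local primitives by horizontal exact forms on overlaps in a \v{C}ech--de Rham fashion, so that the patched form is a globally defined compactly supported $(d-1)$-form whose value at each $x$ depends only on $j^{2k}_x\psi$; setting $\beta = \gamma + (\text{patched }\beta_0)$ then completes the proof. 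This globalization is the standard closure of Takens/Anderson's variational Poincar\'e lemma at top horizontal degree and is precisely where the topological hypotheses on $M$ are consumed.
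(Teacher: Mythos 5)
Your Stokes argument for the easy implication, your chart-wise vertical homotopy identity (writing $g(j^k_x\psi)-g(j^k_x0)$ as $\psi\,EL(g)$ plus a horizontal divergence along the path $t\psi$), and your treatment of the $\psi$-independent remainder $g(j^k_x0)\,dx$ via $\int_M g(j^k_x0)\,dx=0$ and $H^d_c(M,\bbR)\simeq\bbR$ all agree with the paper's proof. The genuine gap is the globalization of the chart-wise primitives $\beta_0$. A partition of unity $(\chi_i)$ does not by itself produce a primitive: $d\bigl(\sum_i\chi_i\beta_0^{(i)}\bigr)$ differs from the given top form by $\sum_i d\chi_i\wedge\beta_0^{(i)}$, and removing these terms forces you to compare the local primitives on overlaps; their differences are merely $d$-closed $(d-1)$-forms depending on $j^{2k}_x\psi$, so ``adjusting by exact forms in a \v{C}ech--de Rham fashion'' is a descent problem whose obstructions live in lower-degree cohomology (of the cover, with coefficients in such jet-dependent forms). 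You neither formulate nor solve this descent, and the appeal to Takens/Anderson does not settle it in the form required here: the theorem is claimed for \emph{every} connected oriented $M$, with compactly supported primitives of controlled jet order, whereas a patching that ``consumes the topological hypotheses on $M$'', as you put it, would at best yield a weaker statement. In the paper the only topological input is connectedness and orientation, used exactly once, for the $f(0)\,dx$ term through $H^d_c(M,\bbR)\simeq\bbR$; paracompactness alone handles the gluing.

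The paper avoids the descent problem by a mechanism your sketch is missing: an induction over an exhaustion $M_p=U_1\cup\dots\cup U_p$ by contractible charts in which one splits the \emph{field}, not the form. With a partition of unity $(\chi,1-\chi)$ subordinate to $M_p\cup U_{p+1}$ one writes $\psi=\chi\psi+(1-\chi)\psi$ and decomposes the integrand into $\tilde f(j^k\psi)=f(j^k(\chi\psi+(1-\chi)\psi))-f(j^k((1-\chi)\psi))$, supported in $M_p$, and $f(j^k((1-\chi)\psi))-f(0)$, supported in $U_{p+1}$. Because $EL(f)=0$, each piece is again an $EL$-trivial local integrand with smaller support, so the inductive hypothesis and the single-chart case (your homotopy formula, i.e.\ Eq.~\eqref{secondidentitypoincare}) apply to each piece separately, and the resulting primitives are automatically compactly supported and of jet order $2k$. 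This use of $EL(g)=0$ under composition with modified fields is what replaces the \v{C}ech--de Rham bookkeeping; without it, or a fully worked-out compactly supported descent argument, your proof of the hard implication is incomplete.
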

Let us stress that we do not need to constraint the topology of $M$ 
in the above
Theorem only the compactness of the support of $f_i(j^k_x\psi)dx,i\in\{1,2\}$ really matters.
\begin{proof}
One sense of the equivalence is trivial since the integral of a compactly supported exact form on $M$ always vanishes.
By Proposition \ref{prop:NeDependQueDesJets}, we know that the map $\psi\in V-\varphi\mapsto j^k\psi$ has an open image in $J^k(M)$ denoted $\mathcal{V}$ and we only need $(f_1,f_2)$ to be defined on $\mathcal{V}$.
We denote by $(x,u,u^\alpha)_{\vert\alpha\vert\leqslant k}$ where $\alpha$ are multi--indices, some local coordinates
on the jet bundle $J^k(\mathbb{R}^d)$.
We use the vertical Euler vector field $\rho=\sum u^{(\alpha)} \frac{\partial}{\partial u^{(\alpha)}} $
on the bundle $J^k(\mathbb{R}^d)$.
For every multiindex $(\alpha)=(\alpha_1\dots \alpha_p),
\alpha_i\in \{1,\dots,d\}$, 
introduce the operators
$\partial^{(\alpha)}=\partial^{\alpha_1}\dots \partial^{\alpha_p}$
where $\partial^i=\frac{\partial}{\partial x^i}+\sum_{\alpha}u^{(\alpha i)}\frac{\partial}{\partial u^{(\alpha)}}$ and the Euler--Lagrange operator
reads
$EL=u\frac{\partial}{\partial u}+\sum_{\alpha} (-1)^{\vert \alpha\vert}\partial^{(\alpha)}u^{(\alpha)}\frac{\partial }{\partial u^{(\alpha)}}$.

We shall prove two related identities, in local chart
\begin{eqnarray}\label{firstidentitypoincare}
\left(\rho f\right)(j^{k}\psi)dx&=&(uEL(f))(j^{k}\psi)dx
\nonumber\\&& +d\left(\sum_{\mu=1}^d j_\mu(j^{2k}\psi) 
\frac{\partial}{\partial{x^\mu}}\lrcorner dx \right)\\
f(j^k(\psi_1+\psi_2))dx&=&f(j^{k}\psi_1)dx
\nonumber\\&&\hspace{-15mm} +\int_0^1 dt \psi_2
EL(f)(j^{k}(\psi_1+t\psi_2))dx\nonumber\\
&&\hspace{-15mm}
+d\left(\int_0^1\frac{dt}{t} j_\mu(j^{2k}(\psi_1+t\psi_2))\partial_{x^\mu}\lrcorner dx\right).
\label{secondidentitypoincare}
\end{eqnarray}

For all $(f,g)\in C^\infty(J^k\mathbb{R}^d)^2$ and all 
multiindices $\alpha$, the generalized Leibniz-like identity holds true:
\begin{eqnarray*}
(\partial^{\alpha_1}\dots \partial^{\alpha_p}f) g &=&
(-1)^pf(\partial^{\alpha_p}\dots \partial^{\alpha_1}g)
\\&&\hspace{-2cm}
+
\sum_{i=1}^p(-1)^{i+1}
\partial^{\alpha_i}\big((\partial^{\alpha_{i+1}}\dots \partial^{\alpha_p}f)
\partial^{\alpha_{i-1}} \dots \partial^{\alpha_1}g \big)
\end{eqnarray*}
where the second term is a sum of total derivatives.
Using this we derive the following
key identity which is valid on jet spaces. For all $fdx\in C^\infty(J^k\mathbb{R}^d)\otimes \Omega^d(\mathbb{R}^d)$~:
\begin{eqnarray*}
\left(\rho f\right)dx&=&\sum u^{(\alpha)} \frac{\partial f}{\partial u^{(\alpha)}}dx=u \frac{\partial f}{\partial u}+\sum_{\vert\alpha\vert\geqslant 1} u^{(\alpha)} \frac{\partial f}{\partial u^{(\alpha)}}dx\\
&=&uEL(f)dx+\sum_{\mu=1}^d\partial^\mu j_\mu(j^{2k}\psi) dx
\\&=&
uEL(f)dx+d\left(\sum_{\mu=1}^d j_\mu(j^{2k}\psi) \frac{\partial}{\partial_{x^\mu}}\lrcorner dx \right)
\end{eqnarray*}
where $j_\mu\in C^\infty(J^{2k}\mathbb{R}^d)$ is a local functional.

To prove the second identity, we shall use the fundamental Theorem of calculus
and the first identity~:
\begin{eqnarray*}
f(j^k(\psi_1+\psi_2))dx&=&f(j^{k}\psi_1)dx
\\&&+\int_0^1 \frac{dt}{t} \left(\rho f\right)(j^{k}(\psi_1+t\psi_2))dx\\
&=&f(j^{k}\psi_1)dx
\\&&+\int_0^1 dt \psi_2 EL(f)(j^{k}(\psi_1+t\psi_2))dx\\
&+&d\left(\int_0^1\frac{dt}{t} j_\mu(j^{2k}(\psi_1+t\psi_2))\partial_{x^\mu}\lrcorner dx\right)
\end{eqnarray*}

To prove the claim of the Lemma is equivalent to show 
the following statement: if
a local functional $F$ is locally constant i.e.
$F(\varphi+\psi)=F(\varphi)$
whenever $\varphi+\psi\in V$, then
$F(\varphi+\psi) = F(\varphi)+\int_M d\beta(j^{2k}_x\psi)$
and $\beta(j^{2k}_x\psi)\in \Omega^{n-1}_c(M)$ is a compactly supported $n-1$ form. 
For all $\psi$ in $V-\varphi$, 
\begin{eqnarray*}
F(\varphi+t\psi)=F(\varphi)\implies
\int_{\mathbb{R}^d} \int_0^1 \frac{dt}{t} (\rho f)(j_x^k(t\psi))dx=0, \\
\implies \int_{\mathbb{R}^d} \int_0^1 dt \left(\psi EL(f)(j_x^k(t\psi))dx  \right) =0.
\end{eqnarray*}
This means that $EL(f)=0$ therefore on any \textbf{open chart} $U$ ($U$
is contractible), Eq.~\eqref{secondidentitypoincare} yields 
$$f(j_x^p(\psi))=f(0)+d\int_0^1dt \left(\sum_\mu  j_\mu(j_x^{2p}(t\psi))\partial_{x^\mu}\lrcorner dx \right).$$ 

We want to prove that 
$EL(f)=0 \implies f(j_x^k(\psi))dx-f(0)dx|_{M_{p+1}}=d\beta(j^{2k}\psi)$ where $\beta\in C^\infty(J^{2p}(M))$
knowing that this holds true on any local chart, and that $EL(f)=0$ is equivalent to assuming that $F(\varphi+\psi):= \int_M f(j^k_x \psi) dx $ is locally constant.
We cover $M$ by some countable
union $\cup_{i\in \mathbb{N}}U_i$ of contractible 
open charts such that every element $x\in M$ belongs
to a \textbf{finite number} of charts $U_i$, set $M_p=\left(U_1\cup\dots\cup U_p\right)$
and we arrange the cover in such a way that
$M_p\cap U_{p+1}\neq \emptyset$ for all $p$ which is always possible.
Assume by induction on $p$
that $EL(f)=0$ and $\text{supp }(f)\subset M_p$
implies 
\begin{eqnarray*}
f(j_x^k(\psi))dx-f(0)dx|_{M_p}=d\beta(j^{2k}\psi),
\end{eqnarray*}
where $\beta\in C^\infty(J^{2p}(M))\otimes \Omega_c^{d-1}(M_p)$.

We want to prove that 
$EL(f)=0, \text{supp }(f)\subset M_{p+1}\implies f(j_x^k(\psi))dx-f(0)dx|_{M_{p+1}}=d\beta(j^{2k}\psi)$ where $\beta\in C^\infty(J^{2p}(M))\otimes \Omega_c^{d-1}(M_{p+1})$.
Choose a partition of unity
$(\chi,1-\chi)$ subordinated to
$M_p\cup U_{p+1}$, the key idea is to decompose
the variation $\psi$ of the background field $\varphi$ as the sum of two components
$\chi\psi+(1-\chi)\psi$ where $\chi\psi$ (resp $(1-\chi)\psi$) vanishes outside
$U_p$ (resp $U_{p+1}$) which yields~:
\begin{eqnarray*}
f(j^k\psi)&=&f(j^k(\chi\psi+(1-\chi)\psi))-
f(j^k((1-\chi)\psi))
\\&& +f(j^k((1-\chi)\psi))-f(0)+f(0).
\end{eqnarray*}
The second idea is to 
note that for every fixed $\psi$, the new functional
$$ \phi\mapsto \tilde{f}(j^k\phi)=f(j^k(\chi\phi+(1-\chi)\psi))-f(j^k((1-\chi)\psi))$$ has \textbf{trivial} Euler--Lagrange equation
$EL(\tilde{f})(j^k\phi)=EL(f)(j^k(\chi\phi+(1-\chi)\psi))=0 $ since $EL(f)=0$ and its support is contained in $M_p$.
Therefore~:
\begin{eqnarray*}
f(j^k\psi)&=&\tilde{f}(j^k\psi)+
f(j^k((1-\chi)\psi))-f(0)+f(0)
\\
&=& d\tilde{\beta}(j^{2k}\psi)+\underbrace{f(j^k((1-\chi)\psi))-f(0)}+f(0)
\end{eqnarray*}
by the inductive assumption. To treat the term under brace, define a new functional
$$\psi \mapsto g(j^k\psi)=f(j^k((1-\chi)\psi))-f(0)$$
whose support is contained in $U_{p+1}$ and whose Euler-Lagrange equation vanishes, $EL(g)=0$ again by the fact that $EL(f)=0$. Since $U_{p+1}$ is contractible
we know that $f(j^k((1-\chi)\psi))-f(0)=d\alpha(j^{2k}\psi)$ where $\alpha\in C^\infty(J^{2k}M)\otimes \Omega^{d-1}_c(U_{p+1})$
and therefore
we found that
$$f(j^k\psi)dx=d\beta(j^{2k}\psi)+f(0)dx$$
$\beta\in C^\infty(J^{2k}M)\otimes \Omega^{d-1}_c(M_{p+1}) $.
Therefore for all $\psi\in V-\varphi$,
$f(j_x^k(\psi))dx=d\beta(j^{2k}\psi)+f(0)dx$. Now we conclude by using the fact that $F$ is a constant functional thus $0=F(\varphi+\psi)-F(\varphi)=\int_M\left(f(0)dx+d\beta(j^{2k}\psi)\right)=\int_M f(0)dx$. But $f(0)dx$ is a top form in $\Omega^d_c(M)$ which does not depend on $\psi$ and whose integral over $M$ vanishes hence
$f(0)dx=dk$ for some $k\in \Omega_c^{d-1}(M)$ since $H_c^d(M,\mathbb{R})\simeq\mathbb{R}$ for the top de Rham cohomology with compact support when $M$ is \textbf{connected}
[\onlinecite[Theorem 17.30 p.~454]{LeeSmooth}].
\end{proof}
The next Theorem summarizes the above results~:
\begin{thm}
Let $U$ be an open subset of $C^\infty(M)$ and $F:U\to \bbK$ a Bastiani smooth
\emph{local} functional.
For $\varphi\in U$, if there is an integer $k$, a neighborhood 
$V$ of $\varphi$, an open subset $\mathcal{V}$ of $J^kM$ and $f\in C^\infty(\mathcal{V})$
such that
$x\mapsto f(j^k\psi_x)$ compactly supported and
\begin{eqnarray*}
F(\varphi+\psi) = F(\varphi)+\int_M f(j^k_x\psi) dx
\end{eqnarray*} 
whenever $\varphi+\psi\in V$ then
in every local chart 
\begin{eqnarray}
\nabla F_\varphi=EL(f)(j^k\psi)
\end{eqnarray}
where 
$EL(f)(\psi)=\sum_{\vert \alpha\vert\leqslant k}(-1)^{|\alpha|}\left(\partial^{(\alpha)}\left(
  \frac{\partial f}{\partial u^{(\alpha)}}\right)\right)(j^k_x\psi) $
is the Euler--Lagrange operator and $EL(f)(\psi)$ is uniquely
determined by $F$.

Furthermore, we find that~:
\begin{eqnarray*}
F(\varphi+\psi) &=& F(\varphi)+\int_M f(j^k_x\psi) dx
\nonumber\\&=& F(\varphi)+\int_M \left(\int_0^1 dt\psi EL(f)(t\psi)\psi\right) dx
\end{eqnarray*}
where
$f(j^k_x\psi)-\left(\int_0^1 dt \psi EL(f)(t\psi)\psi\right) dx=d\beta(j^{2k}_x\psi)$ and $\beta(j^{2k}_x\psi)\in \Omega^{d-1}_c(M)$ is a compactly supported $d-1$ form.   
\end{thm}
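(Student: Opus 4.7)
The plan is to deduce this theorem directly from the two previous results, namely the Proposition characterizing the Euler--Lagrange operator as intrinsic, and Theorem \ref{functionaldensitiescohomology} (global Poincar\'e). No new analytic input is needed; what remains is essentially assembling the pieces and tracking jet orders.

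The first identity $\nabla F_\varphi = EL(f)(j^k\psi)$ is nothing but the content of the preceding Proposition: starting from the representation $F(\varphi+\psi) = F(\varphi) + \int_M f(j^k_x \psi)\, dx$ in the neighborhood $V$, we differentiate in a direction $\psi$ supported in a local chart, integrate by parts (boundary terms vanish by compact support), and read off $\nabla F_\varphi$ as $EL(f)(j^k\psi)$; uniqueness of $\nabla F_\varphi$ is guaranteed by Lemma \ref{nablaFunique}.

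For the second identity, I will apply the key jet-space identity \eqref{secondidentitypoincare} derived in the proof of Theorem \ref{functionaldensitiescohomology},
\begin{eqnarray*}
f(j^k(\psi_1+\psi_2))\,dx &=& f(j^{k}\psi_1)\,dx
 + \int_0^1 dt\,\psi_2\, EL(f)(j^{k}(\psi_1+t\psi_2))\,dx \\
&& +\,d\!\left(\int_0^1 \frac{dt}{t}\, j_\mu(j^{2k}(\psi_1+t\psi_2))\,\partial_{x^\mu}\lrcorner\, dx\right),
\end{eqnarray*}
with $\psi_1=0$ and $\psi_2=\psi$, then integrate over $M$. The total-derivative term integrates to zero because its primitive is a compactly supported $(d-1)$-form (the functional $F$ is locally compactly supported by Proposition \ref{localcompactsupport}, so one can arrange that $f$ has compact $x$-support uniformly on $V$). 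The remaining constant piece $\int_M f(0)\,dx$ vanishes: indeed, taking $\psi=0$ in the defining relation gives $F(\varphi)=F(\varphi)+\int_M f(0)\,dx$, so $\int_M f(0)\,dx=0$. This yields the required formula
\begin{eqnarray*}
F(\varphi+\psi) &=& F(\varphi)+\int_M \left(\int_0^1 dt\,\psi\, EL(f)(j^k(t\psi))\right)dx.
\end{eqnarray*}

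Finally, for the exactness statement, observe that by the first two parts both $f_1(j^k_x\psi):=f(j^k_x\psi)$ and $f_2(j^k_x\psi):=\int_0^1 dt\,\psi\, EL(f)(j^k(t\psi))$ are representing densities for the same local functional $F$ on the neighborhood $V$. Theorem \ref{functionaldensitiescohomology} then asserts precisely that their difference is a total derivative $d\beta(j^{2k}_x\psi)$ of a compactly supported $(d-1)$-form depending only on the $2k$-jet of $\psi$. The jump from $k$ to $2k$ in jet order arises from the operator $\partial^{(\alpha)}$ inside $EL$ and from the $j_\mu$ appearing in \eqref{firstidentitypoincare}; this is the only genuinely delicate point, and I expect the main obstacle to be verifying that all compact-support conditions (needed to annihilate boundary terms both in the integration by parts of step 2 and in the invocation of the global Poincar\'e lemma) propagate correctly through the identities \eqref{firstidentitypoincare}--\eqref{secondidentitypoincare}. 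Once this bookkeeping is in place, the theorem is immediate.
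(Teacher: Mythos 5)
Your overall assembly is the intended one: the paper gives no separate argument for this theorem (it is stated as a summary), and the proof is exactly what you describe — the first identity is the preceding Proposition on the intrinsic Euler--Lagrange operator, and the representation formula follows from the identity \eqref{secondidentitypoincare} of the global Poincar\'e proof with $\psi_1=0$, $\psi_2=\psi$, the exact term integrating to zero and $\int_M f(0)\,dx=0$ by evaluating at $\psi=0$. Those two steps are fine as written.

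The one genuine wrinkle is in your last step. If you obtain the exactness statement by feeding the pair $f_1=f$ and $f_2=\int_0^1 dt\,\psi\,EL(f)(j^k(t\psi))$ into Theorem \ref{functionaldensitiescohomology}, note that $f_2$ is a function on $J^{2k}M$ (the total derivatives $\partial^{(\alpha)}$ inside $EL$ raise the jet order), so the theorem applied at order $2k$ only yields $\beta$ depending on the $4k$-jet, which is weaker than the claimed $\beta(j^{2k}_x\psi)$. The sharper claim does not need that detour: the identity \eqref{secondidentitypoincare} with $\psi_1=0$ already exhibits
$f(j^k_x\psi)\,dx-\bigl(\int_0^1 dt\,\psi\,EL(f)(j^k(t\psi))\bigr)dx
= f(0)\,dx + d\bigl(\int_0^1\tfrac{dt}{t}\,j_\mu(j^{2k}_x(t\psi))\,\partial_{x^\mu}\lrcorner\,dx\bigr)$
with $j_\mu\in C^\infty(J^{2k})$, and the only missing ingredient is that the $\psi$-independent top form $f(0)\,dx$ is itself exact: its integral vanishes (your $\psi=0$ observation) and $H^d_c(M,\bbR)\simeq\bbR$ for $M$ connected and oriented, which are the standing assumptions of this section, so $f(0)\,dx=dk$ with $k\in\Omega^{d-1}_c(M)$ a constant (hence trivially $2k$-jet) contribution to $\beta$. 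With that substitution your argument delivers exactly the stated result; as written, the third paragraph either loses the jet order or silently omits the cohomological step for $f(0)\,dx$.
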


\subsubsection{Explicit forms}
In this section we derive the explicit expression 
of $\nabla F_\varphi$ and $F^{(\alpha)}(\varphi)$ in
terms of $f$ when $M=\bbR^d$. Since the general expression
is not very illuminating, 
let us start with the following simple example:
\begin{eqnarray*}
F(\varphi) &=& \int_M h(x) \varphi^4(x) + g^{\mu\nu}(x)
  \partial_\mu \varphi(x)\partial_\nu\varphi(x) dx,
\end{eqnarray*}
where $h$ and $g^{\mu\nu}$ are smooth and compactly supported
and $g^{\mu\nu}$ is symmetric.
We compute
\begin{eqnarray*}
DF_\varphi(u) &=& 2 \int_M dx 2h(x) \varphi^3(x)u(x) + g^{\mu\nu}(x)
  \partial_\mu \varphi(x)\partial_\nu u(x)
\\&=&
2 \int_M dx \Big(2h(x) \varphi^3(x)- \partial_\nu \big(g^{\mu\nu}(x)
  \partial_\mu \varphi(x)\big)\Big)u(x),
\end{eqnarray*}
where we used integration by parts.
Thus,
\begin{eqnarray*}
\nabla F_\varphi(x) &=& 4h(x) \varphi^3(x)
- 2\partial_\nu \big(g^{\mu\nu}(x)
  \partial_\mu \varphi(x)\big).
\end{eqnarray*}
Moreover,
\begin{eqnarray*}
D^2F_\varphi(u,v) &=&
2 \int_M dx u(x) \Big(6h(x) \varphi^2(x)v(x)
\\&& - \partial_\nu \big(g^{\mu\nu}(x)
  \partial_\mu v(x)\big)\Big).
\end{eqnarray*}
To write this as a distribution, we need to integrate
over two variables:
\begin{eqnarray*}
D^2F_\varphi(u,v) &=&
2 \int_{M^2} dx dy u(x)\delta(x-y)
\Big(6h(y) \varphi^2(y)v(y)
\\&& - \partial_\nu \big(g^{\mu\nu}(y)
  \partial_\mu v(y)\big)\Big).
\end{eqnarray*}
Now we can use integration by parts over $y$
to recover $v(y)$:
\begin{eqnarray*}
D^2F_\varphi(u,v) &=&
\sum_\alpha \int_{M^2} dx dy u(x)v(y) f^\alpha(\varphi)(y)
  \partial^\alpha_y \delta(x-y),
\end{eqnarray*}
where the non-zero $f^\alpha(\varphi)$ are
\begin{eqnarray*}
f^0(\varphi)(y) &=& 12 \varphi^2(y),\\
f^{\mu}(\varphi)(y) &=& - \partial_\nu g^{\mu\nu}(y),\\
f^{\mu\nu}(\varphi)(y) &=& - g^{\mu\nu}(y).
\end{eqnarray*}

More generally
\begin{prop}
If 
\begin{eqnarray*}
F(\varphi) &=& \int_M f(\varphi^{(\alpha)}(x)) dx,
\end{eqnarray*}
then
\begin{eqnarray*}
F^\alpha(\varphi)(x) &=& \sum_{\beta\le \gamma}(-1)^{|\beta|}
  \binom{\beta}{\gamma} \partial_{y^{\beta-\gamma}}
  \frac{\partial^2 f}
   {\partial \varphi^{(\alpha-\gamma)}(x)
   \partial \varphi^{(\beta)}(x)}.
\end{eqnarray*}
\end{prop}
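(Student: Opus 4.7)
The plan is a direct calculation: starting from the definition $F(\varphi) = \int_M f(\varphi^{(\alpha)}(x))\,dx$, compute $D^2F_\varphi(u,v)$ by differentiating twice under the integral, integrate by parts in the $u$-variable to transfer all derivatives off $u$, and then match the resulting expression against the normal form
\[
F^{(2)}_\varphi(x,y) = \sum_\alpha F^\alpha(\varphi)(y)\,\partial_y^\alpha\delta(x-y)
\]
guaranteed by Lemma~\ref{hardLemma} to read off the $F^\alpha(\varphi)$.

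Concretely, I would first apply the chain rule to $F(\varphi+tu+sv)$; differentiation under the integral (justified, as in Lemma~\ref{hardLemma}, by the local compactness of support and continuity of mixed partials of the smooth symbol $f$ on the jet bundle) yields
\[
D^2F_\varphi(u,v) = \sum_{\alpha,\beta}\int_M
\frac{\partial^2 f}{\partial\varphi^{(\alpha)}(x)\partial\varphi^{(\beta)}(x)}\,u^{(\alpha)}(x)\,v^{(\beta)}(x)\,dx.
\]
I would then integrate by parts to move $\partial^\alpha$ off $u(x)$, picking up $(-1)^{|\alpha|}$ and applying the multi-index Leibniz rule
$\partial^\alpha(gh) = \sum_{\gamma\le\alpha}\binom{\alpha}{\gamma}\partial^{\alpha-\gamma}g\cdot\partial^\gamma h$
to the product of the second partial of $f$ with $v^{(\beta)}$, which turns $\partial^\gamma v^{(\beta)}$ into $v^{(\beta+\gamma)}$.

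The final step is bookkeeping: rewrite the representation
$\langle F^{(2)}_\varphi, u\otimes v\rangle = \sum_\delta (-1)^{|\delta|} \int F^\delta(\varphi)(x)\,u(x)\,\partial^\delta v(x)\,dx$
(using $\int u(x)v(y)\partial_y^\delta\delta(x-y)\,dxdy = (-1)^{|\delta|}\int u\,\partial^\delta v\,dx$) and match coefficients of $\partial^\delta v(x)$. Uniqueness of the Schwartz kernel representation on the diagonal (the expansion of the previous subsection, together with the smoothness of the $F^\alpha$ from $V$ to $\calD(K)$) then identifies each $F^\delta(\varphi)(x)$ with the sum obtained by collecting, for every $(\alpha,\beta,\gamma)$ satisfying $\beta+\gamma=\delta$ and $\gamma\le\alpha$, the term $(-1)^{|\alpha|}\binom{\alpha}{\gamma}\partial^{\alpha-\gamma}\bigl(\partial^2 f/\partial\varphi^{(\alpha)}\partial\varphi^{(\beta)}\bigr)$, which is the claimed formula after the reindexing $\alpha\mapsto\delta+\text{shift}$.

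The main obstacle is purely combinatorial: choosing the right reindexing so that the external label $\alpha$ of $F^\alpha$ coincides cleanly with the order of the derivative of $v$ that survives after the Leibniz expansion. No analytic subtlety appears, since all operations are on the smooth symbol $f$ and finitely many jet coordinates; everything else (joint continuity in $\varphi$, support control, empty wavefront set of $DF_\varphi$) has been secured upstream.
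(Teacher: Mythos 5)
Your plan is essentially the paper's own proof: a direct computation in which one differentiates twice under the integral, integrates by parts, applies the multi-index Leibniz rule, and reads off the coefficients of the expansion in derivatives of $\delta(x-y)$; no further analytic input is needed, exactly as you say. The only real difference is which argument you clear of derivatives: the paper first puts $DF_\varphi$ in Euler--Lagrange form (derivatives moved off $u$), then differentiates in $\varphi$ and isolates $v$, so its coefficients multiply $\partial_{y^{\alpha+\gamma}}\delta(x-y)$, i.e.\ they are matched against $\int F^{\delta}\,v\,\partial^{\delta}u\,dx$, whereas you move everything off $u$ and match against $\partial^{\delta}v$. These give the same coefficient functions, but only via the symmetry of the mixed partials $\partial^{2}f/\partial\varphi^{(\alpha)}\partial\varphi^{(\beta)}$ (equivalently of $D^{2}F_\varphi$), which you should invoke explicitly since it is what legitimizes your final relabelling. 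One bookkeeping point to fix: the factor $(-1)^{|\delta|}$ in your displayed pairing is inconsistent with your final identification --- matching coefficients literally against $\sum_\delta(-1)^{|\delta|}\int F^{\delta}u\,\partial^{\delta}v\,dx$ would leave an extra $(-1)^{|\delta|}$ relative to the stated formula. In the convention the proposition actually uses (the one of the worked example, where the coefficient sits at the second variable, so that $D^{2}F_\varphi(u,v)=\sum_\delta\int F^{\delta}\,v\,\partial^{\delta}u\,dx=\sum_\delta\int F^{\delta}\,u\,\partial^{\delta}v\,dx$ with no extra sign), your collected sum is exactly the claimed expression; with the sign kept, you would instead be computing the coefficients of the representation $\sum_\delta F^{\delta}(x)\,\partial^{\delta}_y\delta(x-y)$, which is a genuinely different (not merely sign-flipped) family. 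So: drop the $(-1)^{|\delta|}$, cite the symmetry of the second partials, and the argument is complete and coincides with the paper's.
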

\begin{proof}
The proof is a straightforward generalization of the example.
Indeed,
\begin{eqnarray*}
DF_\varphi(u) &=& \sum_\alpha 
  \int_M 
   \frac{\partial f}{\partial \varphi^{(\alpha)}(x)}
   u^{(\alpha)}(x) dx
\\&=&
 \sum_\alpha (-1)^{|\alpha|}
  \int_M u(x) \frac{\partial^{|\alpha|}}{\partial x^\alpha}
  \frac{\partial f}{\partial \varphi^{(\alpha)}(x)} dx,
\end{eqnarray*}
where we used an integration by parts to recover
the Euler-Lagrange operator.
The second derivative is
\begin{eqnarray*}
D^2F_\varphi(u,v) &=& 
 \sum_{\alpha\beta} (-1)^{|\alpha|}
  \int_M u(x) 
\\&& \frac{\partial^{|\alpha|}}{\partial x^\alpha}
  \Big(\frac{\partial f^2}
  {\partial \varphi^{(\alpha)}(x)\partial \varphi^{(\beta)}(x)} 
  v^{(\beta)}(x)\Big) dx.
\end{eqnarray*}
We write this as a double integral
\begin{eqnarray*}
D^2F_\varphi(u,v) &=& 
 \sum_{\alpha\beta} (-1)^{|\alpha|}
  \int_{M^2} u(x) \delta(x-y)
\\&&
   \partial_{y^\alpha}
  \Big(\frac{\partial f^2}
  {\partial \varphi^{(\alpha)}(y)\partial \varphi^{(\beta)}(y)} 
  v^{(\beta)}(y)\Big) dx dy.
\end{eqnarray*}
A first integration by parts gives us
\begin{eqnarray*}
D^2F_\varphi(u,v) &=& 
 \sum_{\alpha\beta}
  \int_{M^2} u(x) 
  \Big(\frac{\partial f^2}
  {\partial \varphi^{(\alpha)}(y)\partial \varphi^{(\beta)}(y)} 
  v^{(\beta)}(y)\Big)
\\&&
   \partial_{y^\alpha} \delta(x-y) dx dy.
\end{eqnarray*}
A second integration by parts isolates $v(y)$:
\begin{eqnarray*}
D^2F_\varphi(u,v) &=& 
 \sum_{\alpha\beta} (-1)^{|\beta|}\sum_{\gamma\le\beta}
  \binom{\beta}{\gamma}
  \int_{M^2}  dx dy u(x) v(y)
\\&&
  \Big(\partial_{y^{\beta-\gamma}} \frac{\partial f^2}
  {\partial \varphi^{(\alpha)}(y)\partial \varphi^{(\beta)}(y)} 
  \Big)
   \partial_{y^{\alpha+\gamma}} \delta(x-y).
\end{eqnarray*}
\end{proof}
If we calculate higher differentials
$D^kF_\varphi(u_1,\dots, u_k)$ we see that we always
obtain products of smooth functions by derivatives
of products of delta functions. This shows that
the wavefront set of $F^{(k)}_\varphi$ is in
the conormal $C_k$.

\section{Peetre theorem for local and multilocal functionals}
In this section, we propose an alternative
characterization of local
functionals
in terms of a nonlinear Peetre theorem.
We do not characterize the locality of the action $F$ but the locality
of the Lagrangian density, that we denoted
$\nabla F$ in the previous section.
We first state our theorems for local functionals, and then
we prove them for the case of multilocal functionals, which are 
a natural generalization of local functionals in quantum field
theory.
Our proof is inspired by recent works on the Peetre 
theorem~\cite{Navarro--Sancho,Pflaum--Brasselet}, 
however it is formulated in the
language
of Bastiani smoothness and uses
simpler assumptions than Slov\'ak's paper~\cite{Slovak-88}.

\subsection{Peetre theorem for local functionals}
Let $\Omega$ be some open set in a manifold
$M$. We first begin with an alternative definition
of a local map from $C^\infty(\Omega)$
to itself, that we call \emph{Peetre local}.
\begin{dfn}
\label{deflocalmap}
A map $F:C^\infty(\Omega)\to C^\infty(\Omega)$ is 
\emph{Peetre local} if
for every $x\in\Omega$, if
$\varphi_1=\varphi_2$ on some neighborhood of $x$
then $F(\varphi_1)(x)=F(\varphi_2)(x)$.
\end{dfn}
The relation with the additivity condition is given by
\begin{prop}
Let $F:C^\infty(\Omega)\to C^\infty(\Omega)$ be Peetre local.
For every $(\varphi_1,\varphi_2)\in C^\infty(\Omega)^2$
if $\text{supp }\varphi_1$ and  $\text{supp }\varphi_2$ do not meet
then for every $x\in \Omega$ and for all $\varphi$,
\begin{eqnarray}
F(\varphi_1+\varphi_2+\varphi)(x) &=& F(\varphi_1+\varphi)(x)
\nonumber\\&&+
   F(\varphi_2+\varphi)(x)-F(\varphi)(x).
\label{locequation}
\end{eqnarray}
\end{prop}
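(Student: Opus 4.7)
The plan is to exploit the assumption that $\supp\varphi_1\cap\supp\varphi_2=\emptyset$ by a pointwise case analysis on $x\in\Omega$. Because the two supports are closed and disjoint, for any $x\in\Omega$ at least one of the conditions $x\notin\supp\varphi_1$ or $x\notin\supp\varphi_2$ must hold, so it suffices to verify the equality on each of these two (overlapping) open subsets of $\Omega$ separately.

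Consider first a point $x\notin\supp\varphi_1$. Then there is an open neighborhood $U$ of $x$ on which $\varphi_1$ vanishes identically, so that on $U$ we have the equalities of smooth functions $\varphi_1+\varphi_2+\varphi=\varphi_2+\varphi$ and $\varphi_1+\varphi=\varphi$. Applying Peetre locality (Def.~\ref{deflocalmap}) to each of these pairs yields
\begin{eqnarray*}
F(\varphi_1+\varphi_2+\varphi)(x) &=& F(\varphi_2+\varphi)(x),\\
F(\varphi_1+\varphi)(x) &=& F(\varphi)(x).
\end{eqnarray*}
Substituting these two identities into the right-hand side of (\ref{locequation}) gives
\begin{eqnarray*}
F(\varphi_1+\varphi)(x)+F(\varphi_2+\varphi)(x)-F(\varphi)(x)
&=& F(\varphi)(x)+F(\varphi_2+\varphi)(x)-F(\varphi)(x)\\
&=& F(\varphi_2+\varphi)(x),
\end{eqnarray*}
which matches the left-hand side. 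The case $x\notin\supp\varphi_2$ is handled by an entirely symmetric argument, exchanging the roles of $\varphi_1$ and $\varphi_2$.

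Since every $x\in\Omega$ falls into at least one of these two cases, the identity (\ref{locequation}) holds at every point. There is essentially no obstacle here: the whole content of the statement lies in translating the disjoint-support hypothesis into the local vanishing of one of the $\varphi_i$ near $x$, after which Peetre locality applies directly. The only mild subtlety worth emphasizing in the write-up is that $\supp\varphi_i$ is closed, so that $x\notin\supp\varphi_i$ really does furnish an \emph{open} neighborhood on which $\varphi_i\equiv 0$, which is what Def.~\ref{deflocalmap} requires.
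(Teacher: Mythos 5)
Your proof is correct and follows essentially the same route as the paper: translate the disjoint-support hypothesis into the vanishing of one of the $\varphi_i$ on a neighborhood of $x$ and apply Peetre locality to collapse two of the four terms. The only cosmetic difference is that the paper splits into three cases ($x$ outside both supports, $x\in\supp\varphi_1$, $x\in\supp\varphi_2$) while you use the two overlapping cases $x\notin\supp\varphi_1$ or $x\notin\supp\varphi_2$, which is a harmless streamlining of the same argument.
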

\begin{proof}
If $x\notin \left(\text{supp }\varphi_1\cup \text{supp
}\varphi_2\right)$
then $\varphi_1=\varphi_2=0$ in some neighborhood
of $x$, it follows that
$F(\varphi_1+\varphi_2+\varphi)(x)=F(0+0+\varphi)(x)=F(\varphi)(x)$
and
$F(\varphi_1+\varphi)(x)+F(\varphi_2+\varphi)(x)-F(\varphi)(x)
  =2F(\varphi)(x)-F(\varphi)(x)=F(\varphi)(x)$ 
hence Eq.~(\ref{locequation}) holds true.

If $x\in\text{supp }\varphi_1$ then necessarily
there is some neighborhood $U$ of $x$ on which $\varphi_2|_U=0$
hence $\varphi_1+\varphi_2+\varphi|_U=\varphi_1+\varphi|_U$ and 
$F(\varphi_1+\varphi_2+\varphi)(x)=F(\varphi_1+\varphi)(x)$.
Also 
$F(\varphi_1+\varphi)(x)+F(\varphi_2+\varphi)(x)-F(\varphi)(x)
=F(\varphi_1+\varphi)(x)+F(\varphi)(x)-F(\varphi)(x)=F(\varphi_1+\varphi)(x)$
hence again Eq.~(\ref{locequation}) holds true.
The case where $x\in \text{supp }\varphi_2$ can be treated
by similar methods which yields the final result.
\end{proof}

The Peetre theorem for local functionals is
\begin{thm}
Let $F:C^\infty(\Omega) \to C^\infty(\Omega)$ be a Bastiani
smooth Peetre local map.
Then, for every $\varphi\in C^\infty(\Omega)$ there is a neighborhood
$V$ of $\varphi$ in $C^\infty(\Omega)$ and an integer $k$ such that for
all $\psi$ such that $\varphi+\psi\in V$,
$F(\varphi+\psi)(x)=c(j^k\psi_x)$
for some smooth function $c$ on $J^k\Omega$. 
\end{thm}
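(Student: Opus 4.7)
The plan is to reduce the conclusion to Proposition~\ref{prop:NeDependQueDesJets} by establishing that, in a sufficiently small neighborhood of $\varphi$, the value $F(\varphi+\psi)(x)$ depends on $\psi$ only through a finite jet $j^m_x\psi$. The central object is the jointly Bastiani-smooth map $DF:U\times E\to E$ with $E=C^\infty(\Omega)$, combined with the Peetre-locality hypothesis.

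First I would observe that, for each $\varphi\in U$ and $x\in\Omega$, the linear functional $\chi\in E\mapsto DF_\varphi(\chi)(x)\in\bbK$ is a distribution supported at the single point $\{x\}$: Peetre locality gives $F(\varphi+t\chi)(x)=F(\varphi)(x)$ for every $\chi$ vanishing near $x$ and every small $t$, so differentiation at $t=0$ yields $DF_\varphi(\chi)(x)=0$. To bound the order of this distribution uniformly, I would exploit joint continuity of $DF$: for any compact $L\subset\Omega$, a finite-cover compactness argument (in the spirit of the proof of Lemma~\ref{equicontlem}) produces a convex neighborhood $V$ of $\varphi$ in $U$, a compact $K\subset\Omega$, an integer $m$ and a constant $C$ with
\[ |DF_{\varphi'}(\chi)(x)| \;\le\; C\,\pi_{m,K}(\chi), \qquad \varphi'\in V,\ \chi\in E,\ x\in L. \]
A distribution supported at $\{x\}$ of order at most $m$ is a finite linear combination of derivatives of $\delta_x$ of order $\le m$, so $DF_{\varphi'}(\chi)(x)=0$ whenever $j^m_x\chi=0$.

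Next I would apply the fundamental theorem of calculus: for $\psi_1,\psi_2$ with $\varphi+\psi_1,\varphi+\psi_2\in V$ and $j^m_x\psi_1=j^m_x\psi_2$, the difference $\psi_1-\psi_2$ has vanishing $m$-jet at $x$, whence
\[ F(\varphi+\psi_1)(x)-F(\varphi+\psi_2)(x)
  = \int_0^1 DF_{\varphi+\psi_2+t(\psi_1-\psi_2)}(\psi_1-\psi_2)(x)\,dt = 0, \]
using convexity of $V$ to keep the integration path inside $V$. Hence $F(\varphi+\psi)(x)$ factors through $(\psi,x)\mapsto j^m_x\psi$. Since $(\psi,x)\mapsto F(\varphi+\psi)(x)$ is jointly smooth on $(V-\varphi)\times\Omega$ (composition of the Bastiani smooth map $F$ with the jointly smooth evaluation $(f,x)\mapsto f(x)$), Proposition~\ref{prop:NeDependQueDesJets} furnishes a unique smooth function $c$ on the open subset $j^m((V-\varphi)\times\Omega)\subset J^m\Omega$ with $F(\varphi+\psi)(x)=c(j^m_x\psi)$.

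The main obstacle is producing a single integer $k$ valid for every $x\in\Omega$: the uniform estimate above gives only $m=m(L)$ depending on the compact $L\subset\Omega$. I would attempt to exhaust $\Omega$ by compacts $L_1\subset L_2\subset\cdots$, obtain orders $m_1\le m_2\le\cdots$ and factorizations $c_i$ on opens of $J^{m_i}\Omega|_{L_i}$, and use the uniqueness in Proposition~\ref{prop:NeDependQueDesJets} to glue them coherently; the delicate step is to show that, after shrinking $V$, the sequence $(m_i)$ stabilizes so that a single $k$ suffices. If that fails, the natural remedy --- which the paper's subsequent treatment of multi-local functionals suggests --- is to weaken the statement to locally uniform order in $x$, exactly as in the classical linear Peetre theorem.
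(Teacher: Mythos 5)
Your argument is essentially correct, but it follows a genuinely different route from the one taken in the paper. The paper splits the problem in two: the finite-jet dependence of $F(\varphi+\psi)(x)$ is first proved for merely \emph{continuous} Peetre ($k$-)local maps, using the classical Peetre cutoff trick — the scaled cutoffs $\chi_\lambda$ of Lemma~\ref{Malgrangestyletechnicallemma} and the estimate $\pi_{m,K'}((\varphi_1-\varphi_2)\chi_\lambda)\leqslant\tilde C\lambda\,\pi_{m+1,K}(\varphi_1-\varphi_2)$ — with no derivatives of $F$ at all; smoothness of $c$ is then obtained in the second Peetre theorem by composing $F$ with an explicit jet-realization map $\Phi$ (truncated Taylor polynomial times cutoff), rather than by invoking Proposition~\ref{prop:NeDependQueDesJets}. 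You instead exploit Bastiani smoothness from the outset: Peetre locality makes $\chi\mapsto DF_{\varphi'}(\chi)(x)$ a distribution supported at $\{x\}$, joint continuity of $DF$ (exactly as in Proposition~\ref{localcompactsupport} and Lemma~\ref{equicontlem}) gives an order bound uniform in $\varphi'\in V$ and $x$ in a compact, the structure theorem for point-supported distributions kills $DF_{\varphi'}(\chi)(x)$ when $j^m_x\chi=0$, and the fundamental theorem of calculus along the segment in a convex $V$ yields the jet factorization, after which Proposition~\ref{prop:NeDependQueDesJets} gives smoothness of $c$. This is really the strategy of Section~\ref{Smooth:loc:funct} (compare Proposition~\ref{finiteJetdep}) transplanted to Peetre-local maps. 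What each approach buys: yours is shorter for the smooth case and avoids the scaling lemma, but it cannot give the first (continuity-only) Peetre theorem; the paper's cutoff argument works under the weaker continuity hypothesis and extends verbatim to the multilocal case off the diagonals, where the realization map $\Phi$ is also what makes $c$ smooth there. (Minor point: your uniform bound needs no finite-cover argument — a single application of joint continuity of $DF$ at $(\varphi,0)$ with the target seminorm $\pi_{0,L}$, plus homogeneity in $\chi$, suffices.)

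Concerning the obstacle you flag — a single $k$ valid for all $x\in\Omega$ — your caution is warranted, but it is not a defect specific to your route: the paper's own proof fixes a compact neighborhood of the base point(s) before extracting the order $m$ from the continuity estimate, so it too only delivers an order that is uniform for $x$ in a compact set and for $\psi$ in a neighborhood of $\varphi$. The locally-uniform conclusion you prove (order $m(L)$ on each compact $L$, with the smooth factorization on the corresponding open subset of the jet bundle, glued by the uniqueness part of Proposition~\ref{prop:NeDependQueDesJets}) is what both arguments actually establish; expecting the sequence $m_i$ to stabilize on a noncompact $\Omega$ is hopeless in general, as variants of the functional in Lemma~\ref{counterorderlem} indicate, since a neighborhood in $C^\infty(\Omega)$ constrains $\psi$ only on a compact set. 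So you should simply state and prove the locally-uniform version rather than try to force a global $k$.
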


In other words, if $F$ is a Bastiani smooth Peetre
local map, then for every $g\in \calD(M)$,
$\int_M F(\varphi)g$ is a Bastiani smooth local
map in the sense of the rest of the paper.
This relation between a priori different concepts
of locality strongly supports the idea that
our definition is a natural one.

If $F$ is only assumed to be a continuous local map, then 
a similar theorem
exists for which the function $c$ is not necessarily smooth.
These theorems are proved in the next section for the
more general case of multilocal functionals.

\subsection{Multilocal functionals and first Peetre theorem}

By generalizing Definition~\ref{deflocalmap} of local maps, 
we can define multilocal maps. These maps appear naturally
in quantum field theory as the product of several
Lagrangian densities $\calL(x_1)\dots\calL(x_k)$.
\begin{dfn}\label{klocaldef}
Let $k$ be an integer.
A map $F:C^\infty(\Omega)\to C^\infty(\Omega^k)$ is $k$-local if
for every $(x_1,\dots,x_k)\in\Omega^k$, if
$\varphi_1=\varphi_2$ on some neighborhood of $\{x_1,\dots,x_k\}\subset
\Omega$
then $F(\varphi_1)(x_1,\dots,x_k)=F(\varphi_2)(x_1,\dots,x_k)$.
\end{dfn}
The \emph{multilocal maps} are the maps that are $k$-local for
some $k$. 
We emphasize that \textbf{Peetre local} maps in the sense of definition \ref{deflocalmap} correspond with
\textbf{$1$-local maps} in the above sense.
For $M$ a smooth manifold, we denote
by $J^pM^{\boxtimes k}$ the bundle over $M^k$ whose fiber
over a $k$-tuple of points $(x_1,\dots,x_k)\in M^k$ is
$J^pM_{x_1}\times \dots\times J^pM_{x_k}$.

\begin{thm}
Let $F:C^\infty(\Omega) \to C^\infty(\Omega^k)$ be a continuous 
$k$-local map.
Then, for every $\varphi\in C^\infty(\Omega)$ there is a neighborhood
$V$ of $\varphi$ in $C^\infty(\Omega)$, $p\in\mathbb{N}$ such that for
all
$\psi$ such that $\varphi+\psi\in V$,
$$F(\varphi+\psi)(x_1,\dots,x_k)=c(j^p\psi_{x_1},\dots,j^p\psi_{x_k})$$
for some function
$c: J^pM^{\boxtimes k}|_{
(M^k\setminus D_k)}\to M^k$, where $M^k\setminus D_k$ denotes the configuration
space $M^k$ minus all diagonals.
\end{thm}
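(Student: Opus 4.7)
The plan is a Peetre--Slov\'ak style argument combining the $k$-locality with the Fr\'echet continuity of $F$. The continuity will bound the order of jet-dependence on a fixed compact subset of configuration space, while $k$-locality lets me separate the analysis near the $k$ pairwise distinct points. No smoothness of $F$ is used, only continuity.

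First I would fix $\varphi\in C^\infty(\Omega)$ and a compact $L\subset\Omega^k\setminus D_k$. By Lemma~\ref{conti-lem} applied to $F$ viewed as a map between Fr\'echet spaces, continuity of $F$ at $\varphi$ yields an integer $p$, a compact $K\subset\Omega$, and $\eta>0$ such that $\pi_{p,K}(\psi-\varphi)<\eta$ implies $\pi_{0,L}(F(\psi)-F(\varphi))<1$. I take $V:=\{\psi:\pi_{p,K}(\psi-\varphi)<\eta/2\}$ as the candidate neighborhood and $p$ as the candidate jet order.

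The heart of the argument is the \emph{jet-factorization claim}: for every $\mathbf{x}=(x_1,\dots,x_k)\in L$ and every $\psi_1,\psi_2\in V$ with $j^p_{x_i}\psi_1=j^p_{x_i}\psi_2$ for $i=1,\dots,k$, one has $F(\psi_1)(\mathbf{x})=F(\psi_2)(\mathbf{x})$. My plan is a rescaling homotopy. Choose pairwise disjoint open neighborhoods $U_i\ni x_i$ in $\Omega$ and, in local coordinates around each $x_i$, cutoffs $\chi_i\in C_c^\infty(U_i)$ with $\chi_i\equiv 1$ near $x_i$. For $\lambda\in(0,1]$ set $\chi_i^\lambda(x):=\chi_i(x_i+\lambda^{-1}(x-x_i))$ and $\chi^\lambda:=\sum_i\chi_i^\lambda$. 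Let $\delta=\psi_2-\psi_1$, whose $p$-jet vanishes at each $x_i$. The Leibniz rule combined with $|\partial^\beta\chi_i^\lambda|=O(\lambda^{-|\beta|})$ on $B(x_i,\lambda)$ and the Taylor estimate $|\partial^{\alpha-\beta}\delta(x)|=O(\lambda^{p+1-|\alpha-\beta|})$ on the same ball together give $|\partial^\alpha(\chi^\lambda\delta)(x)|=O(\lambda^{p+1-|\alpha|})$ for $|\alpha|\le p$, hence $\pi_{p,K}(\chi^\lambda\delta)\to 0$ as $\lambda\to 0$. For $\lambda$ small enough, $\psi_1+\chi^\lambda\delta$ lies in $V$ and agrees with $\psi_2$ in a neighborhood of each $x_i$ (since $\chi^\lambda\equiv 1$ there), so by $k$-locality $F(\psi_1+\chi^\lambda\delta)(\mathbf{x})=F(\psi_2)(\mathbf{x})$; simultaneously, continuity of $F$ at $\psi_1$ forces $F(\psi_1+\chi^\lambda\delta)(\mathbf{x})\to F(\psi_1)(\mathbf{x})$. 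Combining, $F(\psi_1)(\mathbf{x})=F(\psi_2)(\mathbf{x})$.

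Granted the claim, the rule $c(j^p_{x_1}\psi,\dots,j^p_{x_k}\psi):=F(\psi)(\mathbf{x})$ is well-defined on the image of the jet map from $V\times L$ and so extends to a function on the relevant open subset of $J^pM^{\boxtimes k}|_L$; patching across an exhaustion of $\Omega^k\setminus D_k$ by such compacts $L_n$ yields the global $c$, the uniqueness of $c$ on overlaps being automatic. The main obstacle I expect is the rescaling estimate in the third paragraph: the Leibniz/Taylor bookkeeping must yield convergence in the \emph{same} seminorm $\pi_{p,K}$ that governs the continuity of $F$, so that the homotopy really stays in $V$ and the limiting step actually invokes continuity. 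A secondary difficulty is that $p$ produced from continuity depends a priori on the compact $L$; getting a single $p$ valid on all of $\Omega^k\setminus D_k$ (as the statement requests) may need either an additional compactness/exhaustion argument or interpreting the conclusion with $p$ locally constant in $\mathbf{x}$.
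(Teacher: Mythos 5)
Your overall strategy---rescaled cutoffs concentrated at the marked points, $k$-locality to identify values of $F$, then a continuity argument---is the same as the paper's, and your smallness estimate $\pi_{p,K}(\chi^\lambda\delta)=O(\lambda)$ is essentially the content of Lemma~\ref{Malgrangestyletechnicallemma}. The genuine gap is in your final limiting step: you invoke continuity of $F$ at $\psi_1$ to conclude $F(\psi_1+\chi^\lambda\delta)(x_1,\dots,x_k)\to F(\psi_1)(x_1,\dots,x_k)$, and this would require $\chi^\lambda\delta\to 0$ in the topology of $C^\infty(\Omega)$, which is false. Your own bookkeeping gives $|\partial^\alpha(\chi^\lambda\delta)|=O(\lambda^{p+1-|\alpha|})$, so the perturbation tends to zero only in the seminorms of order $\le p$, is merely bounded at order $p+1$, and blows up like $\lambda^{-1}$ at order $p+2$ (take $\delta(x)=x^{p+1}$ near $x_i$ in one variable: the $(p+2)$-nd derivative of $\chi(\cdot/\lambda)\,x^{p+1}$ is of size $\lambda^{-1}$ on the annulus where $\chi$ varies). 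Hence Fr\'echet continuity of $F$ at $\psi_1$ yields nothing. Trying to repair this with the $\varepsilon$--neighborhood form of continuity at $\psi_1$ runs into a circularity: that neighborhood is governed by some seminorm $\pi_{q,K''}$ whose order $q$ you do not control, while your jet order $p$ was fixed beforehand from continuity at the single center $\varphi$ with $\varepsilon=1$ (a step which in fact only produces the useless bound $|F-F(\varphi)|<1$ on $V$ and never re-enters the argument); if $q>p+1$, the perturbation $\chi^\lambda\delta$ is not small in $\pi_{q,K''}$ for any $\lambda$.

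The paper's quantifier order is the reverse, and that is what makes the argument close: it first extracts from the continuity estimate a fixed seminorm $\pi_{m,K'}$ governing the comparison of values of $F$ near the chosen points, \emph{then} assumes equality of $(m+1)$-jets, truncates \emph{both} functions, $\varphi_{i,\lambda}=\varphi_i\chi_\lambda$, identifies $F(\varphi_{i,\lambda})$ with $F(\varphi_i)$ at the marked points by $k$-locality, and compares the two truncations through the single inequality $\pi_{m,K'}\bigl((\varphi_1-\varphi_2)\chi_\lambda\bigr)\le \tilde C\,\lambda\,\pi_{m+1}(\varphi_1-\varphi_2)$; the limit $\lambda\to 0$ enters only through this inequality, never through convergence of the arguments of $F$ in $C^\infty(\Omega)$. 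To repair your write-up you must tie the jet order in the conclusion to the order of the seminorm appearing in the continuity estimate at the functions being compared (jet order equal to that order, or that order plus one), rather than to an order read off from continuity at $\varphi$ alone. Your secondary concern---that $p$ depends a priori on the compact set of configurations---is shared by the paper's proof and is not where the difficulty lies.
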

\begin{proof}
Fix a $k$--tuple of points $(x_1,\dots,x_k)\in\Omega^k$ and some
compact neighborhood $K$ of $(x_1,\dots,x_k)$ in $\Omega^k$.
Continuity of $F$
implies that for all $\varepsilon>0$, there exists $\eta>0$ and a
seminorm 
$\pi_{m,K^\prime}$ of $C^\infty(\Omega)$
such that
$\pi_{m,K^\prime}(\varphi_1-\varphi_2)\leqslant\eta$ implies
\begin{eqnarray*}
\sup_{(y_1,\dots,y_k)\in K} \vert
F(\varphi_1)(y_1,\dots,y_k)-F(\varphi_2)(y_1,\dots,y_k)\vert\leqslant
\varepsilon. 
\end{eqnarray*}

Assume that
$(\varphi_1,\varphi_2)$ 
have same ($m+1)$-jets at $\{x_1,\dots,x_k\}$. Let $(\chi_\lambda)_\lambda$ 
be the family of  compactly supported cut-off
functions equal to $1$ 
in some neighborhood of $X=\{x_1,\dots,x_k\}$ defined in lemma 
\ref{Malgrangestyletechnicallemma}. 
It follows that
$\varphi_{1,\lambda}=\varphi_1\chi_\lambda$ (resp.
$\varphi_{2,\lambda}=\varphi_2\chi_\lambda$)
coincides with $\varphi_1$ (resp. $\varphi_2$) near 
$\{x_1,\dots,x_k\}$. Hence, for all
$\lambda >0$,
$F(\varphi_{1,\lambda})(x_1,\dots,x_k)=F(\varphi_1)(x_1,\dots,x_k)$
and $F(\varphi_{2,\lambda})(x_1,\dots,x_k)=F(\varphi_2)(x_1,\dots,x_k)$.
Set $\varepsilon_n=\frac{1}{2^n}$ then there exists
$\eta_n$ such that
$\pi_{m,K^\prime}(\psi_1-\psi_2)\leqslant\eta_n$ implies
\begin{eqnarray*}
\sup_{(y_1,\dots,y_k)\in K} \vert
F(\psi_1)(y_1,\dots,y_k)-F(\psi_2)(y_1,\dots,y_k)\vert\leqslant
\frac{1}{2^n}.
\end{eqnarray*}
Therefore it suffices to find some
sequence $\lambda_n\rightarrow 0$
such that
$\pi_{m,K^\prime}(\varphi_{1,\lambda_n}-\varphi_{2,\lambda_n})\leqslant
\eta_n$.
Since $\varphi_1-\varphi_2$ vanishes at order
$m+1$ on the set $X=\{x_1,\dots,x_k\}$, 
Lemma~\ref{Malgrangestyletechnicallemma} yields the estimate
\begin{eqnarray*}
\vert\pi_{m,K^\prime}(\varphi_{1,\lambda}-\varphi_{2,\lambda})\vert
\leqslant \tilde{C}\lambda \pi_{m+1,K}\left(\varphi_1-\varphi_2\right),
\end{eqnarray*}
which implies that
\begin{eqnarray*}
\underset{\lambda\rightarrow
0}{\lim}\pi_{m,K^\prime}(\varphi_{1,\lambda}-\varphi_{2,\lambda})= 
\underset{\lambda\rightarrow
0}{\lim}\pi_{m,K^\prime}((\varphi_1-\varphi_2)\chi_\lambda)=0.
\end{eqnarray*}

Finally, we obtain that if $\varphi_1,\varphi_2$ have same $(m+1)$-jet at
$X=\{x_1,\dots,x_k\}$ then for all $n>0$:
\begin{eqnarray*}
\vert F(\varphi_1)(x_1,\dots,x_k)-F(\varphi_2)(x_1,\dots,x_k)\vert &=&
\\&&\hspace{-6cm} \vert
F(\varphi_{1,\lambda_n})(x_1,\dots,x_k)- 
F(\varphi_{2,\lambda_n})(x_1,\dots,x_k)\vert\leqslant \frac{1}{2^n}
\end{eqnarray*}
which implies
$F(\varphi_1)(x_1,\dots,x_k)=F(\varphi_2)(x_1,\dots,x_k)$.
\end{proof}

\begin{lem}\label{Malgrangestyletechnicallemma}
Let $X$ be any closed subset of $\mathbb{R}^{d}$.
Let
$\mathcal{I}^{m+1}(X,\mathbb{R}^d)$ denote
the closed ideal
of functions of regularity $C^{m+1}$ 
which vanish at order $m+1$ on $X$.
Then there is a function $\chi_\lambda\in C^\infty(\mathbb{R}^{d})$
parametrized by $\lambda\in(0,1]$
s.t. $\chi_\lambda=1$ (resp $\chi_\lambda=0$) when
$d(x,X)\leqslant\frac{\lambda}{8}$
(resp $d(x,X)\geqslant\lambda$) such that for all compact subset 
$K\subset \mathbb{R}^{d}$,
there is a constant $\tilde{C}$ such that, for every $\lambda\in(0,1]$
and every $\varphi \in \mathcal{I}^{m+1}(X,\mathbb{R}^{d})$
\begin{eqnarray}
\pi_{m,K}\left(\chi_\lambda \varphi\right)\leqslant \tilde{C}
\lambda  \pi_{m+1,K\cap\{d(x,X)\leqslant \lambda\}}
\left(\varphi\right). 
\end{eqnarray} 
\end{lem}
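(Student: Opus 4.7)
The plan is to build the cutoff by mollifying a tube indicator and then combine this with a Taylor estimate for functions vanishing to order $m+1$ on $X$. More precisely, fix a non-negative $\rho\in C_c^\infty(\bbR^d)$ with $\int\rho=1$ and $\supp\rho\subset B(0,1)$, and set $\rho_\epsilon(y)=\epsilon^{-d}\rho(y/\epsilon)$. For $\lambda\in(0,1]$, define
\begin{equation*}
\chi_\lambda(x) \;=\; \big(\mathbf{1}_{\{d(\cdot,X)\le a\lambda\}}\ast\rho_{b\lambda}\big)(x),
\end{equation*}
with $a,b>0$ chosen so that $a-b\ge 1/8$ and $a+b\le 1$ (e.g.\ $a=1/2$, $b=3/8$). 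A direct computation using $\partial^\alpha\chi_\lambda=\mathbf{1}_{\{d(\cdot,X)\le a\lambda\}}\ast\partial^\alpha\rho_{b\lambda}$ gives the uniform bound $\|\partial^\alpha\chi_\lambda\|_\infty\le C_{|\alpha|}\lambda^{-|\alpha|}$, and the choice of $a,b$ gives $\chi_\lambda=1$ for $d(x,X)\le\lambda/8$ and $\chi_\lambda=0$ for $d(x,X)\ge\lambda$.

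Next, I would obtain the Taylor-type estimate on $\varphi\in\mathcal I^{m+1}(X,\bbR^d)$. Since $X$ is closed in $\bbR^d$, for each $x$ with $d(x,X)<\infty$ there is a nearest point $x_0\in X$ with $|x-x_0|=d(x,X)$. Because $\varphi$ vanishes to order $m+1$ on $X$, all partial derivatives $\partial^{\beta+\gamma}\varphi(x_0)$ with $|\beta+\gamma|\le m$ are zero, so Taylor's formula with integral remainder yields, for $|\beta|\le m$,
\begin{equation*}
|\partial^\beta\varphi(x)| \;\le\; C\,|x-x_0|^{\,m+1-|\beta|}\,\sup_{z\in[x_0,x]}\max_{|\delta|=m+1}|\partial^\delta\varphi(z)|.
\end{equation*}
For $x\in K$ with $d(x,X)\le\lambda\le 1$, the segment $[x_0,x]$ lies in the compact set $\widetilde K=(K+\overline{B(0,1)})\cap\{d(\cdot,X)\le\lambda\}$, which is contained in (a slight enlargement of) $K\cap\{d\le\lambda\}$ depending only on $K$; hence the supremum above is bounded by $\pi_{m+1,\widetilde K}(\varphi)$, and the dependence on $K$ can be absorbed into $\tilde C$.

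Finally, I would put the two estimates together via the Leibniz rule: for any $|\alpha|\le m$,
\begin{equation*}
|\partial^\alpha(\chi_\lambda\varphi)(x)| \;\le\; \sum_{\beta\le\alpha}\binom{\alpha}{\beta}\,|\partial^{\alpha-\beta}\chi_\lambda(x)|\,|\partial^\beta\varphi(x)|.
\end{equation*}
On the support of $\chi_\lambda$ we have $d(x,X)\le\lambda$, so the two bounds contribute a factor $\lambda^{-(|\alpha|-|\beta|)}\cdot\lambda^{m+1-|\beta|}=\lambda^{m+1-|\alpha|}$. Since $m+1-|\alpha|\ge 1$ and $\lambda\le 1$, this is bounded by $\lambda$, and taking the supremum over $x\in K$ and over $|\alpha|\le m$ gives the desired inequality with $\tilde C$ depending only on $K$, $m$, $d$ and the chosen mollifier $\rho$.

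The main obstacle I anticipate is the first step: proving the derivative bounds $\|\partial^\alpha\chi_\lambda\|_\infty\le C_{|\alpha|}\lambda^{-|\alpha|}$ uniformly in $\lambda$ and, more subtly, handling the fact that for a general closed $X$ the indicator $\mathbf{1}_{\{d(\cdot,X)\le a\lambda\}}$ has no regularity — this is why the mollification is performed at the matching scale $b\lambda$, and it is this scale-matching that simultaneously produces the correct pointwise values and the correct derivative decay. The Taylor estimate and Leibniz combination are then essentially bookkeeping, with the only mild care needed being the enlargement of the compact $K$ to accommodate the Taylor segment $[x_0,x]$.
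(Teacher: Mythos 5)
Your construction and estimates are essentially the same as the paper's: a mollified indicator of a $\lambda$-tube around $X$ (the paper takes $a=1/2$, $b=3/8$ as you do), the bound $\sup|\partial^\alpha\chi_\lambda|\le C_{|\alpha|}\lambda^{-|\alpha|}$, and then Leibniz combined with a Taylor expansion of the derivatives of $\varphi$ about the nearest point $x_0\in X$, the only cosmetic difference being that you expand $\partial^\beta\varphi$ to full order $m+1-|\beta|$ while the paper expands $\partial^{\alpha-k}\varphi$ only to order $|k|+1$; both give the factor $\lambda$. The one sentence to fix is the claim that the enlargement from $K\cap\{d(\cdot,X)\le\lambda\}$ to $\widetilde{K}=(K+\overline{B(0,1)})\cap\{d(\cdot,X)\le\lambda\}$ ``can be absorbed into $\tilde{C}$'': a supremum over a larger set cannot be dominated by a constant times the supremum over a smaller one, so this is not a legitimate reduction. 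Note, however, that the paper's own proof has exactly the same feature (its remainder $R_\beta(x)$ involves $(m+1)$-st derivatives on the segment $[x_0,x]$, which need not stay in $K$), so the honest conclusion of both arguments is the estimate with the right-hand seminorm taken over the slightly enlarged compact; this weaker form is all that is used in the proof of the Peetre theorem, where the right-hand compact is anyway chosen larger than the left-hand one.
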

\begin{proof}
Choose $\phi\geqslant 0$ s.t. $\int_{\mathbb{R}^d}\phi(x)d^dx=1$ and 
$\phi=0$ if $\vert x\vert\geqslant \frac{3}{8}$.
Then set $\phi_\lambda=\lambda^{-d}\phi(\lambda^{-1}.)$ 
and set
$\alpha_\lambda$ to be the characteristic function
of the set $\{x \text{ s.t. }d(x,X)\leqslant \frac{\lambda}{2} \}$ 
then the convolution
product 
$\chi_\lambda=\phi_\lambda*\alpha_\lambda$ 
satisfies $\chi_\lambda(x)=1$
if $d(x,X)\leqslant \frac{\lambda}{8}$
and $\chi_\lambda(x)=0$
if $d(x,X)\geqslant \lambda$.
Since by Leibniz rule
one has $$\partial^\alpha (\chi_\lambda\varphi)(x)=\underset{\vert
k\vert\leqslant \vert\alpha\vert}\sum\left(
\begin{array}{c}
\alpha\\
k
\end{array}
\right) \partial^k\chi_\lambda\partial^{\alpha-k}\varphi(x), $$
it suffices to estimate
each term
$\partial^k\chi_\lambda\partial^{\alpha-k}\varphi(x)$
of the above sum.
For every multi-index $k$, there is some constant
$C_k$ such that
$\forall x\in \mathbb{R}^{d}\setminus X, \vert \partial_x^k 
\chi_\lambda \vert\leqslant
\frac{C_k}{\lambda^{\vert k\vert}}$
and $\text{supp }\partial_x^k 
\chi_\lambda\subset\{d(x,X)\leqslant \lambda \}$. 
Therefore for all $\varphi\in\mathcal{I}^{m+1}
(X,\mathbb{R}^{d})$, 
for all
$x\in \text{supp }\partial_x^k \chi_\lambda\partial^{\alpha-k}\varphi$,
for $y\in X$ such that
$d(x,X)=\vert x-y\vert$, we find that $\partial^{\alpha-k}\varphi$
vanishes at $y$ at order $|k|+1$. Indeed $\varphi$ vanishes at order $m+1$ hence $\partial^{\alpha-k}\varphi$
vanishes at order $m+1-\vert\alpha\vert + k\geqslant k+1$ since $\vert\alpha\vert\leqslant m$. Therefore:
$$\partial_x^{\alpha-k}\varphi(x) =\sum_{\vert\beta\vert=|k|+1}
(x-y)^\beta R_\beta(x),$$
where the right hand side
is just the integral remainder
in Taylor's expansion of $\partial^{\alpha-k}\varphi$ around $y$.
Hence:
$$ \vert\partial^k\chi_\lambda\partial^{\alpha-k}\varphi(x)   \vert
\leqslant \frac{C_k}{\lambda^{\vert k\vert}}\sum_{\vert\beta\vert=\vert
k\vert+1}
\vert (x-y)^\beta R_\beta(x)\vert  .$$
 It is easy to see 
that $R_\beta$ only
depends on the jets 
of $\varphi$ of order $\leqslant m+1$.
Hence
$$\vert\partial^k\chi_\lambda\partial^{\alpha-k}\varphi(x)   \vert
\leqslant 
C_{k} \lambda \sup_{x\in K, d(x,X)\leqslant
\lambda}\sum_{\vert\beta\vert=\vert k\vert+1}
\vert R_\beta(x)\vert$$
and the conclusion follows easily.
\end{proof}

\subsection{The second Peetre Theorem}

\begin{thm}
Let $F:C^\infty(\Omega) \to C^\infty(\Omega^k)$ be a
Bastiani smooth $k$-local map.
Then, for every $\varphi\in C^\infty(\Omega)$ there is a neighborhood
$V$ of $\varphi$ in $C^\infty(\Omega)$, $p\in\mathbb{N}$ such that for
all
$\psi$ such that $\varphi+\psi\in V$,
$$F(\varphi+\psi)(x_1,\dots,x_k)=c(j^p\psi_{x_1},\dots,j^p\psi_{x_k})$$
for some smooth function $c$ on $J^pM^{\boxtimes k}|_{
(M^k\setminus D_k)}$ where $M^k\setminus D_k$ denotes the configuration space
$M^k$ minus all diagonals.
\end{thm}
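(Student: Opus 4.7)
The plan is to upgrade the continuous Peetre theorem (just proved) to the smooth setting by mimicking, in the multilocal situation, the strategy of Proposition~\ref{prop:NeDependQueDesJets}: namely, produce local smooth sections of the multi-jet evaluation map and then express $c$ as a composition of $F$ with such a section.

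First I would apply the preceding continuous Peetre theorem to the Bastiani smooth (hence continuous) map $F$. This yields a neighborhood $V$ of $\varphi$, an integer $p$, and a (set-theoretic) function $c$ on $J^p\Omega^{\boxtimes k}|_{\Omega^k\setminus D_k}$ such that
\begin{equation*}
F(\varphi+\psi)(x_1,\dots,x_k) = c(j^p_{x_1}\psi,\dots,j^p_{x_k}\psi)
\end{equation*}
whenever $\varphi+\psi\in V$ and $(x_1,\dots,x_k)\in \Omega^k\setminus D_k$. The entire issue is to promote $c$ from a mere function to a smooth function on the open subset $\mathfrak{j}^p_k((V-\varphi)\times(\Omega^k\setminus D_k))$ of $J^p\Omega^{\boxtimes k}|_{\Omega^k\setminus D_k}$, where $\mathfrak{j}^p_k$ denotes the multi-jet evaluation $(\psi,(x_1,\dots,x_k))\mapsto (j^p_{x_1}\psi,\dots,j^p_{x_k}\psi)$.

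Next I would establish the multilocal analogue of Lemma~\ref{lem:existsSection2}: for every $\alpha=(\alpha_1,\dots,\alpha_k)\in J^p\Omega^{\boxtimes k}$ sitting above a point $(x_1,\dots,x_k)\in\Omega^k\setminus D_k$, the map $\mathfrak{j}^p_k$ admits a smooth local section through any preimage $(\psi_0,(x_1,\dots,x_k))$ with $\psi_0\in V-\varphi$. This uses the fact that, away from the diagonals, the points $x_1,\dots,x_k$ can be separated by pairwise disjoint open neighborhoods $U_1,\dots,U_k$. Picking bump functions $\chi_i$ supported in $U_i$ and identically $1$ near $x_i$, then invoking the single-point section $\mathfrak{s}^p$ of Lemma~\ref{lem:existsSection} in each chart, one glues polynomial representatives of the $\alpha_i$ by $\sum_i \chi_i \mathfrak{s}^p_i(\alpha_i)$ and adds the correction $\psi_0-\sum_i\chi_i\mathfrak{s}^p_i(\mathfrak{j}^p_{x_i}\psi_0)$ as in Lemma~\ref{lem:existsSection2}. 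For nearby base points $(y_1,\dots,y_k)$ the $y_i$ remain in the disjoint $U_i$ so the construction varies smoothly, and the resulting map from a neighborhood $W_\alpha$ of $\alpha$ in $J^p\Omega^{\boxtimes k}|_{\Omega^k\setminus D_k}$ into $V-\varphi\subset C^\infty(\Omega)$ is Bastiani smooth and is a right inverse to $\mathfrak{j}^p_k$ on $W_\alpha$.

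Finally, with such a smooth local section $\mathfrak{t}^p_k\colon W_\alpha\to V-\varphi$ in hand, I would obtain $c$ on $W_\alpha$ via the identity
\begin{equation*}
c(\beta) \;=\; F\bigl(\varphi+\mathfrak{t}^p_k(\beta)\bigr)\bigl(\mathrm{pr}(\beta)\bigr),\qquad \beta\in W_\alpha,
\end{equation*}
where $\mathrm{pr}\colon J^p\Omega^{\boxtimes k}\to \Omega^k$ is the projection. The right-hand side is the composition of the Bastiani smooth map $\psi\mapsto F(\varphi+\psi)\in C^\infty(\Omega^k)$ with the smooth section $\mathfrak{t}^p_k$ and with the jointly continuous evaluation $C^\infty(\Omega^k)\times\Omega^k\to\mathbb{K}$; by Proposition~\ref{Cku-prop} and the chain rule it is smooth in $\beta$. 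Since smoothness is a local property and every point of $\mathfrak{j}^p_k((V-\varphi)\times(\Omega^k\setminus D_k))$ has such a neighborhood, $c$ is smooth on that open set, which completes the proof.

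The main obstacle I anticipate is the construction of the smooth section in the multilocal setting: the disjoint-neighborhood argument only works because we are on $\Omega^k\setminus D_k$, and care is required to check that the gluing by $\sum_i\chi_i\mathfrak{s}^p_i(\alpha_i)$ genuinely depends smoothly on $(y_1,\dots,y_k)$ as these points vary (staying in their respective $U_i$) as well as on the jet data, and that the correction term needed to pass through the prescribed $\psi_0$ does not destroy smoothness. Once this local multi-jet section is available, the rest of the argument is a direct transcription of the proof of Proposition~\ref{prop:NeDependQueDesJets}.
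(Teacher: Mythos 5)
Your proposal is correct and follows essentially the same route as the paper: its proof likewise realizes prescribed off-diagonal multi-jets by a smooth map $\Phi$ that glues truncated Taylor polynomials with disjointly supported cutoff functions $\chi_i$, composes with $F(\varphi+\cdot)$, and evaluates at the base points $y_i=x_i$ to exhibit $c$ as a composition of smooth maps. Your only refinement is the correction term making the local section pass through a given $\psi_0\in V-\varphi$ (mirroring the proof of Proposition~\ref{prop:NeDependQueDesJets}), which keeps the argument inside the neighborhood $V$ where the continuous Peetre theorem identifies $F(\varphi+\psi)$ with $c$ of the jets --- a point the paper's direct construction of $\Phi$ passes over silently.
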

\begin{proof}
Without loss of generality, we may assume that
$M=\mathbb{R}^d$ and to go back to arbitrary manifolds, we use partitions of unity as in 
the proof of Lemma \ref{lem:existsSection}.
The coordinates on the jet space $J^p(\mathbb{R}^d)$ are denoted by $(x,p^\alpha)_{\vert\alpha\vert\leqslant p}$.
Let $(U_1,\dots,U_k)$ be two by two disjoint open subsets of $\mathbb{R}^d$, then
$U_1\times\dots\times U_k$ is an open subset of $(\mathbb{R}^d)^k\setminus
D_k$. 
We define
the smooth map:
$\Phi:(x_1,\dots,x_k;p_1,\dots,p_k)\in J^p(\mathbb{R}^d)^{\boxtimes
k}|_{U_1\times\dots\times U_k}\mapsto 
(\sum_{1\leqslant i\leqslant k}\frac{p_{i,\alpha}}{\alpha !}
(.-x_i)^\alpha \chi_i(.-x_i)) 
\in C^\infty(\mathbb{R}^d)$ 
where the functions $\chi_i\in C^\infty_c(\mathbb{R}^d)$ are cut--off functions equal to $1$ near $0$ and such that for all $(x_1,\dots,x_k)\in U_1\times \dots\times U_k$, the support of the functions $\chi_i(.-x_i)$ are disjoint on $\mathbb{R}^d$.
Then the map sending
$(x_1,\dots,x_k;p_1,\dots,p_k),(y_1,\dots,y_k)$
to $F(\varphi+\Phi(x_1,\dots,x_k;p_1,\dots,p_k))(y_1,\dots,y_k)$
is smooth by smoothness of $F$ and $\Phi$. Hence, its
pull--back 
on the diagonal $x_1=y_1,\dots,x_k=y_k$ is also smooth and reads
\begin{eqnarray*}
F(\varphi+\Phi(x_1,\dots,x_k;p_1,\dots,p_k))(x_1,\dots,x_k) &=&
\\&&\hspace{-3cm} c(x_1,\dots,x_k;p_1,\dots,p_k)
\end{eqnarray*}
as the composition 
of smooth functions and it follows
that $c$ is smooth on $J^pM^{\boxtimes k}|_{U_1\times\dots\times U_k}$.
\end{proof}

\section{Multi-vector fields and graded functionals}

In the quantum theory of gauge fields, especially
in the Batalin-Vilkovisky approach, it is necessary to
deal, not only with functionals as discussed above, but
also  with 
multi-vector fields
on the configuration space $E$ (assumed to be the space of sections
of some vector bundle $B$)~\cite{Fredenhagen-11}. Such multi-vector fields 
 can be seen as functionals on the graded space $T^*[1]E
 \doteq E\oplus E^*[1]$, where $E^*\doteq \Gamma(M,B^*)$ 
is the space of smooth sections. To make this notion precise, we use the  ideas presented in~\cite{Rejzner-11} and characterize the ``odd'' space $E^*[1]$ through the space of functions on it, understood as multilinear smooth, totally antisymmetric, functionals.
Then we shall make a conjectural claim on the meaning of 
locality in that context.

\subsubsection{Locality of functionals on graded space.}

We consider a graded space  $E_0\oplus E_1[1]$, where 
$E_0=\Gamma(M,B_0)$ and $E_1=\Gamma(M,B_1)$
are spaces of smooth sections of finite rank vector bundles $B_0$ and $B_1$ 
over $M$ 
respectively.
Before giving formal definitions, let us explain the
idea of our construction. 
We will first define the space $\Ocal(E_0\oplus E_1[1])$
to be space of maps from $E_0$ to $\calA$,
where
\[
\Acal\doteq\prod_{k=0}^\infty \Acal^k\doteq\prod_{k=0}^\infty 
  \Gamma'_a(M^k,B_1^{\boxtimes k})\,,
\]
satisfying an appropriate smoothness condition. Let us clarify the notation $\Gamma'_a$. We first define the
iterated wedge product of $k$ elements $u_1$,\dots,$u_k$
of the space of distributional
sections $\Gamma'(M,B_1)$ by
\begin{eqnarray*}
\langle u_1\wedge \dots \wedge u_k, h_1\otimes \dots \otimes h_k\rangle
 &=& \sum_\sigma (-1)^\sigma \langle u_1, h_{\sigma(1)}\rangle
\\&&
  \dots \langle u_k, h_{\sigma(k)}\rangle,
\end{eqnarray*}
where $h_1$,\dots,$h_k$ are sections in $\Gamma(M,B_1)$ and $\sigma$
runs over the permutations of $\{1,\dots,k\}$.
Then, the $k$-th exterior power $\Lambda^k \Gamma'(M,B_1)$ 
is the vector space of finite sums of such iterated wedge
products and $\Gamma'_a(M^k,B_1^{\boxtimes k})$ is the completion
of $\Lambda^k \Gamma'(M,B_1)$ with respect to the topology of 
$\Gamma'(M,B_1)^{\hat{\otimes}_\pi k}\cong\Gamma'(M^k,B_1^{\boxtimes k})$
where all the duals are strong.
The subscript ``$a$'' stands for antisymmetry.	

In the case of multilinear symmetric functions, we can identify
a $k$-linear map $f(h_1,\dots, h_k)$ of $k$ variables
with a polynomial map of one variable $f(h,\dots,h)$ by using
the polarization identity. There is no polarization
identity in the antisymmetric case and we must
consider a function $F: E_0\to \calA^k$ as a function
of one variable $\varphi_0$ in $E_0$ and $k$ variables 
$(h_1,\dots,h_k)$ in $E_1$ 
(or a variable in $H\in E_1^{\hat\otimes_\pi k}$).
Then, we can identify a function $F: E_0\to \calA^k$ and the
function $\tilde{F}:E_0\times E_1^{\hat\otimes_\pi k}\to\bbK$
defined by
\begin{eqnarray*}
\tilde{F}(\varphi_0;h_1\otimes\dots\otimes h_k) &=& 
F(\varphi_0)(h_1\otimes\dots\otimes h_k).
\end{eqnarray*}
This motivates the following
\begin{dfn}
Let $M$ be a smooth manifold, $(B_0,B_1)$ are smooth vector bundles
on $M$ and $E_0=\Gamma(M,B_0)$, $E_1=\Gamma(M,B_1)$ are spaces of smooth sections of the respective bundles.
We say that a function $F$ from $E_0$ to $ \calA^k$ is an element of $\Ocal^k(E_0\oplus E_1[1])$ if there exists
a Bastiani smooth map $\tilde{F}:E_0\times E_1^{\hat\otimes_\pi k}\to\bbK$
which is linear in $E_1^{\hat\otimes_\pi k}$ and \textbf{antisymmetric} 
w.r.t. the natural action of permutations on $E_1^{\hat\otimes_\pi k}$
such that~: 
\begin{eqnarray}
\label{def:withTilde}
\tilde{F}(\varphi_0;h_1\otimes\dots\otimes h_k) &=& 
F(\varphi_0)(h_1\otimes\dots\otimes h_k).
\end{eqnarray}
 We denote by $\Ocal(E_0\oplus E_1[1])$ the direct product of all $\Ocal^k(E_0\oplus E_1[1])$, over $k\in\bbN_0$ and set $\Ocal^0(E_0\oplus E_1[1])\equiv\bbK$.
\end{dfn}

Let us now discuss the notion of derivative for the type of functionals 
introduced above. Clearly, if $F$ belongs to
$\Ocal(E_0\oplus E_1[1])$, there are two natural ways to differentiate it. 
In the first instance we can differentiate  $\tilde F$ in the sense of Bastiani 
in the first variable ($\varphi\in E_0$) 
and we denote this derivative as
 \[
 D_0F_{(\varphi;u)}(g)\doteq D\tilde{F}_{(\varphi,u)}(g,0)\,,
 \]
 where $u\in E_1^{\hat\otimes_\pi k}\to\bbK$, $g\in E_0$
or 
$\frac{\delta}{\delta \varphi_0}F$.

\subsubsection{The contraction operation.}

Let us now consider contraction of the
graded part with some $h\in E_1$, sometimes referred to as derivations with respect to odd variables.
This concept is needed in order to define 
the Koszul complex and the Chevalley-Eilenberg complex in the Batalin--Vilkovisky
formalism in infinite dimension.
The definition is 
spelled out below.
\begin{dfn}\label{GradedDerivaitve}
Let $F\in\Ocal^k(E_0\oplus E_1[1])$, $h\in E_1$. 
The contraction of $F$ by $h$  is defined,
 for every integer $k > 0$ and  $u\in E_1^{\otimes k-1}$, by
\begin{eqnarray*}
 \left<\iota_h F,u\right>
    &=& \tilde{F}(h\otimes u),\\
 \text{ and }  \iota_hF&=& 0\quad\text{ if } F\in \mathcal{A}^0\,.
\end{eqnarray*}
In particular, 
$\iota_h F=\left< \tilde{F},h\right>$ if $F\in \Acal^1$.
We extend this definition to $\Acal$ by linearity.
\end{dfn}

In view of (\ref{def:withTilde}) and the definition of $\Ocal^k(E_0\oplus E_1[1])$, it is clear that $ \iota_h F \in \Ocal^{k-1}(E_0\oplus E_1[1])$ 
for all $F \in \Ocal^{k}(E_0\oplus E_1[1])$. Equation (\ref{def:withTilde}) allows also to make sense of a second 
 important operation on $\mathcal{O}(E_0\oplus E_1[1])$:
 
\begin{dfn}\label{wedge}
The wedge product $\wedge : \mathcal{O}^k(E_0\oplus E_1[1])\times \mathcal{O}^{k^\prime}(E_0\oplus E_1[1]) \to \mathcal{O}^{k+k^\prime} (E_0\oplus E_1[1])$ 
is defined by 
\begin{eqnarray*}
 \left(\widetilde{F\wedge G}\right)(u_1,\dots,u_{k+k^\prime}) &=&
\sum_\sigma\operatorname{sgn}(\sigma) \tilde{F} (u_{\sigma(1)},\dots,u_{\sigma(k)})
\\&&  \tilde{G} (u_{\sigma(k+1)}, \dots,u_{\sigma(k + k^\prime)} 
\end{eqnarray*}
(where the sum runs over $k-k^\prime$ shuffles) 
and extended by linearity on $\mathcal{O}(E_0\oplus E_1[1])\times\mathcal{O}(E_0\oplus E_1[1])$.
\end{dfn}

Again, in view of (\ref{def:withTilde}) and the definition of $\Ocal(E_0\oplus E_1[1])$, it is clear
that the wedge product of an element in  $ \mathcal{O}^{k}(E_0\oplus E_1[1]) $ with
an element in  $ \mathcal{O}^{k^\prime}(E_0\oplus E_1[1]) $ is an element in $ \mathcal{O}^{k+k^\prime} (E_0\oplus E_1[1] )$.
The contraction and wedge product satisfy the following 
relation on $\mathcal{O}(E_0\oplus E_1[1])$:
\begin{lem}\label{grLeibniz}
The contraction satisfies
the graded Leibniz rule: if $F\in \mathcal{O}^k(E_0\oplus E_1[1])$,
$G\in \mathcal{O}(E_0\oplus E_1[1])$ and $h\in E_1$, 
then
\begin{eqnarray*}
\iota_h(F\wedge G)  &=& 
(\iota_h F)  \wedge G +
(-1)^k F\wedge \iota_hG.
\end{eqnarray*}
\end{lem}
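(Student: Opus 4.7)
My plan is to reduce the identity to a purely combinatorial statement about $(k,k')$-shuffles, exploiting the fact that an element $F\in\mathcal{O}^k(E_0\oplus E_1[1])$ is completely determined by its associated multilinear map $\tilde F$. Concretely, by Definition~\ref{GradedDerivaitve}, the identity to be proved is equivalent to showing that for every $u_2,\dots,u_{k+k'}\in E_1$,
\begin{eqnarray*}
\widetilde{F\wedge G}(h,u_2,\dots,u_{k+k'})
&=& \widetilde{(\iota_h F)\wedge G}(u_2,\dots,u_{k+k'})\\
&& +\,(-1)^{k}\,\widetilde{F\wedge(\iota_h G)}(u_2,\dots,u_{k+k'}),
\end{eqnarray*}
after which the statement follows from the linearity and uniqueness of the map $F\mapsto\tilde F$.

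The first step is to apply Definition~\ref{wedge} to the left-hand side and split the sum over $(k,k')$-shuffles $\sigma$ of $\{1,\dots,k+k'\}$ into two classes, according to whether the index $1$ (where $h$ sits) belongs to the increasing sequence $\sigma(1)<\dots<\sigma(k)$ or to the increasing sequence $\sigma(k+1)<\dots<\sigma(k+k')$. In the first class, necessarily $\sigma(1)=1$, the remaining indices form a $(k-1,k')$-shuffle of $\{2,\dots,k+k'\}$, and substituting $\tilde F(h,u_{\sigma(2)},\dots)=\widetilde{\iota_h F}(u_{\sigma(2)},\dots)$ identifies this partial sum term-by-term with $\widetilde{(\iota_h F)\wedge G}$.

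The second class is where the sign arises. There, $\sigma(k+1)=1$, and I would write $\sigma=\rho\circ\tau$, where $\tau$ is the $(k+1)$-cycle $(1\,2\,\cdots\,k+1)$ and $\rho$ is the unique permutation with $\rho(1)=1$ obtained from the associated $(k,k'-1)$-shuffle of $\{2,\dots,k+k'\}$. Since $\mathrm{sgn}(\tau)=(-1)^k$, factoring this sign out of the subsum and using $\tilde G(h,u_{\rho(k+2)},\dots)=\widetilde{\iota_h G}(u_{\rho(k+2)},\dots)$ yields $(-1)^k$ times the shuffle-sum for $\widetilde{F\wedge(\iota_h G)}$.

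The only genuinely non-trivial step is the sign bookkeeping in the second class, i.e.\ carefully verifying that moving $h$ from the first slot past the $k$ arguments of $\tilde F$ costs exactly $(-1)^k$ rather than, say, $(-1)^{k-1}$ or $(-1)^{k+1}$; once the cycle decomposition above is in place, this is immediate from $\mathrm{sgn}(\tau)=(-1)^k$. The degenerate cases $F\in\mathcal{A}^0$ (where $\iota_h F=0$ by convention and $F\wedge G=FG$) and $G\in\mathcal{A}^0$ are handled directly, as the wedge product then reduces to $\mathbb{K}$-scalar multiplication and both sides of the Leibniz rule collapse to the same expression.
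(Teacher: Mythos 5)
Your proof is correct: splitting the $(k,k')$-shuffle sum according to whether the slot carrying $h$ lands among the first $k$ or the last $k'$ arguments, and extracting the sign of the $(k+1)$-cycle $(1\,2\,\cdots\,k+1)$, whose signature is indeed $(-1)^k$, gives exactly the two terms $(\iota_hF)\wedge G$ and $(-1)^kF\wedge\iota_hG$. The paper states Lemma~\ref{grLeibniz} without proof, treating it as routine, and your argument is precisely the standard verification it has in mind; the only cosmetic additions I would make are to reduce to $G$ of pure degree $k'$ by linearity and to remark that both sides are linear and continuous in the odd variables, so that checking the identity on elementary tensors $h\otimes u_2\otimes\dots\otimes u_{k+k'}$ determines it on all of $E_1^{\hat\otimes_\pi (k+k'-1)}$.
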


Let us now discuss the notion of support which is the
appropriate generalization of the
notion of support for graded functionals,
generalizing the definitions in Section~\ref{SuppFunct}.
\begin{dfn}\label{gradedfirstsupportdefi}
Let $F\in \mathcal{O}^k(U\oplus E_1[1])$ be a graded functional, 
with $U$ an open subset of $E_0$. 
The support of 
$F$ is defined by
$\supp F = \overline{A \cup B}$, where
\begin{eqnarray*}
A &=& \bigcup_{(h_1,\dots,h_k)\in E_1^k}\supp\left(\varphi\mapsto\left(\iota_{h_1}
 \dots\iota_{h_k}  F \right)(\varphi) \right)\\
B &=& \bigcup_{\varphi\in U,(h_1,\dots,h_{k-1})\in 
E_1^{k-1}} \supp \left( h \mapsto \left( \iota_{h_1} \dots\iota_{h_{k-1}}  F(\varphi,h )\right) \right).
\end{eqnarray*}
\end{dfn}

\subsubsection{Some conjectures on local graded functionals.}

Let $F\in\Ocal^k(E_0\oplus E_1[1])$ be such that the WF set of both
$\left(\iota_{h_1} \dots\iota_{h_k} F \right)_\varphi^{(1)}$ and  $ \iota_{h_1} \dots\iota_{h_{k-1}} F(\varphi,.)$
is empty for all $\varphi\in U$ and $(h_1,\dots,h_k)\in E_1^k$. 
We conjecture that some version of Lemmas~\ref{nablaFunique} and \ref{hardLemma} should hold in the graded
case.
The ``standard'' characterization of locality for a functional 
$F\in\Ocal^k(E_0\oplus E_1[1])$
is the requirement that $F$  is compactly supported and for each 
$(\varphi;u_1,\dots,u_k)\in E_0\times E_1^k$ there exists 
$i_0,\dots,i_k\in\mathbb{N}$ such that
\begin{equation}
F(\varphi;u_1,\dots,u_k)=\int_{M} \alpha(j^{i_0}_x(\varphi),
j^{i_1}_x(u_1),\dots,j^{i_k}_x(u_k))\,,
\end{equation}
where $\alpha$ is a density-valued function on the jet bundle. 
To conclude, we \textbf{conjecture} some graded analogue of Theorem~\ref{TheoPrincipal} whose 
formulation would be as follows~:
\\
	Let $U$ be an open subset of $E_0$ and 
	$F\in \mathcal{O}^k(U\oplus E_1[1])$ be a graded functional.
	Assume that
\begin{enumerate}
\item $F$ is additive in some suitable sense, still to be written with care 
(conceivably this would be additivity of $\tilde{F}$ as a function of several variables).
\item $\left(\iota_{h_1} \dots\iota_{h_k} F \right)_\varphi^{(1)}$ and $ \iota_{h_1} \dots\iota_{h_{k-1}} F(\varphi,.)$
have empty wave front set for all $\varphi\in U$ and $(h_1,\dots,h_k)\in E_1^k$  
and the maps 
$(\varphi,u)\mapsto \left(\iota_{h_1} \dots\iota_{h_k} F \right)_\varphi^{(1)}, \iota_{h_1} \dots\iota_{h_{k-1}} F(\varphi,.)$ are Bastiani smooth from $U\times \bigoplus_{k\in\bbN} E_1^{\hat{\otimes}_\pi k}$  to $\Gamma_c(M,B^*_0)$ and  $\Gamma_c(M,B^*_1)$, respectively. Here $B_0^*$ and $B^*_1$ denote dual bundles.
\end{enumerate}
Then, for every  $\varphi\in U$, $u\in\bigoplus_{k\in\bbN} E_1^{\hat{\otimes}_\pi k}$, there is a 
neighborhood $V$ of the origin in $E_0$, an integer $N$ and a smooth 
$\bbK$-valued function $f$ on the $N$-jet bundle such that 
\begin{equation}
F(\varphi+\psi;v_1\otimes\dots\otimes v_k)=\int_{M} \alpha(j^{i_0}_x(\psi),j^{i_1}_x(v_1),\dots,j^{i_k}_x(v_k))\,,
\end{equation}
for every $\psi\in V$ and some $i_0,\dots,i_k<N$.

\section{Acknowledgements}
We are very grateful to the  Institut Poincar\'e for making
this work possible through the grant and facilities it offered
us within the framework of its ``Research in Paris'' program.
We thank the CNRS Groupement de Recherche Renormalisation for
its financial support.
We are extremely grateful to Yoann Dabrowski 
for his very generous help during all the phases of this work.
We thank Romeo Brunetti and Pedro Ribeiro for
very useful discussions.


%
\end{document}